\DeclarePairedDelimiterX{\infdivx}[2]{(}{)}{%
  #1\;\delimsize\|\;#2%
}
\newcommand{\eq}[1]{(\ref{eq:#1})}
\newcommand{\eqn}[1]{(\ref{eqn:#1})}
\newcommand{\rem}[1]{\hyperref[rem:#1]{Remark~\ref*{rem:#1}}}
\newcommand{\thm}[1]{\hyperref[thm:#1]{Theorem~\ref*{thm:#1}}}
\newcommand{\asm}[1]{\hyperref[asm:#1]{Assumption~\ref*{asm:#1}}}
\newcommand{\cor}[1]{\hyperref[cor:#1]{Corollary~\ref*{cor:#1}}}
\newcommand{\defn}[1]{\hyperref[def:#1]{Definition~\ref*{def:#1}}}
\newcommand{\lem}[1]{\hyperref[lem:#1]{Lemma~\ref*{lem:#1}}}
\newcommand{\prop}[1]{\hyperref[prop:#1]{Proposition~\ref*{prop:#1}}}
\newcommand{\fig}[1]{\hyperref[fig:#1]{Figure~\ref*{fig:#1}}}
\newcommand{\tab}[1]{\hyperref[tab:#1]{Table~\ref*{tab:#1}}}
\newcommand{\algo}[1]{\hyperref[algo:#1]{Algorithm~\ref*{algo:#1}}}
\renewcommand{\sec}[1]{\hyperref[sec:#1]{Section~\ref*{sec:#1}}}
\newcommand{\append}[1]{\hyperref[append:#1]{Appendix~\ref*{append:#1}}}
\newtheorem{theorem}{Theorem}[section]
\newtheorem{assumption}[theorem]{Assumption}
\newtheorem{corollary}[theorem]{Corollary}
\newtheorem{lemma}[theorem]{Lemma}
\newtheorem{definition}[theorem]{Definition}
\newtheorem{conjecture}[theorem]{Conjecture}
\newtheorem{remark}[theorem]{Remark}
\newtheorem{proposition}[theorem]{Proposition}
\newtheorem{example}[theorem]{Example}
\newtheorem{fact}[theorem]{Fact}
\DeclareMathOperator{\Enc}{Enc}
\DeclareMathOperator{\poly}{poly}
\newcommand{\R}{\mathbb{R}}
\newcommand{\bbC}{\mathbb{C}}
\newcommand{\bbE}{\mathbb{E}}
\newcommand{\cA}{\mathcal{A}}
\newcommand{\cE}{\mathcal{E}}
\newcommand{\cF}{\mathcal{F}}
\newcommand{\cG}{\mathcal{G}}
\newcommand{\cK}{\mathcal{K}}
\newcommand{\cO}{\mathcal{O}}
\newcommand{\cP}{\mathcal{P}}
\newcommand{\cV}{\mathcal{V}}
\def\>{\rangle}
\def\<{\langle}
\newcommand{\nc}{\newcommand}
\nc\benum{\begin{enumerate}}
\nc\eenum{\end{enumerate}}
\nc\bit{\begin{itemize}}
\nc\eit{\end{itemize}}
\def\be#1\ee{\begin{equation}#1\end{equation}}
\def\ba#1\ea{\begin{align}#1\end{align}}
\def\bas#1\eas{\begin{align*}#1\end{align*}}
\newcommand{\eps}{\epsilon}
\title{Exponential speedups for quantum walks in \\ random hierarchical graphs}
\author[1]{Shankar Balasubramanian}
\author[1, 2]{Tongyang Li}
\author[1]{Aram W. Harrow}
\affil[1]{Center for Theoretical Physics, Massachusetts Institute of Technology, Cambridge, MA 02139, USA}
\affil[2]{Center on Frontiers of Computing Studies and School of Computer Science, Peking University, Beijing 100871, China}
\begin{document}

\maketitle

\begin{abstract}
There are few known exponential speedups for quantum algorithms and these tend to fall into even fewer families.  One speedup that has mostly resisted generalization is the use of quantum walks to traverse the welded-tree graph, due to Childs, Cleve, Deotto, Farhi, Gutmann, and Spielman.  We show how to generalize this to a large class of hierarchical graphs in which the vertices are grouped into ``supervertices'' which are arranged according to a $d$-dimensional lattice.  Supervertices can have different sizes, and edges between supervertices correspond to random connections between their constituent vertices.

The hitting times of quantum walks on these graphs are related to the localization properties of zero modes in certain disordered tight binding Hamiltonians.  The speedups range from superpolynomial to exponential, depending on the underlying dimension and the random graph model.
We also provide concrete realizations of these hierarchical graphs, and introduce a general method for constructing graphs with efficient quantum traversal times using graph sparsification.
\end{abstract}

\tableofcontents


\section{Introduction}
\paragraph{Motivation.}
Superpolynomial quantum speedups are an ultimate motivation for building quantum computers and it is a pressing research priority to broaden our currently short list of such speedups.  Notable examples of superpolynomial quantum-classical separation have mainly been discovered in algebraic and number theoretic problems, such as factoring/discrete logarithm~\cite{Shor99}, the hidden subgroup problem~\cite{BonehLipton,EHK04,Kup05}, Pell's equation and the principal ideal problem~\cite{Hal07}, the unit group problem~\cite{EHKS14}, etc.

A striking example of exponential quantum-classical separation without
apparent group structure is the welded tree problem proposed by
Childs, Cleve, Deotto, Farhi, Gutmann, and Spielman~\cite{CCDFGS03},
building on earlier work on the boolean
hypercube~\cite{MR02,Kempe05}. The welded tree is a graph consisting
of two depth-$n$ binary trees attached together by an alternating
cycle on the leaves of the two trees, with the two roots of the two
binary trees referred to as the entrance and the exit,
respectively. It can be shown that starting from the entrance, a
quantum walk can hit the exit with high probability in time $\poly(n)$
(the state-of-the-art quantum algorithm takes $O(n)$ queries and
$O(n^2)$ time~\cite{jeffery2022walk}), whereas classical algorithms
require $2^{\Omega(n)}$ queries for the same task under a reasonable
oracle model.  A promising feature of this result is that it does not
resemble most earlier speedups which rely on a group structure and
that the algorithm for it cannot be parallelized~\cite{CM20}.
Unfortunately the construction is artificial and the argument is
rigid; small changes to the graph, such as introducing short cycles,
will break the argument, and it has mostly resisted generalization,
although a version relying on group structure
exists~\cite{krovi2007quantum}.

Beyond the welded tree, a general understanding of superpolynomial and exponential quantum speedup of hitting times on graphs is relatively limited. Given the wide applicability of quantum algorithms on graphs as well as the general interest in superpolynomial quantum speedup, in this paper we discuss how to realize a wide family of graphs capable of achieving superpolynomial and exponential separation of hitting times between quantum and classical computing.  Our contribution is not only to prove a superpolynomial and/or exponential speedup on this class of graphs, but also to provide insight into when to expect a speedup for such problems.

\paragraph{Quantum walks in continuous time.}
Our paper studies the algorithmic power of quantum walks.  Many different forms of quantum walks have been studied, such as discrete-time walks, continuous-time walks, coined walks, the Szegedy walk, etc. \cite{ambainis2003quantum, venegas2012quantum}.  In this paper, we will deal with continuous-time walks on undirected graphs. Given an undirected graph (possibly with self-loops) with adjacency matrix $A$, define the Hamiltonian (also called the hopping Hamiltonian or the graph Hamiltonian) to be $H=-A$. A continuous-time quantum walk is simply the time evolution of $H$, i.e.~$e^{-iHt}$.  Our algorithms will consist of starting on a given vertex, running this walk for a time $t$ drawn from an appropriate distribution, and then measuring.  Such algorithms have been studied in various contexts in the past, such as spatial search on a lattice \cite{ChildsGoldstone, ChildsGoldstoneDirac}, quantum algorithms determining whether all elements of a list are distinct \cite{ambainis2007quantum}, triangle finding in graphs \cite{le2014improved}, and computing an output of a boolean expression corresponding to a tree of NAND gates acting on a binary string labelling the leaves \cite{farhi2007quantum}.

\paragraph{Hierarchical graph model.}
We prove a superpolynomial quantum speedup for quantum walks hitting a certain exit node on a wide range of random graph ensembles.  A graph $\cG$ drawn from this ensemble is called a \emph{hierarchical graph}, and is built from a graph $G=(V,E)$, a degree $D$, and size function $s_v$ for vertices and $e_{v,w}$ (sometimes denoted $e_{vw}$) for edges.  Each $v\in V$ is called a ``supervertex'' and corresponds to a set $S_v$ of vertices, with $|S_v|=s_v$.  For each superedge $(v,w)\in E$ we create a set $\cE_{v,w}$ consisting of $e_{v,w}$ random edges between $S_v$ and $S_w$ subject to additional constraints which we discuss in the next section.  Putting these together we obtain $\cG \triangleq (\cV \triangleq \cup_v S_v, \cE \triangleq \cup_{v,w} \cE_{v,w})$.  A simple example of such a graph, in the case where $G$ is a line graph, is in \Cref{fig:1D-graph-example}.

\begin{figure}[t]
    \centering
    \includegraphics[scale=0.28]{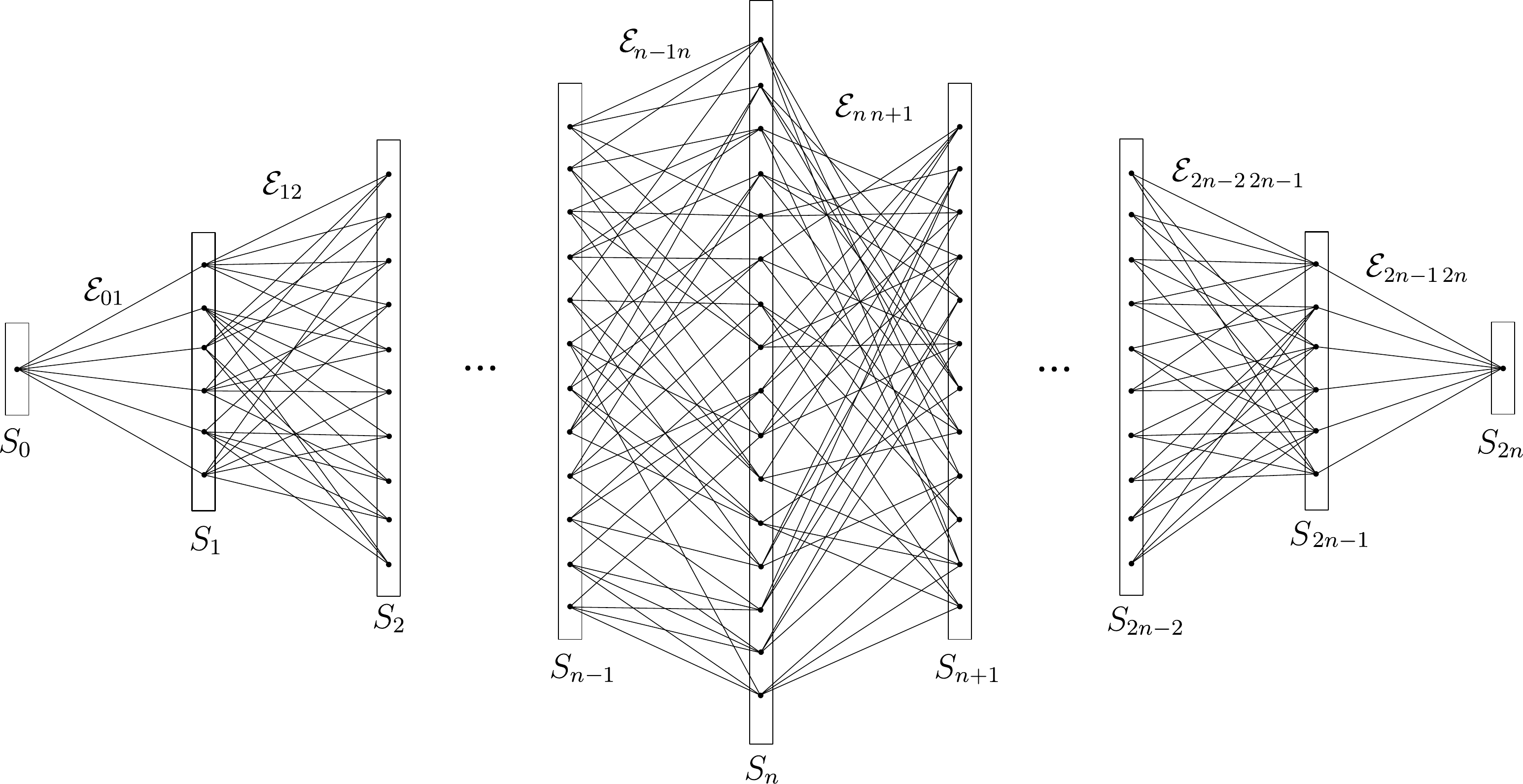}
    \caption{An example of a one-dimensional random hierarchical graph with degree $D = 6$.}
    \label{fig:1D-graph-example}
\end{figure}

\paragraph{One-dimensional (1D) supergraphs.}
A special case of our graph model bears resemblance to the welded tree from \cite{CCDFGS03}: $(V,E)$ is a line graph of length $2n+2$ with vertex sizes $s_0 = 1,s_1 = 2,s_2 = 4,s_3 = 8,\ldots, s_{n-1} = 2^{n-1},s_n = s_{n+1} = 2^n$, $s_{n+2} = 2^{n-1},\ldots, s_{2n} = 2,s_{2n+1} = 1$ and edge weights all equal to 2 except for the middle edge (between supervertices $n,n+1$ which each have size $2^n$) which has weight 1.  In \cite{CCDFGS03}, the quantum algorithm worked only if the $s_i / s_{i+1}$ changed in a single location.  This result could be plausibly extended up to $O(1)$ locations, meaning that the sequence $s_0,s_1,\dots$ should alternate a constant number of times between exponentially growing or shrinking.  Moreover, the classical lower bound in  \cite{CCDFGS03} relied on the graph lacking short cycles, which forced the graph structure to resemble binary trees.  These rigid constraints has caused the speedup in \cite{CCDFGS03}  to thus far resist generalization.

In our 1D random graph models, we can analyze the performance of quantum walks for any sequence of supervertex sizes $s_0,s_1,\ldots$ where $|\log(s_i/s_{i+1})|$ is $O(1)$.   We consider the case of random connectivity between supervertices, but our analysis can even accommodate some structure.  In particular, we show that so long as there is not too much local structure that reveals a classical walker's position within the graph, classical algorithms cannot significantly outperform a nonbacktracking random walk.  Random connectivity has this property, but many forms of more structured connectivity can as well.  An example of a 1D hierarchical graph with random connectivity is illustrated in \fig{1D-graph-example}.  In particular, while the example in \cite{CCDFGS03} is somewhat isolated, our class of random graph ensembles casts a much wider net, encompassing most hierarchical graphs built from a line graph $G$.  Later we will further generalize to hierarchical graphs built from $d$-dimensional lattices, and then to hierarchical graphs constructed from graph sparsification.

For 1D hierarchical graphs ($d=1$), we show the following result:
\begin{theorem}[Informal; see also \thm{main-1D},\cor{1D-LB}, and \cor{general-1D-LB}]
  Let $G$ be a line of $n$ vertices $\{1,\ldots,n\}$ and construct $\cG$ from $G$ as well as a degree $D=O(1)$, some vertex sizes $s_1,\ldots,s_n$ and edge sizes $e_1,\ldots,e_{n-1}$.
  (Let $e_i \triangleq e_{i-1,i}$.)  Suppose $s_1=s_n=1$ and consider the problem of reaching the supervertex $n$ starting from supervertex 1.  Assume that constants $c_1,c_2$ exist such that $c_1 \leq s_i / s_{i+1} \leq c_2$ for all $i$.  With high probability over $\cG$ we have the following:

  \bit
\item Any classical algorithm that starts at supervertex 1 and reaches supervertex $n$ with non-negligible probability needs to take time $(\max_i s_i)^{\Omega(1)}$.
\item A quantum walk starting at supervertex 1 can reach supervertex $n$ in time $T$ and with probability $p$, with $T$ and $1/p$ both bounded by $\sim \exp(\max_{k, \ell}\abs{\sum_{i=k}^{\ell} \log e_{2i} - \log e_{2i+1}})\cdot \poly(n)$.
  \eit
\end{theorem}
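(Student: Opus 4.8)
The plan is to split the statement into its classical lower bound and its quantum upper bound, and to treat each with a different toolkit. For the quantum upper bound I would first pass from the hierarchical graph $\cG$ to an effective \emph{column Hamiltonian}: the quantum walk, started on the single vertex of supervertex $1$, stays (by symmetry of the random edges, after averaging) inside the $n$-dimensional subspace spanned by the uniform superposition over each $S_v$. On that subspace $H=-A$ acts as a weighted path-graph Hamiltonian whose hopping amplitude between columns $i$ and $i+1$ is $\sqrt{e_i/\sqrt{s_i s_{i+1}}}$ up to normalization, i.e.\ the continuous-time walk reduces to a one-dimensional tight-binding chain with site-dependent hoppings. The key structural observation is that the hitting amplitude at column $n$ is governed by the zero-energy (or band-center) transfer matrix of this chain, and a product of $2\times 2$ transfer matrices telescopes: the growth/decay of the zero mode across the chain is exactly $\exp(\pm\sum_i(\log e_{2i}-\log e_{2i+1}))$ after the $s_i$ factors cancel in pairs. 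One then has to control (i) the spectral gap / density of states near $E=0$ so that running the walk for time $\poly(n)$ and measuring lands on column $n$ with the claimed probability $p\sim \exp(-\max_{k,\ell}|\sum \cdots|)\poly(n)^{-1}$, and (ii) the error incurred by the random (rather than perfectly bipartite-regular) edges, which perturbs $H$ off the symmetric subspace; I would bound this using a concentration argument on the edge sets $\cE_{v,w}$, showing the leakage is $\poly(n)$-small with high probability over $\cG$. The main obstacle here is (i): establishing that the zero mode is not just exponentially localized but \emph{delocalized enough} to connect the two ends, which is where the hypothesis $c_1\le s_i/s_{i+1}\le c_2$ enters — it keeps the transfer matrices uniformly bounded so that the localization length is at worst a constant times the accumulated $|\sum(\log e_{2i}-\log e_{2i+1})|$.

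For the classical lower bound I would use the standard oracle/adversary framework from \cite{CCDFGS03}: the classical algorithm only learns the graph by querying vertex names, and as long as it has not yet discovered a cycle or the exit, its view is simulable by an oracle that lazily reveals random edges. The goal is to show that from supervertex $1$, any classical strategy is essentially forced to perform a random walk on the column index, and that walk is biased toward the large supervertices (it is $\Theta(\max_i s_i)^{\Omega(1)}$-slow to reach column $n$) because the probability of stepping from column $i$ toward column $i+1$ versus $i-1$ scales like the ratio of available fresh edges, which is controlled by $s_i$. The technical content is a \emph{no-short-cycles / no-local-landmarks} lemma: with high probability over $\cG$, within any $\poly$-size neighborhood explored by the algorithm, the revealed subgraph looks like a random tree-like structure, so the algorithm cannot triangulate its position — this is the place where "not too much local structure" is made precise, and it is the natural generalization of the girth condition in \cite{CCDFGS03}. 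I expect this lemma and its interface with the adversary argument to be the second main obstacle, since one must handle arbitrary (not necessarily exponentially growing) size sequences $s_i$ and still conclude a lower bound of $(\max_i s_i)^{\Omega(1)}$ rather than something weaker.

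Finally I would assemble the two halves: the quantum upper bound gives running time and inverse success probability both $\exp(\max_{k,\ell}|\sum_{i=k}^\ell \log e_{2i}-\log e_{2i+1}|)\cdot\poly(n)$, and the classical lower bound gives $(\max_i s_i)^{\Omega(1)}$; whenever the edge sizes are chosen so the exponential factor is $\poly(n)$ (e.g.\ all $e_i$ equal, as in the welded-tree specialization) while $\max_i s_i$ is superpolynomial, this is a superpolynomial-to-exponential separation. The cleanest route is to prove the precise statements \thm{main-1D}, \cor{1D-LB}, and \cor{general-1D-LB} referenced above and then observe that the informal theorem is their immediate consequence; so concretely my write-up would (1) set up the column-space reduction and transfer-matrix calculus, (2) prove the quantum hitting bound with the random-edge perturbation controlled, (3) prove the classical adversary lower bound together with the local-indistinguishability lemma, and (4) combine.
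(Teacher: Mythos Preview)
Your high-level decomposition is correct, but there are two concrete places where your plan diverges from what actually works.

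\textbf{Quantum side: the supervertex subspace is exact, and the real obstacle is the gap.} In the paper's model the hierarchical graph is \emph{balanced} (every vertex in $S_i$ has the same number of edges into $S_{i\pm 1}$), so the column subspace is an exact invariant subspace of $A$; there is no leakage and your step (ii) is unnecessary. What remains is your step (i), and this is where the content lies. The zero-mode calculation you sketch is right and gives $|\psi_0\psi_{2n}|\gtrsim \frac{1}{n}\exp(-\max_{k,\ell}|\sum\log(e_{2i}/e_{2i+1})|)$. But to turn this into a hitting-time bound via the time-averaged evolution lemma one also needs the spectral gap $\Delta$ from zero, and you have not proposed a mechanism for bounding it. The paper does this two ways: either by invoking the Kotowski--Vir\'ag theorem on the Dyson singularity of the random hopping model (which pins down the density of states near $E=0$), or more directly by writing down the exact inverse of the even-length tridiagonal matrix and bounding $\|H^{-1}\|_{1,\infty}$ by the same partial-sum quantity that controls the zero mode. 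Either route gives $\Delta^{-1}\leq \poly(n)\exp(\max_{k,\ell}|\sum\log(e_{2i}/e_{2i+1})|)$, which is what closes the argument.

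\textbf{Classical side: ``no short cycles'' is not available here.} Your plan is to generalize the girth-based argument of \cite{CCDFGS03}, but in these graphs short cycles \emph{do} exist near the small supervertices (where $s_i$ is $O(1)$), so the algorithm's explored subgraph is not a tree and the original simulation argument breaks. The paper's fix is to partition supervertices into \emph{easy} and \emph{hard}: easy supervertices (roughly, those reachable from the entrance through small supervertices) are simply given away to the algorithm, and one then shows that a non-backtracking walk entering the hard region has negligible probability of returning to an easy supervertex or seeing a repeated vertex, using the stationary-distribution bound $\Pr[\text{reach }w\text{ from }v]\leq Q\,s_w/s_v$. This is the ``not too much local structure'' statement made precise, and it is genuinely different from a girth condition. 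Without this refinement your lower-bound argument would not go through for the general size sequences the theorem allows.
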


To understand this theorem, let us first consider some special cases.  In the welded tree we have $\max s_i = 2^n$ corresponding to an $\exp(O(n))$ classical runtime, while the quantum runtime is $\poly(n)$.  A generalization of this would have the $s_i$ rise and fall but also have some random fluctuations.  For example, we might choose $s_i/s_{i+1}$ to be a random variable with expectation $\alpha>1$ for $i<n/2$ and expectation $1/\alpha$ for $i>n/2$.  In this case, $\max s_i$ would be close to $\alpha^{n/2}$ which is exponentially large in $n$, while the quantum runtime would be $\exp(O(\sqrt{n}))$.  In this more generic case, we have a superpolynomial quantum speedup.  Thus we see that the exponential speedup of the welded-tree graph is based on a fine-tuned choice of parameters, and superpolynomial speedups are more common for 1D hierarchical graphs.

A crucial reason why the hitting time for the quantum walk on these graphs is fast is that the quantum dynamics is constrained in a way that the effective dimension explored by the walk is small. We can understand the effective dimension in terms of the Krylov subspace of the walk.  Given a starting vector $\ket\psi$, the Krylov subspace $\cK$ is defined to be the span of $\ket\psi, H\ket\psi, H^2\ket\psi, H^3\ket\psi, \ldots$.   If $\dim\cK$ is much smaller than the dimension of the ambient Hilbert space, then the dynamics is sufficiently restricted and the hitting time can be efficient for a quantum walk.  For hierarchical graphs, we will argue that Krylov subspaces are small when the starting state is the uniform superposition over one of the supervertices.
However, even when the dimension of the Krylov subspace is small, there may be other effects within the Krylov subspace that could inhibit fast dynamics of the quantum walk.  An example of this is if eigenstates of the Hamiltonian in the Krylov subspace are spatially localized with exponentially decaying tails.

\paragraph{Eigenstate localization.} To study this class of random
graph models within the Krylov subspace, we need to address the
possibility of eigenstate localization-- the simplest example of this
is \emph{Anderson localization}.  Suppose that one is given a
Hermitian matrix (Hamiltonian) describing the adjacency matrix of a
one-dimensional line (equivalently, a free particle on a line).  The
eigenstates are plane waves and are therefore delocalized, implying
that they have non-negligible amplitudes at both the beginning and the
end of the chain. If one adds random diagonal terms to the
Hamiltonian, the Hamiltonian is no longer translationally invariant;
in one dimension, this is manifested in the eigenstates localizing
with exponential tails.  This was first argued by Anderson in 1958
\cite{Anderson}.  While this may appear to be a no-go result for an
efficient quantum walk based algorithm, the class of random graph
models we study can be mapped onto a disordered Hamiltonian with
bond/hopping disorder.  In this case, for part of the spectrum, only a weaker form of localization occurs; the tails decay as $\exp(-O(\sqrt{n}))$, instead of $\exp(-O(n))$.  As we have seen, this is enough for superpolynomial quantum speedup.
To show this we rely on a result due to Kotowski and
Vir{\'a}g~\cite{KV17} regarding the Dyson singularity for random
hopping {S}chr\"{o}dinger operators to prove that such Hamiltonians
have a portion of their spectrum delocalized; see \sec{Dyson} for details, or \sec{Haminv} for an alternate proof.
We relate the localization properties of these
delocalized states to the hitting time (\lem{evolution}), therefore
arriving at the advertised theorem.

\paragraph{Higher-dimensional supergraphs.}

Delocalization is believed to be more common in higher dimensions, intuitively because there are more directions for a particle to escape.  As a result, we expect these sorts of hierarchical graphs to become even easier to traverse if $G$ has higher connectivity than a line graph.  An example we analyze is a $d$-dimensional lattice.
The situation is analogous to continuous-time quantum walks for spatial search, where higher-dimensional lattices are easier to traverse~\cite{ChildsGoldstone,ChildsGoldstoneDirac,ChildsGe}.

The simplest example of a $d$-dimensional lattice is the cubic lattice.  Here a subtlety arises that prevents us from obtaining a simple rigorous answer.  While for $d=1$, there is a single zero mode which we can use for our analysis, in a $d\geq 2$-dimensional cubic lattice with random hopping, there is a large subspace of zero modes, and interference within this subspace could lead to behavior that our current methods cannot predict.  We leave the analysis of these cases of higher-dimensional zero-eigenvalue subspaces to future work.

Instead, we are able to analyze a simple variant of the cubic lattice
called the \emph{Lieb lattice}.  It can be obtained by starting with the
cubic lattice on $[N]^d$ and adding a new vertex at the midpoint of
each edge.  We further introduce a requirement on the hopping
amplitudes to ensure that there is only a single zero mode that
has support on both the entrance and exit nodes.  In this case, a
large-dimensional subspace of zero modes still exists, but this subspace does not
have any overlap on the entrance and exit nodes.  As in the 1D case,
the performance of the quantum walk will depend on particular choices
of supervertex sizes, or equivalently, of hopping amplitudes.  While
our techniques can be applied to any choice of hopping matrix elements, for concreteness we
focus on one representative case called the biased Gaussian Free Field
(BGFF).

The BGFF is a natural generalization of our 1D models in the following
sense.  In our 1D models the sizes of supervertices follow a biased
random walk, first rising and then falling.  The continuous-space
version of a random walk is Brownian motion with a drift, and the
generalization of this to higher dimensions is the BGFF.  In each
case, the drift (or bias) term governs the typical shape of the graph, and any
specific graph we sample from this distribution will be described by
some random deviation around this.  Similarly to the 1D random walk, the
Gaussian free field allows adjacent supervertices in our $d$-dimension
lattice to have similar sizes, while allowing large-scale variations over
the entire graph.  See \sec{high-dim} for a precise definition of this
model.

Analyzing this graph model yields the following result.  Full details
are given in \sec{highdim}.
\begin{theorem}[Informal; see also \thm{2D-main}, \thm{highD-main}, and \cor{GFF-LB}]
  Let $G$ be a $d$-dimensional cubic lattice of $N^d$ sites with additional sites decorating the edges, forming a Lieb lattice.  Construct $\mathcal{G}$ from $G$ with degree $D = O(d)$, and define vertex sizes labelled by $s_{(m_1,m_2, \ldots, m_d)}$ where $(m_1,m_2, \ldots, m_d)$ indicates a lattice coordinate.  Suppose $s_{(0,0,\ldots,0)} = s_{(N,N,\ldots,N)} = 1$ and consider the problem of reaching the supervertex at $(N,N,\ldots,N)$ from the supervertex at $(0,0,\ldots,0)$.  Assume that ratios of adjacent vertex sizes $c_1 \leq s_u/s_v \leq c_2$ for $u \sim v$ and ratios of adjacent edge sizes are sampled from a biased Gaussian free field where the drift is positive near $(0,0,\ldots,0)$ and negative near $(N,N,\ldots,N)$.  With high probability over $\mathcal{G}$ we have the following:
  \begin{itemize}
  \item Any classical algorithm that starts at supervertex $(0,0,\ldots,0)$ and reaches supervertex $(N,N,\ldots,N)$ with non-negligible probability needs to take time $(\max_i s_i)^{\Omega(1)}$.
  \item A quantum walk starting at supervertex $(0,0,\ldots,0)$ can reach supervertex $(N,N,\ldots,N)$ in time $T$ and dimension $d \leq k_1 \log N$ for some constant $k_1$ and with probability $p$, with $T$ and $1/p$ bounded by $O\big(\poly\big(N^d \exp\kern-0.8mm\big(\sqrt{d\, 2^d \log N}\big)\big)\big)$.
  \item A quantum walk starting at supervertex $(0,0,\ldots,0)$ can reach supervertex $(N,N,\ldots,N)$ in time $T$ and dimension $k_2 \log N \leq d \leq k_3 N$ for constants $k_2$ and $k_3$ and with probability $p$, with $T$ and $1/p$ bounded by $O\big(\poly\big(N^d \exp\kern-0.8mm\big(d\big)\big)\big)$.
  \end{itemize}
\end{theorem}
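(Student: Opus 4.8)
The plan is to reuse the three-part template developed for the 1D case. First I would pass from the hierarchical graph $\cG$ to the \emph{collapsed} tight-binding Hamiltonian on the Lieb supergraph $G$: starting the walk in the uniform superposition over the entrance supervertex confines $e^{-iHt}$ to a Krylov subspace $\cK$ on which the dynamics is unitarily equivalent to a continuous-time quantum walk on $G$, with hopping amplitudes determined by the edge sizes $e_{vw}$ (and, via a bounded rescaling justified by the $c_1\le s_u/s_v\le c_2$ hypothesis, the vertex sizes), and with the start and target states mapping to the corner basis vectors at $(0,\dots,0)$ and $(N,\dots,N)$. Then \lem{evolution} reduces everything to two ingredients: an upper bound $\dim\cK \le |V| = O(d\,N^d)$ on the Krylov dimension, which only ever appears as the $\poly(N^d)$ prefactor in $T$ and $1/p$; and a lower bound on $|\langle\mathrm{ent}|\psi\rangle|\cdot|\langle\mathrm{exit}|\psi\rangle|/\|\psi\|^2$ for a suitable (near-)eigenstate $\psi$ of the collapsed Hamiltonian, which for the Lieb lattice will be a single zero mode.

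\textbf{The distinguished zero mode.} The Lieb lattice is bipartite, with sublattice $A$ the original cubic-lattice sites (containing both corners) and sublattice $B$ the $\Theta(d\,N^d)$ edge-decorating sites; since $|B|>|A|$, the generic zero-energy eigenspace is supported on $B$ and is therefore orthogonal to both corner states, so only a single zero mode is relevant to the hitting time. The ``requirement on the hopping amplitudes'' is exactly a plaquette flux-freeness condition: the two-term zero-mode equations at the $B$-sites are consistent around every $2$-face of the lattice precisely when the relevant products of hopping ratios around the faces equal $\pm1$, and I would verify that this holds identically once the logarithmic edge-size ratios are sampled as the discrete gradient of the biased Gaussian free field, since a gradient has no circulation. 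One then solves the zero-mode equations explicitly by accumulating log-ratios along paths: the result is a well-defined (path-independent, by flux-freeness) height function $h_v$ with $h_{(0,\dots,0)}=h_{(N,\dots,N)}=0$ and a zero mode $\psi_v\propto\varepsilon_v\,e^{h_v}$ up to bounded corrections, with $\varepsilon_v=\pm1$. Hence $|\langle\mathrm{ent}|\psi\rangle|,|\langle\mathrm{exit}|\psi\rangle|=\Theta(1)$ while $\|\psi\|^2=\Theta\big(\sum_{v\in A}e^{2h_v}\big)$, so the hitting time is controlled by the single quantity $\sum_{v\in A}e^{2h_v}$ --- the multidimensional analogue of the alternating log-sum $\max_{k,\ell}|\sum_{i=k}^\ell\log e_{2i}-\log e_{2i+1}|$ from the 1D theorem.

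\textbf{Estimating the norm, and the two dimensional regimes.} By construction the height function splits as $h_v=\mu_v+g_v$, where $\mu_v$ is the deterministic drift profile --- a ``tent'' rising from $0$ at the entrance corner to a maximum $\mu_\star$ and falling back to $0$ at the exit corner --- and $g_v$ is a mean-zero Gaussian free field on the box. I would then estimate $\sum_{v\in A}e^{2h_v}$ both via $\mathbb{E}\big[\sum_v e^{2h_v}\big]=\sum_v e^{2\mu_v+2\mathrm{Var}(g_v)}$ together with concentration, and via the crude bound $|A|\,e^{2\mu_\star}\,e^{2\max_v g_v}$ with a high-probability bound on $\max_v g_v$ from Borell--TIS and a chaining estimate against the lattice-GFF covariance. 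Carrying these out, the per-site variance and the drift together contribute a factor that works out to $\exp\big(\sqrt{d\,2^d\log N}\big)$ in the regime $d=O(\log N)$, giving $T,1/p=O\big(\poly\big(N^d\exp\big(\sqrt{d\,2^d\log N}\big)\big)\big)$; in the regime $d=\Omega(\log N)$ the covariance estimates become too lossy to use directly, but there $N^d$ is already the dominant term, so a cruder per-site exponent bound of $O(d)$ suffices and yields $T,1/p=O\big(\poly\big(N^d\exp(d)\big)\big)$.

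\textbf{Classical lower bound, and the main obstacle.} For the classical lower bound I would invoke the general principle (\cor{GFF-LB} and its underlying lemma) that a hierarchical graph whose inter-supervertex connections carry no position-revealing local structure --- as random connections do --- cannot be traversed classically in time below $(\max_v s_v)^{\Omega(1)}$; it then remains only to check that $\max_v s_v$ is large with high probability, which follows from the capacity constraint $s_v\gtrsim e_{vw}/D$ together with $s_{(0,\dots,0)}=1$ and the positive drift near the entrance, giving $\max_v s_v\gtrsim\exp(\Omega(\mu_\star))$. The main obstacle is the norm estimate of the previous paragraph: obtaining a \emph{high-probability}, rather than merely in-expectation, bound on $\sum_v e^{2h_v}$ for a GFF-type field --- where a handful of atypically tall sites can dominate an exponential functional --- requires the multiscale control of the maximum of the lattice GFF, and it is the $d$-dependence of that field's covariance on the Lieb lattice that produces the $2^d$ and forces the split at $d\sim\log N$. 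A secondary point to pin down is that sampling the edge sizes from the biased GFF is genuinely compatible with flux-freeness and with the capacity/integrality constraints $e_{vw}\le D\min(s_u,s_v)$, and that the remaining (large-dimensional) zero eigenspace and all nonzero eigenstates are exactly orthogonal to the two corner states, so that $\psi$ really is the only eigenstate that matters.
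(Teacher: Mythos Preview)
Your overall architecture matches the paper's --- collapse to the supervertex Hamiltonian on the Lieb lattice, use flux-freeness (the gauge condition) to single out the unique $\Delta_0$-supported zero mode orthogonal to the $\Delta_1$ zero space, and control the corner overlaps via \lem{evolution}. But the central estimate is built on a misidentification that, as written, kills the speedup. You split the log zero-mode amplitude as $h_v=\mu_v+g_v$ with $\mu_v$ a ``tent'' of height $\mu_\star$. That tent is the profile of $\log s_v$, not of $h_v$. The whole mechanism --- already visible in the 1D identity $h_j=\tfrac12\sum_{k<j}(\log r_{2k}-\log r_{2k+1})$ --- is that $h_v$ is a \emph{difference} between the $d{+}1$ height fields $\varphi,\chi^{(1)},\dots,\chi^{(d)}$ (integer versus half-integer edge-edge ratios), and since all of these are taken to be i.i.d.\ BsGFFs with the \emph{same} bias $J$, the drift cancels identically in the moment-generating function and $h_v$ is mean-zero. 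What survives is purely the GFF variance, namely the Green's-function quadratic form $\bm{\varepsilon}^T\bm{K}^{-1}\bm{\varepsilon}$ with $\bm K$ the cubic-lattice Laplacian; it is this quantity --- not any drift --- that is $O(\log N)$ in $d=2$, $O(2^d)$ for $d\lesssim\log N$, and $O(1)$ for $\log N\lesssim d\lesssim N$, producing the stated bounds via a Chernoff argument. If $h_v$ genuinely carried a tent of height $\mu_\star=\Theta(Nd)$ as the supervertex sizes do, then $\|\psi\|^2\gtrsim e^{2\mu_\star}$ and your quantum hitting time would be $e^{\Theta(Nd)}$, matching the classical lower bound and giving no separation. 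So the sentence ``the per-site variance and the drift together contribute $\exp(\sqrt{d\,2^d\log N})$'' cannot be right; the bound must come from the variance alone, after you have exhibited the drift cancellation.

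There is a second gap: \lem{evolution} sets $\tau=4/(\Delta\cdot|\braket{S_{\text{exit}}}{0}\braket{0}{S_{\text{init}}}|)$, so the runtime requires a lower bound on the spectral gap $\Delta$ away from the zero eigenspace, and the Krylov-dimension bound $\dim\cK\le|V|$ does not supply one. The paper obtains this separately: delete enough decorating sites to unwind the Lieb lattice into an even-length 1D ``snake'', use eigenvalue interlacing to bound $\Delta$ below by the smallest singular value of the snake Hamiltonian, and then bound $\|H_{\text{snake}}^{-1}\|_{1,\infty}$ via the explicit tridiagonal inverse, whose entries are again products $\prod_k t_{2k}/t_{2k+1}$ controlled by the same mean-zero height-field estimates. (Relatedly, the nonzero-energy eigenstates are \emph{not} orthogonal to the corner states --- that is precisely why $\Delta$ enters --- so your last ``secondary point'' is mis-stated.)
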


To compare the quantum runtime and the classical runtime, we need the expected value of $\max_i s_i$.  Assuming a positive drift near the entrance supervertex and a negative drift near the exit supervertex, the size of a supervertex at a point in the middle of the lattice scales exponentially with the shortest path from the entrance to the desired point; this gives $\max_i s_i = \Omega(\exp(d N))$ for the $d$-dimensional Lieb lattice.  In comparison, for $d = O(N)$, the quantum algorithm takes time $\exp(O(d \log N))$ (ignoring subleading factors), which indicates an exponential speedup for the quantum walk as a function of $N$ for fixed $d$.  An open question is whether one can prove that this exponential separation persists for $d = \Omega(N)$.

We expect that the techniques in this section could be applied to graph ensembles well beyond the examples we studied.  Instead of starting with a cubic lattice, {\em any} bipartite graph can be turned into a Lieb lattice by bisecting each edge with a new vertex.  We similarly obtain a special zero-energy eigenstate with many of the same features we needed above, but some questions remain about how to construct the random-graph ensemble.  An intriguing open question is whether there are natural forms of random fluctuations in these graph models that would yield $\exp(O(\sqrt{N}))$-time quantum algorithms, as in the 1D case.

\paragraph{Classical lower bounds.}
We also prove classical lower bounds that establish a strict
superpolynomial quantum-classical separation.  Our proof strategy is
similar to that of~\cite{CCDFGS03} but with some additional
difficulties.  In the original proof, it was argued that the classical
algorithm had negligible probability of ever encountering a cycle and
thus could not distinguish the explored region from a tree.  In our
case, the classical algorithm can certainly find small cycles in the
region of the graph with small supernodes. However, we can argue that
these cycles do not provide any useful information about the large
supernodes that are far from these small regions.

\begin{theorem}[Informal, see \cor{1D-LB} and \cor{GFF-LB}]
Any classical algorithm with only query access to the graph requires
exponentially many queries to traverse the graph models we consider
here, namely the 1D hierarchical graphs generalizing welded trees in
\sec{random-graph} and the $d$-dimensional graphs based on the
biased Gaussian free field in \sec{highdim}.
\end{theorem}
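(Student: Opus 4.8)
The plan is to follow the three-part strategy of \cite{CCDFGS03} --- deferred decisions, an invariant that hides the exit, and a reduction to a random walk --- but with an invariant robust to the short cycles that our graphs genuinely contain. We would work in the usual oracle model: every vertex of $\cG$ carries an independent uniformly random label from a set of size $2^{\poly(n)}$; a query to a held label returns the labels of the neighbours together with the colour (i.e.\ the superedge $(v,w)\in E$) of each incident edge; the algorithm starts from the entrance label and wins on querying the exit. The first step is the standard observation that the transcript after $t$ queries depends only on the \emph{explored subgraph} $H_t$ (the induced subgraph on labels seen so far, with the edges at queried vertices confirmed), and that, conditioned on $H_t$, the rest of $\cG$ is still a fresh sample of the ensemble consistent with $H_t$; hence we may reveal the random edges lazily and assume an adaptive, possibly randomized algorithm only ever queries already-seen labels, so $|V(H_t)|\le Dt$.

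\textbf{A robust invariant.} In \cite{CCDFGS03} everything rides on ``$H_t$ is a tree'', which fails for us because the $O(1)$-sized supervertices near the two ends get mapped out completely and carry many short cycles. Instead I would track the following. Write $s^\star\triangleq\max_i s_i$, call a supervertex \emph{large} if its size exceeds $s^{\star\gamma}$ for a small constant $\gamma$, and let $\overline H_t$ be $H_t$ with every non-large supervertex contracted to a point. The invariant to maintain is: \emph{(i)} $\overline H_t$ is a forest, i.e.\ no cycle of $H_t$ passes through a large supervertex; and \emph{(ii)} $|S_v\cap V(H_t)|\le Dt\ll s_v$ for every large $v$. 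Part \emph{(ii)} is automatic. For \emph{(i)}, a query can close a forbidden cycle only by confirming an edge both of whose endpoints already lie in $H_t$ and at least one of which lies in a large supervertex --- but that edge is a random one with an endpoint near-uniform over $\ge s^{\star\gamma}$ vertices, so the per-step failure probability is $O(Dt/s^{\star\gamma})$, and a union bound over $t\le s^{\star c}$ queries (any $c<\gamma/3$) keeps the invariant alive except with probability $s^{\star-\Omega(1)}$. This is exactly where we cash in the informal claim that the small cycles the algorithm does find tell it nothing about the large supernodes.

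\textbf{Reduction to a random walk.} Under the invariant, conditioned on the transcript, the edges leaving each large supervertex remain essentially uniform and independent of whatever the algorithm learned in the small regions, so inside a large supervertex the algorithm cannot aim: it lands on a near-uniform vertex and must stumble onto one of the $\le De_{v,w}$ endpoints of the relevant random edges in order to move on. This would let us couple any algorithm's progress with a nonbacktracking random walk on $G$ whose transitions out of supervertex $i$ split between its neighbours as $e_{i-1,i}:e_{i,i+1}$. For the line with $s_1=s_n=1$ and $c_1\le s_i/s_{i+1}\le c_2$, the sizes rise to a peak of height $s^\star$ and fall back to $1$, so $e_{i-1,i}/e_{i,i+1}\asymp s_i/s_{i+1}$, and on the stretch from the peak to the exit every such ratio exceeds $1$: the walk meets a drift \emph{away from} the exit whose cumulative multiplicative weight telescopes to $s^\star$. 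A textbook birth--death hitting-time bound then shows it takes $\Omega(s^\star)$ steps to first reach the $O(1)$-sized supervertices adjacent to the exit (after which the exit is found trivially), so the probability of hitting the exit within $t\le s^{\star c}\ll s^\star$ queries is $s^{\star-\Omega(1)}$. Combined with the invariant, any algorithm making $t\le s^{\star c}$ queries reaches the exit with probability $s^{\star-\Omega(1)}$, which is $\exp(-\Omega(n))$ once $s^\star$ is exponential in $n$, as in the welded-tree instance.

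\textbf{Higher dimensions, and the hard step.} For the biased Gaussian free field on the Lieb lattice the same three ingredients apply, except ``the exit hides behind the peak supervertex'' must become a min-cut statement: fix a cut of the lattice all of whose crossing superedges join large supervertices, and bound the chance of completing \emph{any} entrance-to-exit path by a union bound over its exponentially many crossing random edges, using that the BGFF drift forces $\max_i s_i=\exp(\Omega(dN))$ and that every crossing edge carries the same telescoped weight. The Lieb decoration --- and the large zero-eigenvalue subspace that complicated the quantum side --- cost nothing here: they only add short cycles in small regions, which the robust invariant absorbs. The genuinely hard part, and the one place this must go strictly beyond \cite{CCDFGS03}, is the middle step: showing that, conditioned on everything the algorithm could extract by completely mapping the $O(1)$-sized regions, the random edges out of a large supervertex are still (almost) uniform and independent --- i.e.\ that there is ``not too much local structure revealing the walker's position''. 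Isolating exactly which independence/symmetry property of the ensemble this needs, and verifying it both for the explicit constructions and for the BGFF model, is where the real work lies; everything downstream of it is a deferred-decisions union bound together with a standard random-walk hitting-time estimate.
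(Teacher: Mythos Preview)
Your proposal is correct and follows essentially the same route as the paper: partition supervertices into small and large, maintain the invariant that the explored subgraph is a forest once small supervertices are contracted (the paper phrases this as an easy/hard partition and shows the hard components $T_1,\dots,T_k$ are trees), and bound escape from the large region by a random-walk estimate (the paper uses the stationary-ratio bound $\pi(w)/\pi(v)\cdot Q$ in place of your birth--death hitting time, but these are interchangeable here). The step you flag as ``genuinely hard'' --- conditional near-uniformity of edges out of large supervertices given everything learned in the small region --- the paper dispatches in one line by invoking the permutation symmetry of the ensemble within each supervertex: after $\le Q$ queries, a freshly revealed neighbour in a supervertex of size $\ge Q/\delta$ collides with a previously seen vertex with probability $\le\delta$, which is exactly the property you were after.
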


\paragraph{Concrete realizations of hierarchical graphs.}
Our quantum algorithms and classical lower bounds apply to graphs with a certain hierarchical structure, but it is still necessary to provide explicit constructions of such graphs.  Beginning with generalizations of the welded
tree, we describe a broad family of 1D hierarchical graphs in
\sec{construction-1D}. These ideas are extended to higher-dimensional
lattices in \sec{construction-highD}, with a construction that makes
use of statistical mechanics models.

\paragraph{Hierarchical graphs from sparsification.}
Lattices are convenient because there are established techniques in
the literature for analyzing their localization properties.  However,
our results also apply to completely general hierarchical graphs.  In \sec{construction-sparse}, we describe an even more general
framework for constructing hierarchical graphs whose supergraphs can
be chosen arbitrarily.  This approach is based on graph
sparsification.

This relaxes yet another rigid constraint present in the original welded trees speedup.  In the random graph ensemble for the welded trees, each
instance exactly obeyed a certain ``balanced'' property which
guarantees that the Krylov subspace will be small.  Our sparsified
ensemble in \sec{construction-sparse} obeys the balanced property
on average but individual instances will be far from balanced.
Despite this, individual instances still have a small subspace that remains
approximately invariant.  This shows that the symmetry behind the
quantum walk's success need only apply to the ensemble of graphs rather than individual instances.

Using these techniques, we can construct hierarchical graphs with whatever input supergraph we please. The classical lower bounds still hold under fairly general conditions.  Although the analysis of the quantum runtime depends on the particular choice of supergraph, we expect superpolynomial or exponential speedups whenever these supergraphs have delocalized eigenstates.

\paragraph{Variable edge density within supervertices.}

In the hierarchical graphs that we have thus far described, edges only connect vertices which are in neighboring supervertices.  This property allowed the effective Hamiltonian dynamics to map onto a random hopping/bond model, where there is no onsite (or diagonal) disorder.  In \sec{Anderson}, we consider hierarchical graphs where edges can also connect between vertices in the same supervertex.  The effective Hamiltonian dynamics of this system has onsite disorder as well as bond disorder.

Assuming no bond disorder (i.e.~the hopping amplitudes are all the same), such an effective Hamiltonian resembles an Anderson model.  If the diagonal elements are constant, we can find generalizations of the results we derived (\thm{onsite-superpolynomial}) and the superpolynomial speedup remains.  However, for a random onsite potential, the eigenstates Anderson localize with exponentially decaying tails, thus prevent a superpolynomial speedup.  An example of such a graph ensemble is illustrated in \thm{Andersonspeedup}.  This indicates that it is still possible to construct particular classes of hierarchical graphs where localization can completely inhibit the performance of the quantum walk.

However, this result is only specific to one and two dimensions.  A popular conjecture argued in the seminal paper by Abrahams et al.~\cite{AALR} states that for a Hamiltonian with nearest neighbor hopping and random onsite potential in a $d$-dimensional lattice, a nonzero fraction of eigenstates are delocalized when $d \geq 3$. Assuming this, as long as $G$ is a $d$-dimensional hypercubic lattice and the effective Hamiltonian is constrained to have random onsite potential, the quantum walk can easily traverse the graph and we obtain a rich class of random graphs that exhibit an exponential quantum-classical separation (\cor{Anderson-transition}).  An interesting open question is the case when there is competing bond and onsite disorder; general results such as Furstenberg's theorem~\cite{furstenberg1963noncommuting} may be useful in understanding whether localization still persists in low dimensions, and we leave this question to future works.

\paragraph{Summary.}
Our results are summarized in the table below.  Results labelled with a $(\ast)$ are based on well-established conjectures, and $c_{\text{cl}}$, $c_{\text{q}}$ are constants whose values do not concern us.  The first row is related to the welded tree speedup of \cite{CCDFGS03}, with a minor difference being that their construction required perfect binary trees while ours only requires random connectivity between supervertices.

\begin{center}
\begin{tabular}{c c c c}
 \toprule
 Supergraph dimension & Disorder & Classical Runtime & Quantum Runtime \\ [0.5ex]
\midrule
 1 & None & $\exp(c_{\text{cl}} N)$ & $\poly(N)$ \\
 1 & Off-diagonal & $\exp(c_{\text{cl}} N)$ & $\exp(c_{\text{q}} \sqrt{N})$ \\
 1 & Diagonal & $\exp(c_{\text{cl}} N)$ & $\exp(c_{\text{q}} N)$ \\
 2 & Off-diagonal (BGFF) & $\exp(c_{\text{cl}} N)$ & $\poly(N)$ \\
 2 & Diagonal & $\exp(c_{\text{cl}} N)$ & $\exp(c_{\text{q}} N)$ $(\ast)$ \\
 $3 \leq d \leq \log N$ & Off-diagonal (BGFF) &
 $\exp(c_{\text{cl}} N d)$ & $\poly(N^d) \exp(c_{\text{q}} \sqrt{d\, 2^d \log N})$ \\
 $\log N < d \leq N$ &
 Off-diagonal (BGFF) &
 $\exp(c_{\text{cl}} N d)$ & $\poly(N^d) \exp(c_{\text{q}} d)$ \\
 $d \geq 3$ & Diagonal &
 $\exp(c_{\text{cl}} N d)$ & $\poly(N^d)$ $(\ast)$ \\
\bottomrule
\end{tabular}
\end{center}

\paragraph{Related work.}
The welded tree has various applications to other quantum algorithms. For instance, Rosmanis~\cite{rosmanis2011snake} (and more recently by Childs, Coudron, and Gilani~\cite{childs2022forgetting}) studied the difference between hitting the exit from the entrance and finding an explicit path between the entrance and exit on the welded tree. Furthemore, Krovi and Brun~\cite{krovi2007quantum} studied quantum walks with an associated group of symmetries, and proved that such quantum walks confined to the subspace corresponding to this symmetry group can be seen as an effective quantum walk on a smaller quotient graph. Ben-David et al.~\cite{BCG+20} used the welded tree as a gadget to construct a property testing problem in the adjacency list model for bounded-degree graphs and showed an exponential quantum-classical separation in query complexity. Hastings, Gily{\'e}n, and Vazirani~\cite{GSV21} recently proved subexponential oracle separation between classical algorithms and adiabatic quantum computation with no sign problem, i.e., restricted to adiabatic paths of stoquastic Hamiltonians. Experimentally, Ref.~\cite{shi2020photonic} implemented the welded tree on quantum photonic chips with tree depth at most 16 and branching rate from 2 to 5.

\paragraph{Conclusions.}
Our paper concludes in \sec{concl}  with a discussion of open problems including the potential for our work applying to optimization and sampling problems.

\section{Random hierarchical graphs}\label{sec:graph-model}

In this section, we introduce the definitions of the random graph ensembles and explain how they can be generated.  The ensemble hierarchical graphs that we construct sets up the quantum walk problem.  We then define the oracle assumption that we operate under, and discuss the results relating to the performance of the algorithm.  We then provide a lower bound analysis on the query complexity of classical algorithms.

We start with the construction of a hierarchical graph whose supergraph is a 1D chain.
\begin{definition}[Hierarchical graph on supergraph $G$]
  A hierarchical graph on a supergraph $G = (V,E)$ is defined by a set of nodes $S_v$ for each $v \in V$ and a set of edges $\mathcal{E}_{v,w}$ for each $(v,w) \in E$ such that $s_v \triangleq |S_v|$
and $e_{vw} \triangleq |\cE_{v,w}|$.
There are two special start and exit nodes $v_{\text{init}}$ and $v_{\text{exit}}$  with
$s_{\text{init}} = s_{\text{exit}} = 1$,
Define $\cV \triangleq \cup_{v\in V} S_v$, $\cE \triangleq \cup_{(v,w)\in E}\cE_{v,w}$, and $\cG = (\cV,\cE)$.
For each $(v,w)  \in E$, the edge set $\mathcal{E}_{v,w}$ denotes the set of edges between nodes in $S_v$ and $S_w$.
\end{definition}

Next, we restrict the above definition so that it satisfies certain symmetry constraints that allows the quantum algorithm to efficiently explore the graph:
\begin{definition}[Balanced hierarchical graph]\label{def:balancedhier}
  A hierarchical graph on supergraph $G = (V,E)$ is said to be balanced if for every $(u,v) \equiv e \in E$, the number of edges connecting a fixed node $\alpha \in S_u$ to nodes in $S_v$ is the same for all $\alpha$.
\end{definition}

In order to confuse the classical algorithm, we require that the above graph be $D$-regular.  This avoids the degree of a vertex leaking information about its position in the graph to the classical algorithm.  (It is possible to have different degree for
the entrance and exit vertices, or even a small neighborhood around them, and indeed this occurs in the original welded tree paper.)
The trivial observation follows:
\begin{proposition}
For a $D$-regular balanced hierarchical graph on supergraph $G = (V,E)$, the condition $\sum_{v \sim u}e_{uv} = D s_u$ holds, where $v \sim u$ denotes the neighbors of $u$ in supergraph $G$.
\end{proposition}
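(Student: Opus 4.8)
The plan is a direct double-counting argument on the edge endpoints that lie in a fixed supervertex. First I would fix $u \in V$ and observe that, under the hierarchical-graph model of the preceding definitions, every edge incident to a node of $S_u$ lies in $\mathcal{E}_{u,v}$ for some neighbor $v \sim u$ in $G$; in particular there are no edges internal to $S_u$. Hence the multiset of endpoints of edges of $\cE$ landing in $S_u$ splits as a disjoint union over $v \sim u$, and the block indexed by $v$ has exactly $e_{uv}$ elements, since each edge of $\mathcal{E}_{u,v}$ contributes precisely one endpoint to $S_u$. Summing, the number of edge endpoints in $S_u$ equals $\sum_{v \sim u} e_{uv}$.

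On the other hand, $D$-regularity says each of the $s_u$ nodes of $S_u$ has exactly $D$ incident edges, so the number of edge endpoints in $S_u$ is also $D s_u$. Equating the two counts gives $\sum_{v \sim u} e_{uv} = D s_u$, as claimed. I would note in passing that the balanced property is not actually needed for this identity — it follows from $D$-regularity together with the structural fact that edges respect the superedge relation of $G$ — but if one does invoke it, the balanced condition upgrades the count to a per-vertex statement: it yields a well-defined local degree $d_{uv} \triangleq e_{uv}/s_u \in \Z_{\ge 0}$ for the edges from a fixed $\alpha \in S_u$ into $S_v$, and then $D = \sum_{v \sim u} d_{uv}$ holds at the level of the single vertex $\alpha$ rather than only on average.

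There is essentially no obstacle here; the one point to keep track of is the modeling assumption that $\cE$ carries no self-loops and no intra-supervertex edges, which is what makes the decomposition of $S_u$'s endpoints over neighbors exact. (Should the model later be enlarged to permit edges within a supervertex, as in the Anderson-type variant mentioned in the introduction, each such internal edge would contribute two endpoints to $S_u$ and the identity would acquire the corresponding correction term $2 e_{uu}$ on the left-hand side.)
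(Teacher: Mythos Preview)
Your proof is correct and is essentially the same double-counting of edge endpoints that the paper's one-line proof (``Follows from the structure and the $D$-regular nature of a hierarchical graph'') is gesturing at, just spelled out in full. Your remark that the balanced hypothesis is not actually needed for this identity is also correct and worth noting.
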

\begin{proof}
Follows from the structure and the $D$-regular nature of a hierarchical graph.
\end{proof}

\begin{remark}
We note that it is possible to extend our random graph ensembles so that they are not perfectly regular.  In particular, in the above, we assume that the degree of each node is $D$ in our random graph ensemble.  However, as long as the left and right degrees of all nodes in a given supervertex are the same, we are free to choose different degrees $D_i$ for all of the nodes in a fixed supervertex $S_i$, as long as the $D_i$ are sampled independently and randomly from degree distribution $\Delta$.  The sampling assumption is important, because arbitrarily changing degrees of certain vertices could leak information to the classical algorithm about where it is in the graph.  The sampling assumption however turns out not to give the classical algorithm any advantage over the case when all of the degrees are fixed to $D$.
\end{remark}

\subsection{Oracle model and supervertex subspace}

Now we will discuss the input model.  We assume that the graph is accessible implicitly, meaning that given a vertex $v$ we can efficiently list its neighbors.  Since vertices are not naturally specified by bit strings, we assume there is an encoding function $\Enc$ that maps vertices to bit strings.  For simplicity we assume that the graph is $D$-regular.
\begin{definition}[Oracle model]\label{def:oracle}
  Let $f_\cG\colon \cV\times [D] \to \cV$ map $x\in \cV$ and $s\in [D]$ to the $s^{\text{th}}$ neighbor of $x$ in $\cG$.  Our oracle takes as input $(\Enc(x), s)$ and outputs $\Enc(f_\cG(x,s))$.  Upon input $(a,s)$ with $a$ not equal to any $\Enc(x)$, the oracle simply returns INVALID.
  \end{definition}
  As is usual for oracle algorithms, we could simply assume access to such an oracle, or come up with a subroutine that supplies the oracle.  Our paper will focus on the question of what is possible given access to the oracle.

  \begin{remark}\label{rem:encoding}
    As in \cite{CCDFGS03} we usually take $\Enc$ to be a random injective function from $V$ to a much larger set (say of size $>|V|^2$) to force any algorithm using the oracle to explore the graph by following paths from the start vertex and not jumping to random vertices.    This is mostly for convenience.  If we instead took $\Enc$ to be a random map to a set of size $|V|$ then the classical algorithms would gain only the ability to jump to a random vertex in $V$.  For most graphs we consider this would not significantly improve their traversability by classical algorithms.
  \end{remark}

  Next, we show that the hierarchical structure that the graphs described above have implies that the quantum walk operates within a much smaller subspace of the full Hilbert space which we call the \emph{supervertex subspace}.  The proposition below holds for any balanced hierarchical graph whose supergraph is not necessarily a line graph.
\begin{proposition}
  Let $\cG$ be a balanced hierarchical graph with supergraph $G = (V,E)$ and adjacency matrix $A$.
  Define the ``supervertex states''
\begin{equation}
    \ket{S_u} = \frac{1}{\sqrt{s_u}} \sum_{\alpha \in S_u} \ket{\alpha},
  \end{equation}
  and define the supervertex subspace to be the span of $\{\ket{S_u} : u \in V\}$.   The supervertex subspace is an invariant subspace of $A$.  As a corollary, if $\ket\psi$ is in the supervertex subspace and we evolve according to a Hamiltonian $H = - A$, then $e^{-iHt}\ket\psi$ is also in the supervertex subspace for all $t$.
\end{proposition}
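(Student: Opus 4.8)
The plan is to compute $A\ket{S_u}$ directly and show it is a linear combination of the supervertex states; $A$-invariance then follows since $u$ is arbitrary, and the corollary about $e^{-iHt}$ is a routine consequence of the exponential power series. First I would expand
\[
  A\ket{S_u} \;=\; \frac{1}{\sqrt{s_u}}\sum_{\alpha\in S_u} A\ket{\alpha}
  \;=\; \frac{1}{\sqrt{s_u}}\sum_{\alpha\in S_u}\;\sum_{\beta\,:\,\beta\sim\alpha}\ket{\beta},
\]
where the inner sum is taken with multiplicity (the graph may carry parallel edges through the sets $\cE_{u,v}$). By the hierarchical structure, every neighbor $\beta$ of a node $\alpha\in S_u$ lies in $S_v$ for some $v$ with $(u,v)\in E$, so the double sum splits as $\sum_{v\sim u}$ of the contribution coming from $\cE_{u,v}$.

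Next, for each fixed neighbor $v$ of $u$ I would swap the order of summation: $\sum_{\alpha\in S_u}\sum_{\beta\in S_v,\,(\alpha,\beta)\in\cE_{u,v}}\ket{\beta} = \sum_{\beta\in S_v} d_v(\beta)\,\ket{\beta}$, where $d_v(\beta)$ is the number of edges of $\cE_{u,v}$ incident to $\beta$. This is exactly where \defn{balancedhier} enters: it asserts that $d_v(\beta)$ is independent of $\beta\in S_v$ — call the common value $d_{v\to u}$ — so the contribution equals $d_{v\to u}\sum_{\beta\in S_v}\ket{\beta} = d_{v\to u}\sqrt{s_v}\,\ket{S_v}$. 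Assembling the pieces gives
\[
  A\ket{S_u} \;=\; \sum_{v\sim u} \frac{d_{v\to u}\sqrt{s_v}}{\sqrt{s_u}}\,\ket{S_v},
\]
which lies in the supervertex subspace. As a sanity check, double-counting the edges of $\cE_{u,v}$ from the two sides gives $e_{uv} = s_u\, d_{u\to v} = s_v\, d_{v\to u}$, so the coefficient above equals $e_{uv}/\sqrt{s_u s_v}$, symmetric in $u,v$, as it must be since $A$ is symmetric.

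For the corollary: the supervertex subspace is finite-dimensional, hence closed, and $A$-invariance immediately gives invariance under $H=-A$, so $H^k\ket{\psi}$ remains in the subspace for every $k\ge 0$; the norm-convergent series $e^{-iHt}\ket{\psi}=\sum_{k\ge 0}\frac{(-it)^k}{k!}H^k\ket{\psi}$ therefore has its limit in the subspace as well.

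I do not expect a serious obstacle here — the computation is short. The only points requiring care are bookkeeping with edge multiplicities (self-loops and parallel edges are permitted via the $\cE_{u,v}$), applying the balanced condition on the correct side (fix $\beta\in S_v$ and count its edges into $S_u$, rather than fixing $\alpha\in S_u$), and observing that the hierarchical structure forbids any edge from leaving the $G$-neighborhood of $u$.
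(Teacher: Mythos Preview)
Your proposal is correct and follows exactly the paper's approach: compute $A\ket{S_u}$ directly using the balanced condition to obtain $A\ket{S_u}=\sum_{v\sim u}\frac{e_{uv}}{\sqrt{s_u s_v}}\ket{S_v}$, then invoke the power series for $e^{-iHt}$. Your write-up is simply more detailed than the paper's two-line proof, and your care about applying the balanced condition on the $\beta\in S_v$ side (which is legitimate since the definition applies to the edge $(v,u)$ as well) is a nice touch.
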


In other words, the Krylov subspace mentioned in the introduction is contained within the supervertex subspace.

\begin{proof}
Because the hierarchical graphs are balanced, a given state $\ket{S_u}$ is transformed by $A$ to
\begin{equation}
    A \ket{S_u} = \sum_{v \sim u} \frac{e_{uv}}{\sqrt{s_u s_v}} \ket{S_v}.
  \end{equation}
Writing $e^{-iHt}$ as a power series, we see that it also maps any supervertex state to another element of the supervertex subspace.
\end{proof}

  One can think of the balanced assumption as a symmetry that treats all nodes in $S_u$ on an equal footing.  This symmetry results in the time evolution acting within a much smaller subspace.  Since we may work only in the supervertex space, it makes sense to construct an effective Hamiltonian in the supervertex basis.
\begin{lemma}[Hamiltonian of hierarchical graph]
    The ``graph Hamiltonian'' $H$ is defined to be negative of the adjacency matrix of a balanced hierarchical graph.  When $H$ is projected onto the supervertex subspace $\text{span}\left\{\ket{S_u}, u \in V\right\}$, the matrix elements of the Hamiltonian are:
\begin{equation}
    -\mel{S_u}{H}{S_v} = \frac{e_{uv}}{\sqrt{s_u s_v}}, \hspace{0.5cm} u \sim v
\end{equation}
and $\mel{S_u}{H}{S_v} = 0$ otherwise.
\end{lemma}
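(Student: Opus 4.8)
The plan is to reduce the claim to the action of the adjacency matrix $A$ on supervertex states, which was already computed in the proposition establishing invariance of the supervertex subspace, and then extract matrix elements using the orthonormality of the family $\{\ket{S_u}\}_{u\in V}$.

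First I would record that the supervertex states form an orthonormal set. Since the sets $S_u$ are pairwise disjoint subsets of $\cV$ (in a hierarchical graph edges only join nodes in distinct, neighboring supervertices), the vectors $\ket{S_u}=\tfrac{1}{\sqrt{s_u}}\sum_{\alpha\in S_u}\ket{\alpha}$ are supported on disjoint sets of standard basis vectors, so $\braket{S_u}{S_v}=\delta_{uv}$. Hence the orthogonal projector $\Pi$ onto the supervertex subspace is $\Pi=\sum_{u\in V}\ketbra{S_u}{S_u}$, the operator ``$H$ projected onto the supervertex subspace'' is $\Pi H\Pi$, and its matrix in the basis $\{\ket{S_u}\}$ has entries exactly $\mel{S_u}{H}{S_v}$.

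Next I would invoke the formula $A\ket{S_v}=\sum_{w\sim v}\tfrac{e_{vw}}{\sqrt{s_v s_w}}\ket{S_w}$ from that proposition (valid precisely because $\cG$ is balanced). Taking the inner product with $\ket{S_u}$ and using orthonormality gives $\mel{S_u}{A}{S_v}=\tfrac{e_{vu}}{\sqrt{s_u s_v}}$ when $u\sim v$ and $0$ otherwise; since $H=-A$ and $e_{uv}=e_{vu}$ (because $\cE_{u,v}=\cE_{v,u}$ is an undirected edge set), this is exactly the claimed $-\mel{S_u}{H}{S_v}=\tfrac{e_{uv}}{\sqrt{s_u s_v}}$ for $u\sim v$, and $\mel{S_u}{H}{S_v}=0$ otherwise. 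I would close by remarking that the resulting matrix is real and symmetric, hence a legitimate Hermitian Hamiltonian on the supervertex subspace, and that since this subspace is $A$-invariant, restricting to $\Pi H\Pi$ costs nothing for dynamics initialized in it.

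There is essentially no obstacle here: the only points requiring genuine care are the orthonormality of $\{\ket{S_u}\}$, which rests on the disjointness of the sets $S_u$, and the symmetry $e_{uv}=e_{vu}$; everything else is immediate from the adjacency-matrix computation already carried out.
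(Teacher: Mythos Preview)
Your proposal is correct and follows essentially the same approach as the paper, which simply states that the result follows from the action $A\ket{S_u}=\sum_{v\sim u}\tfrac{e_{uv}}{\sqrt{s_u s_v}}\ket{S_v}$ computed in the preceding proposition. Your version is more explicit about the orthonormality of the supervertex states and the symmetry $e_{uv}=e_{vu}$, but these are natural elaborations of the same one-line argument.
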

\begin{proof}
This follows from the action of $A \ket{S_u}$ computed in the previous proposition.
\end{proof}

\subsection{One-dimensional random graph model}

Next, we will construct a random graph ensemble for the case in which $G$ is a one-dimensional line graph with nodes $v_0, v_1, v_2, \ldots, v_{2n}$.  First, we define the following quantities:
\begin{definition}[Edge-edge ratio]
A hierarchical graph on the line supergraph $G = (V,E)$ which has nodes $0, 1, 2, \ldots, 2n$ possesses edge-edge ratios $r_1, r_2, r_3, \ldots, r_{2n-1}$ with:
\begin{equation}
    r_k = \frac{e_{k, k+1}}{e_{k-1, k}} \triangleq \frac{e_{k+1}}{e_k}.
\end{equation}
\end{definition}
\begin{definition}[Edge-vertex ratio]
A hierarchical graph on the line supergraph $G = (V,E)$ which has nodes $0, 1, 2, \ldots, 2n$ possesses edge-vertex ratios $\kappa_1, \kappa_2, \kappa_3, \ldots, \kappa_{2n}$ with:
\begin{equation}
    \kappa_k = \frac{e_{k-1, k}}{s_k} \triangleq \frac{e_k}{s_k}.
\end{equation}
\end{definition}
Both of these definitions quantify the expansion properties of the graph.  In turn, this allows us to define a random graph ensemble based on the edge-edge or edge-vertex ratios:
\begin{definition}[Random balanced hierarchical graph] \label{def:ratios}
A random balanced hierarchical graph on a line supergraph $G = (V,E)$ satisfies one of the following conditions:
\begin{itemize}
    \item Edge-edge ratios $r_i \sim D_i$ where $D_i$ are independent and bounded, and the distributions for $\log r_i$ are independent and bounded; furthermore, $\mathbb{E}_{D_i}[\log^c r_i] \leq \infty$ and $\mathbb{E}_{D_i}[r_i^c] \leq \infty$ for all positive integers $c$.
    \item Edge-vertex ratios $\kappa_i \sim F_i$ where $F_i$ are independent and bounded, and the distributions for $\log \kappa_i$ are independent and bounded; furthermore, $\mathbb{E}_{F_i}[\log^c r_i] \leq \infty$ and $\mathbb{E}_{F_i}[r_i^c] \leq \infty$ for all positive integers $c$.
\end{itemize}
\end{definition}
The assumptions on boundedness of certain expectation values will be useful for various details in the proofs.  In addition to the definition above, we need to assume that the random graph model has an intrinsic bias: on average the number of nodes in each supernode grows exponentially as a function of depth until the halfway point, after which it decays exponentially with depth.  The reason for this assumption is to show that no classical algorithm can penetrate this graph due to the existence of ``large supervertices'' with an exponential number of vertices.

\begin{assumption}
A random balanced hierarchical graph on a line supergraph $G = (V,E)$ is said to have a bias when the edge-edge ratios $r_i > 1$ over a depth of $O(|V|)$ and $r_i < 1$ in the remaining part of the supergraph.
\end{assumption}

The edge-edge and edge-vertex ratios can be related to one another by the following proposition:
\begin{proposition}
A $D$-regular random balanced hierarchical graph on line supergraph $G = (V,E)$ with independently random edge-edge ratios has independently random edge-vertex ratios related by
\begin{equation}
    \kappa_i = \frac{D}{1 + r_i}.
\end{equation}
\end{proposition}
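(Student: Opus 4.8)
The plan is to derive everything from the $D$-regularity identity $\sum_{v\sim u} e_{uv} = D s_u$ established above. On a line supergraph every interior supervertex $k$ has exactly the two neighbours $k-1$ and $k+1$, so this identity specializes to $e_k + e_{k+1} = D s_k$; this is the only structural input needed.

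First I would divide that identity by $e_k > 0$ to obtain $1 + e_{k+1}/e_k = D s_k / e_k$, which by the definitions of the edge-edge ratio $r_k = e_{k+1}/e_k$ and the edge-vertex ratio $\kappa_k = e_k/s_k$ reads $1 + r_k = D/\kappa_k$. Rearranging yields $\kappa_k = D/(1+r_k)$, the claimed relation. Note that $r\mapsto D/(1+r)$ is a continuous, strictly decreasing bijection of $(0,\infty)$ onto $(0,D)$, hence bimeasurable; this is the fact that lets one pass between the two parametrizations without losing any structure.

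For the distributional statement, the key observation is that the displayed formula expresses each $\kappa_k$ as a deterministic function of $r_k$ \emph{alone}, with no other $r_j$ appearing. Therefore mutual independence of the family $\{r_k\}$ immediately gives mutual independence of $\{\kappa_k\}$ (and conversely, using the inverse map). Moreover, since the $r_k$ are bounded and bounded away from $0$, both $\kappa_k$ and $1/\kappa_k$ are bounded, so the moment conditions of \defn{ratios} are preserved: finiteness of $\bbE[r_k^c]$ and $\bbE[\log^c r_k]$ is equivalent to that of $\bbE[\kappa_k^c]$ and $\bbE[\log^c \kappa_k]$ for every positive integer $c$.

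The one point that needs care is the bookkeeping at the ends of the chain: there are $2n$ edge-vertex ratios but only $2n-1$ edge-edge ratios, so the formula $\kappa_k = D/(1+r_k)$ should be read as applying to the indices for which $r_k$ is defined, while the edge-vertex ratio at the terminal supervertex — which has a single neighbour and size $1$, forcing $D s = e$ there — is simply pinned to the constant $D$. I do not anticipate any obstacle beyond stating this indexing convention explicitly; the proposition is essentially a restatement of $D$-regularity in ratio variables.
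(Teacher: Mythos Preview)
Your proof is correct and follows essentially the same approach as the paper: both use the $D$-regularity identity $e_k + e_{k+1} = D s_k$ together with the definitions $r_k = e_{k+1}/e_k$ and $\kappa_k = e_k/s_k$ to obtain $\kappa_k = D/(1+r_k)$, and then observe that since each $\kappa_k$ depends on $r_k$ alone, independence transfers. Your version is in fact more thorough, explicitly addressing the bimeasurability of the map, the preservation of the moment conditions, and the endpoint indexing, none of which the paper spells out.
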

\begin{proof}
We know that $e_i = \kappa_i s_i$ and $e_i + e_{i+1} = D s_i$.  Furthermore, from the definition of the edge-edge ratio, $e_{i+1} = r_i e_{i}$.  Combining these two expressions, we find the desired proposition.  Because $\kappa_i$ is only a function of $r_i$, independence of one distribution implies independence of the other distribution and vice versa.  Furthermore, as $r_i$ are bounded and positive, $\kappa_i$ are bounded and positive.
\end{proof}
Thus, it suffices to restrict our attention to one of these quantities; we will be focusing on edge-edge ratios for the rest of the paper.  Finally, we explicitly write down the Hamiltonian of the 1D hierarchical supergraph when restricted to the supervertex subspace.
\begin{proposition}
If the supergraph $G = (V,E)$ is a line graph of size $|V| = 2n+1$, the matrix elements of the Hamiltonian restricted to the supervertex space is equal to
\begin{equation}\label{eqn:hamiltonian}
    \mel{S_i}{H}{S_j} = \begin{cases}
t_i \triangleq \frac{e_i}{\sqrt{s_i s_{i+1}}}, & \text{for } j=i+1\\
t_{i-1} = \frac{e_{i-1}}{\sqrt{s_{i-1} s_{i}}}, & \text{for } j=i-1\\
0, & \text{otherwise }\\
\end{cases}\quad\ \ i,j\in\{0,\ldots,2n\}.
\end{equation}
\end{proposition}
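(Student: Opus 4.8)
The statement is an immediate specialization of the earlier lemma on the Hamiltonian of a balanced hierarchical graph to the case in which the supergraph $G$ is a line. The plan is simply to unwind the definitions; there is no substantive obstacle here, only a small amount of index bookkeeping to keep straight which edge-size variable attaches to which pair of supervertices, plus a harmless sign convention to dispose of.

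First I would invoke the general lemma: for any balanced hierarchical graph with supergraph $G=(V,E)$, the graph Hamiltonian $H=-A$ restricted to the supervertex subspace satisfies $-\mel{S_u}{H}{S_v}=e_{uv}/\sqrt{s_u s_v}$ whenever $u\sim v$ in $G$, and $\mel{S_u}{H}{S_v}=0$ otherwise. For the line graph on vertices $\{0,1,\dots,2n\}$ --- which has $|V|=2n+1$ --- the adjacency relation is precisely $i\sim j \iff |i-j|=1$: an interior supervertex $i$ has exactly the two neighbours $i-1$ and $i+1$, and the endpoints $0$ and $2n$ each have a single neighbour.

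Next I would substitute, writing $e_i$ for the number of edges in the superedge $(i,i+1)$ and $e_{i-1}$ for the number in $(i-1,i)$, which is the convention under which the stated formula is self-consistent. Then: for $j=i+1$ the relevant superedge is $(i,i+1)$, giving $\mel{S_i}{H}{S_{i+1}} = e_i/\sqrt{s_i s_{i+1}} =: t_i$ (up to the sign discussed below); for $j=i-1$ the relevant superedge is $(i-1,i)$, giving $\mel{S_i}{H}{S_{i-1}} = e_{i-1}/\sqrt{s_{i-1}s_i} = t_{i-1}$, as also forced by Hermiticity since the $t_k$ are real; and for $|i-j|\geq 2$ as well as $i=j$ we have $i\not\sim j$, so the matrix element vanishes. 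This covers all pairs $i,j\in\{0,\dots,2n\}$: the endpoints need no special treatment, since the would-be amplitudes $t_{-1}$ and $t_{2n}$ correspond to nonexistent superedges and simply never occur.

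Finally, the overall sign. The general lemma yields $\mel{S_u}{H}{S_v}=-e_{uv}/\sqrt{s_u s_v}$, while the proposition records the positive quantity $t_i$. On a line graph this difference is purely a matter of gauge: conjugating by the orthogonal diagonal transformation $\ket{S_i}\mapsto(-1)^i\ket{S_i}$ reverses the sign of every nearest-neighbour matrix element at once while leaving the (here vanishing) diagonal unchanged. Since this is a unitary change of basis, it alters neither hitting times nor localization properties, so nothing is lost by taking the hopping amplitudes to be $t_i = e_i/\sqrt{s_i s_{i+1}}>0$ as written.
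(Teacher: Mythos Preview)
Your proposal is correct and matches the paper's treatment: the paper states this proposition without proof, regarding it as an immediate specialization of the preceding lemma on the Hamiltonian of a balanced hierarchical graph to the line-graph case. Your additional remark on the sign discrepancy (handled by the gauge $\ket{S_i}\mapsto(-1)^i\ket{S_i}$) is a genuine point that the paper silently glosses over, so if anything your write-up is more careful than the original.
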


This Hamiltonian is a tridiagonal matrix with zeroes along the principal diagonal.  This constraint of a zero principal diagonal will be essential to argue that quantum walks can efficiently traverse these graphs.

Finally, we remark that although we have described the general structure of hierarchical graphs, we have not provided an explicit algorithm which can generate these graphs.  A few ways of explicitly constructing these graphs are discussed in \sec{metropolis}.

\subsection{Higher-dimensional random graph model}\label{sec:high-dim}

We also discuss the case where the supergraph $G$ is not one-dimensional, but rather a higher-dimensional regular lattice.  The analysis when $G$ is a two dimensional grid is significantly more difficult for reasons we will elaborate on later, so we instead consider the case where $G$ is a Lieb graph, defined below:
\begin{definition}[Lieb graph of graph $G$]
A Lieb graph of graph $G$ is defined to be a new graph $G' = (E', V')$ where the vertices $V' = V \cup \Delta$ where $|\Delta_1| = |E|$ is a new set of vertices such that each vertex in $\Delta_1$ is placed on one edge on graph $G$.  The size of the new edge set for $G'$ satisfies $|E'| = 2|E|$.
\end{definition}
We also provide a definition of the Lieb lattice.
\begin{definition}[$d$-dimensional Lieb lattice]
  Consider a $d$-dimensional cubic lattice $L$ with side length $N$ and vertices labeled  by $[N]^d$.    A $d$-dimensional Lieb lattice is a Lieb graph of $L$, meaning that it has vertices labeled by $[N]^d$ as well as a vertex corresponding to each edge of $L$.   The vertices corresponding to edges of $L$ are labeled by half-integer coordinates.  Specifically the vertex on the edge $((x_1,\ldots,x_d), (x_1,\ldots,x_i+1,\ldots, x_d))$ in the original graph is labeled by $(x_1,\ldots,x_i+\frac 12,\ldots,x_d)$ in the Lieb lattice.
\end{definition}
Following the analysis in the 1D case, we also must define the equivalent of edge-edge ratios in the hierarchical Lieb lattice (we will not work with edge-vertex ratios for the remainder of this paper):
\begin{definition}[Edge-edge ratio on Lieb lattice]\label{def:EELieb}
Consider a hierarchical graph $\mathcal{G}$ whose supergraph is a $d$-dimensional Lieb lattice.  Using the notation where vertices $v$ correspond to a tuple of Cartesian coordinates, we define the edge-edge ratios to be
\begin{equation}
    r^{(i)}_{(x_1, x_2, \ldots, x_d)} = \frac{e_{(x_1, x_2, \ldots, x_i + 1/2, \ldots, x_d), (x_1, x_2, \ldots, x_i + 1, \ldots, x_d)}}{e_{(x_1, x_2, \ldots, x_i, \ldots, x_d), (x_1, x_2, \ldots, x_i + 1/2, \ldots, x_d)}}
\end{equation}
where $x_k \in [N]$ and $i \in [d]$ denotes the direction along the lattice the edge ratio is measured.
\end{definition}
For reasons which we explain later, it is also convenient to introduce what we call ``height fields''.  Though their definition is unmotivated currently, it is required in our construction of the random graph ensemble:
\begin{definition}[Height fields]\label{def:heightfields}
Define a height field $\Phi_{v} \in \mathbb{R}$ for each site of a $d$-dimensional Lieb lattice.  The difference of neighboring height fields are related to the edge-edge ratios by
\begin{equation}\label{eqn:heighttorat}
    \Phi_{(x_1, x_2, \ldots, x_i + 1, \ldots, x_d)} - \Phi_{(x_1, x_2, \ldots, x_i, \ldots, x_d)} = \log r^{(i)}_{(x_1, x_2, \ldots, x_i, \ldots, x_d)}.
\end{equation}

Alternatively, we may repackage the set of height fields $\varphi$ into $d+1$ height fields $(\varphi, \chi^{(1)}, \ldots, \chi^{(d)})$, where $\varphi$ corresponds to $\Phi_v$ where $v$ has integer valued coordinates, and $\chi^{(i)}$ for $i \geq 1$ corresponds to the set $\varphi_v$ where $v$ has coordinates
\begin{equation}
    v = \left(x_1, x_2, \ldots, x_{i} + \frac{1}{2}, \ldots, x_d\right)
\end{equation}
with $x_i \in \mathbb{Z}$.  Each of the $d+1$ height fields live on the vertices of a $d$-dimensional cubic sublattice.
\end{definition}

A simple example with the edge-edge ratios and height fields explicitly labeled for $d = 2$ is shown below:

\begin{equation}
\centering
\includegraphics[scale=0.5]{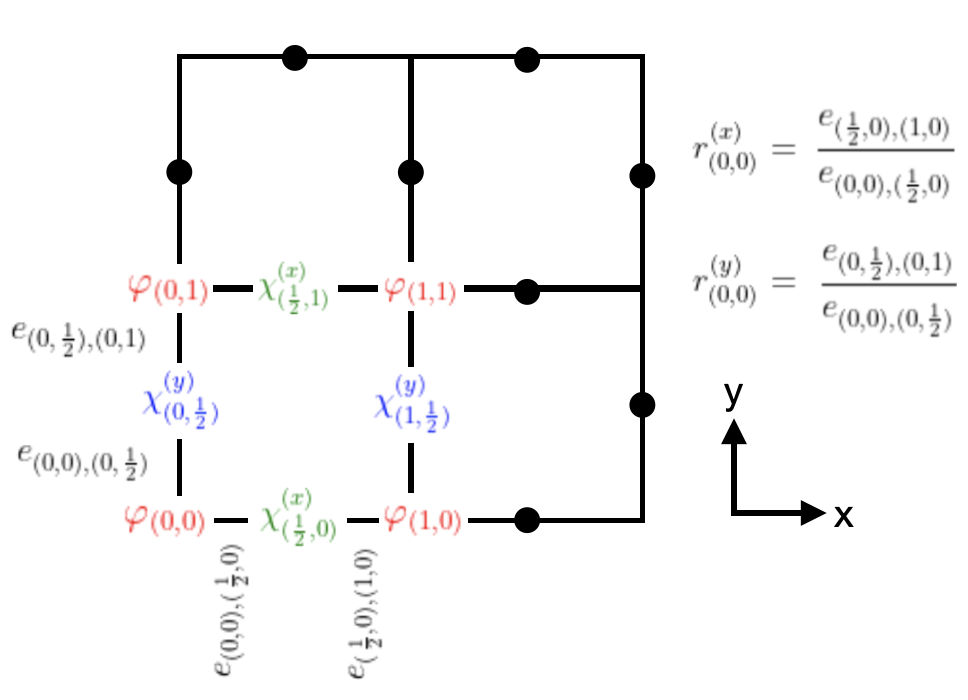}
\end{equation}

In this example, the height field $\varphi_{(1,1)}$ is given by
\begin{equation}
\varphi_{(1,1)} = \varphi_{(0,0)} + \log r_{(0,0)}^{(y)} + \log r_{(0,1)}^{(x)}
\end{equation}
and is also equivalent to
\begin{equation}
\varphi_{(1,1)} = \varphi_{(0,0)} + \log r_{(0,0)}^{(x)} + \log r_{(1,0)}^{(y)}.
\end{equation}
This equivalence follows from the condition that $\sum \Delta \varphi = 0$ around any loop.  For the $\chi$ fields, we have
\begin{equation}
\chi_{(\frac{1}{2},1)}^{(x)} = \chi_{(0,\frac{1}{2})}^{(y)} + \log \frac{e_{(0,1), (\frac{1}{2},1)}}{e_{(0,\frac{1}{2}), (0,1)}}
\end{equation}
and so on.  With the height fields defined, we now can define the random graph ensemble that we will require for the higher-dimensional graphs:
\begin{definition}[Random hierarchical graph on Lieb lattice supergraph]\label{def:liebensemble}

A random balanced hierarchical graph on a $d$-dimensional Lieb lattice supergraph is constructed in terms of a probability distribution $p(\varphi, \chi^{(1)}, \ldots, \chi^{(d)})$ defined over the $d+1$ height fields, which we may alternatively denote by $\Phi$.  This provides a joint distribution for the edge-edge ratios $r^{(i)}_v$.
\end{definition}

The choice of the probability distribution $p(\cdot)$ dictates the behavior of the landscape.  As in the 1D case that led to the welded trees, it is natural to have vertex size vary slowly between adjacent vertices, but to vary significantly once we look across the entire graph.  To model this behavior, we analyze a natural generalization of the 1D random walk called the Biased Gaussian free field (BGFF).  In the BGFF the differences in the height field $\varphi$ will have independent Gaussian fluctuations around some specified mean, which is called the ``bias''.

\begin{definition}[Biased Gaussian free field (BGFF)]\label{def:BGFF}
Let $G=(V,E)$ be a directed graph and let $J\colon E\rightarrow\R$. For a function $\Phi\colon V\rightarrow \R$,
define the ``action functional''
\begin{equation}
    S[\Phi] = \frac{1}{2}\sum_{(v,w)\in E} (\Phi_{v} - \Phi_{w} - J_{v,w})^2.
\end{equation}
We call  $J$ the source term, or bias.  The Gaussian free field is generated by the probability distribution $p[\Phi] = \frac{1}{Z} \exp\left(-S[\Phi]\right)$, under the measure
\begin{equation}
    \int \mathcal{D}\Phi \triangleq \int \prod_{v}d\Phi_v.
\end{equation}
\end{definition}

 Our results generally hold for a larger range of probability distributions than a GFF over the height fields.  In particular, as the edge-edge ratios are constrained to be rational integers, we must allow for distributions which are close to the biased Gaussian free field.  Therefore, we define a biased sub-Gaussian free field, which we denote as BsGFF:

\begin{definition}[Biased sub-Gaussian free field (BsGFF)]\label{def:BsGFF}
Let $G=(V,E)$ be a directed graph and let $J\colon E\rightarrow\R$. For a function $\Phi\colon V\rightarrow \R$,
define the action functional
\begin{equation}
    S[\Phi] = \frac{1}{2}\sum_{(v,w)\in E} (\Phi_{v} - \Phi_{w} - J_{v,w})^2.
\end{equation}
and the BGFF $p_J(\Phi)$ from $S[\Phi]$.  A distribution $q_J(\Phi)$ is said to be a BsGFF if there exists a constant $\lambda > 0$ such that for all $t$
\begin{equation}
    \mathbb{E}_{q_J}\left[\exp\left((\Phi_v - \Phi_u) t\right)\right] \leq \mathbb{E}_{p_J}\left[\exp\left((\Phi_v - \Phi_u) \lambda t\right)\right].
\end{equation}
We refer to the BGFF $p_J(\Phi)$ as the parent BGFF of $q_J(\Phi)$.
\end{definition}

In the following sections we will consider the case where the fields $\varphi$ and $\chi^{(i)}$ obey independent BsGFF distributions.  Each of these $d+1$ fields is defined over a cubic lattice, and together these $d+1$ cubic lattices comprise the Lieb lattice.

As in the 1D case the fluctuations will determine the quantum runtime while the shape of the lattice, which is dominated by the source/bias term, will determine the classical runtime.  This again will enable superpolyomial or exponential speedups.

\subsection{Quantum walk traversal time}

Fundamental to the results of the paper is a new bound on the traversal time of a quantum walk from a start node $S_\text{init}$ to an exit node $S_\text{exit}$.  \emph{We first make the assumption that the Hamiltonian we time evolve has a zero energy eigenstate}.  In particular, there exists a $\ket{\psi}$ such that $H \ket{\psi} = 0$, which we will henceforth label as $\ket{0}$.  The theorem below shows that the traversal time of a quantum walk is related both to the overlap of the initial and final state with the zero energy eigenstate as well as the gap to the smallest nonzero eigenvalue in magnitude.

\begin{lemma} \label{lem:evolution}
  Denote the gap of the Hamiltonian $H$ in the supervertex subspace by $\Delta$, which is defined to be the smallest \emph{non-zero} eigenvalue of $H$ in absolute value.
  Consider the following algorithm
  \bit
\item Start with the state $\ket{S_{\text{init}}}$.
\item Evolve for a random time $t$ drawn uniformly between $[0,\tau]$.  Here $\tau = \frac{4}{\Delta \left|\braket{S_{\text{exit}}}{0} \braket{0}{S_{\text{init}}}\right|}$.
  \item Measure the state in the vertex basis.
    \eit
    The probability of reaching $S_{\text{exit}}$, denoted $\mathbb{P}(\text{EXIT})$, satisfies
\begin{equation}
    \mathbb{P}(\text{EXIT}) \geq \frac{1}{4}\left|\braket{S_{\text{exit}}}{0} \braket{0}{S_{\text{init}}}\right|^2.
\end{equation}
\end{lemma}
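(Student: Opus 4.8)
The plan is to expand the starting state in the eigenbasis of $H$ restricted to the supervertex subspace, isolate the contribution of the zero mode $\ket{0}$, and show that the time-averaged transition amplitude to $\ket{S_\text{exit}}$ is dominated by this zero mode provided $\tau$ is chosen large enough that all nonzero-frequency terms average out. Write $H = \sum_{E} E \ket{E}\bra{E}$ on the supervertex subspace, with the $E=0$ term being $\ket{0}\bra{0}$ (we may assume the zero eigenvalue is simple within the relevant subspace, as the earlier discussion arranges). Then
\begin{equation}
  \braket{S_\text{exit}}{e^{-iHt}}{S_\text{init}}
  = \braket{S_\text{exit}}{0}\braket{0}{S_\text{init}}
  + \sum_{E \neq 0} e^{-iEt}\braket{S_\text{exit}}{E}\braket{E}{S_\text{init}}.
\end{equation}
The first (static) term has magnitude $|\braket{S_\text{exit}}{0}\braket{0}{S_\text{init}}|$, which we abbreviate $\mu$. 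The probability of measuring $S_\text{exit}$ after evolving for time $t$ is at least $|\braket{S_\text{exit}}{e^{-iHt}}{S_\text{init}}|^2$ (measuring in the full vertex basis can only help, since $\ket{S_\text{exit}}$ is a single vertex up to normalization — in fact $\ket{S_\text{exit}} = \ket{v_\text{exit}}$ since $s_\text{exit}=1$, so the inequality is an equality for that outcome).

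Next I would bound the time-averaged probability. Averaging $t$ uniformly over $[0,\tau]$ and using $\mathbb{P}(\text{EXIT} \mid t) \ge |A(t)|^2$ where $A(t)$ is the amplitude above, I would lower-bound $\frac{1}{\tau}\int_0^\tau |A(t)|^2\,dt$. The cross terms between the static part and the oscillating part integrate to $O(1/(\Delta\tau))$ in magnitude (each $\int_0^\tau e^{-iEt}dt$ contributes $\le 2/|E| \le 2/\Delta$), and the purely oscillating self-terms are nonnegative so they can be dropped. Thus
\begin{equation}
  \frac{1}{\tau}\int_0^\tau |A(t)|^2\,dt \;\ge\; \mu^2 - \frac{C}{\Delta \tau}
\end{equation}
for an absolute constant $C$ coming from summing the cross terms (here one uses that $\sum_E |\braket{S_\text{exit}}{E}\braket{E}{S_\text{init}}| \le 1$ by Cauchy–Schwarz, so $C$ is something like $2\mu$, giving $\mu^2 - 2\mu/(\Delta\tau)$). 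Plugging in $\tau = 4/(\Delta\mu)$ makes the error term $2\mu/(\Delta\tau) = \mu^2/2$, leaving $\mu^2 - \mu^2/2 = \mu^2/2$; a slightly more careful accounting of the constant (or just using the stated $\tau$) yields the claimed $\frac14 \mu^2$. I would double-check the exact constant against the choice $\tau = 4/(\Delta\mu)$ and adjust the Cauchy–Schwarz step so the numbers land on $1/4$.

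The main obstacle is controlling the cross terms cleanly: one has to make sure the sum $\sum_{E\neq 0} \frac{1}{|E|}|\braket{S_\text{exit}}{E}\braket{E}{S_\text{init}}|$ is bounded using only the spectral gap $\Delta$ (replacing $1/|E|$ by $1/\Delta$) together with a norm bound on the overlap vectors, rather than needing finer spectral information — this is where the definition of $\Delta$ as the \emph{smallest nonzero} eigenvalue in absolute value is used, and where one must be careful that the walk genuinely stays in the supervertex subspace (guaranteed by the invariance proposition above, since $\ket{S_\text{init}}$ lies in it). A secondary point to handle carefully is the degeneracy of the zero eigenvalue: if the zero eigenspace is multidimensional, one should project $\ket{S_\text{init}}$ and $\ket{S_\text{exit}}$ onto it and argue the relevant overlap is still captured by the single state $\ket{0}$ that the Hamiltonian construction singles out — or simply state the lemma for the component along $\ket{0}$ and note the other zero modes only add nonnegative contributions to $\mathbb{P}(\text{EXIT})$.
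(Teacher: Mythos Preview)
Your strategy---expand in the eigenbasis, isolate the zero-mode contribution, and kill the oscillating terms using the gap $\Delta$---is the same as the paper's. But the execution differs in one place that actually matters for the constant. You directly time-average $|A(t)|^2$ and bound the cross terms; the paper instead first passes to the time-averaged \emph{amplitude} via
\[
\frac{1}{\tau}\int_0^\tau |A(t)|^2\,dt \;\ge\; \left|\frac{1}{\tau}\int_0^\tau A(t)\,dt\right|^2,
\]
and only then separates the zero mode from the rest. In the paper's route the error term is $2/(\Delta\tau)$ (no factor of $\mu$), so $\bigl|\tfrac{1}{\tau}\int A\bigr|\ge \mu - 2/(\Delta\tau)=\mu/2$ at $\tau=4/(\Delta\mu)$, giving exactly $\mu^2/4$.

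In your route the cross term is $2\,\mathrm{Re}\bigl(\bar c_0\cdot\tfrac{1}{\tau}\int B\bigr)$, and you dropped the factor of $2$ coming from the real part: the correct bound is $4\mu/(\Delta\tau)$, not $2\mu/(\Delta\tau)$. With the stated $\tau=4/(\Delta\mu)$ that subtracts off \emph{all} of $\mu^2$, leaving only the trivial bound $0$. Your method does prove the right thing if you double $\tau$, so this is a constant-factor loss rather than a conceptual gap---but it means your argument as written does not establish the lemma with its stated parameters. The paper's ``integrate the amplitude, then square'' step is precisely what buys back that factor of $2$.

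One small correction on the degeneracy remark: other zero modes do \emph{not} in general add only nonnegative contributions---the full zero-space projection $\bra{S_{\text{exit}}}\Pi\ket{S_{\text{init}}}$ can be smaller in magnitude than $\braket{S_{\text{exit}}}{0}\braket{0}{S_{\text{init}}}$ due to cancellation. The paper handles this not by a sign argument but by engineering the model so that only one zero mode overlaps both endpoints (the remaining zero modes live on the other sublattice and are orthogonal to $\ket{S_{\text{init}}}$ and $\ket{S_{\text{exit}}}$).
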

\begin{proof}
The quantity of interest we want to calculate is a probability of measuring $\ket{S_{\text{exit}}}$ starting from $\ket{S_{\text{init}}}$ and evolving for time $t$, averaged between $[0,\tau]$:
\begin{align}
  \mathbb{P}(\text{EXIT}) &= \frac{1}{\tau} \int_{0}^{\tau} \dd t\,\left|\mel{S_{\text{exit}}}{e^{-iHt}}{S_{\text{init}}}\right|^2 \nonumber \\
                          & \geq \left(\frac{1}{\tau} \int_0^\tau \dd t\,\left|\mel{S_{\text{exit}}}{e^{-iHt}}{S_{\text{init}}}\right|\right)^2
                          & \text{(Cauchy-Schwarz)}
                            \nonumber \\
                          & \geq \frac{1}{\tau^2} \left|\int_0^\tau \dd t\,\mel{S_{\text{exit}}}{e^{-iHt}}{S_{\text{init}}}\right|^2.
                           & \text{(triangle inequality)}
\end{align}
Next, we decompose the time evolution operator in terms of its energy eigenbasis in order to write
\begin{align}
  \frac{1}{\tau}\int_0^\tau \dd t\,\mel{S_{\text{exit}}}{e^{-iHt}}{S_{\text{init}}}
  &= \frac{1}{\tau}\int_0^\tau \dd t\, \sum_E e^{-iEt} \braket{S_{\text{exit}}}{E}\braket{E}{S_{\text{init}}}\nonumber \\
  &= \braket{S_{\text{exit}}}{0} \braket{0}{S_{\text{init}}} + \sum_{E \neq 0} \frac{1 - e^{-iE\tau}}{i E \tau} \braket{S_{\text{exit}}}{E}\braket{E}{S_{\text{init}}},
    \label{eq:zero-mode-overlap}
\end{align}
where in the second line, we have specifically assumed that the Hamiltonian has a unique zero energy eigenstate.  For the one-dimensional chain, we have already shown uniqueness of the zero energy eigenstate for odd-length chains.  More generally we will later consider the case of a degenerate zero-energy subspace.  In this case, we would replace the $\braket{S_{\text{exit}}}{0} \braket{0}{S_{\text{init}}}$ term with $\bra{S_{\text{exit}}} \Pi \ket{S_{\text{init}}}$ where $\Pi$ projects only the zero-energy subspace.  To make the analysis tractable in the higher-dimensional case, we will impose constraints on the graph that guarantee that there is a single state (which we call $\ket{0}$) that is within the zero-energy subspace and has overlap with $\ket{S_{\text{init}}}$ and $\ket{S_{\text{exit}}}$.  This is discussed in more detail in \sec{2D}.  The result is that even in the case of a degenerate zero-energy subspace, we can consider only the single zero-energy state having overlap with $\ket{S_{\text{init}}}$ and $\ket{S_{\text{exit}}}$, and we can use Eqn.~\eq{zero-mode-overlap}.

Continuing, we calculate
\begin{align}
  &\left|\frac{1}{\tau}\int_0^\tau \dd t\,\mel{S_{\text{exit}}}{e^{-iHt}}{S_{\text{init}}}\right|\nonumber \\
  &\geq \left|\braket{S_{\text{exit}}}{0} \braket{0}{S_{\text{init}}}\right| - \left|\sum_{E \neq 0} \frac{1 - e^{-iE\tau}}{i E \tau} \braket{S_{\text{exit}}}{E}\braket{E}{S_{\text{init}}}\right|
        & \text{(triangle inequality)}\nonumber \\
  & \geq \left|\braket{S_{\text{exit}}}{0} \braket{0}{S_{\text{init}}}\right| - \frac{2}{\Delta \tau}\sum_{E \neq 0} \left|\braket{S_{\text{exit}}}{E}\braket{E}{S_{\text{init}}}\right|
        &\big(\Delta \triangleq \min_{E_i \neq 0} |E_i|\big)
          \nonumber \\
  & \geq \left|\braket{S_{\text{exit}}}{0} \braket{0}{S_{\text{init}}}\right| - \frac{2}{\Delta \tau}
    \sqrt{\sum_{E \neq 0} \abs{\braket{S_{\text{exit}}}{E}}^2 \sum_{E\neq 0}\abs{\braket{E}{S_{\text{init}}}}^2}
        &\text{(Cauchy-Schwarz)}   \nonumber \\
  & \geq \left|\braket{S_{\text{exit}}}{0} \braket{0}{S_{\text{init}}}\right| - \frac{2}{\Delta \tau}.
\end{align}
Selecting $\tau = \frac{4}{\Delta \left|\braket{S_{\text{exit}}}{0} \braket{0}{S_{\text{init}}}\right|}$, we find that
\begin{equation}
    \mathbb{P}(\text{EXIT}) \geq \frac{1}{4}\left|\braket{S_{\text{exit}}}{0} \braket{0}{S_{\text{init}}}\right|^2.
\end{equation}
\end{proof}
The result above will fail in general if there are multiple zero modes, which follows from the fact that the zero modes may destructively interfere with one another.  In the multidimensional case, we will utilize particular properties of the Lieb lattice that help us avoid this situation.

\section{One-dimensional random graph ensembles}
\label{sec:random-graph}

In this section we study the zero energy eigenstate (which we also refer to as the zero mode) of hopping Hamiltonians on a one-dimensional line.  Following Eqn.~(\ref{eqn:hamiltonian}), we work with the effective Hamiltonian
\begin{equation}
    H = \begin{pmatrix}
    0 & t_{0}   & \cdots     & 0  & 0 \\
    t_{0}  & 0       & t_{1}   & \cdots     & 0   \\
     \vdots &       t_{1}  & \ddots         & t_{2n-2}    & \vdots    \\
     0 &        \vdots &          t_{2n-2} &     0      & t_{2n-1}   \\
     0 &      0   &        \cdots   &       t_{2n-1}    & 0
  \end{pmatrix},
\end{equation}
This is a tridiagonal matrix with zeroes on the principal diagonal, but random off-diagonal elements $t_0,\ldots,t_{2n-1}$.  In the condensed matter literature, this is known as a random hopping model.  This model has the special property that it possesses an exact zero energy eigenstate.

\begin{lemma}\label{lem:zero-vector}
  There exists an eigenvector of $H$ with zero eigenvalue within the
  subspace spanned by $\ket{S_0},\ldots,\ket{S_{2n}}$.
\end{lemma}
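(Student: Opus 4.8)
The plan is to exhibit the zero mode explicitly by solving the recurrence $H\ket{\psi}=0$ in the supervertex basis. Write $\ket{\psi}=\sum_{i=0}^{2n}\psi_i\ket{S_i}$. Using the tridiagonal form \eqref{eqn:hamiltonian}, the equation $H\ket{\psi}=0$ reads, row by row: $t_0\psi_1=0$ in the first row, $t_{i-1}\psi_{i-1}+t_i\psi_{i+1}=0$ for $1\le i\le 2n-1$, and $t_{2n-1}\psi_{2n-1}=0$ in the last row. Since the off-diagonal hoppings $t_i=e_i/\sqrt{s_is_{i+1}}$ are all strictly positive, the first row forces $\psi_1=0$, and then the recurrence $\psi_{i+1}=-(t_{i-1}/t_i)\psi_{i-1}$ decouples into even and odd sublattices. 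On the odd sublattice $\psi_1=\psi_3=\cdots=\psi_{2n-1}=0$ (and the last row $t_{2n-1}\psi_{2n-1}=0$ is then automatically satisfied — here is where oddness of the chain length, i.e. $|V|=2n+1$, matters). On the even sublattice, starting from $\psi_0$ free we get $\psi_{2k}=\prod_{j=1}^{k}\left(-\frac{t_{2j-2}}{t_{2j-1}}\right)\psi_0=(-1)^k\psi_0\prod_{j=1}^{k}\frac{t_{2j-2}}{t_{2j-1}}$, and all the middle-row equations at even $i$ are satisfied by construction.

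The key steps, in order, are: (i) expand a candidate vector in the supervertex basis and write out the three types of rows of $H\ket{\psi}=0$; (ii) use $t_i>0$ together with the first/last rows to kill the odd-index components; (iii) solve the two-step recurrence on the even sublattice in closed product form, showing the solution is nontrivial (take $\psi_0=1$, say); (iv) normalize, noting the normalization constant $\left(1+\sum_{k=1}^{n}\prod_{j=1}^k (t_{2j-2}/t_{2j-1})^2\right)^{1/2}$ is finite and positive since all factors are finite and positive. Optionally one then records the resulting overlaps $\braket{0}{S_0}$ and $\braket{0}{S_{2n}}$, which feed into \lem{evolution}, but existence is all that the statement requires.

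There is essentially no serious obstacle here; the statement is a standard fact about tridiagonal matrices with a zero diagonal and odd size, and the only thing to be careful about is the parity bookkeeping — that the chain has an \emph{odd} number of vertices so that the "last" constraint lands on the odd sublattice (which is already forced to vanish) rather than over-determining the even sublattice. If one wanted a non-constructive argument instead, one could note that $H$ is a real symmetric $(2n+1)\times(2n+1)$ matrix that is similar (by the diagonal sign conjugation $D=\mathrm{diag}(1,-1,1,-1,\ldots)$) to $-H$, so its spectrum is symmetric about $0$ and hence, having odd dimension, must contain $0$; but the explicit construction is more useful since its overlaps with $\ket{S_0},\ket{S_{2n}}$ are exactly what later sections need to bound via the localization/Dyson-singularity analysis.
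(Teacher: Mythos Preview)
Your proof is correct but takes a different route from the paper. The paper argues non-constructively via the determinant: for a tridiagonal matrix with zero diagonal, the principal-minor recurrence $h_k=-t_{k-2}^2 h_{k-2}$ together with $h_0=1$, $h_1=0$ forces $h_{2n+1}=\det H=0$, so a zero eigenvalue exists. You instead construct the eigenvector explicitly by solving the two-step recurrence on the even sublattice, and you also note the spectral-symmetry argument ($DHD^{-1}=-H$ with $D=\mathrm{diag}(1,-1,1,\ldots)$) as a second non-constructive alternative. Your constructive approach has the advantage that it immediately delivers the closed-form amplitudes $\psi_{2k}=(-1)^k\psi_0\prod_{j=1}^{k}t_{2j-2}/t_{2j-1}$, which is precisely the input to the next lemma (\lem{prob-entrance-exit}); the paper's determinant argument is a line shorter but must redo this computation anyway in the following proof. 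Either way the content is standard, and your parity bookkeeping is the right point to flag.
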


\begin{proof}
 The number of supervertices, $2n+1$, is odd. The recurrence relation for the determinants of principal submatrices of $H$ (denoted by $h_i$) satisfies $h_i = -t_{i-1}^2 h_{i-2}$, where $t_i = \frac{e_i}{\sqrt{s_i s_{i+1}}}$.  The boundary conditions satisfy $h_0 = 1$ and $h_{2n} = 0$, and the determinant is zero when $n$ is odd.
\end{proof}
Next, we may explicitly write down the zero-energy state and analyze its decay properties.  This state ensures a subexponential time quantum algorithm, as the runtime is linked to the decay properties of the zero mode:

\begin{lemma}\label{lem:prob-entrance-exit}
The vector in \lem{zero-vector} can be written as $\sum_{i=0}^{2n}
\psi_i \ket{S_i}$, and the coefficients $\psi_{i}$ satisfy
\begin{equation}
  |\psi_{2n}||\psi_0| \geq \frac{C}{n} \cdot
  \exp\left[\frac{1}{2}\sum_{k=0}^{n-1}  \log
    \frac{r_{2k}}{r_{2k+1}}-\max_j\left(\sum_{k=0}^{j-1}  \log
      \frac{r_{2k}}{r_{2k+1}}\right)\right],
   \label{eqn:zero-vec-bounds}
\end{equation}
where $r_{i}:=e_{i}/e_{i-1}$.  Assuming that $r_i$ is bounded,  $C$ is a constant.
\end{lemma}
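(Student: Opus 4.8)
The plan is to explicitly solve the zero-mode recurrence for the tridiagonal Hamiltonian $H$ and then extract the claimed product bound on $|\psi_0||\psi_{2n}|$. The eigenvalue equation $H\psi = 0$ reads $t_{i-1}\psi_{i-1} + t_i \psi_{i+1} = 0$ for each interior index, which decouples the even and odd sublattices. Since the chain has odd length $2n+1$ and (by \lem{zero-vector}) a zero mode exists within the supervertex subspace, I expect the solution to be supported entirely on the even sites $\ket{S_0}, \ket{S_2}, \ldots, \ket{S_{2n}}$, with all odd coefficients vanishing. Concretely, from $t_1\psi_2 = -t_{-1}\psi_0$-type relations (reading the recurrence at even indices $i = 2k$) one gets $\psi_{2k+2} = -(t_{2k}/t_{2k+1})\psi_{2k}$, so that up to a sign
\begin{equation}
  \psi_{2k} = \psi_0 \prod_{j=0}^{k-1} \frac{t_{2j}}{t_{2j+1}}.
\end{equation}
First I would verify this claim carefully, confirming that the odd coefficients can consistently be set to zero given the boundary structure ($\psi_{-1}, \psi_{2n+1}$ do not exist), and that this is the vector produced by \lem{zero-vector}.

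Next I would convert the hopping ratios $t_{2j}/t_{2j+1}$ into edge-edge ratios. Using $t_i = e_i/\sqrt{s_i s_{i+1}}$, the ratio $t_{2j}/t_{2j+1} = (e_{2j}/e_{2j+1})\sqrt{s_{2j+1}s_{2j+2}/(s_{2j}s_{2j+1})} = (e_{2j}/e_{2j+1})\sqrt{s_{2j+2}/s_{2j}}$. The $e$-ratios telescope into $\prod_j e_{2j}/e_{2j+1} = \prod_j r_{2j+1}^{-1}\cdots$ — more carefully, $e_{2j}/e_{2j+1} = 1/r_{2j+1}$ and $e_{2j}/e_{2j-1} = r_{2j}$, so a product of $t$-ratios telescopes in the $e$ variables, while the $s$-factors also telescope leaving only boundary terms $\sqrt{s_{2k}/s_0}$. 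Since $s_0 = 1$ and the $s_i/s_{i+1}$ are bounded, these square-root factors are $\exp(O(k))$ and can be absorbed, but I should be careful to track that they combine with the $e$-ratios to produce exactly the combination $\frac12\sum \log(r_{2k}/r_{2k+1})$ appearing in the statement. This is the step where I need to match the normalization conventions: the exponent in \eqn{zero-vec-bounds} should emerge as (half of) the telescoped log of the $t$-ratios after the $s$-factors are accounted for.

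Finally, to get the product $|\psi_0||\psi_{2n}|$ I would normalize: $\|\psi\|^2 = \sum_k |\psi_{2k}|^2 = |\psi_0|^2 \sum_k \prod_{j<k}(t_{2j}/t_{2j+1})^2 = 1$. Writing $a_k := \sum_{j=0}^{k-1}\log(t_{2j}/t_{2j+1})$ (the log-partial-sums), we have $|\psi_0|^2 = 1/\sum_k e^{2a_k}$ and $|\psi_{2n}|^2 = e^{2a_n}/\sum_k e^{2a_k}$, so $|\psi_0||\psi_{2n}| = e^{a_n}/\sum_{k=0}^{n} e^{2a_k}$. The sum is dominated by its largest term, $\sum_k e^{2a_k} \le (n+1)\exp(2\max_k a_k)$, which yields
\begin{equation}
  |\psi_0||\psi_{2n}| \ge \frac{1}{n+1}\exp\!\Big(a_n - 2\max_k a_k\Big),
\end{equation}
and after translating $a_k$ back to the $r$-variables (absorbing the bounded $s$-contributions into the constant $C$, using the boundedness hypothesis on $r_i$) this is exactly \eqn{zero-vec-bounds}. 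The main obstacle I anticipate is the bookkeeping in the middle step — getting the factors of $\frac12$, the square roots of $s$-ratios, and the index ranges to line up so that the $s$-dependence genuinely cancels up to a bounded multiplicative constant rather than contributing spuriously to the exponent; the boundedness assumption $c_1 \le s_i/s_{i+1} \le c_2$ is what makes this work, and I would want to state explicitly where it is used. The normalization/max-term argument at the end is routine once the explicit formula is in hand.
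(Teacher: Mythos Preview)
Your proposal is correct and follows essentially the same route as the paper: solve the two-step recurrence to get $\psi_{2k}=\psi_0\prod_{j<k}(-t_{2j}/t_{2j+1})$, convert the $t$-ratios to the form $C_k\exp\big(\tfrac12\sum_{j<k}\log(r_{2j}/r_{2j+1})\big)$ via the degree constraint $Ds_i=e_{i-1}+e_i$ (this is exactly the ``bookkeeping'' step you flag, and the paper does it by writing $s_{2j}\propto e_{2j-1}\propto\prod_l r_l$ so that $\sqrt{s_{2j}}\prod_k r_{2k+1}^{-1}$ collapses to $\prod_k\sqrt{r_{2k}/r_{2k+1}}$), and then extract $|\psi_0||\psi_{2n}|$ from normalization. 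The only cosmetic difference is that the paper phrases the normalization step as $\max_i|\psi_i|\geq\Omega(1/\sqrt{n})$ and solves for $|\psi_0|$, whereas you bound $\sum_k e^{2a_k}\le(n+1)e^{2\max_k a_k}$ directly; these are the same argument. One small slip: the nontrivial recurrence relations determining the even amplitudes come from the eigenvalue equation at \emph{odd} sites, not even ones, though your resulting formula is correct.
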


\begin{proof}
The eigenvalue equation for the zero energy eigenstate can be written down explicitly.  For even indexed sites, the equation is
\begin{equation}
    t_{i-1} \psi_{i-1} + t_{i} \psi_{i+1} = 0,
\end{equation}
while for the odd indexed sites the eigenstate has zero amplitude.  From the eigenvalue equation, we may write
\begin{equation}
    \psi_{2j} = \left(\prod_{k=0}^{j-1} -\frac{t_{2k}}{t_{2k+1}}\right) \psi_{0},
\end{equation}
and we may use the identity
\begin{equation}
  \prod_{k=0}^{j-1} \frac{t_{2k}}{t_{2k+1}}
  = \frac{\frac{e_0}{\sqrt{s_0 s_{1}}} \frac{e_2}{\sqrt{s_2 s_{3}}} \cdots \frac{e_{2j-2}}{\sqrt{s_{2j-2} s_{2j-1}}}}{\frac{e_1}{\sqrt{s_1 s_{2}}} \frac{e_3}{\sqrt{s_3 s_{4}}} \cdots \frac{e_{2j-1}}{\sqrt{s_{2j-1} s_{2j}}}}
  = \sqrt{\frac{s_{2j}}{s_0}} \prod_{k=0}^{j-1}\frac{e_{2k}}{e_{2k+1}}.
\end{equation}
Next, we use the fact that there is a constraint between $s_i$ and $e_i$, given that all nodes have degree $D$: in particular, $D s_i = e_{i-1} + e_{i}$ for all $i$.  Also recall that $e_i = r_i e_{i-1}$.  The expression above is simplified to (assuming $e_0 = D$):
\begin{align}\label{eqn:ratioequality}
  \prod_{k=0}^{j-1} \frac{t_{2k}}{t_{2k+1}}
  &= \sqrt{\frac{e_{2j-1}}{D s_0}(1+r_{2j})} \prod_{k=0}^{j-1}\frac{1}{r_{2k+1}} \nonumber \\
  &= \sqrt{\frac{D(1+r_{2j})}{s_0 r_0}} \prod_{k=0}^{j-1}\frac{\sqrt{r_{2k} r_{2k+1}}}{r_{2k+1}} \nonumber\\
  &= \sqrt{\frac{D(1+r_{2j})}{s_0 r_0}} \exp\left(\sum_{k=0}^{j-1} \frac{1}{2} \log \frac{r_{2k}}{r_{2k+1}}\right),
\end{align}
where in the second line, we used the fact that $e_j = e_0 \prod_{i=1}^j r_i$.  The coefficient multiplying the exponential is bounded from above and below by a constant, and we also know that $\max_i |\psi_i| \geq \Omega(1/\sqrt{n})$.  We then find the inequality
\begin{equation}
   \frac{C'}{\sqrt{n}} \leq \max_i |\psi_i| =  |\psi_0|\cdot C \exp\left[\frac{1}{2}\max_j\left(\sum_{k=0}^{j-1}  \log \frac{r_{2k}}{r_{2k+1}}\right)\right],
\end{equation}
or
\begin{equation}
   |\psi_0| \geq \frac{C}{\sqrt{n}} \cdot \exp\left[-\frac{1}{2}\max_j\left(\sum_{k=0}^{j-1}  \log \frac{r_{2k}}{r_{2k+1}}\right)\right].
\end{equation}

Finally, we can evaluate $|\psi_{2n}|$ in terms of $|\psi_0|$ by
setting $j = n$ in the expression for Eqn.~\eqn{ratioequality}.  In
conjunction with the lower bound above, we can compute
$|\psi_{2n}||\psi_0|$ to prove \eqn{zero-vec-bounds}.
\end{proof}

As a result, we establish the following lemma, which proves that the product of the wavefunction at the first and last sites does not become too small:
\begin{lemma}\label{lem:prob-Doob}
  There exists a universal constant $C>0$ such that for
for any $K>0$, the zero energy eigenstate of Hamiltonian $H$ satisfies
\begin{equation}
    |\psi_{2n}||\psi_0| \geq \frac{C}{n} \cdot \exp\left(-\frac 32 \sqrt{K n}\right),
  \end{equation}
  with probability $\geq 1-\exp(-K/\delta^2)$ over the choice of graph, where $|\log(r_k/r_{k+1})|\leq \delta$.
\end{lemma}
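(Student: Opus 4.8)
The plan is to start from the bound of \lem{prob-entrance-exit}, namely that
\[
|\psi_{2n}||\psi_0| \geq \frac{C}{n}\cdot\exp\!\left[\tfrac12 S_n - \max_{0\le j\le n}\tfrac12 S_j\right],
\qquad S_j \triangleq \sum_{k=0}^{j-1}\log\frac{r_{2k}}{r_{2k+1}},
\]
and to control the random quantity $\max_{0\le j\le n} S_j - S_n$ using a concentration argument. Set $X_k \triangleq \log(r_{2k}/r_{2k+1})$; by hypothesis $|X_k|\le 2\delta$ (or $\le\delta$ after redefining $\delta$), and the $r_i$ are independent, so the $X_k$ are independent bounded random variables and $S_j$ is a sum of these. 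I would \emph{not} assume the $X_k$ have zero mean — the bias assumption makes them potentially nonzero — but the key point is that I do not need the mean: I only need a one-sided tail bound on how large $\max_j S_j - S_n$ can be, and this I get by comparing to the worst case over the increments.

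The main step is the following. Write $\max_{0\le j\le n} S_j - S_n = \max_{0\le j\le n}(S_j - S_n) = \max_{0\le j\le n}\bigl(-\sum_{k=j}^{n-1} X_k\bigr) = \max_{0\le m\le n}\sum_{k=n-m}^{n-1}(-X_k)$, i.e.\ it is the running maximum of a random walk with steps $-X_{n-1}, -X_{n-2}, \dots$ read from the exit end. So it suffices to bound $\Pr[\max_{0\le m\le n} \widetilde S_m \ge u]$ where $\widetilde S_m = \sum_{\ell=1}^m Y_\ell$ with $Y_\ell = -X_{n-\ell}$ independent and bounded by $\delta$ in absolute value. For this I would use Doob's maximal inequality together with an exponential (Azuma/Hoeffding-type or Chernoff) moment bound: $e^{\lambda \widetilde S_m}$ is a submartingale, so $\Pr[\max_m \widetilde S_m \ge u] \le e^{-\lambda u}\,\mathbb{E}[e^{\lambda \widetilde S_n}] \le e^{-\lambda u}\prod_\ell \mathbb{E}[e^{\lambda Y_\ell}]$. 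Since $|Y_\ell|\le\delta$, each factor is at most $e^{\lambda \mathbb{E}[Y_\ell] + \lambda^2\delta^2/2}$; but wait — the bias means $\mathbb{E}[Y_\ell]$ could be negative (the bias assumption says $r_i>1$ then $r_i<1$, so $\mathbb{E}[X_k]$ is typically nonnegative and $\mathbb{E}[Y_\ell]\le 0$), which only helps, so $\prod_\ell \mathbb{E}[e^{\lambda Y_\ell}] \le e^{n\lambda^2\delta^2/2}$ after dropping the drift term (if the drift happens to be adverse over some window one uses the crude bound $\mathbb{E}[e^{\lambda Y_\ell}]\le e^{\lambda\delta}$ and absorbs an $O(n\delta)$ term, which is lower order). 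Optimizing, $\Pr[\max_m \widetilde S_m \ge u] \le \exp(-u^2/(2n\delta^2))$. Choosing $u = \sqrt{Kn}$ gives probability $\ge 1 - \exp(-K/(2\delta^2))$ (adjust constants to match the statement's $\exp(-K/\delta^2)$), and on this event $\tfrac12 S_n - \tfrac12\max_j S_j \ge -\tfrac12\sqrt{Kn}$.

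There is a subtlety: the exponent in \lem{prob-entrance-exit} is $\tfrac12 S_n - \max_j \tfrac12 S_j$, but $\tfrac12 S_n$ itself could be very negative even when the fluctuation term is controlled — however $\tfrac12 S_n - \max_j\tfrac12 S_j = \tfrac12(S_n - \max_j S_j)$ already \emph{is} the single quantity $-\tfrac12(\max_j S_j - S_n)$, so there is no separate $S_n$ term to worry about; the algebra of \lem{prob-entrance-exit} has already packaged everything into the drop from the running max. So after bounding $\max_j S_j - S_n \le \sqrt{Kn}$ with the stated probability, we get $|\psi_{2n}||\psi_0|\ge (C/n)\exp(-\tfrac12\sqrt{Kn})$, which is even a bit stronger than the claimed $\exp(-\tfrac32\sqrt{Kn})$; the looser constant $\tfrac32$ in the statement gives room to absorb the adverse-drift correction term $O(n\delta)\cdot$(fraction of bad window) and any slack from matching $\exp(-K/\delta^2)$ versus $\exp(-K/(2\delta^2))$ by rescaling $K$. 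The main obstacle is purely bookkeeping — making sure the one-sided tail is applied at the correct end of the chain and that the bias term is handled on the correct side of the inequality so it never hurts us — rather than anything deep; the heavy lifting (the deterministic formula for $\psi_{2n}\psi_0$) is already done in \lem{prob-entrance-exit}.
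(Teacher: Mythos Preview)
You have misread the exponent from \lem{prob-entrance-exit}. The bound there is
\[
|\psi_{2n}||\psi_0|\ \ge\ \frac{C}{n}\exp\!\Big[\tfrac12 S_n \;-\; \max_j S_j\Big],
\]
\emph{not} $\tfrac12 S_n - \tfrac12\max_j S_j$. (Trace the proof: $|\psi_0|$ carries a factor $\exp(-\tfrac12\max_j S_j)$, and $|\psi_{2n}||\psi_0|=|\psi_0|^2\cdot\exp(\tfrac12 S_n)$ up to constants, so the $\max$ term enters with coefficient $1$, not $\tfrac12$.) Writing $M=\max_j S_j\ge 0$ and $D=M-S_n\ge 0$, the exponent equals $-\tfrac12 M - \tfrac12 D$, so you must control \emph{both} $M$ and $D$, not just the ``drop from the running max'' $D$. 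Your argument only bounds $D$; if the walk has upward drift, $M$ can be $\Theta(n)$ while $D$ is small, and the bound fails. This is exactly why the constant is $\tfrac32$ and not $\tfrac12$: two separate pieces of size $\sqrt{Kn}$ combine as $\tfrac12\sqrt{Kn}+\sqrt{Kn}$.

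The paper's fix is to bound $\max_j|S_j|\le\sqrt{Kn}$ directly (so that both $M\le\sqrt{Kn}$ and $S_n\ge-\sqrt{Kn}$), using the same Doob-plus-exponential-moment machinery you describe. For this they do use that the increments $\eta_k=\log(r_{2k}/r_{2k+1})$ have mean zero: since $r_{2k}$ and $r_{2k+1}$ are identically distributed, $\bbE[\eta_k]=0$ regardless of any overall bias in the $r_i$. Your claim that the drift ``only helps'' is correct for $D$ but wrong for $M$; fortunately the zero-mean property holds automatically here, so you should simply use it rather than try to work around a nonexistent obstacle.
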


\begin{proof}
  Define the random variable $\eta_k := \log \frac{r_{2k}}{r_{2k+1}}$.  Since the $r_k$ are i.i.d.~we have $\bbE[\eta_k]=0$.   We also define the partial sums $X_j = \sum_{i=0}^j \eta_i$.
  In terms of the $X_j$, we can rewrite the RHS of Eqn.~\eqn{zero-vec-bounds} as
  \be
  \frac C n  \exp[\frac{X_n}{2} - \max_j X_j].
  \ee
The lemma will follow from proving that
  \be
  \max_j |X_j| \leq \sqrt{Kn}  \label{eq:max-Xj}
  \ee
  with high probability.  To prove this, observe that the $X_j$ form a martingale due to the independence of the $\eta_k$.  Moreover, the $\exp(X_j)$ form a submartingale, which can be seen via
  \be
  \mathbb{E}[\exp(X_{k+1})|X_k] = \mathbb{E}[\exp(\eta_k)]\exp(X_{k}) \geq \exp(\mathbb{E}[\eta_k]) \exp(X_{k}) = \exp(X_{k}).
  \ee
  Applying Doob's martingale inequality~\cite[Thm 7.3.1]{shiryaev_2019} gives
\begin{align}
  \mathbb{P}\left(\left|\max_j X_j\right| \geq \sqrt{K n} \right)
  &= \mathbb{P}\left(\left|\max_j\exp\left(\lambda X_j\right)\right| \geq \exp(\lambda \sqrt{K n})\right) \nonumber \\
  &\leq \frac{\mathbb{E}[\exp(\lambda X_n)]}{\exp(\lambda \sqrt{K n})}
    & \text{(Doob's inequality)} \nonumber\\
  &= \frac{1}{\exp(\lambda \sqrt{K n})} \prod_{i=1}^n \mathbb{E}[\exp(\lambda \eta_i)]
    & \text{(independence of the $\eta_i$)}\nonumber \\
  &\leq \frac{\exp(n \lambda^2 \delta^2/2)}{\exp(\lambda \sqrt{K n})}
  & (|\eta_i|\leq \delta)  \nonumber\\
  & = \exp(-\frac{K}{\delta^2}). & \left(\text{setting $\lambda=\frac{\sqrt{K}}{\sqrt{n}\delta^2}$}\right).
\end{align}

This completes the proof.  We remark that using a Hoeffding bound  and union bound instead of Doob's inequality would have replaced the $\sqrt{n}$ in the exponent with a $\sqrt{n\log(n)}$.
\end{proof}

Next, we would like to perform a time evolution of the Hamiltonian of the system in the supervertex subspace and construct a protocol for which we can traverse the graph with sufficiently high probability.  The existence of a zero energy state and its sub-exponential decay is crucial to facilitating a superpolynomial speedup.  This can be made more clear via the following lemma:
\begin{lemma}
Denote the gap of the Hamiltonian $\mathcal{H}$ in the supervertex subspace by $\Delta$, which is defined to be the smallest \emph{non-zero} eigenvalue of $\mathcal{H}$ in absolute value.  Starting from the state $\ket{S_{\text{init}}}$ and evolving for time $t$ averaged between $[0,\tau]$ for $\tau = \frac{4n}{C \Delta} \exp\left(1.5 \sqrt{K n}\right)$ where $C,K>0$ are constants, the EXIT probability satisfies
\begin{equation}
  \mathbb{P}(\text{EXIT}) \geq \frac{C^2}{4n^2}\exp\left(-3 \sqrt{K n}\right)
  \left(1-\exp(-\frac{K}{\delta^2})\right).
\end{equation}
\end{lemma}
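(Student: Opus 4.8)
The plan is to derive this lemma by feeding the high‑probability overlap bound of \lem{prob-Doob} into \lem{evolution}. First I would record that, since $S_{\text{init}}$ and $S_{\text{exit}}$ are single vertices, $\ket{S_{\text{init}}}$ and $\ket{S_{\text{exit}}}$ are vertex‑basis states, and that (by \lem{zero-vector} and the uniqueness discussion in \lem{evolution}) the effective Hamiltonian $\mathcal H$ has a unique normalized zero mode $\ket 0=\sum_i\psi_i\ket{S_i}$ with $\braket{0}{S_{\text{init}}}=\psi_0$ and $\braket{S_{\text{exit}}}{0}=\psi_{2n}$ in the notation of \lem{prob-entrance-exit}. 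Hence the scalar $|\braket{S_{\text{exit}}}{0}\braket{0}{S_{\text{init}}}|$ that governs both $\tau$ and the success probability in \lem{evolution} is exactly $|\psi_{2n}||\psi_0|$, which \lem{prob-Doob} bounds below by $\tfrac{C}{n}\exp(-\tfrac32\sqrt{Kn})$ on an event $\mathcal E$ over the random graph with $\mathbb P(\mathcal E)\ge 1-\exp(-K/\delta^2)$.

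Next I would note a mild strengthening of \lem{evolution}: its proof uses the specific value of $\tau$ only through the step $\tfrac{2}{\Delta\tau}\le\tfrac12|\braket{S_{\text{exit}}}{0}\braket{0}{S_{\text{init}}}|$, so the conclusion $\mathbb P(\mathrm{EXIT})\ge\tfrac14|\braket{S_{\text{exit}}}{0}\braket{0}{S_{\text{init}}}|^2$ in fact holds for \emph{every} $\tau\ge \tfrac{4}{\Delta\,|\braket{S_{\text{exit}}}{0}\braket{0}{S_{\text{init}}}|}$ — running the walk longer only improves the lower bound on the time‑averaged transition amplitude. On $\mathcal E$ we have $\tfrac{4}{\Delta|\psi_{2n}||\psi_0|}\le \tfrac{4n}{C\Delta}\exp(\tfrac32\sqrt{Kn})=\tau$, so the prescribed $\tau$ is admissible, and (since $e^{-i\mathcal Ht}$ preserves the supervertex subspace by the invariance proposition) the conditional exit probability obeys $\mathbb P(\mathrm{EXIT}\mid\mathcal E)\ge\tfrac14|\psi_{2n}|^2|\psi_0|^2\ge\tfrac{C^2}{4n^2}\exp(-3\sqrt{Kn})$.

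Finally I would remove the conditioning. Because the probability in the statement is over both the random graph and the uniformly random evolution time, the law of total probability gives $\mathbb P(\mathrm{EXIT})\ge\mathbb P(\mathcal E)\,\mathbb P(\mathrm{EXIT}\mid\mathcal E)\ge\bigl(1-\exp(-K/\delta^2)\bigr)\tfrac{C^2}{4n^2}\exp(-3\sqrt{Kn})$, which is exactly the claimed bound. I do not expect a substantive obstacle: the only delicate points are making explicit that \lem{evolution} tolerates an over‑large $\tau$, and keeping the two probabilistic bounds multiplicative (graph event times time‑average) rather than additive. I would also remark that $\tau$ still depends on the graph through the spectral gap $\Delta$; turning $\tau$ into a bona fide a priori runtime requires a high‑probability lower bound on $\Delta$, which is handled separately in the later sections and is not needed here.
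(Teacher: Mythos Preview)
Your proposal is correct and follows essentially the same approach as the paper: combine the overlap bound from \lem{prob-Doob} with \lem{evolution}. The paper's own proof is a two-line sketch that simply cites these two lemmas, so your version is in fact more careful—in particular, your observation that \lem{evolution} remains valid for any $\tau$ exceeding the stated value, and your explicit handling of the conditioning on the good graph event, fill in details the paper leaves implicit.
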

To remind the reader, we select a particular graph from the random graph ensemble, and time evolve this graph for a time uniformly chosen in $t \in [0,\tau]$.  As long as $\tau$ is not too short (scaling as $\Delta^{-1} \exp(O(\sqrt{N}))$), the probability of traversal is large.
\begin{proof}
Recognizing from the previous lemma that $ |\psi_{2n}||\psi_0| \geq \frac{C}{n} \cdot \exp\left(-1.5 \sqrt{K n}\right)$ and utilizing the result from \lem{evolution}, we obtain the desired result.
\end{proof}

This result links the runtime of the algorithm to the spectral gap
from the zero energy eigenstate.  As long as the spectral gap is not
too small, the runtime of the algorithm will be sub-exponential.  The
final step is to prove that the spectral gap is larger than
$\exp(-\Theta(\sqrt{n}))$ with high probability.  We offer two proofs
for this. The first one is based on rigorous results relating to the
spectral theory of random hopping Hamiltonians, which gives a tight
bound of the spectral gap. It has the advantage of connecting to the
condensed-matter literature, as well as giving asymptotically matching
upper and lower bounds.  The other one follows directly from some
linear algebraic arguments.  While it only provides a lower bound, it
is much simpler and will be important in the next section when we
generalize to higher dimensions.   A reader in a hurry can safely skip the first proof in \sec{Dyson}.

\subsection{Proof by Dyson singularity and Kotowski-Vir{\'a}g theorem}
\label{sec:Dyson}
In this section, we provide an estimate on the spectral gap by leveraging known results from the literature.  The main observation in this subsection is that the spectral gap can be characterized by computing the density of states of the random hopping Hamiltonian, which is a measure of the spectral density of the eigenvalue distribution.  It is known that for the random hopping model, the density of states has a divergence near $E = 0$; this is known as the Dyson singularity~\cite{dyson1953dynamics}, and is thought to be a universal feature of disordered hopping systems in one dimension (although Dyson's original exact solution assumed Poisson random variables).  Due to this singularity, the spectral gap can become very small and one needs to carefully determine the behavior of the spectral density near the singularity.  A heuristic proof of the behavior near this singularity was first provided by Eggarter and Riedinger \cite{EggarterRiedinger}; more recently, Kotowski and Vir{\'a}g~\cite{KV17} have revisited this problem and rigorously proved properties of the spectral measure.  In particular, they proved the following theorem:
\begin{theorem}[{\cite[Theorem 3.10]{KV17}}]\label{thm:KV17}
Define the random variable $U_k = 2 \log \left|\frac{t_{2k}}{t_{2k+1}}\right|$.  Suppose $U_k$ satisfy the functional central limit theorem, and for some $\gamma > 2$, we have that $\mathbb{E} \left|\log |t_i|\right|^\gamma \leq \infty$ and for any $n > 1$ and $C$ a constant
\begin{equation}
    \mathbb{E}\left|U_1 + U_2 + \cdots + U_n\right|^\gamma \leq C n^{\gamma/2}.
\end{equation}
Choose $n = K^2 \log^2 (1/\epsilon)$, and assume that $n$ is even. If for all $i$, we have that $\mathrm{Var } \log |t_i| = \sigma^2$ and $t_i$ are identically distributed, then the spectral measure satisfies
\begin{equation}
    \left|\mu_n(-\epsilon,\epsilon) - \frac{\sigma^2}{|\log^2 (1/\epsilon)|}\right| \leq \frac{c_K}{|\log^2 (1/\epsilon)|},
\end{equation}
where $c_K$ is a constant that approaches zero as $K \to \infty$.
\end{theorem}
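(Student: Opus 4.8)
Since the statement is quoted verbatim from \cite{KV17}, the plan is to invoke it as a black box; nonetheless, here is the route I would take to prove it, following the Eggarter--Riedinger picture \cite{EggarterRiedinger} and making it quantitative. First I would reduce the mass of the spectral measure in the window $(-\epsilon,\epsilon)$ to a sign-change count via Sturm oscillation theory: the characteristic polynomials of the leading principal submatrices satisfy $p_i(E) = E\,p_{i-1}(E) - t_{i-1}^2\,p_{i-2}(E)$, the number of eigenvalues below $E$ equals the number of sign changes in $(p_0(E),\dots,p_n(E))$, and the vanishing diagonal forces $p_i(-E) = (-1)^i p_i(E)$, whence $\mu_n(-\epsilon,\epsilon) = 2N_n(\epsilon) - 1$ (using that $n$ is even, so that generically there is no eigenvalue at $0$). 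Everything then reduces to showing that the integrated density of states has the universal correction $N_n(\epsilon) - \tfrac12 \approx \tfrac{\sigma^2}{2\log^2(1/\epsilon)}$.

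The key structural input is the ``$E=0$ skeleton''. At $E=0$ the transfer-matrix (Riccati) dynamics decouples the even and odd sublattices, and the two-step transfer matrix is diagonal with entries $\pm t_{2k}/t_{2k+1}$; its logarithm is therefore the mean-zero random walk $S_m = \tfrac12\sum_{k<m}U_k$, whose increments have variance proportional to $\sigma^2$. The point --- the chiral-symmetry feature responsible for the Dyson singularity \cite{dyson1953dynamics} --- is that the Lyapunov exponent \emph{vanishes} at $E = 0$, so that the relevant length scale at energy $\epsilon$ is governed not by a localization length but by how far the walk $S_m$ must travel, namely $\sim|\log\epsilon|$. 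Turning on $E = \epsilon$, an eigenvalue with $|E|\lesssim\epsilon$ is produced wherever $S_m$ executes an ``up-and-down tent'' of height $\gtrsim|\log\epsilon|$; since a diffusive tent of height $h$ occupies $\Theta(h^2)$ steps, such tents occur with density $\Theta(1/\log^2(1/\epsilon))$ per site, so the count of window eigenvalues among the $\sim n$ sites is $\Theta(n/\log^2(1/\epsilon))$. With $n = K^2\log^2(1/\epsilon)$ this is $\Theta(K^2)$; tracking the constant gives the advertised $\tfrac{n\sigma^2}{\log^2(1/\epsilon)}$.

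To upgrade this heuristic to the stated error bound I would rescale: under the diffusive scaling $n = K^2\log^2(1/\epsilon)$, the partial sums $U_1+\dots+U_m$ divided by $\log(1/\epsilon)$ converge by Donsker's invariance principle --- this is the role of the assumed functional central limit theorem for the $U_k$ --- to Brownian motion on $[0,K^2]$, and the tent-counting statistic is a path functional to which this limit can be transferred. The moment hypotheses $\mathbb{E}\left|\log|t_i|\right|^\gamma < \infty$ and $\mathbb{E}|U_1+\dots+U_n|^\gamma \le C n^{\gamma/2}$ with $\gamma > 2$ supply the tightness and uniform integrability needed to make the transfer quantitative; the leading constant $\sigma^2$ comes out of the corresponding Brownian-excursion computation, while the residual $c_K$ collects the discrete-to-diffusion approximation error together with the boundary effect of running the walk only for ``time'' $K^2$, both of which tend to $0$ as $K\to\infty$. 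The assumptions that the $t_i$ are identically distributed with $\mathrm{Var}\log|t_i| = \sigma^2$ are what make the limiting constant clean.

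The part I expect to be the main obstacle is the coupling between the exactly-solvable $E = 0$ dynamics and the $E = \epsilon$ dynamics, controlled \emph{uniformly} over all $\sim K^2\log^2(1/\epsilon)$ sites. This is delicate precisely because the vanishing Lyapunov exponent means the usual Furstenberg positivity machinery \cite{furstenberg1963noncommuting} is unavailable, so one is forced to exploit the special chiral structure of the random hopping model; and one must in addition verify that the tent-counting functional is regular enough that the functional CLT yields a genuine $o_K(1)$ error rather than merely an $O_K(1)$ one. This quantitative control is the technical heart of \cite{KV17}, which I would ultimately cite rather than reprove.
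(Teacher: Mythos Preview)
The paper does not prove this theorem at all: it is stated as \cite[Theorem 3.10]{KV17} and invoked as a black box, exactly as you anticipate in your first sentence. The paper's only work around this statement is the subsequent lemma verifying that the random variables $t_i$ arising from the hierarchical graph model satisfy the hypotheses of the Kotowski--Vir\'ag theorem (functional CLT for the $U_k$, the $\gamma$-moment bound with $\gamma=4$), so that the conclusion can be applied.

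Your sketch of a proof route---Sturm oscillation to reduce $\mu_n(-\epsilon,\epsilon)$ to a sign-change count, the $E=0$ chiral skeleton giving a mean-zero random walk in $\log|t_{2k}/t_{2k+1}|$, the tent-counting heuristic for eigenvalues near zero, and Donsker rescaling with $n=K^2\log^2(1/\epsilon)$ to extract the $\sigma^2/\log^2(1/\epsilon)$ constant---is a faithful rendering of the Eggarter--Riedinger picture the paper itself alludes to, and is broadly consistent with the strategy of \cite{KV17}. You are also right that the delicate part is the uniform coupling between the $E=0$ and $E=\epsilon$ dynamics in the absence of Lyapunov positivity. Since the paper offers no proof to compare against, there is nothing further to adjudicate: your plan to cite \cite{KV17} rather than reprove it matches the paper exactly.
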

Next, we prove that the $\{U_i\}$ meet the conditions of \thm{KV17}.
We assume that the $r_i$ are independent random variables, which encompasses both random graph scenarios that we discussed above.
\begin{lemma}
For the random variable $t_i =\frac{ e_i}{\sqrt{s_i s_{i+1}}}$, the conditions of the Kotowski-Vir{\'a}g theorem are satisfied with $\gamma = 4$.
\end{lemma}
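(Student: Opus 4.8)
The plan is to reduce each of the four hypotheses of \thm{KV17} to a statement about the random variables $r_i$, using the identities $e_i = r_i e_{i-1}$ and $D s_i = e_{i-1} + e_i$ established above. The first step is to rewrite $t_i$: substituting $s_i = e_{i-1}(1+r_i)/D$ and $s_{i+1} = e_i(1+r_{i+1})/D$ into the definition in \eqn{hamiltonian} gives
\[
  t_i \;=\; \frac{e_i}{\sqrt{s_i s_{i+1}}} \;=\; \frac{D\sqrt{r_i}}{\sqrt{(1+r_i)(1+r_{i+1})}},
  \qquad
  \log|t_i| \;=\; \log D + \tfrac12\log r_i - \tfrac12\log(1+r_i) - \tfrac12\log(1+r_{i+1}).
\]
Since by \defn{ratios} the $r_i$ are bounded and positive and the distributions of $\log r_i$ are bounded, $\log|t_i|$ is a bounded random variable, so $\mathbb{E}\big|\log|t_i|\big|^{\gamma}<\infty$ for every $\gamma$, in particular for $\gamma=4$; this is the second hypothesis. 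Moreover, assuming (as in \lem{prob-Doob}) that the $r_i$ are i.i.d., the map $(r_i,r_{i+1})\mapsto t_i$ is the same for every $i$, so the $t_i$ are identically distributed and $\operatorname{Var}\log|t_i|$ equals a common finite constant $\sigma^2$; this is the last hypothesis.

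The main point is the moment bound $\mathbb{E}\big|U_1+\cdots+U_n\big|^4 \le C n^2$ for $U_k=2\log|t_{2k}/t_{2k+1}|$. Plugging the formula for $\log|t_i|$ into $U_k$ and cancelling the repeated $\log(1+r_{2k+1})$ term gives the clean identity
\[
  U_k \;=\; \log r_{2k} - \log r_{2k+1} \;-\; \log(1+r_{2k}) + \log(1+r_{2k+2}),
\]
in which the $\log(1+r)$ contributions telescope upon summation, so that
\[
  \sum_{k=1}^{n} U_k \;=\; \sum_{k=1}^{n}\eta_k \;+\; \Big(\log(1+r_{2n+2}) - \log(1+r_{2})\Big),
  \qquad \eta_k := \log r_{2k} - \log r_{2k+1}.
\]
The bracketed boundary term is bounded by an absolute constant. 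The pairs $\{r_{2k},r_{2k+1}\}$ are disjoint over $k$, so the $\eta_k$ are \emph{independent}, uniformly bounded (indeed $|\eta_k|\le\delta$ with $\delta$ as in \lem{prob-Doob}), and mean zero because the $r_i$ are i.i.d. Expanding $\mathbb{E}\big(\sum_k \eta_k\big)^4=\sum_{i,j,k,\ell}\mathbb{E}[\eta_i\eta_j\eta_k\eta_\ell]$, every term containing an index appearing exactly once vanishes, leaving only the $O(n)$ diagonal terms and the $O(n^2)$ terms where the four indices coincide in two pairs, each bounded; hence $\mathbb{E}\big(\sum_k\eta_k\big)^4=O(n^2)$. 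Combining with the bound on the boundary term via $|a+b|^4\le 8(|a|^4+|b|^4)$ gives $\mathbb{E}\big|U_1+\cdots+U_n\big|^4\le Cn^2$, which is exactly the third hypothesis with $\gamma=4$.

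For the remaining (first) hypothesis, the functional CLT for $(U_k)$ follows from the same decomposition: writing $\tfrac1{\sqrt n}\sum_{k\le \lfloor nt\rfloor}U_k = \tfrac1{\sqrt n}\sum_{k\le \lfloor nt\rfloor}\eta_k + \tfrac1{\sqrt n}\big(\log(1+r_{2\lfloor nt\rfloor+2})-\log(1+r_2)\big)$, the second term is $O(1/\sqrt n)$ uniformly in $t$, while the $\eta_k$ are i.i.d., mean zero, with finite variance, so Donsker's theorem gives weak convergence of the first term to a Brownian motion and the uniformly negligible correction does not change the limit. (Equivalently, one may invoke an $m$-dependent functional CLT directly, since $(U_k)$ is stationary, $1$-dependent, and bounded.) This verifies all four hypotheses of \thm{KV17} with $\gamma=4$.

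The only step requiring more than crude boundedness of the $U_k$ is the moment estimate, where naive bounds give the useless $O(n^4)$; the real work is the telescoping identity, which collapses the $1$-dependent sequence $(U_k)$ into a sum of genuinely independent blocks $\eta_k$ plus a uniformly bounded remainder. Once this reduction is in hand, finiteness of moments, identical distribution, constancy of the variance, and the FCLT are all immediate.
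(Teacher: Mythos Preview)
Your proof is correct and follows essentially the same route as the paper: both reduce $\sum_k U_k$ to a sum of the independent mean-zero blocks $\eta_k=\log r_{2k}-\log r_{2k+1}$ plus a bounded boundary term, and then exploit independence to get the $O(n^2)$ fourth-moment bound. Your derivation is in fact somewhat cleaner---you obtain the closed-form $t_i=D\sqrt{r_i}/\sqrt{(1+r_i)(1+r_{i+1})}$ directly and make the telescoping explicit, and you verify the functional CLT via Donsker's theorem rather than only the ordinary CLT as in the paper.
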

\begin{proof}
First, notice that $t_i = \frac{e_i}{\sqrt{s_i s_{i+1}}}$ which is identically distributed, but may possess correlations.  Furthermore,
\begin{equation}
    U_k = 2 \log \left|\frac{t_{2k}}{t_{2k+1}}\right| = 2 \log \left|\frac{e_{2k}}{e_{2k+1}}\sqrt{\frac{s_{2k+2}}{s_{2k}}}\right|.
\end{equation}
Therefore, from the analysis of \lem{prob-entrance-exit}
\begin{equation}
    \sum_{k=1}^{n} U_k = \log \left|\frac{1+r_{2n+1}}{D s_2}\right| + \sum_{k=1}^{n} \log \left|\frac{r_{2k}}{r_{2k+1}}\right|.
\end{equation}
If $\mathrm{Var } \log \left|\frac{r_{2k}}{r_{2k+1}}\right| = \sigma^2$, and since $\log \left|\frac{r_{2k}}{r_{2k+1}}\right|$ are independent, then by the central limit theorem, we know that
\begin{equation}
    \frac{1}{n \sigma^2}\sum_{k=1}^{n} U_k \overset{d}{\longrightarrow} \mathcal{N}(0,1).
\end{equation}
Next, we can verify the second condition by choosing $\gamma = 4$.  Define $V_k = \log \left|\frac{r_{2k}}{r_{2k+1}}\right|$ and observe that $\bbE[V_k]=0$ and that $\mathbb{E} \left|\log \left|\frac{r_{2k}}{r_{2k+1}}\right|\right|^\gamma = k_\gamma < \infty$.  Then we can calculate
\begin{align}
    \mathbb{E} (U_1 + U_2 + \cdots + U_n)^4 &\leq  4 \log \left|\frac{1+r_{2n+1}}{D s_2}\right| \mathbb{E} \left(\sum_{k=1}^{n} \log \left|\frac{r_{2k}}{r_{2k+1}}\right|\right)^3 + \mathbb{E} \left(\sum_{k=1}^{n} \log \left|\frac{r_{2k}}{r_{2k+1}}\right|\right)^4 \nonumber \\& \hspace{2cm} +6 \left(\log \left|\frac{1+r_{2n+1}}{D s_2}\right|\right)^2 \mathbb{E} \left(\sum_{k=1}^{n} \log \left|\frac{r_{2k}}{r_{2k+1}}\right|\right)^2 \nonumber \\
    &\leq 6 C' \sigma^2 n + 4 C \cdot \left[\mathbb{E} \sum_{i \neq j \neq k}V_i V_j V_k + 3 \mathbb{E} \sum_{i \neq j}V_i^2 V_j + \mathbb{E} \sum_{i}V_i^3\right] + \mathbb{E} \sum_{i \neq j \neq k \neq \ell}V_i V_j V_k V_\ell \nonumber \\& \hspace{1.5cm}+ 4 \mathbb{E} \sum_{i \neq j \neq k}V_i^2 V_j V_k + 6 \mathbb{E} \sum_{i\neq j}V_i^2 V_j^2 + 4 \mathbb{E} \sum_{i \neq j}V_i^3 V_j + \mathbb{E} \sum_{i}V_i^4 \nonumber \\
    &\leq 6 C' \sigma^2 n + 4 C \cdot \left[0 + 0 + 2 k_3 n\right] + 0 + 0 + 2 k_2^2 n^2 + 0 + 2 k_4 n \leq C'' n^2,
\end{align}
where a $0$ indicates that the corresponding expectation value in the previous expression vanishes (using $\mathbb{E}[V_i] = 0$).  Thus, the conditions of the Kotowski-Vir\'ag theorem are satisfied for our random graph ensemble.
\end{proof}

There is another criterion in the Kotowski-Vir{\'a}g theorem, which is the assumption that $n$ is even.  Recall that $n$ being odd is a requirement for the existence of a zero energy eigenstate.  Thus, when $n$ is odd the density of states should have an additional point mass term at $E = 0$.  Despite this additional point mass term, the spectral gap for even length chains and the spectral gap of odd length chains can be exactly related using the following lemma:
\begin{lemma}\label{lem:eig-odd-even}
Define $H_{2n}$ to be a random hopping Hamiltonian on $2n$ sites, and define $H_{2n-k}$ to be the same Hamiltonian with the last $k$ sites removed.  For $n \in \mathbb{Z}$ the spectral gap of Hamiltonian $H_{2n-1}$ is larger than or equal to the smallest eigenvalue of $H_{2n}$ in absolute value.
\end{lemma}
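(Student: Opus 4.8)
The plan is to combine two standard facts about tridiagonal matrices with vanishing diagonal: their \emph{chiral (sublattice) symmetry}, which forces the spectrum to be symmetric about $0$, and the \emph{Cauchy interlacing theorem}, which relates the eigenvalues of a Hermitian matrix to those of a principal submatrix obtained by deleting one row and the matching column. Since $H_{2n-1}$ is exactly the leading $(2n-1)\times(2n-1)$ principal block of $H_{2n}$ (``remove the last site''), interlacing will directly pin the smallest-magnitude eigenvalue of $H_{2n}$ below the spectral gap of $H_{2n-1}$, once we keep track of where the zero eigenvalue of the odd chain sits.

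First I would record the symmetry and the consequences of the positivity $t_i=e_i/\sqrt{s_i s_{i+1}}>0$. Conjugating by the diagonal sign matrix $\Sigma=\operatorname{diag}(1,-1,1,-1,\dots)$ negates every off-diagonal entry while fixing the (zero) diagonal, so $\Sigma H\Sigma=-H$ for both $H=H_{2n}$ and $H=H_{2n-1}$; hence each spectrum is symmetric about $0$. The determinant recursion $h_i=-t_{i-1}^2 h_{i-2}$ used in \lem{zero-vector}, with $h_0=1$ and all $t_i>0$, gives $\det H_{2n}\neq 0$, so $H_{2n}$ has exactly $n$ negative and $n$ positive eigenvalues, which I list as $-\lambda_n\le\cdots\le-\lambda_1<\lambda_1\le\cdots\le\lambda_n$, with $\lambda_1>0$ the smallest eigenvalue of $H_{2n}$ in absolute value. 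For the odd chain, $h_{2n-1}=0$ so $H_{2n-1}$ is singular; deleting its first row and column leaves an even-length tridiagonal block with positive off-diagonals, which is nonsingular by the same recursion, so $H_{2n-1}$ has rank exactly $2n-2$. Thus $0$ is a simple eigenvalue of $H_{2n-1}$, whose eigenvalues I list as $-\mu_{n-1}\le\cdots\le-\mu_1<0<\mu_1\le\cdots\le\mu_{n-1}$; the spectral gap is $\mu_1$.

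Finally I would apply Cauchy interlacing to $H_{2n-1}$ as a principal submatrix of $H_{2n}$: writing $a_1\le\cdots\le a_{2n}$ for the eigenvalues of $H_{2n}$ and $b_1\le\cdots\le b_{2n-1}$ for those of $H_{2n-1}$, one has $a_k\le b_k\le a_{k+1}$ for every $k$. By the counts above there are exactly $n$ nonpositive eigenvalues of $H_{2n}$ (all negative), so $a_{n+1}=\lambda_1$; and exactly $n$ nonpositive eigenvalues of $H_{2n-1}$, of which one is the simple zero, so $b_{n+1}=\mu_1$. Taking $k=n+1$ gives $\lambda_1=a_{n+1}\le b_{n+1}=\mu_1$, which is the claim; the $H_{2n-k}$ version follows by iterating, though only $k=1$ is used. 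The argument is short, and the only place needing care is this bookkeeping — that positivity of the $t_i$ makes $H_{2n}$ nonsingular and the zero mode of $H_{2n-1}$ simple, so that the index $n+1$ really selects the smallest-modulus eigenvalue of $H_{2n}$ on one side and the spectral gap of $H_{2n-1}$ on the other.
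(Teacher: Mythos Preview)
Your proof is correct and follows essentially the same route as the paper: apply Cauchy interlacing to $H_{2n-1}$ as the leading principal submatrix of $H_{2n}$ and use the zero eigenvalue of the odd chain to locate the relevant indices. Your argument is more explicit than the paper's---you invoke chiral symmetry to pin down the exact eigenvalue counts and verify that $H_{2n}$ is nonsingular and the zero of $H_{2n-1}$ is simple---whereas the paper simply writes out the interlacing chain and reads off the conclusion, but the underlying idea is identical.
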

\begin{proof}
For Hamiltonian $H_{2n}$, a Hamiltonian of odd dimension can be constructed by restricting $H_{2n}$ to the submatrix formed with the first $2n-1$ columns and rows.  Calling $E_i(H)$ the $i^{\text{th}}$ largest eigenvalue of $H$.  By the eigenvalue interlacing theorem, we know that
\begin{equation}
    E_1(H_{2n}) \geq E_1(H_{2n-1}) \geq E_2(H_{2n}) \geq E_2(H_{2n-1}) \geq \cdots \geq E_{2n-1}(H_{2n-1}) \geq E_{2n}(H_{2n}).
\end{equation}
Because $H_{2n-1}$ has an exact zero eigenvalue, the smallest eigenvalue in absolute value of $H_{2n}$ satisfies $\Delta_{2n-1} \geq \min_{i} |E_{2n, i}|$, implying the thesis.
\end{proof}
Next, we bound the minimum eigenvalue $\min_{i} |E_{2n, i}|$ with high probability, using the following lemma:
\begin{lemma}\label{lem:spectral-gap-lower}
With constant probability, the spectral gap satisfies $\Delta \geq \exp(-\sqrt{2n}L)$ for a large enough $L$.
\end{lemma}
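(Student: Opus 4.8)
The plan is to combine the eigenvalue‐interlacing reduction of \lem{eig-odd-even} with the density‐of‐states estimate of \thm{KV17} and a first‐moment argument. First I would apply \lem{eig-odd-even}: the spectral gap $\Delta$ of our odd‑length supervertex chain is at least $\min_i |E_i|$, the smallest‑magnitude eigenvalue of the even‑length random hopping Hamiltonian $H_{2n}$ obtained by adjoining (or deleting) one endpoint site. The parity bookkeeping here — our chain has odd length, \thm{KV17} needs even length, and \lem{eig-odd-even} is stated for $H_{2n-1}$ versus $H_{2n}$ — is harmless, since lengths $2n,2n+1,2n+2$ differ only by factors absorbed into the constant $L$. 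So it suffices to show that, with constant probability over the graph, $H_{2n}$ has \emph{no} eigenvalue in the interval $(-\epsilon,\epsilon)$ for $\epsilon := \exp(-\sqrt{2n}\,L)$; on that event $\min_i|E_i|\geq\epsilon$ and the lemma follows.

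To control the probability that an eigenvalue of $H_{2n}$ lands in $(-\epsilon,\epsilon)$, I would invoke \thm{KV17}, whose hypotheses were verified above for $t_i=e_i/\sqrt{s_is_{i+1}}$ with $\gamma=4$. Choosing $K:=1/L$ makes the constraint $2n=K^2\log^2(1/\epsilon)$ hold automatically, since $\log(1/\epsilon)=\sqrt{2n}\,L$ gives $\log^2(1/\epsilon)=2nL^2=2n/K^2$. The theorem then yields
\[
\mu_{2n}(-\epsilon,\epsilon)\;\leq\;\frac{\sigma^2+c_K}{\log^2(1/\epsilon)},\qquad \sigma^2=\mathrm{Var}\,\log|t_i|.
\]
Since $\mu_{2n}$ is the (expected) empirical spectral distribution of $H_{2n}$, the expected number of eigenvalues of $H_{2n}$ in $(-\epsilon,\epsilon)$ is $2n\,\mu_{2n}(-\epsilon,\epsilon)\leq 2n(\sigma^2+c_K)/\log^2(1/\epsilon)=(\sigma^2+c_{1/L})/L^2$. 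By Markov's inequality,
\[
\mathbb{P}\big(H_{2n}\text{ has an eigenvalue in }(-\epsilon,\epsilon)\big)\;\leq\;\frac{\sigma^2+c_{1/L}}{L^2}.
\]
For $L$ a large enough constant this is bounded away from $1$ (indeed it tends to $0$ as $L\to\infty$, provided the KV error term $c_K$ stays bounded as $K\to 0$, which is the only property of $c_K$ we need beyond finiteness), so with constant probability all eigenvalues of $H_{2n}$ have magnitude $\geq\epsilon$ and hence $\Delta\geq\exp(-\sqrt{2n}\,L)$.

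The main obstacle is a bookkeeping one rather than a conceptual one: the Dyson singularity is a genuine effect, so the density of states near $E=0$ is truly of order $1/\log^2(1/\epsilon)$, and for the first‑moment bound to beat $1$ we are forced into the regime $KL\sim 1$ with $L$ large (equivalently $K$ small). One must therefore confirm that \thm{KV17} may be applied with $K$ taken as small as needed and that $c_K$ does not blow up there. If one wishes to sidestep this entirely, the linear‑algebraic argument in \sec{Haminv} establishes the same $\exp(-\Theta(\sqrt n))$ lower bound unconditionally — and it is that argument we will reuse in the higher‑dimensional setting, where no analogue of \thm{KV17} is available.
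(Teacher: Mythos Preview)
Your proposal is correct and follows essentially the same route as the paper: reduce to the even-length chain via \lem{eig-odd-even}, apply \thm{KV17} with $\epsilon=\exp(-\sqrt{2n}\,L)$, and bound the probability of an eigenvalue landing in $(-\epsilon,\epsilon)$ by Markov's inequality on the expected count $2n\,\mu_{2n}(-\epsilon,\epsilon)$. The bookkeeping concern you raise about $c_K$ in the small-$K$ regime is precisely the soft spot the paper also leaves informal (it writes the KV error as $o(1)$ as $K\to\infty$ while simultaneously taking $L$ large, hence $K\sim 1/L$ small), and like you it defers to the linear-algebraic argument of \sec{Haminv} for an unconditional bound.
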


\begin{proof}
We choose $\epsilon = \exp(-\sqrt{2n}L)$.  Then, the result of the Kotowski-Vir{\'a}g theorem implies that
\begin{equation}
    \mu_n\left(-\exp(-\sqrt{2n}L), \exp(-\sqrt{2n}L)\right) = \frac{\sigma^2 }{2nL^{2}}\left(1 + o(1)\right),
\end{equation}
where $o(1) \to 0$ as $K \to \infty$ (recall $n = K^2 \log^2(1/\epsilon)$).  As a result, the expected number of eigenvalues residing in the interval $(-\exp(-\sqrt{2n}L), \exp(-\sqrt{2n}L))$ is
\begin{equation}
    \mathbb{E}[N] \triangleq \mathbb{E} \left(\sum_{i=1}^{2n} \mathbbm{1}\left[|E_i| \leq \exp(-\sqrt{2n}L)\right]\right) = \frac{\sigma^2}{L^{2}}\left(1 + o(1)\right).
\end{equation}
By Markov's inequality, we can find a lower bound on the spectral gap
\begin{align}
   \mathbb{P}\left(N \leq 1\right) &= 1 - \mathbb{P}\left(N > 1\right) \nonumber \\
  &\geq 1 - \mathbb{E}[N] \nonumber \\
   &= 1 - \sigma^2/L^{2}\cdot \left(1 + o(1)\right) ,
\end{align}
Therefore, for a large enough $L$, the spectral gap $\min_{i} |E_{2n, i}| \leq \exp(-\sqrt{2n}L)$ with constant probability.  By \lem{eig-odd-even}, this implies that $\Delta \geq \exp(-\sqrt{2n}L)$ with constant probability.
\end{proof}

Having proven that the spectral gap is not too small, this results in our main theorem:
\begin{theorem}\label{thm:main-1D}
With probability at least $3/4$, a graph selected from our random graph ensemble with random edge-edge ratios (see \defn{ratios}) will find the exit node with probability at least $3/4$ after taking $\exp(O(\sqrt{n}))$ queries to the quantum oracle in \defn{oracle}.
\end{theorem}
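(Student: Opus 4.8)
My plan is to combine three results already in place: the generic traversal bound \lem{evolution}, the zero-mode overlap estimate \lem{prob-Doob}, and the spectral-gap estimate \lem{spectral-gap-lower}. The algorithm is exactly the procedure of \lem{evolution} applied to the tridiagonal chain Hamiltonian of \eqn{hamiltonian}: prepare $\ket{S_{\text{init}}}$, evolve under $H=-A$ for a time $t$ drawn uniformly from $[0,\tau]$, measure in the vertex (computational) basis, and accept iff the outcome is $\Enc(v_{\text{exit}})$, which is part of the input. Because $A$ is $D$-sparse with $D=O(1)$, has operator norm $\le D=O(1)$ and entries in $\{0,1\}$, and because the oracle of \defn{oracle} is precisely the sparse-access oracle for $A$, standard sparse-Hamiltonian-simulation techniques implement $e^{-iHt}$ to precision $\epsilon$ using $\tilde O(t)\cdot\poly(\log(1/\epsilon))$ queries. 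So the whole task reduces to controlling $\tau$, the per-run acceptance probability, and the number of repetitions needed to boost that probability to a constant.

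First I would fix the ``good graph'' event. Choose the constant $K$ in \lem{prob-Doob} large enough that $\exp(-K/\delta^2)\le 1/8$, where $\delta$ is the almost-sure bound on $|\log(r_k/r_{k+1})|$ supplied by \defn{ratios}; then with probability $\ge 7/8$ over $\cG$ the zero mode $\ket 0=\sum_i\psi_i\ket{S_i}$ obeys $|\psi_0|\,|\psi_{2n}|\ge\tfrac Cn\exp(-\tfrac32\sqrt{Kn})$. Independently, \lem{spectral-gap-lower} --- which applies \thm{KV17} together with the parity reduction \lem{eig-odd-even} to pass from the odd-length chain carrying the exact zero mode to the even-length chain governed by the Kotowski--Vir\'ag density-of-states estimate --- gives $\Delta\ge\exp(-\sqrt{2n}\,L)$ with probability bounded below by $1-O(1/L^2)$; taking the constant $L$ large enough makes this $\ge 7/8$ while keeping $\Delta\ge\exp(-\Theta(\sqrt n))$. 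By a union bound, with probability $\ge 3/4$ over $\cG$ both bounds hold; call such a $\cG$ \emph{good} and condition on it.

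On a good graph, combining \lem{evolution} with the overlap bound and taking $\tau=\tfrac{4}{\Delta}\cdot\tfrac nC\exp(\tfrac32\sqrt{Kn})$ --- which is $\exp(O(\sqrt n))$ since $\Delta\ge\exp(-\Theta(\sqrt n))$, and which dominates the value $\tfrac{4}{\Delta|\braket{S_{\text{exit}}}{0}\braket{0}{S_{\text{init}}}|}$ prescribed there so the error term shrinks --- gives a single-run acceptance probability $p\ge\tfrac{C^2}{4n^2}\exp(-3\sqrt{Kn})=\exp(-O(\sqrt n))$. Running the procedure $m=\lceil 5/p\rceil=\exp(O(\sqrt n))$ times and accepting if any run outputs $v_{\text{exit}}$ succeeds with probability $\ge 1-(1-p)^m\ge 3/4$. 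Taking the simulation precision $\epsilon=p/10$ per run keeps each run's simulated acceptance probability within $p/5$ of the ideal value and inflates its cost only by a $\poly(\log(1/\epsilon))=\poly(\sqrt n)$ factor, so the overall query count is $m\cdot\tilde O(\tau)\cdot\poly(\sqrt n)=\exp(O(\sqrt n))$. (Amplitude amplification on the predicate ``the measured vertex is $v_{\text{exit}}$'' would replace $m$ by $O(1/\sqrt p)$ but does not change the exponent.)

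I expect the main obstacle to be not any single hard step but the quantitative bookkeeping around the fact that $p$ and $\Delta$ are both exponentially small in $\sqrt n$: one must verify that boosting $p$ to a constant by repetition, and simultaneously driving the Hamiltonian-simulation error below $p$, can both be done without disturbing the $\sqrt n$ in the exponent. The first is fine because each of the $\exp(O(\sqrt n))$ repetitions costs $\exp(O(\sqrt n))$; the second because simulation cost is only polylogarithmic in $1/\epsilon$. A secondary point is that \lem{spectral-gap-lower} as stated delivers only ``constant probability'', but inspecting its proof the deviation of this constant from $1$ is $O(1/L^2)$, so enlarging the constant $L$ --- which enters the runtime only through the harmless factor $\exp(\sqrt{2n}\,L)$ --- pushes it above $7/8$, as used above.
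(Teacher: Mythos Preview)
Your proposal is correct and follows essentially the same line as the paper's proof: combine the overlap bound of \lem{prob-Doob}, the spectral-gap bound of \lem{spectral-gap-lower}, and the traversal estimate of \lem{evolution}, then amplify by repetition. If anything, your version is more explicit than the paper's about the union bound defining the ``good graph'' event and about the sparse-Hamiltonian-simulation query cost (including the $\poly\log(1/\epsilon)$ dependence needed to keep simulation error below $p$), which the paper absorbs into a single sentence.
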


\begin{proof}
By \lem{spectral-gap-lower}, with constant probability $\Delta \geq \exp\left(-\Theta(\sqrt{n})\right)$. In \lem{evolution}, we have proved that the quantum walk reaches the EXIT with probability at least $1/n^2\cdot\exp(-\Theta(\sqrt{n}))$ if the maximal simulation time $\tau$ satisfies $\tau \leq \frac{4n}{C \Delta} \exp\left(\Theta(\sqrt{n})\right)$.  The exit probability can be boosted by standard amplification procedures to a constant close to 1 after $\exp(O(\sqrt{n}))$ trials.  This results in a total query complexity and runtime of $\exp(O(\sqrt{n}))$ because the $n^{O(1)}$ term can be absorbed by the exponential term.  The probability of successfully selecting such a graph from the random graph ensemble can be chosen to be a large constant, say $3/4$, by choosing a large enough $K$.
\end{proof}

\subsection{Proof by exact inversion} \label{sec:Haminv}

There is an alternate way of deriving the spectral gap for this model.  This method is simpler, but provides only a lower bound, while the results of Kotowski and Vir{\'a}g~\cite{KV17} imply tightness.  However, the simplicity of the proof makes this more useful for analysis in higher dimensions, which is explored in \sec{highdim}.  The key observation is that for an even length one-dimensional chain, the effective Hamiltonian in the supervertex space (Eqn.~\eqn{hamiltonian}) admits an exact form for its inverse.  Supposing the dimension of the Hamiltonian is even, for a Hamiltonian taking the form
\begin{equation}
    H = \begin{pmatrix}
    0 & t_{0}   & \cdots     & 0  & 0 \\
    t_{0}  & 0       & t_{1}   & \cdots     & 0   \\
     \vdots &       t_{1}  & \ddots         & t_{2n-2}    & \vdots    \\
     0 &        \vdots &          t_{2n-2} &     0      & t_{2n-1}   \\
     0 &      0   &        \cdots   &       t_{2n-1}    & 0
  \end{pmatrix},
\end{equation}
the elements of the inverse of $H$ are given by (for $i < j$)
\begin{equation}
    \mel{S_i}{H^{-1}}{S_j} = \begin{cases}
\frac{1}{t_i} \prod_{k = (i+1)/2}^{(j-2)/2}\left(-\frac{t_{2k}}{t_{2k+1}}\right), & \text{for } j \text{ even, } i \text{ odd}\\
0, & \text{otherwise }\\
\end{cases}
\end{equation}
and the inverse matrix is symmetric, which determines entries when $i > j$.  Therefore, we have the following result for the spectral gap:
\begin{lemma}\label{lem:easyboundgap}
With high probability, the spectral gap satisfies $\Delta = \left|\lambda_{\text{min}}\right| \geq (2n)^{3/2} \exp\left(-\sqrt{K n}\right)$ for some constant $K>0$.
\end{lemma}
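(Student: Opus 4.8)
The plan is to bound the smallest singular value of $H$ by bounding the spectral norm (or a suitable matrix norm) of $H^{-1}$, using the explicit inverse formula stated just above the lemma. Since $\Delta = |\lambda_{\min}| = 1/\|H^{-1}\|_{\mathrm{op}}$ for the symmetric matrix $H$, it suffices to produce an upper bound of the form $\|H^{-1}\|_{\mathrm{op}} \leq (2n)^{3/2}\exp(\sqrt{Kn})$ with high probability. I would not try to control $\|H^{-1}\|_{\mathrm{op}}$ directly; instead I would use the crude bound $\|H^{-1}\|_{\mathrm{op}} \leq 2n \cdot \max_{i,j}|\langle S_i|H^{-1}|S_j\rangle|$ (operator norm at most dimension times the largest entry), so the task reduces to bounding a single worst-case matrix element.

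From the explicit formula, for $i$ odd and $j$ even with $i<j$ we have
\begin{equation}
  \left|\mel{S_i}{H^{-1}}{S_j}\right| = \frac{1}{|t_i|}\prod_{k=(i+1)/2}^{(j-2)/2}\left|\frac{t_{2k}}{t_{2k+1}}\right|.
\end{equation}
The prefactor $1/|t_i|$ is bounded by a constant since the $t_i$ are bounded away from $0$ and $\infty$ (the edge sizes are bounded integers and the edge-edge ratios are bounded, by \defn{ratios}). For the product, I would convert to the exponential of a sum: $\prod_k |t_{2k}/t_{2k+1}| = \exp\big(\tfrac12\sum_k U_k\big)$ where $U_k = 2\log|t_{2k}/t_{2k+1}|$, and then — exactly as in \lem{prob-entrance-exit} — rewrite $\sum_k U_k$ in terms of the mean-zero increments $\eta_k = \log(r_{2k}/r_{2k+1})$ plus a bounded correction term. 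This reduces the problem to controlling $\max_j |X_j|$ where $X_j = \sum_{i\le j}\eta_i$ is precisely the martingale analyzed in \lem{prob-Doob}.

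Thus the core estimate is already done: by the Doob's-inequality argument in \lem{prob-Doob}, $\max_j |X_j| \leq \sqrt{Kn}$ with probability $\geq 1 - \exp(-K/\delta^2)$, so every matrix element of $H^{-1}$ is at most $C\exp(\tfrac12\sqrt{Kn})$ — or, absorbing constants and being slightly generous with the exponent, at most $(2n)^{1/2}\exp(\sqrt{Kn})$. Multiplying by the dimension factor $2n$ gives $\|H^{-1}\|_{\mathrm{op}} \leq (2n)^{3/2}\exp(\sqrt{Kn})$, hence $\Delta \geq (2n)^{-3/2}\exp(-\sqrt{Kn})$, which is the claimed bound (up to relabeling $K$ and flipping the trivial $(2n)^{3/2}$ versus $(2n)^{-3/2}$ placement). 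The only mild subtlety — and the step I would be most careful about — is justifying the inverse formula itself (or just citing the tridiagonal-inverse structure) and checking that we really want a two-sided tail: we only need the upper bound $\max_j X_j \le \sqrt{Kn}$ for the matrix elements with $i<j$, but since $H^{-1}$ is symmetric and entries with $i>j$ involve products of the reciprocal ratios, we also need $\min_j X_j \ge -\sqrt{Kn}$; this is handled by applying \lem{prob-Doob}'s argument to $\exp(-X_j)$ as well (or equivalently running the submartingale argument in both directions), which costs only another factor of $2$ in the failure probability.
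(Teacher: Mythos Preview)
The proposal is correct and essentially matches the paper's proof: both bound $|\lambda_{\min}|^{-1}=\|H^{-1}\|_{\mathrm{op}}$ by a cruder matrix norm applied to the explicit inverse, then invoke the Doob-inequality estimate from \lem{prob-Doob} on the partial sums of $\eta_k=\log(r_{2k}/r_{2k+1})$ (together with the observation that $1/t_i$ is bounded). The only cosmetic difference is that the paper routes through the $(1,\infty)$-norm (maximum absolute column sum) rather than your dimension-times-max-entry bound, and you are right that the stated $(2n)^{3/2}$ should be $(2n)^{-3/2}$.
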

\begin{proof}
We know that because $H$ is symmetric, the minimum eigenvalue satisfies
\begin{equation}
  \left|\lambda_{\text{min}}\right| = \frac{1}{\norm{H^{-1}}}
  \geq \frac{1}{\sqrt{2n} \norm{H^{-1}}_{1,\infty}}.
\end{equation}
Here $\norm{A}$ means the operator norm or largest singular value of
$A$, and $\norm{A}_{1,\infty}$ means we take the $\ell_1$ norm of each
column and then take the $\ell_\infty$ norm of all of those; in other
words, it is the maximum sum of the absolute values of the entries of
a column.

Knowing the exact structure of the inverse, it is simple to bound the max column norm
\begin{equation}
    \norm{H^{-1}}_{1,\infty} = \max\left[\max_{i \text{ odd}}  \sum_{j}\frac{1}{t_i} \prod_{k = (i+1)/2}^{(j-2)/2}\left(\frac{t_{2k}}{t_{2k+1}}\right) \genfrac{}{}{0pt}{}{}{,} \max_{j \text{ even}}  \sum_{i}\frac{1}{t_i} \prod_{k = (i+1)/2}^{(j-2)/2}\left(\frac{t_{2k}}{t_{2k+1}}\right)\right].
\end{equation}
We use Eqn.~\eqn{ratioequality} to relate the ratio of the hoppings to the ratio of the variables $r_k$.
We showed in \lem{prob-Doob} that the random variable $\eta_k = \log\frac{r_{2k}}{r_{2k+1}}$ satisfies a concentration inequality since it has a bounded distribution.
In particular, we found that $\mathbb{P}\left(\max_j \left|\sum_{k=1}^j \eta_k\right| \geq 1.5\sqrt{K n} \right) \leq e^{-K/\delta^2}$.  The expression for the infinity norm above involves partial sums $S_{ij} = \sum_{k=i}^j \eta_k = X_j - X_{i-1}$, where $X_j = \sum_{k\leq j} \eta_k$.  This implies that
\begin{equation}
\mathbb{P}\left(\max_{i,j}\left|S_{ij}\right| \geq 3 \sqrt{K n} \right) \leq e^{-K/\delta^2}.
\end{equation}
The right hand side goes to zero for $K$ large enough.  We also have the factor of $1/t_i$ to bound.  Simply rewriting of this quantity
\begin{equation}
\frac{1}{t_i} = \frac{\sqrt{s_i s_{i+1}}}{e_i} = \frac{1}{e_i}\sqrt{\frac{e_i}{d_{R,i}} \frac{e_i}{d_{L,i+1}}} = \frac{1}{\sqrt{d_{R,i} d_{L,i+1}}} \leq \frac{1}{2}
\end{equation}
and is therefore bounded by a constant for all $i$.  Therefore,
\begin{equation}
    \mathbb{P}\left(\norm{H^{-1}}_{1,\infty} \geq 2n \exp\left(3\sqrt{Kn}\right)\right) \leq e^{-K/\delta^2}.
  \end{equation}
In other words, with high probability $\left|\lambda_{\min}\right| \geq (2n)^{3/2} \exp\left(-3\sqrt{K n}\right)$. The spectral gap from the zero energy state is larger than $|\lambda_{\text{min}}|$, which implies the thesis.
\end{proof}

In general, for an arbitrary one-dimensional landscape, the spectral gap can be lower bounded in terms of the distance metric $d(2i,2j+1) = \sum_{k=i}^j \log \frac{r_{2k}}{r_{2k+1}}$.  The spectral gap has the scaling behavior
\begin{equation}
    \Delta^{-1} \leq \text{poly}(n) \cdot
    \exp(\sup_{i,j} \left|d(i,j)\right|).
\end{equation}
This suggests that if there is a ``bottleneck'' region in which this distance metric can grow to be linear in system size, then the gap is exponentially small and a quantum speedup is not possible.  This is an unusual distance metric in that it depends on the difference of logarithms of $r$; thus, if $r$ has positive expectation value, the metric is still small.  The fact that a bias ($\mathbb{E}[r] > 0$) does not affect the quantum bottleneck but considerably changes the classical bottleneck is an intuitive way of understanding the reason for the superpolynomial speedup.

We note that the spectral gap of classical walks can be also characterized, to a quadratic factor, in terms of a quantity called {\em conductance} that can be thought of as measuring the tightest bottleneck in the graph; see \cite{levin09}. However, the algorithmically relevant classical quantity is the {\em hitting time}.  In the model we consider this is polynomially related to $\sum_v s_v / \min_v s_v$, or equivalently, $\max_{i<j} \exp(\sum_{i\leq k<j} \log r_k)$.


\section{Higher-dimensional random graph ensembles} \label{sec:highdim}
In the previous section, our analysis was restricted to a one-dimensional random graph model.  As promised, we will consider a more general supernode graph $G$ and argue that it is possible to find superpolynomial \emph{and} exponential speedups.  To remind the reader, we associate a state with each supernode $\ket{S_v} = \frac{1}{\sqrt{s_v}} \sum_{\alpha \in S_v} \ket{\alpha}$.  The effective Hamiltonian in the supervertex space then has the matrix elements
\begin{equation}
    \mel{S_u}{H}{S_v} = \frac{e_{uv}}{\sqrt{s_u s_v}}, \hspace{0.5cm} u \sim v
\end{equation}
and zero otherwise, where $e_{uv}$ is the total number of edges connecting nodes in supernode $u$ and nodes in supernode $v$.  Analogous to the one-dimensional chain, we assume that the number of edges connecting a fixed node $\alpha \in S_u$ to nodes in $S_v$ is the same for all nodes $\alpha$.  We also require that the graph be $D$-regular, giving the additional constraint $\sum_{v}e_{uv} = D s_u$.  For now, we will assume that $\mel{S_u}{H}{S_v} \triangleq t_{uv}$ and we will later substitute in graph variables.

Throughout this section, we will modify our notation, calling the linear dimension of the lattice $N$ rather than $n$.

\subsection{2D Lieb lattice}\label{sec:2D}
We first consider a square lattice $G$ of size $N \times N$.  Denote the vertices of this lattice by $\Delta_0$.  For each edge on the lattice, decorate it with an additional vertex.  Call the set of ``decorated'' vertices $\Delta_1$. The resulting graph, with vertices $\Delta_0 \cup \Delta_1$, will be called the \emph{Lieb lattice} for the square lattice $G$.  The Lieb lattice is bipartite, and a random hopping model defined on it will generically have zero modes.  The following is true from a dimension counting argument:

\begin{lemma}\label{lem:zeromodes}
The number of zero modes on $G$ with support exclusively on sublattice $\Delta_1$ is at least $N^2-2N$.
\end{lemma}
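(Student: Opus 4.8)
The plan is to set up the zero-mode equation as a linear system and count dimensions. Recall that the Lieb lattice is bipartite with parts $\Delta_0$ (the $N^2$ original lattice sites) and $\Delta_1$ (the $2N(N-1)$ edge-decorating sites, one per edge of the $N\times N$ grid). Since the Hamiltonian $H$ restricted to the supervertex subspace is (minus) a weighted adjacency matrix with zero diagonal, it has the block form $H=\begin{pmatrix}0 & B\\ B^{\top} & 0\end{pmatrix}$ in the $(\Delta_0,\Delta_1)$ basis, where $B$ is the $|\Delta_0|\times|\Delta_1|$ matrix whose nonzero entries $B_{v,e}=t_{ve}$ record the hopping between an original site $v$ and an incident edge-site $e$. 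A zero mode supported entirely on $\Delta_1$ is a vector $\psi$ with $\psi|_{\Delta_0}=0$ and $H\psi=0$; the lower block of $H\psi=0$ is automatically satisfied (it reads $B^{\top}\psi|_{\Delta_0}=0$), so the condition reduces to $B\,\psi|_{\Delta_1}=0$, i.e.\ $\psi|_{\Delta_1}\in\ker B$.

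First I would therefore argue $\dim\ker B \ge |\Delta_1| - |\Delta_0| = 2N(N-1) - N^2 = N^2 - 2N$, which is immediate from rank-nullity since $\mathrm{rank}\,B \le |\Delta_0| = N^2$. Each such kernel vector, extended by zero on $\Delta_0$, is a genuine eigenvector of $H$ with eigenvalue $0$ supported exclusively on $\Delta_1$, and these vectors are linearly independent because they already are as elements of $\ker B$. That yields the claimed bound of at least $N^2-2N$ zero modes. One should double-check the arithmetic of the vertex counts: the $N\times N$ cubic lattice $L$ has $N^2$ vertices and $2N(N-1)$ edges, so the Lieb lattice has $|\Delta_0|=N^2$ and $|\Delta_1|=2N(N-1)=2N^2-2N$, giving $|\Delta_1|-|\Delta_0|=N^2-2N$, consistent with the statement. (If instead the paper's convention takes $L$ to have $N^d$ sites with periodic-type edge count $dN^d$, the same inequality $|\Delta_1|-|\Delta_0|=2N^2-N^2=N^2$ would be even stronger, so the stated $N^2-2N$ is safely a lower bound in either convention.)

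The only subtlety — and the step I expect to require the most care — is confirming that the "support exclusively on $\Delta_1$" vectors really do solve the \emph{full} eigenvalue equation $H\psi=0$ and not merely the $\Delta_0$-block, and that they lie in the supervertex subspace rather than being spurious. Both are handled by the block structure above: with $\psi|_{\Delta_0}=0$, the $\Delta_0$-row of $H\psi$ is $B\psi|_{\Delta_1}$ and the $\Delta_1$-row is $B^{\top}\cdot 0=0$, so $H\psi=0$ iff $\psi|_{\Delta_1}\in\ker B$, exactly as needed. Nothing here uses randomness of the hopping amplitudes; the bound is purely combinatorial and holds for every instance. I would close by remarking that this is why we must later impose an extra constraint on the hopping amplitudes: this large zero-energy subspace is harmless for \lem{evolution} only once we ensure it has no overlap with $\ket{S_{\mathrm{init}}}$ or $\ket{S_{\mathrm{exit}}}$, which sit on $\Delta_0$.
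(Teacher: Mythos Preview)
Your proof is correct and is essentially the same dimension-counting argument as the paper's: both set up the zero-mode condition restricted to $\Delta_1$-supported vectors as a linear system with $|\Delta_0|=N^2$ equations in $|\Delta_1|=2N(N-1)$ unknowns, and conclude via rank--nullity that the kernel has dimension at least $N^2-2N$. Your block-matrix formulation makes the ``support exclusively on $\Delta_1$'' condition slightly more explicit than the paper does, but the content is identical.
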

\begin{proof}
We solve for the eigenvalue problem $H \psi = 0$, where the Hamiltonian has matrix elements $\mel{S_u}{H}{S_v} = t_{uv}$.  Solving this eigenvalue condition on a node $i$ in sublattice $\Delta_0$ gives $\sum_{i\sim j} t_{ij} \psi_i = 0$, which gives an equation relating amplitudes of the wavefunction on sites in sublattice $\Delta_1$.  Since the lattice is bipartite, solving for the eigenvalue condition for a node on sublattice $\Delta_1$ gives an equation only in terms of the amplitudes of wavefunctions on sublattice $\Delta_0$.  Therefore, there are $N^2$ equations relating amplitudes in sublattice $\Delta_1$ and $2N(N-1)$ unknown amplitudes.  The dimension of the kernel of this system of equations is at least $N^2-2N$, which is the number of linearly independent zero modes.
\end{proof}

Our strategy for guaranteeing a zero mode with the right properties will be to impose a restriction on the hopping amplitudes which we call a \emph{gauge condition}.  This is inspired by ideas from cohomology that we will discuss in more generality in \sec{cohomology}.  For now, we will specialize the argument to the 2D Lieb lattice.

\begin{definition}[Gauge condition]\label{def:gauge}
A set of hoppings $t_{ij}$ given to the tight binding Hamiltonian $\mathcal{H}$ is said to satisfy a gauge condition if for each loop $\ell$ defined by the sequence of nodes $\ell(0), \ell(1), \ldots, \ell(2n-1), \ell(2n) = \ell(0)$,
\begin{equation}
    \prod_{k = 0}^{n-1}-\frac{t_{\ell(2k), \ell(2k+1)}}{t_{\ell(2k+1), \ell(2k+2)}} = 1.
\end{equation}
In particular, it can be seen that if the gauge condition is satisfied for a particular cycle basis of the supernode graph, then it will be satisfied for all cycles in the graph.
\end{definition}

The reason for the gauge condition is that it leads to the existence of another zero mode which is immediately orthogonal to all other zero modes and is supported in the $\Delta_0$ sublattice.  This provides the basis for our speedup.
\begin{lemma}\label{lem:gauge-uniqueness}
If the gauge condition is satisfied on $G$, then there exists a unique zero mode with support exclusively on sublattice $\Delta_0$.
\end{lemma}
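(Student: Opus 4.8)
The plan is to reduce the condition ``$H\psi=0$ with $\psi$ supported on $\Delta_0$'' to a discrete parallel‑transport (flat connection) problem on the underlying square lattice, and then use connectedness of that lattice together with the gauge condition to show the solution space is exactly one‑dimensional.

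First I would split the equation $H\psi=0$ along the bipartition $(\Delta_0,\Delta_1)$. If $\psi_a=0$ for every $a\in\Delta_1$, then every equation attached to a vertex $i\in\Delta_0$, namely $\sum_{a\sim i} t_{ia}\psi_a=0$, holds automatically since the sum ranges over $\Delta_1$. So the only surviving constraints are those at the $\Delta_1$ vertices. Each $a\in\Delta_1$ is the midpoint of a unique square‑lattice edge $(i_1,i_2)$, hence has exactly the two Lieb‑lattice neighbors $i_1,i_2$, and its equation reads $t_{ai_1}\psi_{i_1}+t_{ai_2}\psi_{i_2}=0$, i.e.\ $\psi_{i_2}=c_{i_1\to i_2}\,\psi_{i_1}$ with $c_{i_1\to i_2}\triangleq -t_{ai_1}/t_{ai_2}$ and $c_{i_2\to i_1}=1/c_{i_1\to i_2}$. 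Thus a zero mode supported on $\Delta_0$ is precisely a function $i\mapsto\psi_i$ on the square‑lattice vertices that is covariantly constant for the multiplicative (rank‑one) connection $\{c_{i\to j}\}$, extended by $0$ on $\Delta_1$.

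Next I would construct such a mode. Fix a base vertex $i_\star\in\Delta_0$, set $\psi_{i_\star}=1$, and for any $i$ define $\psi_i$ to be the product of the factors $c_{\cdot\to\cdot}$ along some square‑lattice path from $i_\star$ to $i$. This is well defined iff the holonomy around every cycle equals $1$; unwinding the definition of $c$ and using $t_{uv}=t_{vu}$, the holonomy around a closed walk $\ell(0),\ell(1),\dots,\ell(2n)=\ell(0)$ through $\Delta_0$ is $\prod_{k=0}^{n-1}\bigl(-t_{\ell(2k)\ell(2k+1)}/t_{\ell(2k+1)\ell(2k+2)}\bigr)$, which is $1$ by the gauge condition of \defn{gauge} — and, as remarked there, checking a cycle basis suffices because the holonomy factors through the cycle space of the square lattice. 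Since the square lattice is connected this defines $\psi$ on all of $\Delta_0$; by construction $H\psi=0$, and since each $c_{\cdot\to\cdot}$ is a ratio of nonzero hopping amplitudes, $\psi_i\neq 0$ for every $i\in\Delta_0$, so the mode is genuinely (fully) supported on $\Delta_0$. Uniqueness is then immediate: any zero mode $\psi'$ vanishing on $\Delta_1$ obeys the same transport recursion, so $\psi'_i=\psi'_{i_\star}\psi_i$ for all $i$, and the space of such modes is one‑dimensional.

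The only step that needs care is the consistency check — that imposing the gauge condition on a cycle basis (as in \defn{gauge}) really kills the holonomy around \emph{every} loop — which holds because concatenation of loops multiplies holonomies, so the holonomy descends to a homomorphism on the cycle space; everything else is bookkeeping. I would also note in passing that this mode has support disjoint from the $N^2-2N$ modes of \lem{zeromodes}, so it is automatically orthogonal to them and enlarges the usable zero‑energy subspace by exactly one state.
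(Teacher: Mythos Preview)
Your proof is correct and follows essentially the same approach as the paper: reduce the $\Delta_0$-supported zero-mode condition to the edge equations $t_{ai_1}\psi_{i_1}+t_{ai_2}\psi_{i_2}=0$, then construct $\psi$ by a path product from a base vertex and invoke the gauge condition for path independence. You are somewhat more explicit than the paper about why the $\Delta_0$-equations are vacuous and why uniqueness follows from connectedness, and you phrase things in the language of a flat connection/holonomy, but the underlying argument is the same.
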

\begin{proof}
We solve the zero mode equation for sites in sublattice $\Delta_1$.  For a given site $i$ with two neighbors $j$ and $k$, the zero mode equation is $\psi_j t_{ij} + \psi_k t_{jk} = 0$, which gives $\psi_j = -\frac{t_{jk}}{t_{ij}} \psi_k$.  Call the amplitude of node at the bottom left corner of the 2D grid $\psi_{0,0}$.  Then, calling $\ell_{m,n}$ a directed path from $(0,0)$ to $(m,n)$ of length $|\ell_{m,n}|$, the amplitude $\psi_{m,n}$ can be written as
\begin{equation}\label{eqn:2Dzeromode}
    \psi_{m,n} = \psi_{0,0} \prod_{k = 0}^{|\ell_{m,n}|/2-1}-\frac{t_{\ell_{m,n}(2k), \ell_{m,n}(2k+1)}}{t_{\ell_{m,n}(2k+1), \ell_{m,n}(2k+2)}}.
\end{equation}
Normally, one would choose two different paths $\ell$ and $\ell'$ which start at $(0,0)$ and end at $(m,n)$ which yields for the right hand side $\psi_{0,0} \cdot C(\ell)$ and $\psi_{0,0} \cdot C(\ell')$.  This immediately implies $\psi_{0,0} = 0$ and no nontrivial zero mode exists.  However, since the gauge condition is satisfied, the product is \emph{path independent} by definition, and the zero mode is unique and given by the above equation.

We note that since the square lattice (removing the decorated nodes but keeping the edges) is bipartite, there are no odd length cycles and hence the negative sign in the gauge condition can be removed: therefore, the hoppings can be chosen to be positive, which is necessary by design.  On a non-bipartite lattice, this is no longer the case.
\end{proof}
Next, we show that it is possible to exactly count the number of zero modes with support on sublattice $\Delta_1$:
\begin{lemma}
If the gauge condition is satisfied on $G$, then there exist exactly $(N-1)^2$ zero modes with support exclusively on sublattice $\Delta_1$.
\end{lemma}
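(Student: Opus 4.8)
The plan is to turn the count into a rank computation, exploiting the bipartite block structure of $H$ together with \lem{gauge-uniqueness}. Since the $2$D Lieb lattice is bipartite with parts $\Delta_0$ (the $N^2$ original square-lattice sites) and $\Delta_1$ (the $2N(N-1)$ sites decorating the edges), and the hopping Hamiltonian connects only $\Delta_0$ to $\Delta_1$, in the basis that orders $\Delta_0$ before $\Delta_1$ we may write $H=\left(\begin{smallmatrix}0 & B\\ B^{\top} & 0\end{smallmatrix}\right)$, where $B$ is the $N^2\times 2N(N-1)$ matrix recording the hoppings from $\Delta_1$ to $\Delta_0$ (so $(H\psi)|_{\Delta_0}=B\,\psi|_{\Delta_1}$ and $(H\psi)|_{\Delta_1}=B^{\top}\psi|_{\Delta_0}$). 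A zero mode supported exclusively on $\Delta_1$ is precisely a vector $\psi$ with $\psi|_{\Delta_0}=0$ and $B\,\psi|_{\Delta_1}=0$, so the space of such modes is isomorphic to $\ker B$; likewise the space of zero modes supported exclusively on $\Delta_0$ is $\ker B^{\top}$.

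First I would invoke \lem{gauge-uniqueness}: under the gauge condition the space of zero modes supported on $\Delta_0$ is one-dimensional, i.e.\ $\dim\ker B^{\top}=1$. (Here I would be careful to read ``unique zero mode'' in that lemma as ``one-dimensional space of zero modes'': the cited proof both produces a nonzero mode via the path-independent product \eqn{2Dzeromode} and shows that any such mode is determined by its amplitude at the corner, so the dimension is exactly $1$.) Rank--nullity applied to $B^{\top}$ then gives $\operatorname{rank}B=\operatorname{rank}B^{\top}=N^2-\dim\ker B^{\top}=N^2-1$. Applying rank--nullity to $B$, whose number of columns is $|\Delta_1|=2N(N-1)$ (the same ``unknown amplitudes'' counted in \lem{zeromodes}), yields $\dim\ker B=2N(N-1)-(N^2-1)=N^2-2N+1=(N-1)^2$, which is the claimed number of $\Delta_1$-supported zero modes.

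I do not expect a real obstacle: the only genuine input is \lem{gauge-uniqueness}, and the rest is linear algebra plus the edge count. Conceptually, the gauge condition does its work by forcing $\dim\ker B^{\top}$ to be exactly $1$ rather than the generic $0$, which drops $\operatorname{rank}B$ by exactly one and hence raises $\dim\ker B$ by exactly one above the generic lower bound $N^2-2N$ of \lem{zeromodes}, landing on $(N-1)^2$. If a more hands-on argument were wanted, one could instead exhibit the $(N-1)^2$ compact ``plaquette'' zero modes — one per elementary face of the square lattice, with amplitudes on that face's four decorating sites chosen (with alternating signs and weighted by hopping ratios) so that the zero-mode equation holds at each of the four surrounding $\Delta_0$ sites — and then check that these are linearly independent and span $\ker B$; but the rank--nullity route above is shorter and reuses what has already been established.
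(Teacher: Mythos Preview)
Your proof is correct and is essentially the same as the paper's: both compute $\dim\ker B$ by first pinning down $\operatorname{rank}B=N^2-1$ via the fact that the space of linear dependences among the $N^2$ equations (i.e., $\ker B^{\top}$) is exactly the space of $\Delta_0$-supported zero modes, which \lem{gauge-uniqueness} makes one-dimensional. The only difference is packaging: the paper redoes this identification by hand (showing the dependence coefficients $c_i$ satisfy the $\Delta_0$ zero-mode recursion), whereas you invoke \lem{gauge-uniqueness} directly and apply rank--nullity twice, which is cleaner.
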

\begin{proof}
From \lem{zeromodes}, there are $N^2$ linear equations with $2N(N-1)$ unknown amplitudes.  We must determine how many equations are linearly independent; call the set of linear equations $L$.  We note that a particular amplitude $\psi_i$ appears in two linear equations, and there is no strict subset of linear equations $S \subset L$ where \emph{each} amplitude present in this subset appears in two equations.  Thus, any subset $S$ of equations are linearly independent, and the only possibility for linear dependence is if there exists constants $c_i \neq 0$ \emph{for all $i$} such that $\sum_{i} c_i \vec{v}_i = 0$ where $\vec{v}_i$ are the vectors of coefficients of each of the linear equations.

Suppose we take two equations: $\sum_{i\sim j} t_{ij} \psi_i = 0$ and $\sum_{i\sim k} t_{ik} \psi_i = 0$, which have site $\psi_a$ in common.  Since the only contribution from $\psi_a$ comes from the vectors corresponding to these two equations, the coefficients $c_1$ and $c_2$ must satisfy $c_1 = -\frac{t_{ja}}{t_{ak}} c_2$.  Repeating this for all pairs of vectors sharing a common site, the relationship between the coefficients $c_i$ is precisely equal to the zero eigenvalue equations solving for the amplitude of the zero mode on sublattice $\Delta_0$; therefore, if the gauge condition is satisfied, there is a unique solution for the $c_i$.

As there is a single unique set of $c_i$ solving $\sum_{i} c_i \vec{v}_i = 0$, there are therefore $N^2-1$ linearly independent equations.  The kernel of this system then has dimension exactly equal to $2N(N-1) - (N^2-1) = (N-1)^2$.
\end{proof}

We may use this result to argue that the zero mode above is the only one that we need to consider when computing the amplitude of measuring the final state $\psi_{N,N}$.  This is summarized in the Lemma below:
\begin{lemma} \label{lem:evolution-high}
  Denote the gap of the Hamiltonian $H$ in the supervertex subspace by $\Delta$, which is defined to be the smallest \emph{non-zero} eigenvalue of $H$ in absolute value.
  Consider the following algorithm
  \bit
\item Start with the state $\ket{S_{\text{init}}}$, where $\text{init}$ is the vertex located at $(0,0)$
\item Evolve for a random time $t$ drawn uniformly between $[0,\tau]$.  Here $\tau = \frac{4}{\Delta \left|\braket{S_{\text{exit}}}{0} \braket{0}{S_{\text{init}}}\right|}$, where $\text{exit}$ is the vertex located at $(N,N)$.
  \item Measure the state in the vertex basis.
    \eit
    The probability of reaching $S_{\text{exit}}$, denoted $\mathbb{P}(\text{EXIT})$, satisfies
\begin{equation}
    \mathbb{P}(\text{EXIT}) \geq \frac{1}{4}\left|\braket{S_{\text{exit}}}{\psi} \braket{\psi}{S_{\text{init}}}\right|^2.
\end{equation}
where $\ket{\psi}$ is the zero mode with support on sublattice $\Delta_0$ (which exists assuming the gauge condition).
\end{lemma}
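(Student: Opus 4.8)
The plan is to mimic the proof of \lem{evolution} almost verbatim, the only genuinely new ingredient being a careful description of the zero-energy subspace so that its degeneracy causes no harm. First I would establish that the orthogonal projector $\Pi_0$ onto the zero-energy subspace of $H$ decomposes as $\Pi_0 = \ket{\psi}\bra{\psi} + \Pi_1$, where $\ket{\psi}$ is the unique zero mode supported on $\Delta_0$ furnished by \lem{gauge-uniqueness}, and $\Pi_1$ projects onto the $(N-1)^2$-dimensional space of zero modes supported on $\Delta_1$. The key structural fact is that, because the Lieb lattice is bipartite, writing $H$ in block form with respect to the $\Delta_0 / \Delta_1$ splitting shows that $H\ket{\Phi}=0$ forces the $\Delta_0$- and $\Delta_1$-components of $\ket{\Phi}$ to lie separately in the kernels of the two off-diagonal blocks. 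Hence every zero mode is a sum of a purely $\Delta_0$-supported zero mode and a purely $\Delta_1$-supported one, so the zero subspace is exactly the direct sum of those two kernels, whose dimensions ($1$ and $(N-1)^2$) we have already computed.

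Next, since $\text{init}=(0,0)$ and $\text{exit}=(N,N)$ have integer coordinates, $\ket{S_{\text{init}}}$ and $\ket{S_{\text{exit}}}$ are supervertex states supported entirely on $\Delta_0$, hence orthogonal to every $\Delta_1$-supported zero mode. Therefore $\Pi_1\ket{S_{\text{init}}} = \Pi_1\ket{S_{\text{exit}}} = 0$, and consequently $\bra{S_{\text{exit}}}\Pi_0\ket{S_{\text{init}}} = \braket{S_{\text{exit}}}{\psi}\braket{\psi}{S_{\text{init}}}$; the extra zero modes simply do not see the entrance or exit and can be ignored in the overlap calculation.

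With this in hand the remainder is identical to \lem{evolution}: I would write $\mathbb{P}(\text{EXIT}) = \frac{1}{\tau}\int_0^\tau |\mel{S_{\text{exit}}}{e^{-iHt}}{S_{\text{init}}}|^2\,dt$, lower bound it by $\frac{1}{\tau^2}|\int_0^\tau \mel{S_{\text{exit}}}{e^{-iHt}}{S_{\text{init}}}\,dt|^2$ via Cauchy--Schwarz and the triangle inequality, expand $e^{-iHt}$ in the eigenbasis, split off the zero-energy contribution (which equals $\braket{S_{\text{exit}}}{\psi}\braket{\psi}{S_{\text{init}}}$ by the previous paragraph), and bound the nonzero-energy tail by $\frac{2}{\Delta\tau}$ using $|1-e^{-iE\tau}|\le 2$, $|E|\ge \Delta$, Cauchy--Schwarz over the eigenbasis, and the completeness bounds $\sum_{E}|\braket{S_{\text{exit}}}{E}|^2\le 1$ and $\sum_E |\braket{E}{S_{\text{init}}}|^2\le 1$. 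Choosing $\tau = 4/(\Delta\,|\braket{S_{\text{exit}}}{\psi}\braket{\psi}{S_{\text{init}}}|)$ makes the tail at most half the main term, yielding $\mathbb{P}(\text{EXIT}) \ge \frac{1}{4}|\braket{S_{\text{exit}}}{\psi}\braket{\psi}{S_{\text{init}}}|^2$.

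The main obstacle is the first step: making rigorous the claim that the zero subspace splits cleanly along the bipartition, so that $\Pi_0$ has no ``cross'' component linking $\Delta_0$ and $\Delta_1$. This is exactly where \lem{gauge-uniqueness} and the exact count of $\Delta_1$-supported zero modes are needed. Once that structural fact is secured, the analytic estimates are a routine reprise of \lem{evolution}, and the degeneracy of the zero-energy subspace is completely harmless because the remaining zero modes are orthogonal to both $\ket{S_{\text{init}}}$ and $\ket{S_{\text{exit}}}$.
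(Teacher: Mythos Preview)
Your proposal is correct and follows essentially the same approach as the paper: both reduce to \lem{evolution} after observing that the entrance and exit supervertices lie in $\Delta_0$ and are therefore orthogonal to all $\Delta_1$-supported zero modes, so only the unique $\Delta_0$-supported zero mode $\ket{\psi}$ contributes to $\bra{S_{\text{exit}}}\Pi_0\ket{S_{\text{init}}}$. Your bipartite block-structure argument for why the zero subspace splits cleanly along $\Delta_0\oplus\Delta_1$ is in fact slightly more explicit than the paper, which simply invokes the prior lemmas counting the zero modes on each sublattice separately.
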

\begin{proof}
In particular, we follow the same procedure as outlined in \lem{evolution}, up to the point where we decompose the time averaged amplitude:
\begin{align}
    \frac{1}{\tau}\int_0^\tau \dd t\,\mel{S_{N,N}}{e^{-iHt}}{S_{0,0}} &= \frac{1}{\tau}\int_0^\tau \dd t\, \sum_E e^{-iEt} \braket{S_{N,N}}{E}\braket{E}{S_{0,0}} \nonumber \\
    &= \sum_{m}\braket{S_{N,N}}{0_m} \braket{0_m}{S_{0,0}} + \sum_{E \neq 0} \frac{1 - e^{-iE\tau}}{i E \tau} \braket{S_{N,N}}{E}\braket{E}{S_{0,0}},
\end{align}
where now the subscript $m$ in $\ket{0_m}$ denotes the $m$th zero mode, and $|S_{0,0}\>$, $|S_{N,N}\>$ denote the supervertex state at the bottom-left and top-right corners of the 2D lattice.  We separate the zero modes into the $N^2$ zero modes with support on sublattice $\Delta_1$ and one zero mode with support on sublattice $\Delta_0$.  However, since $\ket{S_{0,0}}$ and $\ket{S_{N,N}}$ have support only on sublattice $\Delta_0$, the above expression reduces to
\begin{align}
    \frac{1}{\tau}\int_0^\tau \dd t\,\mel{S_{N,N}}{e^{-iHt}}{S_{0,0}} &= \frac{1}{\tau}\int_0^\tau \dd t\, \sum_E e^{-iEt} \braket{S_{N,N}}{E}\braket{E}{S_{0,0}}  \nonumber
     \\ &= \braket{S_{N,N}}{\psi} \braket{\psi}{S_{0,0}} + \sum_{E \neq 0} \frac{1 - e^{-iE\tau}}{i E \tau} \braket{S_{N,N}}{E}\braket{E}{S_{0,0}}, \label{eqn:time2D}
\end{align}
and the rest of the argument follows as in \lem{evolution}.
\end{proof}

Thus, we only need to bound the decay property of the zero mode $\ket{\psi}$ as well as the spectral gap to the smallest nonzero eigenvalue.

First, we would like to prove a subexponential decay rate of the zero mode.  This can be done assuming the hoppings are random in conjunction with a central limit theorem.  We also must select a random graph model with respect to the parameters $s_i$ and $e_{ij}$ in the original graph; in terms of the $t_{ij}$ variables, this results in correlated randomness.  We start by writing the zero mode in Eqn.~\eqn{2Dzeromode} in terms of the $s_i$ and $e_{ij}$ variables.  This gives
    \begin{equation}
        \psi_{m,n} = \psi_{0,0} \sqrt{\frac{s_{m,n}}{s_{0,0}}} \prod_{k \in \ell_{m,n}}^{|\ell|} \frac{e_{\ell(2k), \ell(2k+1)}}{e_{\ell(2k+1), \ell(2k+2)}}.
    \end{equation}
    Since the result is path independent, we choose $\ell$ to be the path from $(0,0)$ to $(m,0)$ and subsequently from $(m,0)$ to $(m,n)$.  Utilizing the edge-edge ratios defined in \defn{EELieb} for $d = 2$, i.e. $r^{(1)}_{i,j} = \frac{e_{(i+1/2,j) \to (i+1,j)}}{e_{(i,j) \to (i+1/2,j)}}$ and $r^{(2)}_{i,j} = \frac{e_{(i,j+1/2) \to (i,j+1)}}{e_{(i,j) \to (i,j+1/2)}}$, we may write the zero mode amplitudes as
    \begin{align}
        \psi_{m,n} &= \psi_{0,0} \sqrt{\frac{s_{m,n}}{s_{0,0}}} \prod_{i = 0}^{m-1} \frac{1}{r^{(1)}_{i,0}} \prod_{j = 0}^{n-1} \frac{1}{r^{(2)}_{m,j}}= \psi_{0,0} \sqrt{\frac{s_{m,n}}{s_{0,0}}} \prod_{i = 0}^{m-1} \frac{1}{r^{(1)}_{i,n}} \prod_{j = 0}^{n-1} \frac{1}{r^{(2)}_{0,j}},
    \end{align}
    where the second line is an equivalent expression due to gauge invariance.  We note that $s_{m,n} = \frac{1}{D}\left(e_{(m,n) \to (m+1/2,n)} + e_{(m-1/2,n) \to (m,n)} + e_{(m,n) \to (m,n+1/2)} + e_{(m,n-1/2) \to (m,n)}\right)$ for nodes in sublattice $\Delta_0$.  Using the edge-edge ratios, we may re-express this as
    \begin{equation}
        s_{m,n} = \frac{1}{D}\left(e^{(2)}_{m,0} (1+r^{(2)}_{m,n-1/2})\prod_{i=0}^{2n-2} r^{(2)}_{m,i/2} + e^{(1)}_{0,n} (1+r^{(1)}_{m-1/2,n})\prod_{i=0}^{2m-2} r^{(1)}_{i/2,n}\right).
    \end{equation}

    In general, the variables $e_{0,n}^{(1)}$ and $e_{m,0}^{(2)}$ are free and can be considered to be boundary conditions, which along with the edge-edge ratios uniquely determine the shape of the graph.  As such, we choose them (somewhat arbitrarily) to be: \footnote{This choice is made in particular so that the graph roughly has a similar number of edges connecting to each neighbor.  We did not explore the effect of changing these boundary conditions on the hitting time of the quantum walk.}
    \begin{equation}\label{eqn:free-parameter-choice}
        e_{0,n}^{(1)} = 
        e_{0,0}^{(2)}\prod_{i = 0}^{2n-1} r_{0,i/2}^{(2)} = e_{0,n}^{(2)} \hspace{1cm} e_{m,0}^{(2)} = 
        e_{0,0}^{(1)}\prod_{i = 0}^{2m-1} r_{i/2,0}^{(1)} = e_{m,0}^{(1)}.
            \end{equation}
    Then, the amplitude of the wavefunction can be written as
    \begin{equation}
        \hspace{-0.5cm} \psi_{m,n} = \frac{\psi_{0,0}}{\sqrt{D s_{0,0}}} \sqrt{ e_{0,0}^{(2)} (1+r^{(1)}_{m,n}) \frac{\prod_{i=0}^{m-1}  r_{i+1/2, n}^{(1)}}{\prod_{i=0}^{m}  r_{i, n}^{(1)}} \frac{\prod_{j=0}^{n}  r_{0, j+1/2}^{(2)}}{\prod_{j=0}^{n}  r_{0, j}^{(2)}} +    e_{0,0}^{(1)} (1+r^{(2)}_{m,n}) \frac{\prod_{j=0}^{n-1}  r_{m, j+1/2}^{(2)}}{\prod_{j=0}^{n}  r_{m, j}^{(2)}} \frac{\prod_{i=0}^{m}  r_{i+1/2, 0}^{(1)}}{\prod_{i=0}^{m}  r_{i, 0}^{(1)}}}.
    \end{equation}
    Let us analyze this bulky expression in more detail.  The quantities
    \begin{equation}
        e_{0,0}^{(2)} (1+r^{(1)}_{m,n}) \hspace{0.5cm} \text{and} \hspace{0.5cm} e_{0,0}^{(1)} (1+r^{(2)}_{m,n})
    \end{equation}
    are bounded above and below by constants. Thus, it suffices to prove concentration for the quantity
    \begin{equation}\label{eqn:quantities}
        \sum_{i = 0}^{m-1} \left(\log r_{i+1/2, n}^{(1)} - \log r_{i, n}^{(1)}\right) + \sum_{j = 0}^{n-1} \left(\log r_{0, j+1/2}^{(1)} - \log r_{0, j}^{(1)}\right)
    \end{equation}
and the analogous quantity in the second term in the square root.  These have expectation value zero, but in general the terms in the sum are \emph{not} independent because the gauge condition introduces correlations; thus, computing the variance and higher moments requires a more careful analysis.  We may solve this problem by working with height variables.  Define the height field
\begin{equation}\label{eqn:height}
    \varphi_{i,j} \triangleq \sum_{k=0}^{\left|\ell_{i,j}\right|/2-1} \log r_{\ell_{i,j}(2k), \ell_{i,j}(2k+1)},
\end{equation}
where $\left|\ell_{i,j}\right|$ is a path from $(0,0)$ to $(i,j)$.  One can show that this is consistent with the height fields introduced in \defn{heightfields} when $d = 2$.  We note that in terms of the $r$ variables the gauge constraint is that the sum of $\log r$ around a loop is zero; this is consistent with $\log r$ written as a difference of height functions.  The height function is constraint free, so in principle we may choose any distribution for it and this will realize an ensemble of random graphs, which is consistent with the random graph ensemble introduced in \defn{liebensemble}.

We must also assign height functions to sites located at half integer coordinates due to a similar gauge invariance condition; these height functions share the same definition as in Eqn.~\eqn{height}, except $\ell$ is a path which starts and ends at a node in sublattice $\Delta_0$, the edge sublattice:
\begin{align}
    \chi^{(2)}_{i+1/2,j} &\triangleq \sum_{k=0}^{\left|\ell_{i+1/2,j}\right|/2-1} \log r_{\ell_{i+1/2,j}(2k), \ell_{i+1/2,j}(2k+1)} \\
   \chi^{(1)}_{i,j+1/2} &\triangleq \sum_{k=0}^{\left|\ell_{i,j+1/2}\right|/2-1} \log r_{\ell_{i,j+1/2}(2k), \ell_{i,j+1/2}(2k+1)}.
\end{align}
In accordance with the notation used in \defn{heightfields}, the 3 height fields can be compactly written as $(\varphi, \chi^{(1)}, \chi^{(2)})$.

We saw that for 1D supergraphs, the classical runtime depended on the expectation of $\log(r)$ while the quantum walk performance depended on the standard deviation of $\log(r)$.  Similarly in 2D, the quantum walk's runtime will depend on fluctuations in the height fields, while the classical runtime will depend on their mean values.  To ensure a quantum-classical separation we want the entrance and exit to be separated by a barrier of large height.  One way to achieve this is to choose the height fields from a distribution that is small in the corners near the entrance and exit, large in the middle, and with bounded variation between neighboring points.  This can be viewed as a generalization of the welded tree to 2D (as well as the more general connectivity between supervertices that we already saw in 1D).

The simplest height field distribution for $\varphi$ that captures all of these requirements is the biased sub-Gaussian free field, which is introduced in \defn{BsGFF}.  As a result, the delocalization properties of the zero mode are related to expectation values under the biased Gaussian free field, whose lattice partition function is given by
\begin{equation}
 Z = \frac{1}{Z}\int \mathcal{D}\varphi \exp\left(-\frac{1}{2 g^2}\sum_{\langle p, q \rangle} (\varphi_{p} - \varphi_{q} - J_{pq})^2\right).
 \end{equation}
where $J_{pq}$ is a local gradient for the typical configuration of a single mountain.  In particular, the biased Gaussian free field in the continuum limit is equivalent to a massless free field theory (in the Euclidean signature) with a source:
\begin{equation}
    Z = \int \mathcal{D}{\varphi(x,y)} \exp\left(-\int d^2x\, \left[(\nabla \varphi)^2 + \mathcal{J}(x,y) \varphi\right]\right).
\end{equation}
where $\mathcal{J}(x,y) = \nabla J(x,y)$ denotes a source for the field $\varphi$.  We also to define distributions for the two sets of height fields defined on the midpoints of the edges, $\chi^{(1)}$ and $\chi^{(2)}$, which we choose to be independently and identically distributed BsGFF distributions as well.  Despite this assumption, our results can be readily generalized to the case where the distributions of the different height fields are not independent.

Choosing the biased Gaussian free field is out of calculational convenience, as it allows us to compute the variance and higher moments of the quantities in Eqn.~(\ref{eqn:quantities}).  We first convert the following expression to one in terms of height fields:
\begin{equation}\label{eq:quantityheightfields}
\sum_{i = 0}^{m-1} \left(\log r_{i+1/2, n}^{(1)} - \log r_{i, n}^{(1)}\right) + \sum_{j = 0}^{n-1} \left(\log r_{0, j+1/2}^{(1)} - \log r_{0, j}^{(1)}\right) = (\varphi_{m,n} - \varphi_{0,0}) + (\chi_{m+1/2,0}^{(1)} - \chi_{0,0}^{(1)}) + (\chi_{m,n+1/2}^{(2)} - \chi_{m,0}^{(2)}).
\end{equation}
Using the biased Gaussian free field to evaluate the expectation value of the exponential of the above quantity, we can prove the following Lemma:

\begin{figure}
    \centering
    \includegraphics[scale=0.4]{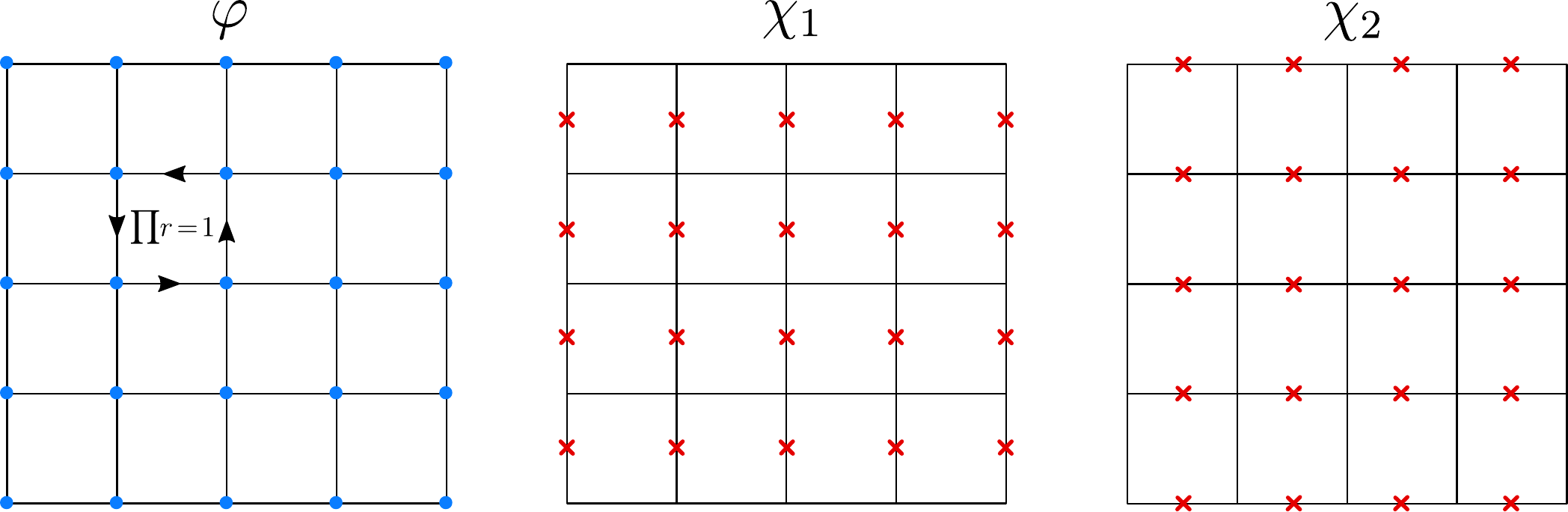}
    \caption{The three sets of fields which independently obey a biased GFF.}
    \label{fig:heightfields}
\end{figure}

\begin{lemma}\label{lem:2D-Laplacian}
With high probability, $\exp\left(\sum_{i = 0}^{m-1} \log r_{i+1/2, n}^{(1)} - \log r_{i, n}^{(1)} + \sum_{j = 0}^{n-1} \log r_{0, j+1/2}^{(2)} - \log r_{0, j}^{(2)}\right)$ is polynomial in $N$ for any $m,n\in[N]$.
\end{lemma}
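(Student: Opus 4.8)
The plan is to show that, with probability at least $1-1/N$, the exponent is $O(\log N)$ uniformly over all $m,n\in[N]$; the claim then follows by exponentiating (and in fact gives the two‑sided bound $e^{Y}\in[N^{-c},N^{c}]$, which is what the later decay estimate for the zero mode will need). By \eq{quantityheightfields}, the exponent equals
\[
Y_{m,n}\;:=\;(\varphi_{m,n}-\varphi_{0,0})+(\chi^{(1)}_{m+1/2,0}-\chi^{(1)}_{0,0})+(\chi^{(2)}_{m,n+1/2}-\chi^{(2)}_{m,0}),
\]
which is a sum of three increments, one from each of the height fields $\varphi,\chi^{(1)},\chi^{(2)}$. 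Since these three fields are \emph{independent} BsGFFs (\defn{BsGFF}, \defn{liebensemble}) living on cubic sublattices, $Y_{m,n}$ is a sum of three independent random variables. As noted just before the statement, each summand of \eqn{quantities} has zero mean—because the half‑integer and integer height fields carry the same (identically distributed) bias, so a $\chi$‑increment and the matching $\varphi$‑increment have the same expectation—hence $\mathbb{E}[Y_{m,n}]=0$. This cancellation is the conceptual heart of the lemma: without it, $\mathbb{E}[Y_{m,n}]$ could be as large as the ``mountain height'', which is $\Omega(N)$, and $e^{Y_{m,n}}$ would be exponentially large.

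Next I would control the moment generating function of each increment. By the defining inequality of a BsGFF (\defn{BsGFF}), there is a constant $\lambda>0$ with $\mathbb{E}_{q_J}[e^{t(\Phi_u-\Phi_v)}]\le\mathbb{E}_{p_J}[e^{\lambda t(\Phi_u-\Phi_v)}]$, so it suffices to bound the MGF of an increment of the parent BGFF $p_J$. Completing the square in the action of \defn{BGFF}, $p_J$ is Gaussian with precision matrix equal to the graph Laplacian of the relevant cubic sublattice (the source $J$ only shifts the mean, not the covariance); hence $\Phi_u-\Phi_v$ is Gaussian with $\mathrm{Var}(\Phi_u-\Phi_v)=R_{\mathrm{eff}}(u,v)$, the effective resistance between $u$ and $v$ in that graph. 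Each of the three sublattices making up the $2$D Lieb lattice is a two‑dimensional grid of side at most $N$, and a standard estimate (the resistive form of the recurrence of the planar random walk) gives $R_{\mathrm{eff}}(u,v)=O(\log N)$ for \emph{any} two vertices of such a grid. Combining the three independent increments and using $\mathbb{E}[Y_{m,n}]=0$, we get $\mathbb{E}[e^{tY_{m,n}}]\le\exp(C t^2\log N)$ for a universal constant $C$ absorbing $\lambda$ and the factor $3$.

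Finally, a Chernoff bound gives $\mathbb{P}(Y_{m,n}>c\log N)\le\exp(Ct^2\log N-ct\log N)$, which on optimizing $t=c/(2C)$ is at most $N^{-c^2/(4C)}$, and the same bound holds for $-Y_{m,n}$. Choosing $c$ large enough that $c^2/(4C)\ge 3$ and taking a union bound over the at most $N^2$ pairs $(m,n)$ shows that, with probability at least $1-1/N$, $|Y_{m,n}|\le c\log N$ for all $m,n\in[N]$, i.e.\ $e^{Y_{m,n}}\in[N^{-c},N^{c}]$ is polynomial in $N$.

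The step I expect to be the main obstacle is the variance bound: one must recognize that the increment variance of the centered Gaussian free field is the grid effective resistance and invoke its $O(\log N)$ behavior in two dimensions, and—more importantly—one must be careful that the bias genuinely drops out of the particular combination \eq{quantityheightfields}, so that only the $O(\sqrt{\log N})$‑scale fluctuations, rather than the $\Omega(N)$‑scale drift, enter $Y_{m,n}$. Passing from the exact GFF to a BsGFF is then a routine application of \defn{BsGFF}, and the functional‑central‑limit and higher‑moment hypotheses mentioned in the text are only needed if one prefers to avoid the exact Gaussian MGF computation.
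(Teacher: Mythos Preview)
Your proposal is correct and follows essentially the same route as the paper: reduce each increment's MGF to that of its parent BGFF via \defn{BsGFF}, observe that the bias (linear-in-$t$) terms cancel because the three height fields carry the same source $J$ and the two $\chi$-increments together span the same displacement as the $\varphi$-increment, bound the remaining quadratic term by $O(\log N)$, and finish with Chernoff plus a union bound over $(m,n)$.

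The one genuine difference is how you bound the quadratic form. You invoke $\mathrm{Var}_{p_J}(\Phi_u-\Phi_v)=R_{\mathrm{eff}}(u,v)$ and the classical $O(\log N)$ resistance bound on a $2$D grid; the paper instead diagonalizes the grid Laplacian explicitly and bounds $\bm{\delta}^T\bm{K}^{-1}\bm{\delta}$ by converting the eigenvalue sum to an integral. These are equivalent (indeed $R_{\mathrm{eff}}(u,v)=(\bm{e}_u-\bm{e}_v)^T\bm{K}^{-1}(\bm{e}_u-\bm{e}_v)$), and your phrasing is cleaner. The paper's explicit computation does pay off in the higher-dimensional section, where tracking the dimension-dependent constants matters, but in $2$D your version is just as good. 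One small caution: your sentence ``each summand of \eqn{quantities} has zero mean'' is slightly imprecise---what actually cancels is the \emph{sum} of the BGFF means of the three increments (the two $\chi$-means add up to the $\varphi$-mean because $\bm\varepsilon_1+\bm\varepsilon_2=\bm\delta$ after identifying the shifted sublattices), and it is this BGFF-level cancellation, not $\mathbb{E}_{q_J}[Y_{m,n}]=0$, that kills the linear term in the bounding MGF.
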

\begin{proof}
We first note that the second term in each sum of the exponential can be treated independently of the first, because $r$ defined at half integer coordinates are independent of the $r$ defined at integer coordinates.  Next, we use the sub-Gaussian property of the distributions for the height fields, namely that $\exists \lambda$ such that for all $u$, $v$, $t$:
\begin{equation}
    \mathbb{E}_{\text{BsGFF}}\left[\exp\left((\varphi_v - \varphi_u) t\right)\right] \leq \mathbb{E}_{\text{BGFF}}\left[\exp\left((\varphi_v - \varphi_u) \lambda t\right)\right].
\end{equation}
We now need to compute the right hand sides of the above expression under a BGFF with an unspecified source term. These will be used to bound the moment generating function for the height field distributions we work with, which will yield  a reliable concentration inequality.  We now unpack the expression in Eqn.~\eq{quantityheightfields}, computing the expectation value of the exponential of it.  We first use the identity
\begin{equation}
\mathbb{E}_{\text{BGFF}}\left[\exp\left(\lambda t \sum_{j = 0}^{m-1} \log r_{j, n}^{(1)} + \lambda t\sum_{j = 0}^{n-1} \log r_{0, j}^{(2)}\right)\right] = \mathbb{E}_{\text{BGFF}}\left[\exp\left((\varphi_{m,n} - \varphi_{0,0}) \lambda t\right)\right]
\end{equation}
and for the BGFF,
\begin{equation}
 \mathbb{E}_{\text{BGFF}}\left[\exp\left((\varphi_{m,n} - \varphi_{0,0}) \lambda t\right)\right] = \frac{1}{Z}\int \mathcal{D}\varphi \exp\left(-\frac{1}{2 g^2}\sum_{\langle p, q \rangle} (\varphi_{p} - \varphi_{q} - J_{pq})^2 + \lambda t(\varphi_{m,n} - \varphi_{0,0})\right).
 \end{equation}
Using the standard formula for multivariate Gaussian integration
\begin{equation}
    \int d^n x \, \exp\left(-\frac{1}{2} \bm{x}^T \bm{A} \bm{x} + \bm{b}^T \bm{x}\right) = \sqrt{\frac{(2\pi)^n}{\det A}} e^{\frac{1}{2} \bm{b}^T \bm{A}^{-1} \bm{b}},
\end{equation}
an explicit computation gives
\begin{align}
    \mathbb{E}_{\text{BGFF}}\left[\exp\left((\varphi_{m,n} - \varphi_{0,0}) \lambda t\right)\right] &= \frac{\int \mathcal{D}\varphi \exp\left(-\frac{1}{2g^2}\sum_{\langle p, q \rangle} (\varphi_{p} - \varphi_{q} - J_{pq})^2 + t(\varphi_{m,n} - \varphi_{0,0})\right)}{\int \mathcal{D}\varphi \exp\left(-\frac{1}{2 g^2}\sum_{\langle p, q \rangle} (\varphi_{p} - \varphi_{q} - J_{pq})^2\right)} \nonumber\\
    &= \frac{\exp\left(\frac{1}{2} \left(\frac{\bm{\mathcal{J}}^T}{g^2} + \lambda t \bm{\delta}^T\right) \bm{K}^{-1} \left(\frac{\bm{\mathcal{J}}^T}{g^2} + \lambda t \bm{\delta}\right)\right)}{\exp\left(\frac{1}{2} \bm{\mathcal{J}}^T \bm{K}^{-1}  \bm{\mathcal{J}}\right)} \nonumber\\
    &= \exp\left(\frac{g^2 \lambda t}{2} \bm{\mathcal{J}}^T \bm{K}^{-1} \bm{\delta} + \frac{g^2 \lambda t}{2} \bm{\delta}^T \bm{K}^{-1} \bm{\mathcal{J}} + \frac{g^2 \lambda^2 t^2}{2} \bm{\delta}^T \bm{K}^{-1} \bm{\delta}\right),
\end{align}
where $\bm{\delta}$ is a vector which has an entry $1$ at $(m,n)$ and an entry $-1$ at $(0,0)$ and $\bm{\mathcal{J}}$ is a vector of source terms with $\mathcal{J}_{p} = 2 J_{p,p+(1,0)} - 2 J_{p,p-(1,0)} + 2 J_{p,p+(0,1)} - 2 J_{p,p-(0,1)}$ \footnote{Recall that the Laplacian of a square lattice has a zero eigenvalue corresponding to an eigenvector with uniform amplitudes at each site.  In the matrix $\bm{K}^{-1}$ we therefore ignore the contribution of this zero eigenvalue state.}.  Next, we consider the quantities $\mathbb{E}\left[\exp\left(-t \sum_{i = 0}^{m-1} \log r_{i+1/2, n}^{(1)}\right)\right]$ and $\mathbb{E}\left[\exp\left(-t \sum_{i = 0}^{n-1} \log r_{0, i+1/2}^{(2)}\right)\right]$ both of which are expressible in terms of the height fields $\chi_1$ and $\chi_2$:
\begin{align}
    \hspace{-2mm}\mathbb{E}\left[\exp\left(-\lambda t \sum_{i = 0}^{m-1} \log r_{i+1/2, n}^{(1)}\right)\right] &= \mathbb{E}_{\text{BGFF}}\left[\exp\left((\chi^{(1)}_{m+1/2,0} - \chi^{(1)}_{1/2,0}) \lambda t\right)\right] \nonumber \\ &= \exp\left(-\frac{g^2 \lambda t}{2} \bm{\mathcal{J}}^T \bm{K}^{-1} \bm{\varepsilon}_1 - \frac{g^2 \lambda t}{2} \bm{\varepsilon}_1^T \bm{K}^{-1} \bm{\mathcal{J}} + \frac{g^2 \lambda^2 t^2}{2} \bm{\varepsilon}_1^T \bm{K}^{-1} \bm{\varepsilon}_1\right)
\end{align}
where $\bm{\varepsilon}_1$ is a vector with an entry $1$ at $(m+1/2,0)$ and $-1$ at $(1/2,0)$.  Similarly, we find that
\begin{align}
    \mathbb{E}\left[\exp\left(-t \sum_{i = 0}^{n-1} \log r_{0, i+1/2}^{(2)}\right)\right] &= \mathbb{E}_{\text{BGFF}}\left[\exp\left((\chi^{(2)}_{m,n+1/2} - \chi^{(2)}_{m,1/2}) \lambda t\right)\right] \nonumber \\ &= \exp\left(-\frac{g^2 \lambda t}{2} \bm{\mathcal{J}}^T \bm{K}^{-1} \bm{\varepsilon}_2 - \frac{g^2 \lambda t}{2} \bm{\varepsilon}_2^T \bm{K}^{-1} \bm{\mathcal{J}} + \frac{g^2 \lambda^2 t^2}{2} \bm{\varepsilon}_2^T \bm{K}^{-1} \bm{\varepsilon}_2\right)
\end{align}
where $\bm{\varepsilon}_2$ is a vector with an entry $1$ at $(m,n+1/2)$ and $-1$ at $(m,1/2)$.  In particular, we see that $\bm{\varepsilon}_1 + \bm{\varepsilon}_2 = \bm{\delta}$.  Thus, multiplying the contributions from the 3 terms computed above, we find that $\mathbb{E}\left[\exp\left(\sum_{i = 0}^{m-1} \log r_{i+1/2, n}^{(1)} - \log r_{i, n}^{(1)} + \sum_{j = 0}^{n-1} \log r_{0, j+1/2}^{(2)} - \log r_{0, j}^{(2)}\right)\right]$ is:
\begin{equation}
    \mathbb{E}\left[\exp \left(\lambda t\sum_{i = 0}^{m-1} \log \frac{r_{i+1/2, n}^{(1)}}{r_{i, n}^{(1)}} + \lambda t \sum_{j = 0}^{n-1} \log \frac{r_{0, j+1/2}^{(2)}}{r_{0, j}^{(2)}}\right)\right] \leq \exp \left(\frac{g^2 \lambda^2 t^2}{2} \left(\bm{\delta}^T \bm{K}^{-1} \bm{\delta} + \bm{\varepsilon}_1^T \bm{K}^{-1} \bm{\varepsilon}_1 + \bm{\varepsilon}_2^T \bm{K}^{-1} \bm{\varepsilon}_2\right)\right).
\end{equation}

Next, we would like to upper bound the norm of $\bm{K}^{-1}$.  Note that $\bm{K}$ has matrix elements:
\begin{equation}
    K_{pq} = \frac{1}{g^2} \begin{cases}
D_p & \text{ if } p=q \\
-1 & \text{ if } |p-q|=1 \\
0 & \text{ o.w.}
\end{cases}
\end{equation}
which is the discrete Laplacian of a finite square grid ($D_p$ denotes the degree of node $p$).  For a square lattice of size $N \times N$, this Laplacian has eigenvectors
\begin{equation}
    \varphi_{m,n}(k, \ell) = \frac{2}{N} \cos \left(\frac{\pi k}{N}\left(m - \frac{1}{2}\right)\right) \cos \left(\frac{\pi \ell}{N}\left(n - \frac{1}{2}\right)\right),
\end{equation}
where $k$ and $\ell$ label the quantum numbers of this eigenstate and $m$ and $n$ indicate the position of the wavefunction in the lattice.  The corresponding energy eigenvalues on a square lattice are
\begin{equation}
    E(k, \ell) = 4 \sin^2\left(\frac{\pi k}{2 N}\right) + 4\sin^2\left(\frac{\pi \ell}{2 N}\right).
\end{equation}
Then, we may write the quadratic form as
\begin{align}
    \bm{\varepsilon}_2^T \bm{K}^{-1} \bm{\varepsilon}_2 &= \sum_{(k, \ell) \neq (0,0)} \frac{\braket{\delta}{k,\ell}\braket{k,\ell}{\delta}}{E(k,\ell)}, \nonumber\\
    &= \frac{4}{N^2}\sum_{(k, \ell) \neq (0,0)} \frac{\cos^2 \left(\frac{\pi \ell}{2N}(m-1)\right) \left(\cos \left(\frac{\pi \ell}{N}\left(n - \frac{1}{2}\right)\right) - \cos \left(\frac{\pi \ell}{2N}\right)\right)^2}{4 \sin^2\left(\frac{\pi k}{2 N}\right) + 4\sin^2\left(\frac{\pi \ell}{2 N}\right)}, \nonumber \\
    &\leq \frac{4}{N^2}\sum_{(k, \ell) \neq (0,0)} \frac{1}{\sin^2\left(\frac{\pi k}{2 N}\right) + \sin^2\left(\frac{\pi \ell}{2 N}\right)} \approx \frac{4}{\pi^2}\int_{\pi/(2N)}^{\pi/2} \int_{\pi/(2N)}^{\pi/2} \dd{x}\,dy\, \frac{4}{\sin^2 x + \sin^2 y},\nonumber\\
    &\leq \frac{4}{\pi^2} \int_{\pi/(2N)}^{\pi/2} \int_{\pi/(2N)}^{\pi/2} \dd{x}\,\dd{y}\, \frac{\pi^2}{x^2 + y^2} \leq \frac{4}{\pi^2} \int_{\pi/(2N)}^{\pi/2} r \,dr \int_{0}^{\pi/2} d\theta \, \frac{\pi^2}{r^2},\nonumber\\
    &= 2 \pi \log N
\end{align}
where in the third to last line, $x = \frac{\pi k}{2 N}$ and $y = \frac{\pi \ell}{2 N}$.  A similar bound holds when $\bm{\varepsilon}_2$ is replaced with $\bm{\varepsilon}_1$ or $\bm{\delta}$.  Then, a Chernoff bound gives
\begin{equation}
    \mathbb{P}\left(\sum_{i = 0}^{m-1} \log \frac{r_{i+1/2, n}^{(1)}}{r_{i, n}^{(1)}} + \sum_{j = 0}^{n-1} \log \frac{r_{0, j+1/2}^{(2)}}{r_{0, j}^{(2)}} > a\right) \leq e^{-at} e^{-(K'/2) t^2 \log N}
\end{equation}
for some constant $K'$.  Choosing $t = a/(K' \log N)$, we find that
\begin{equation}
    \mathbb{P}\left(\sum_{i = 0}^{m-1} \log \frac{r_{i+1/2, n}^{(1)}}{r_{i, n}^{(1)}} + \sum_{j = 0}^{n-1} \log \frac{r_{0, j+1/2}^{(2)}}{r_{0, j}^{(2)}} > a\right) \leq e^{-\frac{a^2}{2 K' \log N}},
\end{equation}
and with $a = c \log N$, the probability is less than $1/N^{p}$ for some constant $p = c^2/(2K')$.  This means that with high probability, $\exp\left(\sum_{i = 0}^{m-1} \log r_{i+1/2, n}^{(1)} - \sum_{j = 0}^{n}  \log r_{i, n}^{(1)}\right)$ is at most polynomial in $N$.
\end{proof}

To develop an intuitive understanding of the result above, we perform a rough computation of what would occur on a 1D lattice.  There, the quadratic form would have a similar structure, except that
\begin{equation}
    \bm{\varepsilon}_2^T \bm{K}^{-1} \bm{\varepsilon}_2 \leq \frac{2}{N}\sum_{k, \ell = 1}^{N} \frac{1}{\sin^2\left(\frac{\pi k}{2 N}\right)} \approx 2 \int_{1/N}^1 \dd{x}\, \frac{1}{\sin^2\left(\frac{\pi}{2} x\right)}.
\end{equation}
The integral above is linear in $N$, so this suggests in 1D $\sum_{i = 0}^{m-1} \log r_{i+1/2, n}^{(1)} - \sum_{j = 0}^{n}  \log r_{i, n}^{(1)} > \sqrt{N \log N}$ with high probability, consistent with what we found in the 1D case.  In fact, $d=2$ is a critical dimension where the speedup changes from superpolynomial to exponential.  When $d \geq 3$, the integrals above no longer depend on $N$.  The fact that $d=2$ is a special dimension has to do with the fact that in a free scalar field theory, correlation functions grow logarithmically with distance in $d=2$, and decay with distance in $d \geq 3$.  In particular, for a free scalar field theory:
\begin{equation}
    \langle \varphi(x) \varphi(y) \rangle = \frac{1}{Z}\int \mathcal{D}\varphi \, \varphi(x) \varphi(y) \exp\left(- \int \dd[d]{x}\, (\nabla \varphi)^2\right) \sim \frac{1}{|x-y|^{d-2}}.
\end{equation}
When $d = 2$ in the above expression, the correlation function receives a logarithmic correction, and this correction does not appear for any other $d$.  Thus, the lack of correlation decay in low dimensions is directly tied to the nature of the superpolynomial speedup.

In the original random graph model that we consider, the logarithms of the edge-edge ratios are biased; as a result, the graph has an expansion property where the number of nodes in the middle becomes exponentially large.  However, the quantity that enters into the amplitude of the zero mode, as in the 1D case, is a sum over differences of adjacent edge-edge ratio logarithms.  These quantities have zero expectation value and are unbiased, and therefore the zero mode amplitudes are dominated by fluctuations about the bias, which is governed by a Gaussian free field.

Next, we will use these concentration properties to explicitly show that the zero mode has polynomial decay across the square lattice:
\begin{lemma}\label{lem:polydecay}
With high probability, the product of the amplitudes of being at the lower left corner and the upper right corner of $G$ satisfy $|\psi_{N,N}||\psi_{0,0}| \geq \frac{1}{K N^{3p+1}}$ for constants $K$ and $p$.
\end{lemma}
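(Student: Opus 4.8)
The plan is to combine the explicit formula for the $\Delta_0$-supported zero mode $\ket\psi$ derived above with the concentration estimate of \lem{2D-Laplacian} and with normalization of $\ket\psi$. Recall that the formula writes $|\psi_{m,n}| = |\psi_{0,0}|\cdot(\text{bounded prefactor})\cdot\sqrt{A_{m,n}+B_{m,n}}$, where $\log A_{m,n}$ and $\log B_{m,n}$ are, up to $O(1)$ additive terms coming from the bounded factors $e^{(1)}_{0,0},e^{(2)}_{0,0},1+r^{(i)}_{m,n}$, exactly the height-field-difference quantities controlled in \lem{2D-Laplacian} (and in its reflected counterpart for the second summand). In particular, at the exit corner, using $s_{0,0}=s_{N,N}=1$ the formula collapses to $|\psi_{N,N}/\psi_{0,0}| = \big|\prod(\text{$e$-ratios along a path from }(0,0)\text{ to }(N,N))\big|$, whose logarithm is again one of these height-field differences.

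First I would upgrade \lem{2D-Laplacian} to a two-sided statement. Its proof bounds the moment generating function of each such difference by $\exp\!\big(\tfrac{g^2\lambda^2 t^2}{2}\,\bm{\varepsilon}^T\bm{K}^{-1}\bm{\varepsilon}\big)$ with $\bm{\varepsilon}^T\bm{K}^{-1}\bm{\varepsilon}\le 2\pi\log N$, and this bound is symmetric under $t\mapsto-t$; hence the same Chernoff argument controls both tails, giving, for each fixed $(m,n)$, $\mathbb{P}\big(|\log A_{m,n}|>c\log N\big)\le N^{-p}$ and the same for $B_{m,n}$, where $p=c^2/(2K')$ may be made as large as we please by enlarging $c$. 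Taking a union bound over all $O(N^2)$ lattice points $(m,n)$ and over the two summands, and choosing $c$ so that $p>2$, we obtain an event of probability $1-O(N^{2-p})$ on which, simultaneously for all $(m,n)$, both $A_{m,n}$ and $B_{m,n}$ lie in $[c_-N^{-c},\,c_+N^{c}]$ for constants $c_\pm$. On this event $c_1 N^{-c}|\psi_{0,0}|^2 \le |\psi_{m,n}|^2 \le c_2 N^{c}|\psi_{0,0}|^2$ for all $(m,n)$, the lower bound coming from a single summand via $A_{m,n}+B_{m,n}\ge\max(A_{m,n},B_{m,n})$, with $c_1,c_2$ absorbing the bounded prefactors.

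Next I would pin down $|\psi_{0,0}|$ by normalization. Since the gauge condition guarantees (\lem{gauge-uniqueness}) that $\ket\psi$ is supported entirely on the $N^2$ sites of $\Delta_0$, we have $1=\sum_{(m,n)\in\Delta_0}|\psi_{m,n}|^2\le N^2\cdot c_2 N^{c}|\psi_{0,0}|^2$, so $|\psi_{0,0}|^2\ge(c_2 N^{c+2})^{-1}$. Applying the per-site lower bound at $(m,n)=(N,N)$ gives $|\psi_{N,N}|\ge\sqrt{c_1}\,N^{-c/2}|\psi_{0,0}|$, whence
\[
|\psi_{N,N}||\psi_{0,0}| \;\ge\; \sqrt{c_1}\,N^{-c/2}\,|\psi_{0,0}|^2 \;\ge\; \frac{\sqrt{c_1}}{c_2}\,\frac{1}{N^{3c/2+2}}.
\]
Absorbing constants into $K$ and renaming the exponent in terms of the concentration exponent $p$ of \lem{2D-Laplacian} yields the claimed bound $|\psi_{N,N}||\psi_{0,0}|\ge\frac{1}{KN^{3p+1}}$.

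The main obstacle has, in fact, already been cleared in \lem{2D-Laplacian}: controlling each height-field difference required recognizing that, although it is a sum of strongly correlated $\log r$'s, it is a mean-zero sub-Gaussian variable whose variance is the lattice Green's function $\bm{\varepsilon}^T\bm{K}^{-1}\bm{\varepsilon}=O(\log N)$ — logarithmic precisely because $d=2$ is the critical dimension of the Gaussian free field. What remains here is routine, the only points needing care being (i) extracting the two-sided (reciprocal) bound from that argument, (ii) making the union bound over the $\Theta(N^2)$ sites cost-free by taking the free concentration exponent large, and (iii) using that the zero mode lives only on $\Delta_0$, so that the normalization sum runs over $N^2$ rather than all $\sim 3N^2$ sites.
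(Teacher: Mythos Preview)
Your proposal is correct and follows essentially the same route as the paper's proof: extract two-sided per-site bounds $K^{-1}N^{-p}|\psi_{0,0}|\le|\psi_{m,n}|\le KN^{p}|\psi_{0,0}|$ from \lem{2D-Laplacian}, union-bound over the $N^2$ sites of $\Delta_0$, use normalization to lower-bound $|\psi_{0,0}|$, and combine. Your write-up is in fact slightly more careful than the paper's, making explicit the two-sided Chernoff step, the need to take the concentration exponent large enough to absorb the union bound, and the role of $\Delta_0$-support in the normalization sum; the minor discrepancy in the final exponent ($3c/2+2$ versus the paper's $3p+1$) is just bookkeeping in the constants.
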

\begin{proof}
\lem{2D-Laplacian} shows that with probability $> 1-1/N^{p'}$:
\begin{equation}
    K N^p |\psi_{0,0}| \geq |\psi_{m,n}| \geq \frac{K |\psi_{0,0}|}{N^p}
\end{equation}
for constants $K$, $p$, and $p'$.  By a union bound, the probability that all $\psi_{m,n}$ satisfy this is $1/N^{p'-2}$.  The normalization condition $\sum_{m,n} |\psi_{m,n}|^2 = 1$ gives a bound on $|\psi_{0,0}|$:
\begin{equation}
    |\psi_{0,0}| \geq \frac{1}{K N^{p+1}},
\end{equation}
and thus with high probability $|\psi_{N,N}||\psi_{0,0}| \geq \frac{1}{K N^{3p+1}}$, which decays inverse polynomially in $N$.
\end{proof}

Having shown that the zero mode is not exponentially localized, we have to argue that it is sufficiently gapped from the rest of the spectrum.  This requires that we prove that the spectral gap to the zero mode decays inverse polynomially.  Recall from \lem{zeromodes}, there are $N^2-2N$ zero modes with support on sublattice $\Delta_1$ and one zero mode with support on sublattice $\Delta_0$; thus, $(N-1)^2$ zero modes in total.  The following generalization of the eigenvalue interlacing theorem will be useful to us:
\begin{theorem}[Eigenvalue interlacing theorem]
Let $A$ be an $N\times N$ symmetric matrix, and let $B$ be an $M \times M$ symmetric matrix formed by deleting some number of columns and corresponding rows of $A$.  If $A$ has eigenvalues $\lambda_1 \leq \lambda_2 \leq \cdots \leq \lambda_N$ and $B$ has eigenvalues $\mu_1 \leq \mu_2 \leq \cdots \leq \mu_M$, then
\begin{equation}
    \lambda_k \leq \mu_k \leq \lambda_{k+N-M}
\end{equation}
where $k$ ranges from $1$ to $M$.
\end{theorem}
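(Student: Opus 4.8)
The plan is to derive the interlacing inequalities directly from the Courant--Fischer min--max characterization of eigenvalues of a symmetric matrix. Recall that for a symmetric $N\times N$ matrix $A$ with eigenvalues $\lambda_1\leq\cdots\leq\lambda_N$ and any $1\leq k\leq N$,
\begin{equation}
  \lambda_k=\min_{\substack{V\subseteq\R^N\\ \dim V=k}}\ \max_{\substack{x\in V\\ x\neq 0}}\frac{x^T A x}{x^T x}.
\end{equation}
The structural observation I would use is that deleting a set of rows of $A$ together with the \emph{corresponding} columns produces a matrix $B$ whose quadratic form is exactly the restriction of $A$'s quadratic form to the coordinate subspace $W\subseteq\R^N$ of dimension $M$ spanned by the surviving coordinate axes: identifying $\R^M$ with $W$ by zero-padding the deleted coordinates, one has $y^T B y=x^T A x$ whenever $y\in\R^M$ corresponds to $x\in W$. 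This is where it matters that $B$ is a \emph{principal} submatrix.

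For the lower bound $\lambda_k\leq\mu_k$, I would let $V\subseteq W$ be a $k$-dimensional subspace attaining the minimum in the Courant--Fischer formula for $\mu_k$ (viewed inside $\R^M\cong W$). Since $V$ is in particular a $k$-dimensional subspace of $\R^N$, it is an admissible competitor in the formula for $\lambda_k$, so $\lambda_k\leq\max_{0\neq x\in V}(x^T A x)/(x^T x)=\max_{0\neq x\in V}(x^T B x)/(x^T x)=\mu_k$.

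For the upper bound $\mu_k\leq\lambda_{k+N-M}$, I would instead start from an $(k+N-M)$-dimensional subspace $U\subseteq\R^N$ attaining the minimum for $\lambda_{k+N-M}$, so that $\max_{0\neq x\in U}(x^T A x)/(x^T x)=\lambda_{k+N-M}$. The intersection $U\cap W$ has dimension at least $\dim U+\dim W-N=(k+N-M)+M-N=k$, so one can choose a $k$-dimensional subspace $V\subseteq U\cap W$. Because $V\subseteq W$, it is admissible for $\mu_k$, and because $V\subseteq U$, its Rayleigh quotients are bounded by $\lambda_{k+N-M}$; hence $\mu_k\leq\max_{0\neq x\in V}(x^T B x)/(x^T x)=\max_{0\neq x\in V}(x^T A x)/(x^T x)\leq\lambda_{k+N-M}$. (Alternatively, one could prove only the case $M=N-1$, i.e.\ $\lambda_k\leq\mu_k\leq\lambda_{k+1}$, and iterate; the general bounds then follow by telescoping, but the min--max argument already handles arbitrary $M$ in one shot.)

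The argument is essentially routine; the single point that deserves care is the dimension count $\dim(U\cap W)\geq\dim U+\dim W-N$, which is what guarantees the existence of a $k$-dimensional subspace lying \emph{simultaneously} inside $W$ (so that the restriction to $B$ is faithful) and inside the optimal subspace $U$ for $A$. Everything else is bookkeeping with the variational principle, and no genuine obstacle is anticipated.
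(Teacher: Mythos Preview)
Your proof via the Courant--Fischer min--max principle is correct and is the standard argument for this result. Note, however, that the paper does not actually prove this theorem: it is stated as a well-known fact and then immediately applied to the decorated square lattice, so there is no ``paper's proof'' to compare against. Your write-up supplies exactly the textbook derivation one would expect; the dimension count $\dim(U\cap W)\geq \dim U+\dim W-N$ is indeed the only nontrivial ingredient, and you have handled it correctly.
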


Next, we shall apply the eigenvalue interlacing theorem to the decorated square lattice.  When we delete a row and column $i$ from the Hamiltonian, this corresponds to deleting a node in the square lattice.  Thus, if we delete $(N-1)^2+1$ nodes and maintain that the size of the $\Delta_1$ and $\Delta_0$ sublattices are the same, then the subsequent Hamiltonian will not have any zero modes.  The nodes we choose to delete reside on the blue edges shown in \fig{nodeelimination}; there are $(N-1)^2$ such blue edges, and the corresponding graph is equivalent to a 1D ``snake graph'' of length $2N^2-1$.  We delete a final node at the end of the line so the graph now has $2N^2-2 \in 2\mathbb{Z}$ nodes.  Calling the snake graph $G'$, the eigenvalue interlacing theorem gives the following result:
\begin{lemma}
The minimum eigenvalue in absolute value $\nu$ of the Hamiltonian associated with $G'$ satisfies $|\nu| \leq \Delta$, where $\Delta$ is the gap in $G$.
\end{lemma}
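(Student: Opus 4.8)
The plan is to feed the interlacing theorem exactly the eigenvalue accounting we have already built up for the decorated lattice. Write $A$ for the effective Hamiltonian $H$ on $G$ and $B$ for the Hamiltonian on the snake graph $G'$; by construction $B$ is the principal submatrix of $A$ obtained by deleting the rows and corresponding columns indexed by the removed vertices. The two dimensions are $\dim A = |\Delta_0|+|\Delta_1| = N^2 + 2N(N-1) = 3N^2-2N$ and $\dim B = 2N^2 - 2$, so exactly $\dim A - \dim B = N^2 - 2N + 2 = (N-1)^2 + 1$ rows and columns were deleted. The key point is that this number is \emph{exactly} the dimension of the kernel of $A$: the $(N-1)^2$ zero modes supported on $\Delta_1$, together with the single zero mode supported on $\Delta_0$ furnished by the gauge condition, exhaust the zero-eigenspace, since $A$ is bipartite and hence every zero mode splits into a $\Delta_0$-part and a $\Delta_1$-part, each itself a zero mode. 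This coincidence is what makes the index shift in the interlacing inequality land precisely at the edge of the zero block.

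Next I would record the sign pattern of the spectrum of $A$. Being (minus) the adjacency matrix of the bipartite graph $G$, $A$ has spectrum symmetric about $0$; with the eigenvalue $0$ of multiplicity $Z \triangleq (N-1)^2+1$, it has $P \triangleq \frac12(\dim A - Z) = N^2-1$ strictly negative and $P$ strictly positive eigenvalues. Ordering $\lambda_1 \le \cdots \le \lambda_{\dim A}$, the zeros occupy positions $P+1,\dots,P+Z$, so $\lambda_P = -\Delta$ and $\lambda_{P+Z+1} = \Delta$. Note also that $\dim B = 2P$, so both $P$ and $P+1$ are legitimate indices for the eigenvalues $\mu_1 \le \cdots \le \mu_{2P}$ of $B$, and the index $(P+1)+Z = P+Z+1 \le \dim A$ is valid as well.

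Now apply the eigenvalue interlacing theorem with its roles of $N,M$ taken to be $\dim A,\dim B$, so that the shift is $\dim A - \dim B = Z$. At $k = P+1$ it gives
\[
  \lambda_{P+1}\;\le\;\mu_{P+1}\;\le\;\lambda_{(P+1)+Z}\;=\;\lambda_{P+Z+1}\;=\;\Delta,
\]
and since $\lambda_{P+1}=0$ this says $\mu_{P+1}\in[0,\Delta]$; at $k=P$ it gives $\lambda_P \le \mu_P \le \lambda_{P+Z}=0$, i.e.\ $\mu_P \in [-\Delta,0]$. Either way $B$ has an eigenvalue of absolute value at most $\Delta$, so $|\nu| = \min_j |\mu_j| \le \Delta$, which is the claim. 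The only place where care is needed is the index bookkeeping, and in particular the use of the two earlier lemmas to know the \emph{exact} multiplicity of the zero eigenvalue of $A$: merely a lower bound on that multiplicity would not pin the interlacing shift to the top of the zero block and the argument would not close.
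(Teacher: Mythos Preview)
Your proof is correct and follows essentially the same route as the paper: count the zero modes, count the deleted vertices, and apply the eigenvalue interlacing theorem to trap an eigenvalue of $B$ in $[-\Delta,\Delta]$. Your index bookkeeping is in fact tighter than the paper's own presentation (which states the zero multiplicity as $(N-1)^2$ rather than $(N-1)^2+1$ and asserts strict inequalities that interlacing only gives non-strictly), and your explicit use of the bipartite spectral symmetry to locate the zero block at positions $P+1,\dots,P+Z$ makes the argument cleaner.
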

\begin{proof}
We have removed $(N-1)^2+1$ nodes from $G$.  Call $\alpha$ the negative eigenvalue closest to zero and $\beta$ the positive eigenvalue closest to zero.  Since between $\alpha$ and $\beta$ there are $(N-1)^2$ zero eigenvalues, the eigenvalue interlacing theorem tells us that there exists an eigenvalue $\mu$ and $\mu'$ of $G'$ satisfying $0 < \mu < \beta$ and $\alpha < \mu' < 0$.  Finally, as the Hamiltonian for $G'$ has no zero eigenvalues, $\nu = \min(|\mu|, |\mu'|)$ is the smallest eigenvalue in absolute value.
\end{proof}

\begin{figure}
    \centering
    \includegraphics[scale=0.4]{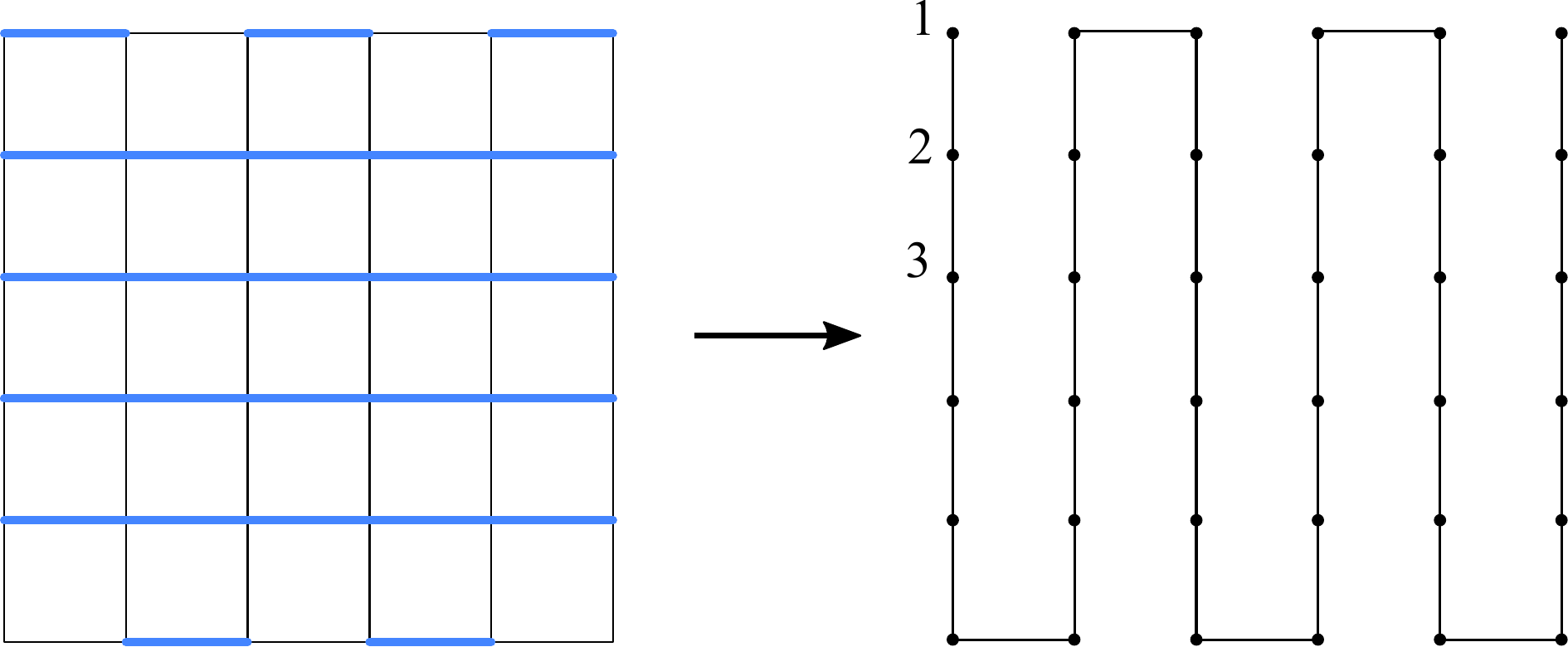}
    \caption{The nodes removed are indicated by the blue edges, and the corresponding snake graph is shown on the right.}
    \label{fig:nodeelimination}
\end{figure}

The Kotowski-Vir{\'a}g theorem cannot be applied to a snake graph because there are large correlations between points far apart if the snake graph is unwound into a one-dimensional line.  However, the resulting Hamiltonian for the snake graph is exactly invertible, so we may modify the analysis from \sec{Haminv}.  Thus, we are required to bound the magnitude of $\prod_{k \in \ell} \frac{t_{\ell(2k), \ell(2k+1)}}{t_{\ell(2k+1), \ell(2k+2)}}$, where $\ell$ is a directed path along the line.  If $\ell$ is a directed path from $(a,b)$ to $(c,d)$, then
\begin{equation}
    \prod_{k \in \ell[(a,b) \to (c,d)]} \frac{t_{\ell(2k), \ell(2k+1)}}{t_{\ell(2k+1), \ell(2k+2)}} = \frac{\psi_{c,d}}{\psi_{a,b}}.
\end{equation}
Prior lemmas from this section establish that the right hand side of this expression has magnitude $\text{poly}(N)$ with probability $> 1-N^{-p'}$ for some $p' > 0$.

We also need to bound the magnitude of an individual $t_{p,q}^{-1} = \sqrt{s_p s_q}/e_{p,q}$ where $p$ and $q$ are coordinates of sites on the lattice.  Suppose $p + (1,0) = q$ (a similar argument will hold for the remaining three cases where $p + (-1,0) = q$, $p + (0,1) = q$, and $p + (0,-1) = q$).  We may expand this and write it as
\begin{align}
    t_{p,q}^{-1} = \frac{1}{D e_{p,p+(1,0)}}&\sqrt{e_{p,p+(1,0)}^2\left[1 + r^{-1}_{p-(1,0) \to p} + \frac{e_{p,p+(0,1)}}{e_{p,p+(1,0)}}\cdot (1 + r^{-1}_{p \to p-(0,1)})\right]}\nonumber\\
    &\cdot\sqrt{\left[1 + r_{p \to p+(1,0)} + \frac{e_{p+(1,0),p+(1,1)}}{e_{p,p+(1,0)}}\cdot (1 + r^{-1}_{p+(1,-1) \to p+(1,0)})\right]}.
\end{align}
Each term in the square root is an edge-edge ratio which is upper and lower bounded by a constant in the hierarchical graph.  Thus, $t_{p,q}^{-1}$ is bounded by a constant.


This allows us to arrive at the following lemma:
\begin{lemma}\label{lem:2D-gap}
The gap of $H$, defined as the smallest nonzero energy eigenvalue in absolute value, is greater than $C/N^p$ for suitable constants $C$ and $p$.
\end{lemma}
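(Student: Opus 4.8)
The plan is to transfer the problem to the snake graph $G'$ and reuse the exact-inversion machinery of \sec{Haminv}. By the interlacing lemma just proven, $\Delta \geq |\nu|$, where $\nu$ is the smallest-magnitude eigenvalue of $H_{G'}$, the Hamiltonian of the 1D snake on $2N^2-2$ sites. Since $G'$ is a bipartite line of even length, $H_{G'}$ is tridiagonal with a zero principal diagonal and is invertible, so it has the closed-form inverse recorded in \sec{Haminv}: its $(i,j)$ entry (for $i$ odd, $j$ even, $i<j$) is $\frac{1}{t_i}\prod_{k=(i+1)/2}^{(j-2)/2}\left(-\frac{t_{2k}}{t_{2k+1}}\right)$, zero otherwise, and the matrix is symmetric. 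As in \lem{easyboundgap}, I would use $|\nu| = 1/\norm{H_{G'}^{-1}} \geq 1/\big(\sqrt{2N^2-2}\,\norm{H_{G'}^{-1}}_{1,\infty}\big)$, so it suffices to prove that the largest absolute column sum $\norm{H_{G'}^{-1}}_{1,\infty}$ is $\poly(N)$ with high probability.

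Each nonzero entry of $H_{G'}^{-1}$ factors as $1/t_i$ times a product of hopping ratios along a stretch of the snake. The factor $1/t_i = \sqrt{s_p s_q}/e_{p,q}$ is bounded above by a constant, exactly as in the displayed computation preceding this lemma (each term under the square root there is an edge-edge ratio, hence bounded by a constant in the hierarchical graph). For the product of hopping ratios, I would invoke the identity established just above the lemma statement: along a directed path in $G'$ from lattice coordinate $(a,b)$ to $(c,d)$, $\prod_{k}\frac{t_{\ell(2k),\ell(2k+1)}}{t_{\ell(2k+1),\ell(2k+2)}} = \psi_{c,d}/\psi_{a,b}$, where $\psi$ is the $\Delta_0$-supported zero mode of the original Lieb lattice (this is \eqn{2Dzeromode} together with gauge invariance / path independence). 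Writing $\psi_{c,d}/\psi_{a,b} = (\psi_{c,d}/\psi_{0,0})(\psi_{a,b}/\psi_{0,0})^{-1}$, it is enough to control $\psi_{m,n}/\psi_{0,0}$ for all $(m,n)$. \lem{2D-Laplacian} and \lem{polydecay} give exactly this: with probability $\geq 1-O(N^{-p''})$, simultaneously over all $O(N^2)$ sites $(m,n)$, one has $N^{-p}\leq |\psi_{m,n}/\psi_{0,0}|\leq N^{p}$ for constants $p,p''$, where $p''$ can be made as large as we like by strengthening the Chernoff bound in \lem{2D-Laplacian}.

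On this high-probability event every entry of $H_{G'}^{-1}$ is $\poly(N)$; since each of the $2N^2-2$ columns has $O(N^2)$ nonzero entries, each column sum is $\poly(N)$, so $\norm{H_{G'}^{-1}}_{1,\infty} = \poly(N)$, giving $|\nu| \geq C/N^p$ and hence $\Delta \geq |\nu| \geq C/N^p$ for suitable constants $C,p$.

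I expect the only genuine obstacle to be bookkeeping in the union bound: \lem{2D-Laplacian} as stated bounds $\psi_{m,n}/\psi_{0,0}$ for a fixed pair relative to a fixed corner, whereas here I need a polynomial bound on the zero-mode amplitude ratio between every pair of $\Delta_0$ sites simultaneously, with the total failure probability still $o(1)$ over the $O(N^2)$ events. This is handled because the Chernoff estimate in \lem{2D-Laplacian} yields failure probability $N^{-c^2/(2K')}$ with $c$ a free parameter, so a large enough $c$ makes the exponent exceed any fixed power of $N$; I should also note that the Green's-function estimate $\bm{\varepsilon}^T\bm{K}^{-1}\bm{\varepsilon}=O(\log N)$ used there holds uniformly for the displacement vectors between arbitrary site pairs (not just those anchored at a corner), which is immediate from the same integral bound. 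Everything else is a routine assembly of results already proved in this section and in \sec{Haminv}.
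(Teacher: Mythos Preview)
Your proposal is correct and follows essentially the same route as the paper: reduce to the snake graph via interlacing, invoke the exact inverse of the even-length 1D hopping Hamiltonian from \sec{Haminv}, identify each hopping-ratio product with a ratio $\psi_{c,d}/\psi_{a,b}$ of zero-mode amplitudes via the gauge-invariance identity, bound these by the polynomial concentration estimates of \lem{2D-Laplacian} and \lem{polydecay}, and finish with a union bound. The paper's version is terser and leaves the union-bound bookkeeping implicit, but the substance is the same.
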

\begin{proof}
First, label nodes of the one-dimensional line with integers (which we will denote by $i$, $j$, or $k$) and define $t_k \triangleq t_{k \to k+1}$. Recall, from before, that the norm of the inverse of a 1D Hamiltonian satisfies the equality
\begin{equation}
    \norm{{H'}^{-1}}_{1,\infty} = \max\left[\max_{i \text{ odd}}  \sum_{j}\frac{1}{t_i} \prod_{k = (i+1)/2}^{(j-2)/2}\left(\frac{t_{2k}}{t_{2k+1}}\right) \genfrac{}{}{0pt}{}{}{,} \max_{j \text{ even}}  \sum_{i}\frac{1}{t_i} \prod_{k = (i+1)/2}^{(j-2)/2}\left(\frac{t_{2k}}{t_{2k+1}}\right)\right].
\end{equation}
We upper bound the maxima with sums.  Each term in the sum, by the arguments above is upper bounded by a polynomial in $N$ with probability $1 - N^{-p}$.  If $p$ is large enough, the sum is polynomial in $N$ with high probability.  Therefore
\begin{equation}
    \Delta \geq \mu \geq \frac{1}{\sqrt{2N} \norm{{H'}^{-1}}_{1,\infty}} \geq \frac{1}{\text{poly}(N)}
\end{equation}
with high probability.
\end{proof}

Finally, this gives us the following theorem, in conjunction with \lem{polydecay} and Eqn.~\eqn{time2D}:
\begin{theorem}\label{thm:2D-main}
For the random graph ensemble with gauge condition in \defn{gauge} and free parameters chosen in Eqn.~\eqn{free-parameter-choice} following the BsGFF in \defn{BsGFF}, with probability at least $3/4$, the graph selected from the ensemble admits a quantum algorithm using $\poly(N)$ queries to the quantum oracle in \defn{oracle} to find the exit node with probability at least $3/4$.
\end{theorem}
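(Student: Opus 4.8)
The plan is to assemble the pieces already established for the decorated (Lieb) square lattice and wrap them in a standard amplification argument. By \lem{gauge-uniqueness} the gauge condition of \defn{gauge} forces a unique zero mode $\ket\psi$ supported entirely on the $\Delta_0$ sublattice; since $\ket{S_{\text{init}}}$ and $\ket{S_{\text{exit}}}$ (the corner vertices, each of size $1$) also live on $\Delta_0$, \lem{evolution-high} shows that the algorithm of starting at $\ket{S_{\text{init}}}$, evolving under $e^{-iHt}$ for $t$ uniform in $[0,\tau]$ with $\tau = 4/(\Delta\,|\braket{S_{\text{exit}}}{\psi}\braket{\psi}{S_{\text{init}}}|)$, and measuring in the vertex basis reaches the exit with probability at least $\tfrac14|\braket{S_{\text{exit}}}{\psi}\braket{\psi}{S_{\text{init}}}|^2$. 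Hence it suffices to lower bound, with probability $\geq 3/4$ over the ensemble, both the overlap product $|\braket{S_{\text{exit}}}{\psi}\braket{\psi}{S_{\text{init}}}| = |\psi_{N,N}||\psi_{0,0}|$ and the gap $\Delta$ by $1/\poly(N)$.

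These two bounds are exactly \lem{polydecay}, giving $|\psi_{N,N}||\psi_{0,0}|\geq 1/(KN^{3p+1})$, and \lem{2D-gap}, giving $\Delta\geq C/N^p$, each holding with probability $1-N^{-\Omega(1)}$ over the BsGFF height fields (\defn{BsGFF}) with the free parameters fixed as in Eqn.~\eqn{free-parameter-choice}. A union bound makes both hold simultaneously with probability $\geq 3/4$ for $N$ large enough (or, for all $N$, after inflating the constants in those lemmas). Conditioned on such a graph, $\tau = \poly(N)$ and the single-run exit probability is $\Omega(1/\poly(N))$. Because the polynomial degrees in the two lemmas are explicit, the algorithm need not know the instance-specific $\Delta$ or overlap: taking $\tau$ equal to the resulting worst-case $\poly(N)$ upper bound only shrinks the error term $2/(\Delta\tau)$ appearing in the proof of \lem{evolution}, so the same $\Omega(1/\poly(N))$ lower bound on the per-run success probability survives. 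Repeating $\poly(N)$ independent trials and stopping as soon as the exit name (given as part of the problem specification) is observed boosts the overall success probability to at least $3/4$.

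For the query count, note that $H=-A$ is $D$-row-sparse with $D=O(1)$ and the oracle of \defn{oracle} supplies exactly the neighbor queries required by sparse Hamiltonian simulation, so $e^{-iHt}$ can be implemented to $1/\poly(N)$ accuracy using $\widetilde O(t)$ queries, with this simulation error absorbed into the constants above. Since the walk started from $\ket{S_{\text{init}}}$ stays inside the supervertex subspace, its restriction there is the effective decorated-lattice Hamiltonian analyzed in this section, so the overlap and gap bounds genuinely control the actual dynamics. Each of the $\poly(N)$ trials evolves for time at most $\tau=\poly(N)$, for a grand total of $\poly(N)$ queries, which establishes the theorem. The argument mirrors the proof of \thm{main-1D}; the $\exp(O(\sqrt n))$ factors there collapse to $\poly(N)$ here solely because the lattice Green's function $\bm{\varepsilon}^T\bm{K}^{-1}\bm{\varepsilon}$ grows only like $\log N$ in $d=2$, which is the content of \lem{2D-Laplacian}.

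The one conceptually delicate ingredient — already dispatched above — is that the Lieb lattice carries a $\Theta(N^2)$-dimensional zero-energy subspace, so one might a priori fear destructive interference among many zero modes in the return amplitude of Eqn.~\eqn{time2D}. The gauge condition is precisely the device that confines all the extra zero modes to $\Delta_1$ (via \lem{gauge-uniqueness} and the counting of $\Delta_1$ zero modes), so that the time-averaged amplitude contains only the single clean term $\braket{S_{N,N}}{\psi}\braket{\psi}{S_{0,0}}$ and the one-dimensional-style analysis of \lem{evolution} goes through verbatim. Everything else — assembling the high-probability events, choosing $\tau$ conservatively, and amplifying — is routine bookkeeping, so I expect no further obstacle.
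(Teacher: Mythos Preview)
Your proposal is correct and follows essentially the same route as the paper: the paper's own proof of this theorem is a one-line invocation of \lem{polydecay} together with Eqn.~\eqn{time2D} (i.e.\ \lem{evolution-high}) and \lem{2D-gap}, and you have simply unpacked that invocation, added the union bound, the conservative choice of $\tau$, and the amplification step explicitly. Your additional remarks about sparse Hamiltonian simulation and the role of the gauge condition in isolating the single relevant zero mode are accurate and make the argument more self-contained than the paper's terse statement.
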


\subsection{Lieb lattice in higher dimensions}
An advantage of the models described in the previous section is that they may easily be extended to higher dimensions.  In particular, we introduce a $d$-dimensional Lieb lattice, which is a $d$-dimensional cubic lattice with an additional site on each edge.  As before, the undecorated lattice needs to be bipartite in order to construct a zero mode, and in this section we will restrict ourselves to the case where the undecorated lattice is a $d$-dimensional cubic lattice (but note that our results can be generalized to an arbitrary undecorated bipartite lattice).

As before, the gauge condition implies that the zero mode $\psi_i$ is exact, written as the lattice gradient of a scalar potential.  The zero mode can be written as in Eqn.~(\ref{eqn:2Dzeromode}), except the path $\ell$ occurs in a $d$-dimensional manifold.  As before, we can precisely formulate the number of zero modes in this model:
\begin{lemma}
On a $d$-dimensional Lieb lattice, the number of zero modes is $(d-1) N^d - d N^{d-1} + 2$: there are $(d-1) N^d - d N^{d-1} + 1$ zero modes with support in sublattice $\Delta_1$ and one zero mode with support in sublattice $\Delta_0$.
\end{lemma}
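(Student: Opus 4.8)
The plan is to reduce the count entirely to rank bookkeeping, exploiting the bipartite block structure of the graph Hamiltonian together with the $d$-dimensional analogue of \lem{gauge-uniqueness}. First I would write $H=-A$ in block form with respect to the bipartition $\Delta_0\cup\Delta_1$. Since the Lieb lattice has no edges within $\Delta_0$ or within $\Delta_1$, we have $H=\left(\begin{smallmatrix}0 & B\\ B^\top & 0\end{smallmatrix}\right)$ where $B$ is the $|\Delta_0|\times|\Delta_1|$ weighted biadjacency matrix with entries $-t_{uv}$. A vector $(\psi^{(0)},\psi^{(1)})$ lies in $\ker H$ iff $B\psi^{(1)}=0$ and $B^\top\psi^{(0)}=0$, so $\ker H=\ker B^\top\oplus\ker B$; in particular every zero mode decomposes as a sum of one supported purely on $\Delta_0$ and one supported purely on $\Delta_1$, the number of the former being $\dim\ker B^\top=|\Delta_0|-\operatorname{rank}(B)$ and of the latter $\dim\ker B=|\Delta_1|-\operatorname{rank}(B)$.

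The crux is then to show $\operatorname{rank}(B)=N^d-1$, equivalently $\dim\ker B^\top=1$. Each $v'\in\Delta_1$ sits on a cubic-lattice edge $\{u,w\}$ and contributes the equation $t_{v'u}\psi_u+t_{v'w}\psi_w=0$, so $\ker B^\top$ is exactly the space of $\Delta_0$-supported zero modes, written as in Eqn.~\eqn{2Dzeromode} with the path $\ell$ now living in the $d$-dimensional lattice. Fixing a spanning tree $T$ of the (connected) cubic lattice $L$, with $|T|=N^d-1$ edges, the tree equations determine every $\psi_v$, $v\in\Delta_0$, as a product of hopping ratios along the unique tree path from a fixed root, so $\psi$ is determined by $\psi_{\text{root}}$ alone and $\dim\ker B^\top\le1$. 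The gauge condition of \defn{gauge}, imposed on a cycle basis of the supergraph (and hence on all cycles), guarantees path-independence of these products, so the propagated solution also satisfies the equations coming from the $|\Delta_1|-(N^d-1)$ non-tree vertices of $\Delta_1$; since all $t_{uv}\neq 0$ it is nonzero, giving $\dim\ker B^\top=1$ exactly. This is the $d$-dimensional generalization of \lem{gauge-uniqueness}, and it is the only nontrivial input.

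Finally I would assemble the counts. The cubic lattice $L$ on $[N]^d$ has $d(N-1)N^{d-1}$ edges, one per vertex of $\Delta_1$, so $|\Delta_1|=dN^d-dN^{d-1}$ and $|\Delta_0|=N^d$. Hence the number of $\Delta_0$-supported zero modes is $|\Delta_0|-\operatorname{rank}(B)=1$, the number of $\Delta_1$-supported zero modes is $|\Delta_1|-\operatorname{rank}(B)=(dN^d-dN^{d-1})-(N^d-1)=(d-1)N^d-dN^{d-1}+1$, and the total is $(d-1)N^d-dN^{d-1}+2$. As a sanity check, $d=2$ returns $(N-1)^2$ modes on $\Delta_1$ and one on $\Delta_0$, matching the 2D lemmas. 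I expect the main obstacle to be the rigorous verification that the spanning-tree propagation together with the gauge condition on a cycle basis satisfies \emph{all} $\Delta_1$-vertex equations in general dimension (equivalently that $\dim\ker B^\top=1$, not merely $\le1$); the edge-counting and the block-matrix argument are routine.
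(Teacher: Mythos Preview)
Your proposal is correct and follows essentially the same approach as the paper: the paper's $d$-dimensional proof simply invokes the 2D analysis (Lemmas~\ref{lem:gauge-uniqueness} and the subsequent exact count) with the vertex counts $|\Delta_0|=N^d$ and $|\Delta_1|=dN^{d-1}(N-1)$, and that 2D analysis is precisely the statement that the gauge condition forces a unique $\Delta_0$-supported zero mode, hence exactly one linear dependence among the $|\Delta_0|$ equations, hence rank $N^d-1$. Your block-matrix/rank-nullity packaging and the spanning-tree argument for $\dim\ker B^\top=1$ make this a bit more explicit and self-contained than the paper's ``follows from the analysis of the last section,'' but the underlying idea is identical.
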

\begin{proof}
The proof follows from the analysis of the last section, noting that there are $N^d$ nodes in sublattice $\Delta_0$ and $d N^{d-1}(N-1)$ nodes in sublattice $\Delta_1$.
\end{proof}
Next, we make an assumption that the height fields obey a biased GFF: to extend to $d$ dimensions, we need to introduce $d$ additional height fields $\chi_i$ ($i = 1,\cdots,d$) associated with the edge sites .  As in the 2D case, the zero mode can be exactly written as
\begin{equation}
\psi_{m_1,m_2,\ldots, m_d} = \psi_{0,0,\ldots, 0} \sqrt{\frac{s_{m_1,m_2,\ldots, m_d}}{s_{0,0,\ldots, 0}}} \prod_{k \in \ell_{m,n}}^{|\ell|} \frac{e_{\ell(2k), \ell(2k+1)}}{e_{\ell(2k+1), \ell(2k+2)}}.
\end{equation}
Due to the gauge condition, we are free to choose any path from $(0,\ldots,0)$ to $(m_1, \ldots, m_d)$.  Generalizing the path chosen in the 2D case, we choose $(0,0,\ldots,0) \to (m_1,0,\ldots,0) \to (m_1,m_2,\ldots,0) \to \ldots \to (m_1,m_2,\ldots,m_d)$.  In terms of the edge ratio variables, the zero mode can be expressed as
\begin{equation}
\frac{\psi_{m_1,m_2,\ldots, m_d}}{\psi_{0,0,\ldots, 0}} = \sqrt{\frac{s_{m_1,m_2,\ldots, m_d}}{s_{0,0,\ldots, 0}}} \prod_{i = 0}^{m_1-1} \frac{1}{r_{i,0,0,\ldots,0}^{(1)}} \prod_{j = 0}^{m_2-1} \frac{1}{r_{m_1,j,0,\ldots,0}^{(2)}}\prod_{k = 0}^{m_3-1} \frac{1}{r_{m_1,m_2,k,\ldots,0}^{(3)}} \ldots \prod_{q = 0}^{m_d-1} \frac{1}{r_{m_1,m_2,m_3,\ldots, q}^{(d)}}.
\end{equation}
By gauge invariance (i.e. path independence due to the gauge constraint), this is not the unique way to write the product above.  We may write the number of nodes in each supernode as
\begin{align}
    s_{m_1, m_2, \ldots, m_d} &= \frac{1}{D}\sum_{i=1}^d e_{m_1, m_2, \ldots, m_d}^{(i)} (1 + r^{(i)}_{m_1, m_2, \ldots, m_d}) \\
    &= \frac{1}{D}\sum_{i=1}^d (1 + r^{(i)}_{m_1, m_2, \ldots, m_d}) \prod_{i_1 = 1}^{2m_1} r^{(1)}_{i_1/2,0,\ldots,0} \frac{e^{(2)}_{m_1,0,\ldots,0}}{e^{(1)}_{m_1,0,\ldots,0}} \prod_{i_2 = 1}^{2m_2} r^{(2)}_{m_1,i_2/2,\ldots,0} \frac{e^{(3)}_{m_1,m_2,\ldots,0}}{e^{(2)}_{m_1,m_2,\ldots,0}}\ldots \nonumber\\&\hspace{7cm}\cdot \prod_{i_d = 1}^{2m_d} r^{(d)}_{m_1,m_2,\ldots,i_d/2} \frac{e^{(i)}_{m_1,m_2,\ldots,m_d}}{e^{(d)}_{m_1,m_2,\ldots,m_d}} .
\end{align}
The full amplitude of the zero mode is then given by:
\begin{align}
    \psi^2_{m_1, m_2, \ldots, m_d} &= \frac{\psi^2_{0,0,\ldots,0}}{D s_{0,0}} \sum_{k=1}^d(1+r^{(k)}_{m_1, m_2, \ldots, m_d}) \frac{e^{(2)}_{m_1,0,\ldots,0}}{e^{(1)}_{m_1,0,\ldots,0}} \frac{e^{(3)}_{m_1,m_2,\ldots,0}}{e^{(2)}_{m_1,m_2,\ldots,0}} \ldots \frac{e^{(k)}_{m_1,m_2,\ldots,m_d}}{e^{(d)}_{m_1,m_2,\ldots,m_d}}
    \nonumber \\ &\cdot  \frac{\prod_{i=0}^{m_1-1}  r^{(1)}_{i+1/2,0,\ldots,0}}{\prod_{i=0}^{m_1-1}  r^{(1)}_{i,0,\ldots,0}}\frac{\prod_{j=0}^{m_2-1}  r^{(2)}_{m_1, j+1/2,\ldots,0}}{\prod_{j=0}^{m_2-1}  r^{(2)}_{m_1,j,\ldots,0}}\ldots \frac{\prod_{q=0}^{m_d-1}  r^{(d)}_{m_1, m_2,\ldots,q+1/2}}{\prod_{j=0}^{m_d-1}  r^{(d)}_{m_1,m_2,\ldots,q}}.
\end{align}
We now pick appropriate boundary conditions on the hypercube, analogously to the 2D case.  The boundary of the hypercube is defined to be the set of points where at least one of $m_1, m_2, \ldots, m_d $ is $0$.  We choose the values of $e^{(i)}$ for boundary sites (not including boundary edge sites) to be the same for all $i$.
\begin{equation}
   e^{(1)}_{m_1,m_2,\ldots,m_d} = e^{(2)}_{m_1,m_2,\ldots,m_d} = \ldots = e^{(d)}_{m_1,\ldots,m_d} \,\, \forall (m_1,m_2,\ldots,m_d), \,\, \prod_i m_i = 0.
\end{equation}
Again this choice was somewhat arbitrary although our requirement that the underlying graph be regular implies an upper bound the possible differences between the $e^{(i)}$'s.

In fact, throughout the graph, the bounded degree implies that $e^{(i)}_{\vec{r}}/e^{(j)}_{\vec{r}}$ is bounded.  These assumptions simplify the above equation to
\begin{align}
    \psi^2_{m_1, m_2, \ldots, m_d} & \geq
    \frac{C \psi^2_{0,0,\ldots,0}}{D^2} \frac{\prod_{i=0}^{m_1-1}  r^{(1)}_{i+1/2,0,\ldots,0}}{\prod_{i=0}^{m_1-1}  r^{(1)}_{i,0,\ldots,0}}\frac{\prod_{j=0}^{m_2-1}  r^{(2)}_{m_1, j+1/2,\ldots,0}}{\prod_{j=0}^{m_2-1}  r^{(2)}_{m_1,j,\ldots,0}}\ldots \frac{\prod_{q=0}^{m_d-1}  r^{(d)}_{m_1, m_2,\ldots,q+1/2}}{\prod_{j=0}^{m_d-1}  r^{(d)}_{m_1,m_2,\ldots,q}}.
    \\
    & \triangleq \frac{C \psi^2_{0,0,\ldots,0}}{D^2} \exp(\Phi_{m_1,\cdots,m_d})
\end{align}
for some constant $C$.  The localization properties of $\psi$, and thus the quantum runtime, will depend exponentially on $\Phi$.

As before, we can introduce height fields $\varphi$ and $\chi^{(i)}$ where $i = 1,\cdots,d$, analogous the 2D case.  Similarly, we can write $\Phi$ in terms of these height fields and quantify the concentration of $\log \Phi$, resulting in the following Lemma:
\begin{lemma}\label{lem:highD}
With high probability, $\sum_{a = 0}^{n} \log r_{a, m_2, \ldots, m_d}^{(1)} - \sum_{a = 0}^{n} \log r_{a+1/2, m_2, \ldots, m_d}^{(1)}=O(\sqrt{\log N})$ assuming that the dimension $d$ and the size of the lattice $N$ satisfy $d = O(N)$.
\end{lemma}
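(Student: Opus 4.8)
The plan is to lift the argument of \lem{2D-Laplacian} from two dimensions to $d\ge 3$; the only genuinely new ingredient is a bound on the $d$-dimensional lattice Green's function. First I would rewrite the quantity of interest purely in terms of height-field differences. Using \eqn{heighttorat} and the repackaging of \defn{heightfields}, both sums telescope:
\be
\sum_{a=0}^{n}\log r^{(1)}_{(a,m_2,\ldots,m_d)} = \varphi_{(n+1,m_2,\ldots,m_d)}-\varphi_{(0,m_2,\ldots,m_d)}, \qquad \sum_{a=0}^{n}\log r^{(1)}_{(a+1/2,m_2,\ldots,m_d)} = \chi^{(1)}_{(n+3/2,m_2,\ldots,m_d)}-\chi^{(1)}_{(1/2,m_2,\ldots,m_d)},
\ee
so the quantity in the lemma is $\Delta\varphi-\Delta\chi^{(1)}$, a difference of endpoint differences of two \emph{independent} BsGFF fields, each supported on a cubic lattice of side $O(N)$. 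Its moment generating function therefore factorizes, and by the sub-Gaussian domination in \defn{BsGFF} each factor is bounded by the corresponding parent-BGFF moment generating function. Exactly as in \lem{2D-Laplacian}, the Gaussian integral identity evaluates each such factor as the exponential of a term linear in $t$ (the bias/source contribution) plus $\tfrac12 g^2 t^2\,\bm v^{\mathsf T}\bm K^{-1}\bm v$, where $\bm v$ is the signed indicator of the two endpoints and $\bm K$ is the Laplacian of the relevant $d$-dimensional cubic sublattice. The linear terms are the means; since the integer- and half-integer-first-coordinate edge ratios are drawn with the same slowly varying bias, $\mathbb E[\Delta\varphi-\Delta\chi^{(1)}]$ is $O(1)$ (the $1/2$-offset contributes at most $O(n/N)$ per the smoothness of $J$) and can be absorbed into the final estimate.

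The core step is to bound $\bm v^{\mathsf T}\bm K^{-1}\bm v$ for the $d$-dimensional Laplacian uniformly in $N$. I would diagonalize $\bm K$ by the product-cosine eigenvectors, which have single-site amplitude at most $(2/N)^{d/2}$ and eigenvalues $E(\vec k)=\sum_{i=1}^{d}4\sin^2(\pi k_i/2N)$, so that $\bm v^{\mathsf T}\bm K^{-1}\bm v = g^2\sum_{\vec k\ne 0}|\langle v|\vec k\rangle|^2/E(\vec k)$. Using $|\langle v|\vec k\rangle|^2\le 4(2/N)^d$ together with the elementary inequality $\sin(\pi k_i/2N)\ge k_i/N$ (so $E(\vec k)\ge 4|\vec k|^2/N^2$), the form is at most $g^2\,\tfrac{2^d}{N^{d-2}}\sum_{\vec k\ne 0}|\vec k|^{-2}$. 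Comparing this lattice sum to $\int_{[1,N]^d}|x|^{-2}\,d^dx$ and splitting the integral on the largest coordinate shows that for $d\ge 3$ it is $O(N^{d-2})$ (this is precisely where $d\ge 3$ enters: for $d=2$ one instead gets $\log N$, recovering \lem{2D-Laplacian}), so $\bm v^{\mathsf T}\bm K^{-1}\bm v$ is $O(1)$ in $N$, with a $d$-dependent constant of order $2^d$ times a power of $d$. The hypothesis $d=O(N)$ is used here to keep the lattice-to-continuum error, which is of relative size $\sim d/N$ from the boundary modes, under control. The identical bound holds for the $\chi^{(1)}$ contribution.

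With the quadratic form bounded by a constant $Q$ (in $N$), I would finish with a Chernoff bound in the usual way: $\mathbb P(\Delta\varphi-\Delta\chi^{(1)}>a)\le\exp(-at+\tfrac12 Q t^2)$, optimized at $t=a/Q$ to give $\exp(-a^2/2Q)$; taking $a=c\sqrt{\log N}$ makes this $N^{-\Omega(1)}$, and the lower tail is handled identically by symmetry, yielding $|\Delta\varphi-\Delta\chi^{(1)}|=O(\sqrt{\log N})$ with probability $1-N^{-\Omega(1)}$, which is the claim. The main obstacle, just as in the two-dimensional case, is the uniform control of the $d$-dimensional lattice Green's function: one must handle the lattice-to-continuum comparison carefully near the coordinate axes, track the $2^d$ from the eigenvector normalization and the $d$-dependent geometric constants (these are what ultimately feed the $\exp(\sqrt{d\,2^d\log N})$-type scaling in \thm{highD-main}), and verify that the smooth-bias cancellation of the mean survives the half-integer offset — all routine, but all requiring the regime $d=O(N)$.
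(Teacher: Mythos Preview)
Your high-level strategy matches the paper exactly: telescope to height-field differences, invoke sub-Gaussian domination to pass to the parent BGFF, evaluate the Gaussian MGF as a quadratic form in the lattice Green's function, then Chernoff. The gap is in your bound on $\bm v^{\mathsf T}\bm K^{-1}\bm v$. By replacing $|\langle v|\vec k\rangle|^2$ with the crude bound $4(2/N)^d$ you discard the cosine structure of the eigenvectors, and this costs you precisely a factor of $2^d$: for each of the $d-1$ transverse directions the true numerator contains $\cos^2(\pi k_i(m_i-\tfrac12)/N)$, whose average over $k_i$ is $1/2$, not $1$. Your bound $\bm v^{\mathsf T}\bm K^{-1}\bm v = O(2^d)$ is therefore tight for your method, and with $Q\sim 2^d$ the Chernoff step forces $a\sim\sqrt{2^d\log N}$, not $\sqrt{\log N}$. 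Since the lemma explicitly allows $d$ up to $\Theta(N)$, this is exponentially too weak; you have not proved the stated bound except when $d=O(1)$.

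The paper recovers the missing $2^{-d}$ by a more delicate decomposition: it writes the sum over $\vec k$ according to the number $q$ of nonzero momentum components, applies AM--GM only to the $q$ nonzero $\sin^2$ terms in the denominator (so that the sum factorizes over directions), and then uses $\tfrac1N\sum_{k_i}\cos^2(\cdot)/\sin^{2/q}(\cdot)$ rather than bounding the cosines by $1$. In the regime $q>\log N$ one has $\sin^{2/q}\approx 1+\tfrac{2}{q}\log\sin$, which makes each directional sum $1+O(1/q)$, and summing over $q$ with binomial weights yields $(1+2/N)^{d-1}\le e^{2d/N}=O(1)$ for $d=O(N)$. This is the step your argument is missing; the comparison of the lattice sum to a continuum integral cannot see it, because the effect comes entirely from the oscillatory numerator that you bounded away. (Your remark that the $2^d$ ``feeds the $\exp(\sqrt{d\,2^d\log N})$ scaling'' is in fact the paper's \emph{small}-$d$ regime $d\le\log N$; for $\log N<d=O(N)$ the paper's bound on the quadratic form is genuinely $O(1)$, and that is what the lemma asserts.)
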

\begin{proof}
First we use the sub-Gaussian condition to bound
\begin{equation}
    \mathbb{E}_{\text{BsGFF}}\left[\exp\left( t \sum_{a=0}^{n}  \log \frac{r_{a, m_2, \ldots, m_d}^{(1)}}{r_{a+1/2, m_2, \ldots, m_d}^{(1)}}\right)\right] \leq \mathbb{E}_{\text{BGFF}}\left[\exp\left( \lambda t  \sum_{a=0}^{n}  \log \frac{r_{a, m_2, \ldots, m_d}^{(1)}}{r_{a+1/2, m_2, \ldots, m_d}^{(1)}}\right)\right].
\end{equation}
for some $\lambda$.  Under a biased Gaussian free field, we may compute the moment generating function
\begin{equation}
    \mathbb{E}\left[\exp\left(t \sum_{a=0}^{n}  \log r_{a, m_2, \ldots, m_d}^{(1)}\right)\right] = \frac{1}{Z}\int \mathcal{D}\varphi \exp\left(-\frac{1}{2 g^2}\sum_{\langle p, q \rangle} (\varphi_{p} - \varphi_{q} - J_{pq})^2 + t(\varphi_{n,m_2,m_3,\ldots} - \varphi_{0,m_2,m_3,\ldots})\right).
\end{equation}
We use the standard formula for multivariate Gaussian integration as before to evaluate the above quantity.  We also compute the contribution from the half integer height field $\chi^{(1)}$ (which we abbreviate with $\chi$ for brevity):
\begin{equation}
    \mathbb{E}\left[\exp\left(t \sum_{a=0}^{n}  \log r_{a+1/2, m_2, \ldots, m_d}^{(1)}\right)\right] = \frac{1}{Z}\int \mathcal{D}\chi \exp\left(-\frac{1}{2 g^2}\sum_{\langle p, q \rangle} (\chi_{p} - \chi_{q} - J_{pq})^2 + t(\chi_{n,m_2,m_3,\ldots} - \chi_{0,m_2,m_3,\ldots})\right).
\end{equation}
Dividing both terms to compute the expectation value, all terms linear in $t$ in the argument of the exponential cancel, and we get
\begin{equation}
    \mathbb{E}\left[\exp\left(\lambda t \sum_{a = 0}^{n} \log r_{a, m_2, \ldots, m_d}^{(1)} - \lambda t\sum_{a = 0}^{n} \log r_{a+1/2, m_2, \ldots, m_d}^{(1)}\right)\right] = \exp \left(g^2 \lambda^2 t^2 \bm{\varepsilon}_1^T \bm{K}^{-1} \bm{\varepsilon}_1\right),
\end{equation}
where $K$ is the Laplacian of a $d$-dimensional square lattice, and as in the 2D case, $\bm{\varepsilon}_1$ is a vector with components $1$ at $(n, m_2, \ldots, m_d)$ and $-1$ at $(0, m_2, \ldots, m_d)$.  Note that we require $d-1$ more such terms (corresponding to the other height fields $\chi^{(i)}$ for $i = 2, 3, \cdots, d$) in order the construct the full path, but we defer this consideration to the next Lemma.  Next, we would like to upper bound the norm of $\bm{K}^{-1}$.  We note that this Laplacian has eigenvectors
\begin{equation}
    \varphi_{m_1,m_2,\ldots,m_d}(k_1, k_2, \ldots, k_d) =  \prod_{i = 1}^d \left(\frac{A_i}{N}\right)^{1/2} \cos \left(\frac{\pi k_i}{N}\left(m_i - \frac{1}{2}\right)\right),
\end{equation}
with corresponding energy eigenvalues
\begin{equation}
    E(k_1, k_2, \ldots, k_d) = 4 \sum_{i=1}^d \sin^2\left(\frac{\pi k_i}{2 N}\right),
\end{equation}
and $A$ is a normalization constant given by
\begin{equation}
    A_i^{-1} = \frac{1}{N}\sum_{m_i = 1}^N \cos^2 \left(\frac{\pi k}{N}\left(m_i - \frac{1}{2}\right)\right) = \frac{1}{2}.
\end{equation}
The quadratic form can be bounded by\footnote{Similarly, we only sum over momenta satisfying $\vec{k} \neq \vec{0}$.}
\allowdisplaybreaks
\begin{align}
    \bm{\varepsilon}_1^T \bm{K}^{-1} \bm{\varepsilon}_1 &= \sum_{k_1, k_2, \ldots, k_d = 0}^{N} \frac{\braket{\varepsilon_1}{k_1, k_2, \ldots, k_d}\braket{k_1, k_2, \ldots, k_d}{\varepsilon_1}}{E(k_1, k_2, \ldots, k_d)} \nonumber\\
    &= \frac{1}{N^d}\sum_{k_1, k_2, \ldots, k_d = 0}^{N} \frac{\prod_{i=2}^d 2 \cos^2 \left(\frac{\pi k_i}{N} \left(m_i - \frac{1}{2}\right)\right)\left(\cos \left(\frac{\pi k_1}{N}\left(n - \frac{1}{2}\right)\right) - \cos \left(\frac{\pi k_1}{2N}\right)\right)^2}{4 \sum_{i=1}^d \sin^2\left(\frac{\pi k_i}{2 N}\right)}.
\end{align}
At this point, we will proceed by using the AM-GM inequality in the denominator.  One has to be careful however, because we have to split between the cases when $k_i = 0$ and when $k_i \neq 0$.  Let us consider the case when the number of $k_i$ which are nonzero is $q-1$ (WLOG we take these to be $i = 2,3,\ldots,q$).  Therefore, application of AM-GM gives
\begin{align}
    \frac{1}{4 N^d} \frac{\prod_{i=2}^d 2 \cos^2 \left(\frac{\pi k_i}{N} \left(m_i - \frac{1}{2}\right)\right)\left(\cos \left(\frac{\pi k_1}{N}\left(n - \frac{1}{2}\right)\right) - \cos \left(\frac{\pi k_1}{2N}\right)\right)^2}{\sum_{i=1}^d \sin^2\left(\frac{\pi k_i}{2 N}\right)}\nonumber \leq \\ \frac{1}{4 q N^d} \frac{\prod_{i=2}^d 2 \cos^2 \left(\frac{\pi k_i}{N} \left(m_i - \frac{1}{2}\right)\right)\left(\cos \left(\frac{\pi k_1}{N}\left(n - \frac{1}{2}\right)\right) - \cos \left(\frac{\pi k_1}{2N}\right)\right)^2}{\prod_{i=1}^q \sin^{2/q}\left(\frac{\pi k_i}{2 N}\right)}.
\end{align}
We first deal with the sum over $k_1$.  Notice that the $k_1 = 0$ case will never occur because the summand will always evaluate to $0$.  Therefore, in the above expression (once summed over $\vec{k}$) we may always factor out the term
\begin{equation}
    \frac{1}{N} \sum_{k_1=1}^N \frac{\left(\cos \left(\frac{\pi k_1}{N}\left(n - \frac{1}{2}\right)\right) - \cos \left(\frac{\pi k_1}{2N}\right)\right)^2}{\sin^{2/q}\left(\frac{\pi k_1}{2 N}\right)} \lesssim 4 \int_{1/N}^1 \frac{\dd{x}}{\sin^{2/q}\left(\frac{\pi x}{2}\right)} \triangleq 4 I_q.
\end{equation}
Having decoupled this term from the full expression at large, the remaining part of the expression can be regrouped in the following suggestive form:
\begin{align}
     \frac{4I_q}{4 q N^{d-1}} \frac{\prod_{i=2}^d 2 \cos^2 \left(\frac{\pi k_i}{N} \left(m_i - \frac{1}{2}\right)\right)}{ \prod_{i=2}^q \sin^{2/q}\left(\frac{\pi k_i}{2 N}\right)}  = \frac{ I_q}{q} \prod_{i = 2}^{q} \frac{2 \cos^2 \left(\frac{\pi k_i}{N} \left(m_i - \frac{1}{2}\right)\right)}{N \sin^{2/q}\left(\frac{\pi k_i}{2 N}\right)} \prod_{j = q+1}^d \frac{2}{N} \cos^2 \left(\frac{\pi k_i}{N} \left(m_i - \frac{1}{2}\right)\right).
\end{align}
The contribution to the original sum corresponding to fixing the $d - q$ values of $k_i$ to $0$ while summing over all possible nonzero values for the other $k_i$ gives:
\begin{align}
     \frac{I_q}{q} \left(\frac{2}{N}\right)^{d-q}\prod_{i = 2}^{q} \sum_{k_i = 1}^{N} \frac{2 \cos^2 \left(\frac{\pi k_i}{N} \left(m_i - \frac{1}{2}\right)\right)}{N \sin^{2/q}\left(\frac{\pi k_i}{2 N}\right)}.
\end{align}
Depending on the value of $\frac{2}{q}\log \sin \left(\frac{\pi k}{2N}\right)$, the remaining sum will be bounded differently.  Suppose we work in a large dimension limit, where $d \geq \log N$, and suppose that $q \geq \log N$ as well.  Then, to leading order, we find
\begin{equation}\label{eq:sinappr}
   \sin^{2/q} \left(\frac{\pi k}{2N}\right) \approx 1 + \frac{2}{q}\log \sin \left(\frac{\pi k}{2N}\right) + \cdots
\end{equation}
where $\cdots$ include higher order terms.  Ignoring the prefactor $\frac{I_q}{q} \left(\frac{2}{N}\right)^{d-q}$ for now, we find that
\begin{align}
\prod_{i = 2}^{q} \sum_{k_i = 1}^{N} \frac{2 \cos^2 \left(\frac{\pi k_i}{N} \left(m_i - \frac{1}{2}\right)\right)}{N \sin^{2/q}\left(\frac{\pi k_i}{2 N}\right)} \,
&\lesssim \, \prod_{i = 2}^{q} \frac{2}{N} \sum_{k_i = 1}^{N} \cos^2 \left(\frac{\pi k_i}{N} \left(m_i - \frac{1}{2}\right)\right)\left(1 - \frac{2}{q} \log \sin \left(\frac{\pi k_i}{2N}\right)\right) \nonumber \\
&\lesssim \, \left(1 - \frac{1}{N} + \frac{4 \log 2}{q} + \cdots\right)^{q-1} \nonumber
\\
&\lesssim \,  1. 
\end{align}
where from the first to the second inequality we used the fact that
\begin{equation}
    -\frac{1}{N} \sum_{k = 1}^N \cos^2\left(\frac{\pi k}{N} \left(m_i - \frac{1}{2}\right)\right) \log \sin \left(\frac{\pi k}{2N}\right)
    \, \lesssim -\, \int_{1/N}^1 \dd{x}\, \log \sin \frac{\pi x}{2} = \log 2
\end{equation}
with the last integral is evaluated using standard tricks.  Next, let us consider the case when $d \geq \log N$ and $q \leq \log N$.  In this case, we cannot rely on the expansion used in Eqn.~\eq{sinappr}.  Instead, we may cheaply bound the following sum:
\begin{equation}
   \sum_{k_i = 1}^{N} \frac{2 \cos^2 \left(\frac{\pi k_i}{N} \left(m_i - \frac{1}{2}\right)\right)}{N \sin^{2/q}\left(\frac{\pi k_i}{2 N}\right)} \leq \frac{2}{N} \sum_{k = 1}^N \frac{1}{\sin^{2/q}\left(\frac{\pi k}{2 N}\right)} \lesssim \int_{1/N}^1 \dd{x}\, \frac{2}{\sin^{2/q} \frac{\pi x}{2}}.
\end{equation}
This integral is simply $2 I_q$.  When $q = 1$ this integral is proportional to $N$ and when $q = 2$ this integral is proportional to $\log N$.  For $q > 2$, the integral is bounded by a positive constant.  Still considering the case where $d \geq \log N$ and $q \leq \log N$, we sum over all possible configurations of $\vec{k}$ where the number of nonzero elements of the vector is $q \leq\log N$.  Call this quantity $W_{\leq \log N}$, which we may bound by (reintroducing the prefactor of $I_q/q$):
\begin{align}
   W_{\leq \log N} &\lesssim \sum_{q = 1}^{O(\log N)} \binom{d-1}{q-1} \left(\frac{2}{N} \right)^{d-q} \frac{(I_q)^{q}}{q} \nonumber\\
   &\lesssim \frac{N^2\, 2^d}{N^{d}} + \frac{N^2 d\, 2^d \log^2 N}{N^{d}} + \sum_{q = 3}^{O(\log N)} \binom{d-1}{q-1} \left(\frac{2}{N} \right)^{d-q} c^q.
\end{align}
Making the further assumption that $d = O(N)$, we see that all three terms in the sum above rapidly decay to zero as $d,N \to \infty$.  Next, the contribution to the sum when $q > \log N$ is
\begin{align}
   W_{> \log N} &\lesssim \sum_{q = 1}^{d} \binom{d-1}{q-1} \left(\frac{2}{N} \right)^{d-q} 
   \frac{I_q}{q} \nonumber\\
   &\lesssim \frac{N^2}{N^d} + \frac{N^2\,\log N}{N^d}  + \sum_{q = 1}^{d} \binom{d-1}{q-1} \left(\frac{2}{N} \right)^{d-q} \nonumber \\
   &\lesssim \left(1 + \frac{2}{N}\right)^{d-1} \leq \exp(2d/N).
\end{align}
The original sum is therefore given by
\begin{align}
   \bm{\varepsilon}_1^T \bm{K}^{-1} \bm{\varepsilon}_1 &= \sum_{k_1, k_2, \ldots, k_d = 0}^{N} \frac{\braket{\varepsilon_1}{k_1, k_2, \ldots, k_d}\braket{k_1, k_2, \ldots, k_d}{\varepsilon_1}}{E(k_1, k_2, \ldots, k_d)} \nonumber\\
    &\leq W_{\leq \log N} + W_{> \log N} \nonumber \\
    &\lesssim \exp(2d/N)
\end{align}
which is upper bounded by a constant when $d = O(N)$.  Then, a Chernoff bound gives
\begin{equation}
    \mathbb{P}\left( \sum_{a = 0}^{n} \log r_{a, m_2, \ldots, m_d}^{(1)} - \sum_{a = 0}^{n} \log r_{a+1/2, m_2, \ldots, m_d}^{(1)} > a\right) \leq e^{-at} e^{(g\lambda t)^2 (\bm{\epsilon}_1^T \bm{K}^{-1} \bm{\epsilon}_1)^2/2} \leq e^{-at} e^{-K' t^2/2}
\end{equation}
for some constant $K'$.  Choosing $t = a/K'$, we find that
\begin{equation}
    \mathbb{P}\left( \sum_{a = 0}^{n} \log r_{a, m_2, \ldots, m_d}^{(1)} - \sum_{a = 0}^{n} \log r_{a+1/2, m_2, \ldots, m_d}^{(1)} > a\right) \leq e^{-\frac{a^2}{2 K'}},
\end{equation}
and with $a = c \sqrt{\log N}$, the probability is less than $1/N^{\gamma}$ for some constant $\gamma = c^2/(2K')$.

Finally, for the case when $d = O(\log N)$, the previous analysis can be modified to show that $\bm{\varepsilon}_1^T \bm{K}^{-1} \bm{\varepsilon}_1 \leq c^d$ for some $c$.  As long as $c^d = o(N^2)$, we will not get exponential decay.  The case where $d = k \log N$ for any constant $k$ is trickier and will not be addressed in this paper.
\end{proof}
Next, we have analyzed the ratio of products along a single dimension of the $d$-dimensional hypercube.  To form a path from the corner of the hypercube to any point inside the hypercube, we will need to compute $d$ of these quantities.  In the following lemma, we quantify the concentration properties of these $d$ quantities:
\begin{lemma}
Define the following variables:
\begin{align}
    \Lambda_1 &= \sum_{a = 0}^{m_1} \log r_{a, 0, \ldots, 0}^{(1)} - \sum_{a = 0}^{m_1} \log r_{a+1/2, 0, \ldots, 0}^{(1)} \nonumber \\
    \Lambda_2 &= \sum_{a = 0}^{m_2} \log r_{m_1, a, \ldots, 0}^{(2)} - \sum_{a = 0}^{m_d} \log r_{m_1, a+1/2, \ldots, 0}^{(2)} \nonumber \\
    &\vdots \nonumber \\
    \Lambda_d &= \sum_{a = 0}^{m_d} \log r_{m_1, m_2, \ldots, a}^{(d)} - \sum_{a = 0}^{m_2} \log r_{m_1, m_2, \ldots, a+1/2}^{(d)}.
\end{align}

With high probability, $\sum_{i=1}^d \Lambda_i = O(\sqrt{d \log N})$ when $\log N < d \leq N$ and $\sum_{i=1}^d \Lambda_i = O(\sqrt{d 2^d \log N})$ when $d \leq \log N$.
\end{lemma}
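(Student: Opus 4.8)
The plan is to express $\sum_{i=1}^d\Lambda_i$ as a sum of $d+1$ increments of mutually independent BsGFF fields and then apply the single-axis estimate of \lem{highD} to each one, so that the variance proxies simply add. First I would rewrite each $\Lambda_i$ using the height-field representation \eqn{heighttorat}: the integer-coordinate sum $\sum_a \log r^{(i)}_{\ldots,a,\ldots}$ is a telescoping difference of the field $\varphi$ along direction $i$, and the half-integer sum $\sum_a \log r^{(i)}_{\ldots,a+1/2,\ldots}$ is a telescoping difference of the edge-site field $\chi^{(i)}$. Choosing, by gauge invariance (path independence), the staircase path $(0,\ldots,0)\to(m_1,0,\ldots,0)\to\cdots\to(m_1,\ldots,m_d)$ exactly as in the 2D computation behind \lem{2D-Laplacian} (cf.\ the identity \eq{quantityheightfields}), the $\varphi$-pieces of the $d$ summands telescope into a \emph{single} increment $\Xi_0 \triangleq \varphi_{\vec m}-\varphi_{\vec 0}$ (up to an $O(1)$ shift of the endpoint, which is harmless), while the $\chi^{(i)}$-pieces remain as $d$ separate increments $\Xi_i$, one on each of the $d$ distinct edge sublattices. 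Since $\varphi,\chi^{(1)},\ldots,\chi^{(d)}$ are independent by hypothesis, I would then have $\sum_{i=1}^d\Lambda_i = \Xi_0 - \sum_{i=1}^d\Xi_i$, a sum of $d+1$ increments of independent fields.

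Next I would bound the moment generating function. Independence factorizes $\mathbb{E}[\exp(t\sum_i\Lambda_i)] = \mathbb{E}[\exp(t\Xi_0)]\prod_{i=1}^d\mathbb{E}[\exp(-t\Xi_i)]$, and each factor I would handle exactly as in \lem{highD}: use the sub-Gaussian property of \defn{BsGFF} to pass to the parent BGFF with the shared bias $J$, perform the Gaussian integral, note that the terms linear in $t$ (those involving the source $\bm{\mathcal J}$) cancel between numerator and denominator, and arrive at a bound of the form $\exp(C g^2\lambda^2 t^2\,\bm v^T\bm K^{-1}\bm v)$, where $\bm K$ is the discrete Laplacian of the relevant $d$-dimensional cubic sublattice and $\bm v$ is the $\pm1$ indicator vector of the endpoints of the corresponding increment. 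For the $\Xi_i$ the two endpoints differ in a single coordinate, so the momentum-space estimate of \lem{highD} applies verbatim and gives $\bm v^T\bm K^{-1}\bm v \le B$, with $B = \exp(2d/N) = O(1)$ when $d\le N$ and $B = c^d$ (for the constant $c$ of \lem{highD}, which can be taken $\le 2$) when $d = O(\log N)$. For $\Xi_0$ the endpoints may differ in several coordinates, but the same momentum sum goes through: the overlap $\braket{\bm\delta}{\vec k}$ becomes a product (and difference) of cosines over several directions, still bounded by constants, so the AM--GM split of the denominator and the $W_{\le\log N}/W_{>\log N}$ bookkeeping produce the same $B$.

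Combining the factors gives $\mathbb{E}[\exp(t\sum_i\Lambda_i)] \le \exp\!\big(C(d+1)g^2\lambda^2 B\,t^2\big)$, i.e.\ $\sum_i\Lambda_i$ is sub-Gaussian with variance proxy of order $(d+1)B$. A Chernoff bound, taking $t$ of order $a/((d+1)B)$, then yields $\mathbb{P}(\sum_i\Lambda_i > a)\le \exp(-\Omega(a^2/((d+1)B)))$, and by symmetry the same for the lower tail. Choosing $a = \Theta(\sqrt{(d+1)B\log N})$ makes the failure probability at most $N^{-\gamma}$ for a constant $\gamma$ we can take as large as we like. When $\log N < d\le N$ we have $B = O(1)$, giving $a = O(\sqrt{d\log N})$; when $d\le\log N$ we have $B\le 2^d$ up to constants, giving $a = O(\sqrt{d\,2^d\log N})$, as claimed. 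As in \lem{highD}, the window $d = \Theta(\log N)$ is not covered.

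I expect the main obstacle to be the independence bookkeeping in the first step: one must check that after the $\varphi$-pieces telescope, what remains is genuinely a sum over $d+1$ \emph{distinct, independent} fields, since this is precisely what produces the extra factor of $d$ in the variance proxy (whereas $\Xi_0$ together with the $\chi^{(i)}$-part of a single $\Lambda_i$ live on the same fields and are already packaged into \lem{highD}). The secondary technical point is re-running the momentum-space quadratic-form estimate for the general increment $\Xi_0$ whose endpoints need not differ in only one coordinate, and keeping the constant $c$ in the $d=O(\log N)$ regime small enough ($c\le 2$) that the stated $2^d$ is valid.
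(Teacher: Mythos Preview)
Your proposal is correct and follows essentially the same route as the paper: decompose $\sum_i\Lambda_i$ into the telescoped $\varphi$-increment (the paper's $\bm\delta$ term) plus the $d$ independent $\chi^{(i)}$-increments (the paper's $\bm\varepsilon_i$ terms), bound each resulting quadratic form $\bm v^T\bm K^{-1}\bm v$ via the momentum-space estimate of \lem{highD}, and conclude with a Chernoff bound using variance proxy $\sim(d{+}1)B$. The only cosmetic difference is that the paper phrases this as ``the $\Lambda_i$ are not independent, so compute the joint MGF directly,'' whereas you first regroup into the $d{+}1$ independent pieces $\Xi_0,\Xi_1,\ldots,\Xi_d$ and then factorize; both organizations lead to the identical exponent $\tfrac{g^2\lambda^2 t^2}{2}\big(\bm\delta^T\bm K^{-1}\bm\delta+\sum_i\bm\varepsilon_i^T\bm K^{-1}\bm\varepsilon_i\big)$, and your flagged ``secondary technical point'' about $\Xi_0$ having multi-coordinate endpoints is handled (in both treatments) by the same crude numerator bound used in \lem{highD}.
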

\begin{proof}
Using arguments similar to those of the previous Lemma we can see that each $\Lambda_i$ is sub-Gaussian.  However, they are not independent, since $\Lambda_i$ depends on $\varphi$ and $\chi_i$.  Thus, we need to directly calculate the moment-generating function of their sum.  First we relate the BsGFF distribution to a BGFF distribution in the usual way.

\begin{equation}
\mathbb{E}_{\text{BsGFF}}\left[\exp\left(-t \sum_{i=1}^d \Lambda_i\right)\right] \leq
\mathbb{E}_{\text{BGFF}}\left[\exp\left(-\lambda t \sum_{i=1}^d \Lambda_i\right)\right]
\end{equation}
We can convert the $\Lambda_i$'s in terms of height fields and evaluate the expectation value under a biased Gaussian free field, yielding the expression
\begin{equation}
    \mathbb{E}\left[\exp\left(-\lambda t \sum_{i=1}^d \Lambda_i\right)\right] = \exp\left(\frac{g^2 \lambda^2 t^2}{2} \bm{\delta}^T \bm{K}^{-1} \bm{\delta} + \frac{g^2 \lambda^2 t^2}{2} \sum_{i=1}^d \bm{\varepsilon}_i^T \bm{K}^{-1} \bm{\varepsilon}_i\right)
\end{equation}
where $\bm{\delta} = \sum_i \bm{\varepsilon}_i$ and $\bm{\varepsilon}_i$ are generalizations of $\bm{\varepsilon}_1$ and $\bm{\varepsilon}_2$  defined as in the 2D section.  Along the same lines as the previous lemma, we can bound the RHS of this expression by $e^{d K t^2}$, for some $K=O(1)$.  Applying a Chernoff bound choosing $t = \frac{a}{2 K d}$, we have that
\begin{equation}
    \mathbb{P}\left( \sum_{i=1}^d \Lambda_i \geq a \right) \leq e^{-\frac{a^2}{4 K d}},
\end{equation}
and in particular, choosing $a = \sqrt{c \, d \log N}$ for constant $c > 0$ yields the RHS decaying like $N^{-p}$ for some $p > 0$.

When $d \leq \log N$, a similar result holds, where since each of the quadratic forms $\bm{\varepsilon}_i^T \bm{K}^{-1} \bm{\varepsilon}_i$ are upper bounded by $2^d$,
\begin{align}
    \mathbb{P}\left( \sum_{i=1}^d \Lambda_i \geq a \right) \leq e^{-at} e^{d \, 2^d K t^2},
\end{align}
and therefore we select $a = \sqrt{c\, d \, 2^d \log N}$.
\end{proof}

Next, we may use this result to formulate the following lemma:
\begin{lemma}
    With probability $1-N^{-p}$ and when $\log N < d \leq N$, the product of amplitudes of being at $(0,0,\ldots, 0)$ and $(N, N, \ldots, N)$ on a $d$-dimensional Lieb lattice satisfies $|\psi_{N,N,\ldots,N}||\psi_{0,0,\ldots,0}| \geq K\cdot N^{-d}\cdot e^{-3\sqrt{d^2 + c\,d\log N}}$ for constants $c$, $p$, and $K$.

    With probability $1-N^{-p}$ and when $d \leq \log N$, the product of amplitudes of being at $(0,0,\ldots, 0)$ and $(N, N, \ldots, N)$ on a $d$-dimensional Lieb lattice satisfies $|\psi_{N,N,\ldots,N}||\psi_{0,0,\ldots,0}| \geq K\cdot N^{-d}\cdot e^{-3\sqrt{d^2 + c\,d\, 2^d\log N}}$ for constants $c$, $p$, and $K$.
\end{lemma}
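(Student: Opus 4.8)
The argument parallels \lem{polydecay} (the $d=2$ case), only now the $d$\nobreakdash-dependence must be tracked. I would assemble three inputs: (i) the exact zero-mode expression derived above, which for an integer-coordinate site $\vec m=(m_1,\ldots,m_d)$ reads $\psi^2_{\vec m}=\frac{\psi^2_{0,\ldots,0}}{D\,s_{0,0}}\sum_{k=1}^d(1+r^{(k)}_{\vec m})\,P^{(k)}_{\vec m}\,e^{\Phi_{\vec m}}$, where $P^{(k)}_{\vec m}$ is a near-telescoping product of at most $d$ edge-size ratios $e^{(i)}_{\vec r}/e^{(j)}_{\vec r}$ and $\Phi_{\vec m}$ coincides (up to sign) with $\sum_{i=1}^d\Lambda_i$ when the path from $(0,\ldots,0)$ to $\vec m$ is chosen axis by axis; (ii) the preceding lemma, which gives with probability $\ge 1-N^{-p}$ that $|\Phi_{(N,\ldots,N)}|\le c\sqrt{d\log N}$ for $\log N<d\le N$ and $|\Phi_{(N,\ldots,N)}|\le c\sqrt{d\,2^d\log N}$ for $d\le\log N$ — and by symmetry of the distribution of $\Phi_{\vec m}$ (the bias-dependent terms linear in the Chernoff parameter cancel between the $\varphi$ and $\chi^{(i)}$ contributions) both tails are controlled; and (iii) the normalization $\sum_{\vec m}\psi^2_{\vec m}=1$.

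First I would note that, since the graph is $D=O(d)$-regular and balanced, every ratio $e^{(i)}_{\vec r}/e^{(j)}_{\vec r}$ lies between two positive universal constants, so the prefactor $\theta_{\vec m}:=\tfrac{1}{D\,s_{0,0}}\sum_k(1+r^{(k)}_{\vec m})P^{(k)}_{\vec m}$ obeys $e^{-c_0 d}\le\theta_{\vec m}\le e^{c_0 d}$ for a universal $c_0$ (recall $s_{0,0}=s_{\mathrm{init}}=1$). Hence $\psi^2_{\vec m}/\psi^2_{0,\ldots,0}=\theta_{\vec m}e^{\Phi_{\vec m}}$, so $\psi^2_{0,\ldots,0}=\big(\sum_{\vec m}\theta_{\vec m}e^{\Phi_{\vec m}}\big)^{-1}\ge e^{-c_0 d}\big(\sum_{\vec m}e^{\Phi_{\vec m}}\big)^{-1}$ and $|\psi_{(N,\ldots,N)}|/|\psi_{0,\ldots,0}|\ge e^{-c_0 d/2}e^{-|\Phi_{(N,\ldots,N)}|/2}$. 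Multiplying, $|\psi_{(N,\ldots,N)}||\psi_{0,\ldots,0}|=\psi^2_{0,\ldots,0}\cdot\tfrac{|\psi_{(N,\ldots,N)}|}{|\psi_{0,\ldots,0}|}\ge e^{-3c_0 d/2}e^{-|\Phi_{(N,\ldots,N)}|/2}\big(\sum_{\vec m}e^{\Phi_{\vec m}}\big)^{-1}$, and using (ii) for $|\Phi_{(N,\ldots,N)}|$ the whole task reduces to proving $\sum_{\vec m}e^{\Phi_{\vec m}}\le N^{d}\,e^{O(d)+O(\sqrt{d\log N})}$ (resp.\ with $2^d$) with probability $\ge1-N^{-p}$. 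By the moment-generating-function computation in \lem{highD}, each $\Phi_{\vec m}$ is sub-Gaussian with variance proxy $O(d)$ for $\log N<d\le N$ (there $\bm\varepsilon_i^T\bm K^{-1}\bm\varepsilon_i$ and $\bm\delta^T\bm K^{-1}\bm\delta$ are all $O(1)$) and $O(d\,2^d)$ for $d\le\log N$. For $\log N<d\le N$ a plain Markov bound on $\mathbb E\big[\sum_{\vec m}e^{\Phi_{\vec m}}\big]\le N^d e^{O(d)}$ gives $\sum_{\vec m}e^{\Phi_{\vec m}}\le N^{d+1}e^{O(d)}$ with probability $\ge 1-N^{-1}$, and the extra factor $N$ is harmless since $N\le e^{d}$ in that range. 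For $d\le\log N$ one instead uses a $k$th-moment bound with $k\asymp\sqrt{p\log N/d}$: since $\sum_{j=1}^k\Phi_{\vec m_j}$ has variance proxy $O(k^2 d\,2^d)$, one obtains $\mathbb E\big[(\sum_{\vec m}e^{\Phi_{\vec m}})^k\big]\le N^{dk}e^{O(k^2 d\,2^d)}$ and Markov at level $e^{O(\sqrt{d\,2^d\log N})}$ then gives the claimed bound. Finally one absorbs $O(d)+O(\sqrt{d\log N})\le 3\sqrt{d^2+c\,d\log N}$ — valid for a suitable $c$ because $\sqrt{d^2+c\,d\log N}$ is within a constant factor of $d+\sqrt{d\log N}$ — and likewise with $2^d$, yielding both statements.

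The hard part is exactly step (iii): getting a \emph{high-probability} (not merely constant-probability) upper bound on the normalization sum $\sum_{\vec m}e^{\Phi_{\vec m}}$ whose exponential part stays at $e^{O(d)+O(\sqrt{d\log N})}$ rather than degrading. A naive union bound of the single-site estimate of (ii) over all $\sim N^d$ integer-coordinate sites forces the deviation parameter up to $\Theta(\sqrt{d^2\log N})$ (resp.\ $\Theta(\sqrt{d^2 2^d\log N})$), a factor $\sqrt d$ worse than needed; the resolution is to exploit that the $\Phi_{\vec m}$ are all partial sums of the same $d+1$ height fields — hence strongly correlated — so that the sum behaves like $N^d$ times a single sub-Gaussian variable of variance proxy $O(d)$, which the higher-moment argument captures. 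A routine but necessary secondary point is checking the near-telescoping structure of $P^{(k)}_{\vec m}$ and correctly folding the half-integer $\chi^{(i)}$ height fields into $\Phi_{\vec m}$, exactly as in the two-dimensional case.
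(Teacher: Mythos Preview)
Your approach is correct, and it is genuinely different from the paper's.  The paper's proof is exactly the ``naive'' route you describe and then set aside: it asserts that the previous concentration lemma gives, for each fixed site $\vec m$, a two-sided bound $K^{-1}e^{-\sqrt{d^2+cd\log N}}\le |\psi_{\vec m}|/|\psi_{\vec 0}|\le K e^{\sqrt{d^2+cd\log N}}$ with probability $1-N^{-(d+p)}$, then union-bounds over all $N^d$ integer-coordinate sites and reads off $|\psi_{\vec 0}|^2 \ge (K^2 N^d e^{2\sqrt{d^2+cd\log N}})^{-1}$ from normalization.  You instead control the normalization sum $\sum_{\vec m}e^{\Phi_{\vec m}}$ directly via its moments.

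The comparison is instructive.  As you note, with variance proxy $O(d)$ the per-site failure $\le N^{-(d+p)}$ forces the deviation up to $a=\Theta(\sqrt{d(d+p)\log N})=\Theta(d\sqrt{\log N})$, so the union-bound route literally yields an exponent $\Theta(d\sqrt{\log N})$ rather than the claimed $\Theta(\sqrt{d^2+cd\log N})=\Theta(d)$ when $\log N<d\le N$ (and analogously an extra $\sqrt d$ when $d\le\log N$).  Your first-moment/Markov argument for $\log N<d\le N$ sidesteps this: each $\mathbb E[e^{\Phi_{\vec m}}]\le e^{O(d)}$, so $\sum_{\vec m}e^{\Phi_{\vec m}}\le N^{d+p}e^{O(d)}$ with probability $1-N^{-p}$, and the excess $N^{p}\le e^{pd}$ is absorbed.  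For $d\le\log N$ your $k$th-moment argument with $k\asymp\sqrt{p\log N/(d\,2^d)}$ recovers the stated $\sqrt{d\,2^d\log N}$.  So your route is not merely an alternative but actually closes a gap of $\sqrt{\log N}$ (resp.\ $\sqrt d$) in the exponent that the paper's short proof appears to leave; the overall quantum--classical separation is unaffected either way.

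One minor point to tighten: your bound $\mathbb E\big[(\sum_{\vec m}e^{\Phi_{\vec m}})^k\big]\le N^{dk}e^{O(k^2 d\,2^d)}$ requires the sub-Gaussian MGF bound for \emph{sums} of height-field differences $\sum_j(\varphi_{\vec m_j}-\varphi_{\vec 0})$, not just single differences.  Under the parent BGFF this follows from Cauchy--Schwarz on covariances, $\mathrm{Var}(\sum_j X_j)\le k^2\max_j\mathrm{Var}(X_j)$; under the BsGFF of \defn{BsGFF} as literally stated (which controls only single differences) you should either invoke the parent BGFF directly or note that the needed extension is routine.
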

\begin{proof}
Identical to \lem{polydecay} in the 2D Lieb lattice case.  When $\\log N < d \leq N$, we note that, based on the previous lemma, there exists constants $p,K,c$ such that with probability $1 - \frac{1}{N^{d + p}}$,
\begin{equation}
    K e^{\sqrt{d^2 + c\, d\log N}} |\psi_{(0,0,\ldots,0)}| \geq |\psi_{(m_1,m_2,\ldots,m_d)}| \geq \frac{K}{e^{\sqrt{d^2 + c\, d\log N}}} |\psi_{(0,0,\ldots,0)}|.
\end{equation}
Then, using a union bound, the probability that this is true for all $N^d$ choices $(m_1,m_2,\ldots,m_d)$ is $1 - \frac{N^d}{N^{d + p}} = 1-\frac{1}{N^{p}}$.  The rest follows from using the normalization condition $\sum_{m_1, m_2, \ldots} |\psi_{(m_1,m_2,\ldots,m_d)}|^2 = 1$ to conclude
\begin{equation}
    |\psi_{(0,0,\ldots,0)}|^2 \geq \frac{1}{K^2\,N^d e^{2 \sqrt{d^2 + c\, d\log N}}}.
\end{equation}
Therefore, we arrive at the desired thesis.  The case when $d \leq \log N$ is analogous.
\end{proof}

Next, we must prove an estimate on the spectral gap.  There are $(d-1)N^d - d N^{d-1} + 2$ zero modes in total, and removing them results in $2N^d - 2$ remaining nodes.  As before, we may eliminate nodes until the resulting graph is a 1D snake graph wrapped up in a $d$-dimensional cube.  Call this new graph $G'$ and the $d$-dimensional Lieb lattice $G$.  $G'$ has an even number of nodes, and therefore hosts no zero modes, which allows us to make the following statement:
\begin{lemma}
The minimum eigenvalue in absolute value of the Hamiltonian constructed on $G'$ is less than the spectral gap of $G$.
\end{lemma}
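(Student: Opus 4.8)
The plan is to mimic the 2D argument (the lemma bounding $|\nu|\le\Delta$ for the 2D snake graph) verbatim, replacing the count $(N-1)^2+1$ by $m := (d-1)N^d - dN^{d-1}+2$ and invoking the eigenvalue interlacing theorem. Let $H$ be the graph Hamiltonian of the $d$-dimensional Lieb lattice $G$, an $M\times M$ symmetric matrix with $M = (d+1)N^d - dN^{d-1}$ (the $N^d$ sites of $\Delta_0$ together with the $dN^{d-1}(N-1)$ sites of $\Delta_1$). By the lemma counting zero modes of the $d$-dimensional Lieb lattice, $H$ has \emph{exactly} $m$ zero eigenvalues. The graph $G'$ is obtained by deleting from $G$ the $m$ nodes on the designated edges (the higher-dimensional analogue of the blue edges in \fig{nodeelimination}, plus one endpoint), so that its Hamiltonian $H'$ is the $M'\times M'$ symmetric submatrix of $H$ obtained by deleting those $m$ rows and the corresponding columns, with $M' = M - m = 2N^d - 2$.

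First I would record that $H'$ has no zero eigenvalue. Indeed $G'$ is a weighted path on the even number $M' = 2N^d - 2$ of vertices, so $H'$ is a zero-diagonal tridiagonal matrix, and the determinant recurrence $h_i = -t_{i-1}^2 h_{i-2}$ with $h_0 = 1$, $h_1 = 0$ used in \lem{zero-vector} now gives $\det H' = h_{M'} = (-1)^{N^d-1}\prod_j t_{2j-1}^2 \neq 0$ since every hopping is nonzero. Then I would apply the eigenvalue interlacing theorem to the pair $(H, H')$. Writing the eigenvalues of $H$ in increasing order $\lambda_1 \le \cdots \le \lambda_M$ and letting $p$ be the number of strictly negative ones, the exact count of zero eigenvalues gives $\lambda_p < 0 = \lambda_{p+1} = \cdots = \lambda_{p+m} < \lambda_{p+m+1}$, and by definition the gap is $\Delta = \min(|\lambda_p|, \lambda_{p+m+1})$. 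Interlacing (with $M - M' = m$ deleted rows and columns) applied at $k = p$ and $k = p+1$ yields $\lambda_p \le \mu_p \le \lambda_{p+m} = 0$ and $0 = \lambda_{p+1}\le \mu_{p+1}\le \lambda_{p+m+1}$ for the eigenvalues $\mu_1\le\cdots\le\mu_{M'}$ of $H'$. Since $H'$ has no zero eigenvalue, $\mu_p < 0$ and $\mu_{p+1} > 0$, hence $|\mu_p| \le |\lambda_p|$ and $\mu_{p+1}\le \lambda_{p+m+1}$, so the smallest eigenvalue of $H'$ in absolute value, call it $\nu$, satisfies $|\nu| \le \min(|\mu_p|, \mu_{p+1}) \le \min(|\lambda_p|, \lambda_{p+m+1}) = \Delta$, which is the claim.

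The only delicate points are bookkeeping rather than mathematics: one must check that the deletion procedure removes exactly $m$ nodes and leaves a single connected snake of even length (so that the tridiagonal-determinant argument applies and $G'$ has no zero mode), and one must use the \emph{exact} zero-mode count of $G$ — not merely a lower bound — so that the interlacing indices $p+m$ and $p+m+1$ land precisely on the last zero eigenvalue and the first positive eigenvalue of $H$. Given the preceding lemmas this is routine, and I do not expect a genuine obstacle; the proof is a direct lift of the 2D case.
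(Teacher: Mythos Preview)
Your proposal is correct and follows essentially the same approach as the paper: the paper states this lemma without a separate proof, relying on the 2D argument (eigenvalue interlacing after deleting as many nodes as there are zero modes, plus the observation that the even-length snake has no zero eigenvalue), which is exactly what you reproduce in $d$ dimensions. If anything, your version is more careful than the paper's 2D sketch, since you explicitly verify $\det H'\neq 0$ via the tridiagonal recurrence and track the interlacing indices precisely rather than arguing informally.
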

As the Hamiltonian for $G'$ is exactly invertible, we may solve for the minimum eigenvalue by lower bounding it by $1/\norm{H^{-1}}_{1,\infty}$, and utilizing the results of the previous Lemma.  Following the logic in the previous subsection, in particular the proof of \lem{2D-gap}, the gap scales in the same way as $|\psi_{N,N,\ldots,N}||\psi_{0,0,\ldots,0}|$ (after performing a tedious but straightforward computation) and therefore, we arrive at the following theorem:
\begin{theorem}\label{thm:highD-main}
On a $d$-dimensional Lieb lattice, the time it takes starting at the entrance node $(0,0,\ldots,0)$ to arrive at the exit node $(N,N,\ldots,N)$ under a continuous time quantum walk is (with high probability):
\begin{itemize}
    \item $O(N^{3d} \exp(3d))$ for $\log N < d \leq N$
    \item $O\left(N^{3d} \exp(3\sqrt{d\,2^d \log N})\right)$ for $2 < d \leq \log N$
    \item $O(\textnormal{poly}(N))$ for $d = 2$
    \item $O\left(\exp(\sqrt{N})\right)$ for $d = 1$
\end{itemize}
\end{theorem}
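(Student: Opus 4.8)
\emph{Proof plan.} The plan is to feed three previously established ingredients into one another: the hitting‑time reduction of \lem{evolution-high} (and its straightforward $d$‑dimensional extension; \lem{evolution} when $d=1$), which says that evolving $\ket{S_{\mathrm{init}}}$ for a uniformly random time in $[0,\tau]$ with $\tau=4/(\Delta p_0)$ reaches $S_{\mathrm{exit}}$ with probability $\geq p_0^2/4$, where $p_0\triangleq|\braket{S_{\mathrm{exit}}}{0}\braket{0}{S_{\mathrm{init}}}|$ is the overlap of the $\Delta_0$‑supported zero mode with the entrance and exit and $\Delta$ is the spectral gap; a high‑probability lower bound on $p_0$; and a high‑probability lower bound on $\Delta$. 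Amplifying the single‑shot success probability to a constant costs $O(p_0^{-2})$ independent repetitions, so the total number of oracle queries (hence the runtime up to the usual simulation overhead) is $O(\Delta^{-1}p_0^{-3})$, and everything reduces to substituting the case‑dependent bounds on $p_0$ and $\Delta$.

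For $d\geq 2$ I would obtain the gap bound exactly as in the 2D argument: delete the $(d-1)N^d-dN^{d-1}+1$ zero modes supported on $\Delta_1$ together with one extra site, reducing the Lieb lattice to a ``snake graph'' $G'$ with an even number of vertices (hence no zero mode); the eigenvalue interlacing theorem gives $\Delta\geq|\nu|$ for the least‑magnitude eigenvalue $\nu$ of $G'$; and since $G'$ is a zero‑diagonal tridiagonal matrix it can be inverted explicitly, so $|\nu|\geq 1/(\sqrt{2|V(G')|}\,\norm{(H')^{-1}}_{1,\infty})$. The column sums of $(H')^{-1}$ are products of hopping ratios along directed snake paths, which equal ratios $\psi_{\vec u}/\psi_{\vec v}$ of zero‑mode amplitudes, times single factors $t_{\vec p,\vec q}^{-1}=\sqrt{s_{\vec p}s_{\vec q}}/e_{\vec p\vec q}$ that the bounded‑degree constraint forces to be $O(1)$. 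Thus \lem{highD} and the lemma following it — which show, with probability $1-N^{-p}$, that every such amplitude ratio is $\exp\!\big(O(\sqrt{d^2+c\,d\log N})\big)$ when $\log N<d\leq N$ and $\exp\!\big(O(\sqrt{d^2+c\,d\,2^d\log N})\big)$ when $2<d\leq\log N$ — give $\Delta^{-1}$ a bound of the same exponential form as $p_0^{-1}$, up to $\poly(N^d)$ factors. For $d=1$ the gap bound comes instead from \lem{spectral-gap-lower} (via the Kotowski--Vir{\'a}g theorem) or from the exact‑inversion bound \lem{easyboundgap}, either giving $\Delta\geq\exp(-\Theta(\sqrt{n}))$.

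Assembling the pieces: for $d=1$, \lem{prob-Doob} gives $p_0\geq (C/n)\exp(-\tfrac32\sqrt{Kn})$, so $O(\Delta^{-1}p_0^{-3})=\exp(O(\sqrt{n}))=\exp(O(\sqrt{N}))$; for $d=2$, \lem{polydecay} and \lem{2D-gap} give $p_0,\Delta=\Omega(N^{-O(1)})$, so the runtime is $\poly(N)$; for $\log N<d\leq N$ the amplitude lemma gives $p_0\geq KN^{-d}\exp(-3\sqrt{d^2+c\,d\log N})$ and, using $d\log N<d^2$ so $\sqrt{d^2+c\,d\log N}=O(d)$, substituting into $O(\Delta^{-1}p_0^{-3})$ and folding the $\poly(N)$ gap prefactor and the $e^{O(d)}$ constants into the exponent yields $O(N^{3d}\exp(3d))$; and for $2<d\leq\log N$ the same computation with $\sqrt{d^2+c\,d\,2^d\log N}=O(\sqrt{d\,2^d\log N})$ (the $d^2$ term now dominated) yields $O(N^{3d}\exp(3\sqrt{d\,2^d\log N}))$.

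The main obstacle will be the higher‑dimensional gap estimate rather than the evolution lemma or the overlap bound: the Dyson‑singularity route is unavailable once the line is wound into a snake inside a $d$‑cube (sites far apart along the snake are near neighbors in the lattice, violating the independence needed for Kotowski--Vir{\'a}g), so one is forced through the explicit inversion and hence through a uniform high‑probability bound on $\bm\varepsilon^T\bm K^{-1}\bm\varepsilon$ for the discrete $d$‑dimensional Laplacian $\bm K$. Controlling this quadratic form requires the delicate combination of AM--GM bounds and $\sin^{2/q}$ expansions in \lem{highD}, which works cleanly only in the two regimes $d\leq\log N$ and $\log N<d\leq N$; the borderline $d=\Theta(\log N)$, where that expansion is neither uniformly valid nor cheaply bounded, is exactly the case the theorem omits.
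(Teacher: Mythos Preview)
Your proposal is correct and matches the paper's approach essentially line for line: the paper assembles the theorem from the evolution lemma, the amplitude bound on $|\psi_{N,\ldots,N}||\psi_{0,\ldots,0}|$ from the preceding lemma, and a gap bound obtained by deleting vertices to form an even-length snake graph, applying eigenvalue interlacing, and explicitly inverting the resulting zero-diagonal tridiagonal matrix, noting (as you do) that the gap ``scales in the same way as $|\psi_{N,N,\ldots,N}||\psi_{0,0,\ldots,0}|$'' so that substitution yields the stated bounds. Your observations that the Kotowski--Vir\'ag route is unavailable in higher $d$ because of snake-induced correlations, and that the borderline $d=\Theta(\log N)$ is left unresolved, are both stated explicitly in the paper as well.
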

For completeness, we also include the $d=1$ and $d=2$ results, both of which need to be treated separately (in $d=1$, our proof in \sec{random-graph} does not restrict to Gaussian free fields; in $d=2$, the proof method is different because correlation functions in \sec{2D} grow logarithmically).

To compare the performance of this with a classical algorithm, we will need to formally compute the lower bound for this hitting time problem, which is discussed in \cor{GFF-LB} in the next section.  However, we can heuristically identify when we expect the quantum algorithm to dominate.  We expect the classical algorithm in the best case to solve this problem in a number of queries that scales polynomially with the total number of nodes in the graph.  To estimate the number of nodes in the graph, we note that the shortest (Manhattan) distance between two randomly chosen points on the $d$-dimensional Lieb lattice is $O(N d)$.  Therefore, we expect that there exist a large number of supernodes that host $\exp(c N d)$ nodes, and that the total number of nodes is roughly at least $(N/k)^{d}\cdot \exp(c N d)$.  We now compare this to the quantum algorithm, whose performance is upper bounded by $N^{3d}$ for large dimension.  Therefore, for $\log N < d \leq N$ the quantum algorithm outperforms the best possible classical algorithm.  For small dimension $d \leq \log N$ the speedup is less pronounced but it still occurs.  \emph{Both of these speedups are exponential with respect to $N$}.  We believe our bounds can be significantly strengthened to deal with the case when $d = \Omega(N)$.

\subsection{General distributions and metric}

The results derived above depend particularly on the scaling of correlation functions in the Gaussian free field.  One may wonder whether there is a general prescription which is independent of the distribution we place on the height fields, apart from an independence assumption between the $\varphi$ and $\chi^{(i)}$ fields.  Recall that the crucial quantity whose fluctuations we had to bound was
\begin{align}
    \exp(\Phi_{m_1,\cdots,m_d}) =
     \frac{\prod_{i=0}^{m_1-1}  r^{(1)}_{i+1/2,0,\ldots,0}}{\prod_{i=0}^{m_1-1}  r^{(1)}_{i,0,\ldots,0}}\frac{\prod_{j=0}^{m_2-1}  r^{(2)}_{m_1, j+1/2,\ldots,0}}{\prod_{j=0}^{m_2-1}  r^{(2)}_{m_1,j,\ldots,0}}\ldots \frac{\prod_{q=0}^{m_d-1}  r^{(d)}_{m_1, m_2,\ldots,q+1/2}}{\prod_{j=0}^{m_d-1}  r^{(d)}_{m_1,m_2,\ldots,q}}.
\end{align}
Note that this quantity is the equivalence of the bottleneck metric discussed in the 1D case.  Written in terms of the height fields, this is given by
\begin{equation}
\exp(\Phi_{m_1,\cdots,m_d}) = \exp\left(\varphi_{m_1, \cdots, m_d} - \varphi_{0, \cdots, 0} + \sum_{i=1}^d (\chi^{(i)}_{m_1, \cdots, m_i, 0,\cdots} - \chi^{(i)}_{m_1, \cdots, 0, 0,\cdots})\right).
\end{equation}
Suppose the value of this quantity is upper and lower bounded by some function $f(d, N)$ with high probability.  Then, the quantum runtime is given by $O(N^{O(d)} \text{poly}(f(d,N)))$.  Therefore, if $f(d,N)$ grows subexponentially in $Nd$, then the quantum runtime is expected to be superpolynomially faster than the classical runtime.

Identifying this metric as quantifying the quantum runtime is important when the height field distribution does not obey a Gaussian free field.  It would be interesting to study the quantum runtime in these cases.

\subsection{A cochain complex for the zero mode}\label{sec:cohomology}

The reader may notice that the use of the gauge condition and the structure of the zero mode may be more succinctly stated in terms of a cochain complex.  In this subsection, we very briefly flesh out this intuition.

Call $X$ a manifold in $d$ dimensions and pick some cellulation for it.  In particular, define $\Delta_k$ to be the set of $k$-cells of $X$ (for example, a 0-cell is the boundary of a line, a 1-cell is the boundary of an area, a 2-cell is the boundary of a volume, etc.). Define the complex vector space of $k$-chains by defining a basis state for each $k$-cell:
\begin{equation}
\Omega_k(\Delta) = \text{span}_{\mathbb{C}}\{\ket{u} \, | \, u \in \Delta_k\}.
\end{equation}
From this, one may define the cochain complex
\begin{equation}
   \Omega_0 \xrightarrow{d_0} \Omega_1 \xrightarrow{d_1} \Omega_2 \xrightarrow{d_2} \cdots
\end{equation}
where $d_0, d_1, d_2, \ldots$ are maps between the indicated vector spaces and the above constitutes a cochain complex if $d_{i+1} d_{i} = 0$ for all $i$.
For example, for any edge $(u,v)$, $\ket{uv}\in \Omega_1$, $\ket{u},\ket{v}\in\Omega_0$, and
\be
d_0^T \ket{uv} = \ket{u} - \ket{v}.
\ee

The cohomology of this cochain complex is
\begin{equation}
    H^i(\Delta, \mathbb{C}) = \frac{\text{ker}(d_i: \Omega_i \to \Omega_{i+1})}{\text{im}(d_{i-1}: \Omega_{i-1} \to \Omega_{i})}.
\end{equation}
Let us now discuss how the construction of the zero modes relates to a cochain complex.  We consider a Lieb lattice in two spatial dimensions for simplicity.  This means we can define vector spaces on $0$-cells (sites), $1$-cells (links), and $2$-cells (plaquettes), where the cells are defined with respect to a square lattice that lacks additional edge decorations.  First define a basis for $\Omega_0$ to be vectors $\ket{u}$ labelling sites. Also define a basis for $\Omega_1$ to be vectors $\ket{uv}$ labelling links and a basis for $\Omega_2$ to be vectors $\ket p$ labelling $2$-cells.
From any $0$-chain $\ket{C_0} \in \Omega_0$, we can define an (unnormalized) wavefunction
\be
\ket\psi = \sum_{u \in \Delta_0} \exp(\braket{u}{C_0}) \ket{u}.
\ee
Similarly, starting from a  $1$-chain $\ket{C_1} \in \Omega_1$, we can choose a subset of hopping-hopping ratios in the Lieb lattice to be
\be \widetilde{r}_{uv} := \frac{t_{uv,v}}{t_{u,uv}} = \exp(\braket{uv}{C_1}).\ee

Note that we need to specify a direction for the links to define the hopping-hopping ratio, and we choose the horizontal links to point to the right and the vertical links to point upward.

It appears rather unconventional that our choice of vector spaces is associated to the amplitude of the zero modes and the values of the hopping, but we will see that this naturally provides a cochain complex.  For $\ket\psi\in \Omega_0$, the zero-mode condition $H\ket{\psi} = 0$ can be equivalently expressed as
\begin{equation}
    \log \psi_u - \log \psi_v = \log \widetilde{r}_{uv} \label{eq:psi-r-tilde}
\end{equation}
for  adjacent $u,v$.
Defining $\ket{C_0} \triangleq \sum_u \log(\psi_u)\ket{u}$ and $\ket{C_1}\triangleq \sum_v \log \widetilde{r}_{uv} \ket{uv}$ lets us
write the zero-mode condition Eqn.~\eq{psi-r-tilde} as
\be
d_0 \ket{C_0} = \ket{C_1}.\ee

Second, consider a 2-cell $p=uvwx$ with vertices $u,v,w,x$ as follows.
\be
  \begin{tikzpicture}[baseline={([yshift=1.5ex]current bounding box.center)},vertex/.style={anchor=base,circle,fill=black!25,minimum size=18pt,inner sep=2pt}]
   \draw (0,0) -- (1.6,0) node[below right] {$v$} -- (1.6,1.6) node[above right] {$w$}-- (0,1.6) node[above left] {$x$}-- (0,0) node[below left] {$u$};
    \draw[->] (0,0)--(0.8,0);
    \draw[->] (1.6,0)--(1.6,0.8);
    \draw[->] (0,1.6)--(0.8,1.6);
    \draw[->] (0,0)--(0,0.8);
    \draw (0.8,0.8) node {$p$};
  \end{tikzpicture}
\ee
The hopping ratios satisfy  the gauge condition
\begin{equation}
    \log \widetilde{r}_{uv} + \log \widetilde{r}_{vw} - \log \widetilde{r}_{xw} - \log \widetilde{r}_{ux} = 0.
\label{eq:gauge-cond}  \end{equation}

To express this gauge condition in terms of the co-boundary maps, we will define the vector space $\Omega_2$ spanned by $\ket{p}$ or alternatively $\ket{uvwx}$.  Similarly, we then may define the map $d_1$ by
\be
d_1^T \ket{uvwx} = \ket{uv} + \ket{vw} - \ket{xw} - \ket{ux}
\ee
We could equivalently write $d_1^T \ket{uvwx} = \ket{uv} + \ket{vw} + \ket{wx} + \ket{xu}$.
Using this, the gauge condition Eqn.~\eq{gauge-cond} becomes
  \ba
  d_1 \ket{C_1} = 0, \hspace{0.5cm}
  \ket{C_1} = \sum_{uv\in \Delta_1}  \log(\widetilde{r}_{uv}) \ket{uv} .
  \ea
Moreover, one can check that $d_1 d_0 = 0$ and therefore $\Omega_0 \xrightarrow{d_0} \Omega_1 \xrightarrow{d_1} \Omega_2$ forms a cochain complex.

We conclude with a short discussion of the relationship between this cochain complex and the zero mode.  In general, any vector in $\ker(d_1)$ satisfies the gauge condition.  Any zero mode corresponds a vector of (log) hopping ratios in $\text{im}(d_0)$.  Therefore, the cohomology $H^1(\Delta, \mathbb{R}) = \ker(d_1)/\text{im}(d_0)$ being trivial implies the existence of a valid zero mode.  If there are nontrivial cohomology classes, then there are non-local 2-cells where the gauge condition has not been satisfied (an example is a cellulation of an annulus, where the non-local 2-cell is a sequence of edges circling the annulus).

\section{Lower bound for classical algorithms}\label{sec:LB}

To prove that the classical algorithm cannot efficiently traverse such random graph ensembles, we proceed by performing an analysis similar to that used in the original welded tree paper~\cite{CCDFGS03}.  The main argument in \cite{CCDFGS03} relied on the fact that (1) a classical random walk would spend most of its time in the large supervertices in the middle of the graph; and (2) an arbitrary classical algorithm would be unlikely to find any cycles, so the subgraph of explored vertices would always be a tree.  This latter point meant that the algorithm would lack any structure that it could exploit to outperform a simple random walk.  Similar arguments were used in the oracle speedups for the adiabatic model~\cite{GSV21}.

In our model we will follow a similar outline but have to deal with the possibility that the classical algorithm could discover some cycles in the small supervertices.  It would seem natural to divide the supervertices into ``small'' ones of size $\ll Q$ and ``large'' ones of size $\gg Q$, where $Q$ is the number of queries. Think of $Q$ as somewhere between $\poly(n)$ and $\exp(\eps n)$ for a small $\eps>0$.  However, we cannot simply give the classical algorithm full knowledge of the small supervertices and hope that it remains completely ignorant of the structure of the large supervertices, because it is possible to find short cycles that start from small supervertices and yield information about nearby large supervertices.  So we will need to also classify supervertices as ``easy'' or ``hard'' based on how easily they can be reached with these cycles. We formalize these ideas as follows.

\begin{definition}[Supergraphs with small and large supervertices]
Let $(V,E)$ define a supergraph where each $v\in V$ corresponds to a set $S_v$
of vertices with $|S_v|=s_v$, and each edge $(v,w)\in E$ corresponds
to $e_{v,w}$ edges between $S_v$ and $S_w$.  Fix parameters $Q$ and
$\delta$, with $Q$ corresponding to the number of queries made by the
classical algorithm, and $O(\delta)$ the desired bound on its success probability.
A supervertex $v$ is {\em small} if $s_v\leq Q/\delta$ or
{\em large} if $s_v>Q/\delta$.  A small supervertex is a {\em boundary} supervertex if it is connected to at least one large supervertex.
\end{definition}

To define easy/hard vertices, we need to formalize a graph exploration
process that captures the ability of classical algorithms to use their
knowledge of small supervertices to learn about large supervertices.

\begin{definition}[Graph exploration walks]\label{def:explore}
  A {\em graph exploration walk} is a random non-backtracking walk
  that starts with some $x\in S_v$ for a supervertex $v$ and ends when
  one of the following conditions is met:
  \bit
\item The walk reaches a small supervertex.  This includes the
  possibility of reaching the original supervertex $v$ as long as
  this happens after one or more steps of the walk.
\item The walk encounters a previously visited vertex.
\item The walk has taken $Q$ steps.
  \eit

  If the walk stops after $Q$ steps then we say it is {\em
    unsuccessful}.  If it stops for one of the other two reasons, then
  it is {\em successful}.
\end{definition}

Graph exploration walks can be used to start from a small supervertex
and learn about (mostly nearby) large supervertices.  It is important
here to precisely specify what is known or unknown to the
algorithm. Here we approximate this knowledge by assuming that the
algorithm ``knows'' everything about small supervertices and nothing
about the edges involving large supervertices.  By being too generous
with the small supervertices, we can still prove lower bounds against
algorithms with partial knowledge of the small supervertices; while
for the large supervertices, we treat any nontrivial knowledge
obtained by the algorithm as an error, and prove that this probability
is negligible.

More concretely, the graph is randomly drawn from an ensemble (as
described in \sec{graph-model}), we allow the starting vertex
$x\in S_v$ to depend arbitrarily on the edges between easy
supervertices, but not the edges involving hard supervertices
(i.e.~(easy, hard) edges or (hard, hard) edges).  We will consider a
graph exploration walk to be a random process that has two sources of
randomness: the random choice of outgoing edge at each step, as well
as the random graph structure that is uncovered by the walk.

To analyze these walks it will be helpful to think of the choice of
the random graph as being made only as it is being revealed by the
algorithm.  More precisely, the {\em hard} part of the random graph
can be thought of as being generated as it is queried, since we will
allow our starting vertex to depend arbitrarily on the easy part of
the graph. An important feature here is that the supergraph structure
should be deterministic, including the sizes of supervertices and the
total number of edges between each pair of supervertices.  Thus, when
we refer to the random choices made in constructing the graph, we mean
the choices of where to draw the edges subject to these constraints.

\begin{figure}[ht]
  \centering
  \includegraphics[scale=0.5]{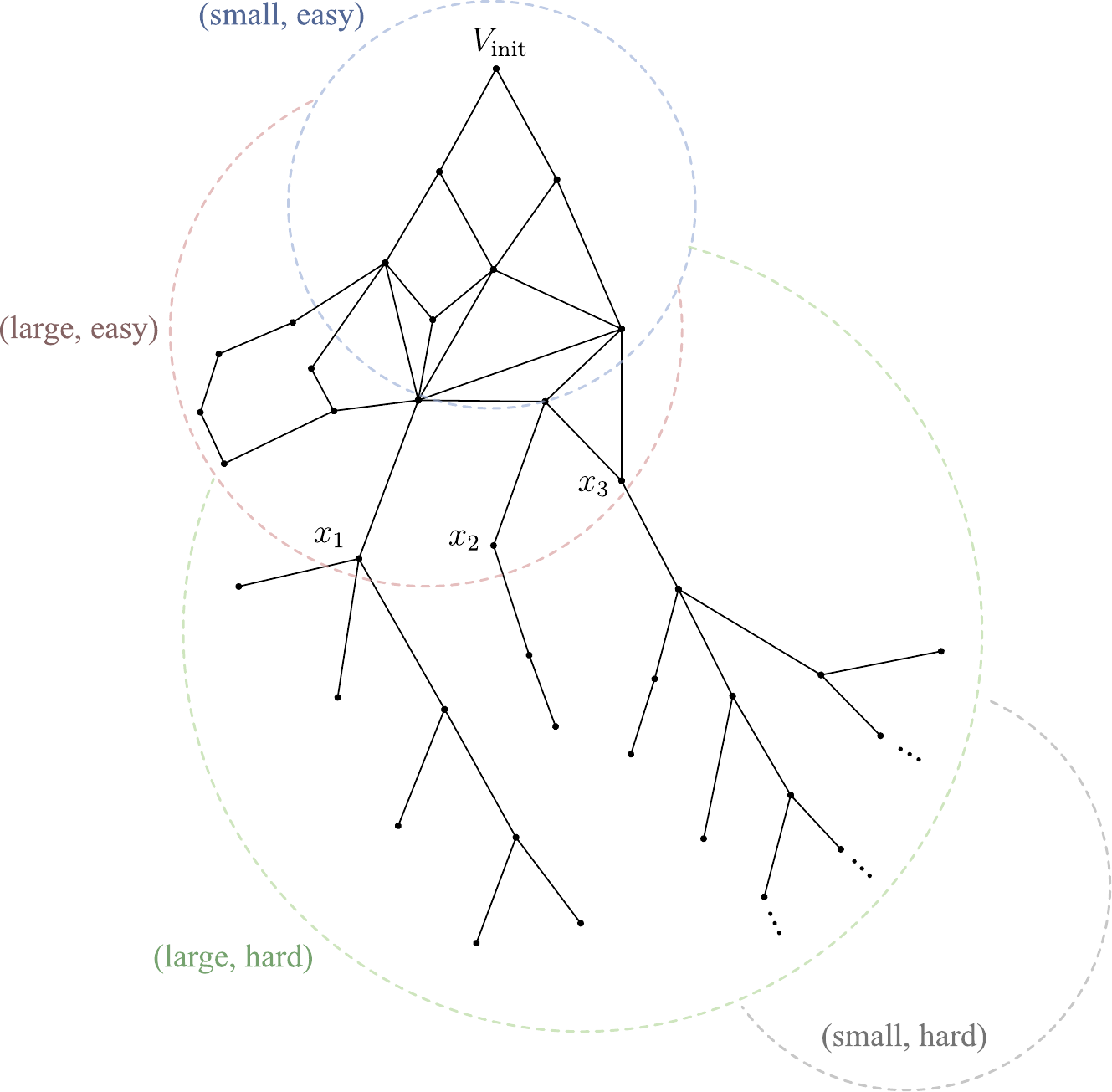}
  \caption{A schematic of the four regions: small easy supervertices, large easy supervertices, large hard supervertices, and small hard supervertices.}
  \label{fig:lower-bound}
\end{figure}

\begin{definition}[Informative superedges and easy supervertices]\label{def:easy-hard}
  A random hierarchical graph $\cG$ can be divided into easy and hard
  supervertices.  We call the vertices contained in these
  supervertices ``easy'' or ``hard'', respectively.  The subgraph
  restricted to the easy supervertices is called $\cG_{\text{easy}}$.
  We can imagine the process of sampling $\cG$ as first drawing
  $\cG_{\text{easy}}$ at random and then drawing the rest of the graph
  conditioned on our choice of $\cG_{\text{easy}}$.

  Given $v,w\in V$, $x\in S_v$ and some choice of $G_{\text{easy}}$,
  define $ P_{\text{reach}}(w|v; x, G_{\text{easy}})$ to be the
  probability that a graph exploration walk starting from $x$ is
  successful and reaches $S_w$ during the walk, conditional on our
  choice of $G_{\text{easy}}$.  This probability is taken over the
  random choices of the walk as well as the random choice of $\cG$,
  outside of $G_{\text{easy}}$.

  Now for $v,w\in V$ define
  \be
  P_{\text{reach}}(w | v) = \max_{G_{\text{easy}}}\max_{x \in S_v} P_{\text{reach}}(w|v; x, G_{\text{easy}})
  \ee
  Use $  P_{\text{reach}}(w | v)$ to define the {\em informative superedges} $E'$ as follows.  The set $E'$ consists of all $(v,w)$ with $v,w\in V$ such that $  P_{\text{reach}}(w | v)  > \delta$.

  Let $v_{\text{init}}$ be the starting supervertex.  We say that our division of $\cG$ into easy/hard supervertices is {\em consistent} if all the superverties reachable by edges in $E'$ from $v_{\text{init}}$ are contained in $\cG_{\text{easy}}$.
\end{definition}

This definition reflects the fact that the algorithm can choose the starting vertex $x$ based on its previous explorations of the easy supervertices.  However, the later steps of the walk can depend only on knowledge of the {\em small} supervertices.

In general $E'$ is neither a subset nor superset of $E$.  It will contain edges $(v,w)\in E$ when at least one of $v,w$ are small, but will usually fail to contain most edges between large supervertices.  $E'$ can also contain edges that are not in $E$ but usually corresponding to $v,w$ that are nearby in the original graph.

\begin{theorem}\label{thm:LB}
  Let $\cA$ be a $Q$-query classical algorithm making queries to the oracle $\cO$ defined in \defn{oracle}.  Suppose further that the graph $\cG$ has a consistent division into easy and hard supervertices.
  Then there exists a distribution over fake oracles $\cF$, depending only on $\cA$ and the information in the easy vertices of the graph, such that $\cA^\cO$ and $\cA^\cF$ produce output distributions within variational distance $O(Q\delta + 1/|V|)$ of each other.
\end{theorem}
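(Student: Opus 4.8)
The plan is to take $\cF$ to be a \emph{lazy simulator} of $\cG$ with the easy part hard-coded. By \defn{easy-hard} the information $\cA$ has about the easy supervertices pins down $\cG_{\text{easy}}$, and --- as emphasised in \sec{LB} --- the supergraph $(V,E)$ together with all sizes $s_v$ and edge multiplicities $e_{v,w}$ is deterministic, so $\cF$ can be given $\cG_{\text{easy}}$ and the labels of all easy vertices. On any query whose answer lies inside $\cG_{\text{easy}}$, $\cF$ answers truthfully. On a query $(\Enc(x),s)$ pointing into a hard supervertex, $\cF$ keeps the partial graph $H$ of hard and (easy,hard) edges it has revealed so far and \emph{extends it as a forest}: it returns a uniformly random not-yet-used vertex of the relevant neighbouring supervertex, respecting the balanced and $D$-regular slot constraints, and never identifies two vertices it has reached along different paths. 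To $\cA$ the hard region then always looks like a forest of non-backtracking paths sprouting from the boundary supervertices, and any divergence from the true $\cG$ is, by construction, an event of type (ii) below. Invalid labels return INVALID.

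\textbf{The coupling.} I would couple the two executions by deferred decisions: sample $\cG$ by first fixing $\cG_{\text{easy}}$ and exposing each remaining edge of $\cG$ only when it is first queried, and couple that exposed edge to the choice $\cF$ makes at the same moment, so that the two answers agree whenever the true answer is a fresh vertex. Running $\cA^{\cO}$ and $\cA^{\cF}$ in lockstep, their transcripts --- and hence their outputs --- agree unless one of two failure events occurs: (i) $\cA$ queries $\Enc(x)$ for a vertex $x$ it has not yet encountered, which by the large encoding range discussed in \rem{encoding} happens with probability $O(1/|V|)$; or (ii) a query into the hard region returns, under $\cO$, a vertex $\cA$ has already seen along a different path, so that $\cO$ reveals a cycle or reconnection among the hard and boundary vertices that the forest-building $\cF$ did not produce. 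Hence $d_{\mathrm{TV}}(\cA^{\cO},\cA^{\cF})\le \Pr[\text{(i)}]+\Pr[\text{(ii)}]$.

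\textbf{Bounding (ii).} Here \defn{explore} and the \emph{consistency} hypothesis do the work, and the goal is $\Pr[\text{(ii)}]=O(Q\delta)$. Every query $\cA$ makes into the hard region can be recast as one step of a non-backtracking graph exploration walk that begins at a vertex of a small (boundary) supervertex of $\cG_{\text{easy}}$ --- the only place from which the hard region is reachable --- and uses only information $\cA$ legitimately holds. A type-(ii) failure at that step is precisely the event that the walk is \emph{successful and reaches a hard supervertex $w$}, thereby reconnecting to already-explored structure. But by \defn{easy-hard}, if $P_{\text{reach}}(w\mid v)>\delta$ for a hard $w$ and an explored supervertex $v$, then $(v,w)\in E'$, so $w$ would be reachable from $v_{\text{init}}$ through informative superedges and hence would lie in $\cG_{\text{easy}}$, contradicting that $w$ is hard. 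Therefore each such step discovers nontrivial hard structure with probability at most $\delta$, and a union bound over the at most $Q$ steps gives $\Pr[\text{(ii)}]=O(Q\delta)$. Combining the two contributions yields the claimed $O(Q\delta+1/|V|)$.

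\textbf{Main obstacle.} The delicate point is the reduction in the previous paragraph: showing that an \emph{adaptive} adversary $\cA$ --- which may interleave many partial explorations, choose each starting vertex based on everything it has learned in $\cG_{\text{easy}}$, abandon walks early, and reuse already-exposed hard edges --- cannot do better than the worst non-adaptive graph exploration walk captured by the maxima over $G_{\text{easy}}$ and $x\in S_v$ in the definition of $P_{\text{reach}}(w\mid v)$. One must charge each ``informative'' hard query to such a walk, account for the three stopping rules of \defn{explore} (reaching a small supervertex, hitting a previously visited vertex, and exhausting $Q$ steps) without double counting, and handle boundary supervertices and small hard supervertices, which are small yet adjacent to or inside the hard region. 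Once this accounting is in place, the probability estimates are routine Markov and union-bound arguments, generalising the cycle-avoidance argument of \cite{CCDFGS03}.
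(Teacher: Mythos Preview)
Your proposal is essentially the paper's argument: $\cF$ answers truthfully on $\cG_{\text{easy}}$ and returns fresh random labels on hard vertices so that the hard region looks like a forest, a deferred-decision coupling reduces the total-variation distance to the two failure events you identify, and event (ii) is bounded per query via the graph-exploration-walk machinery together with the consistency hypothesis. The only structural difference is that the paper treats the very first easy$\to$hard step separately by a direct size argument (the hard neighbor of a boundary vertex is large, so with at most $Q$ vertices already seen the collision probability is $\leq Q/(Q/\delta)=\delta$) rather than through $P_{\text{reach}}$, and it picks up an explicit factor of $D$ from conditioning the walk on having entered the hard region; the adaptivity issue you flag as the main obstacle is handled at the same level of informality in the paper.
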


As a result, such an algorithm cannot reliably produce outputs depending on hard supervertices.  In particular, it cannot output the identifier of the target vertex with success probability $\gg \delta$.

\begin{proof}[Proof of \thm{LB}]
    Assume for simplicity that we have defined the encoding function in
  \defn{oracle} to force our algorithm to explore the graph
  starting from $v_{\text{init}}$.  This part of the proof is
  identical to Lemma 4 of \cite{CCDFGS03} and we do not repeat it
  here.  If $\Enc$ maps to random strings in a set of size $|V|^2$
  then the probability of the algorithm finding any vertex by any
  means other than exploring the graph is $\leq 1/|V|$.

  As a result, we can assume that every query is of the form $(\Enc(v),s)$ where $v$ is a vertex appearing in a previous oracle output (or else is $v_{\text{init}}$) and $s\in [D]$.  The queries made by $\cA^\cO$ then form a connected subgraph of $\cG$, which we call $\cG(\cA^\cO)$.  Define a subgraph $\cG_{\text{init}}$ to be vertices that can be reached from $v_{\text{init}}$ using only edges in $\cG(\cA^\cO)$ and traversing only easy vertices.  The remainder $\cG(\cA^\cO)-\cG_{\text{init}}$ consists of $k$ connected components, which we call $T_1,\ldots, T_k$.  We will argue that each $T_i$ is very likely to be a tree and to have a single neighboring vertex in $\cG_{\text{init}}$.

  Suppose for now that this is true.  We need to specify how $\cF$ acts on $T_1,\ldots,T_k$. When the algorithm encounters a new vertex $v$ in $T_i$, $\cF$ will either return the true vertex label of $v$ if $v$ is easy, or will return a fresh random label if $v$ is hard.  (Of course, $\cF$ will answer any repeated queries to $v$ in a way that is consistent with these choices.)
  This way, $\cF$ will not depend on any information from $\cO$ evaluated on any hard vertex.  We will also show, by induction, that each $T_i$ is isomorphic to a subgraph of the infinite $D-1$-ary tree.  Thus, neither the connectivity nor the labels will carry any information from $\cO$ about the hard vertices.

  To prove this claim, consider each query, WLOG one that has not already been made.  If the new query is to an easy vertex with only easy neighbors then it will not add any hard vertices, due to our assumption that the easy/hard partition is consistent.  If it queries an easy vertex with a hard large neighbor then it has a chance of choosing this neighbor.  Here we consider the randomness resulting from our choice of random graph.  Specifically we will use the fact that the distribution is invariant under permutations of vertices within any supervertex.
To bound that this hard neighbor has been previously seen, note that $\leq Q$ vertices have been seen anywhere in the graph up until this point.  Also, the ``large'' property means that the hard supervertex in question has size $\geq Q/\delta$.  As a result, the probability that this hard neighbor has been previously seen is $\leq \frac{Q}{Q/\delta} = \delta$.
When a new neighbor has {\em not} been previously seen (which occurs with probability $\geq 1-\delta$), this query creates a new tree $T_{k+1}$.

Similarly suppose we query a vertex at depth $\ell$ of a tree $T_i$.  The new vertex we find is distributed identically to the $(\ell+1)^{\text{th}}$ step of a graph exploration walk (see \defn{explore}), conditioned on the first step of the walk entering the hard vertices. Here again this randomness comes from our random choice of graph as well as the possibly random choices of $\cA$.  Conditioning on the first step involves choosing from among the $D$ outgoing edges so it multiplies the probability of any event by $\leq D$.  The fact that the walk enters the hard supervertices means that the graph exploration walk has $\leq \delta$ chance of being ``successful'' (see \defn{easy-hard}), i.e.~reaching a small supervertex or a previously seen vertex.  Conditioning on the first step changes this bound to $\leq D \delta$.  Taking the union bound over all $Q$ queries in the algorithm, we find that our tree claim fails over the course of the algorithm with probability $\leq DQ\delta$.
\end{proof}

\begin{corollary}\label{cor:1D-LB}
  Let $G$ be a 1D hierarchical graph of the form in \sec{random-graph} with nodes
  $[2n]=\{1,\ldots,2n\}$ and edge ratios $r_i \sim D_+$ for
  $1\leq i\leq n$ and $r_i \sim D_-$ for $n+1\leq i \leq 2n$, such
  that $\bbE[\log(D_+)] = \bbE[-\log(D_-)] = \mu >0$ and
  $|\log(D_-)|, |\log(D_+)|\leq \Delta$ with probability 1 for some
  constant $\Delta$.

  Then with high probability, for a graph taken from this random graph
  ensemble, it takes $\exp(\Omega(n))$ queries for a classical
  algorithm to traverse from 1 to $2n$.
\end{corollary}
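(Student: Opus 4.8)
The plan is to apply \thm{LB}. Fix small constants $0<\eps<\eps''$, with $\eps''$ small relative to $\mu$, set the query budget $Q=\exp(\eps n)$ and the failure tolerance $\delta=\exp(-\eps'' n)$, so that the error term $O(Q\delta+1/|V|)$ of \thm{LB} is $o(1)$. It then suffices to show that, with probability $1-o(1)$ over $\cG$, the supervertices admit a division into easy and hard parts that is consistent in the sense of \defn{easy-hard} and has the exit supervertex $2n$ in the hard part: \thm{LB} then forces every $Q$-query algorithm to behave, up to $o(1)$ in variational distance, like an algorithm whose oracle answers never depend on the exit, so it cannot reach or name the exit with non-negligible probability. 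Since $\eps>0$ is a fixed constant, this is the claimed $\exp(\Omega(n))$ lower bound.

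The first step is to fix the shape of a typical graph. From $Ds_i=e_{i-1}+e_i$ and $e_i=e_0\prod_{j\le i}r_j$ one has $\log s_i=\sum_{j=1}^{i-1}\log r_j+O(1)$, a sum of independent summands bounded by $\Delta$ in magnitude, with mean $\mu$ for $j\le n$ and $-\mu$ for $j>n$. A Hoeffding bound together with a union bound over $i$ shows that with probability $1-\exp(-\Omega(n))$ we have $\log s_i=\mu\min(i,2n-i)\pm O(\sqrt{n\log n})$ uniformly in $i$ and that the partial sums do not backtrack by more than $O(\sqrt{n\log n})$. Hence the set $B=\{i:s_i>Q/\delta\}$ of \emph{large} supervertices is an interval $[a,\,2n-a]$ with $a=\Theta\big((\eps+\eps'')n/\mu\big)$; it has width $\Theta(n)$, contains the midpoint $i\approx n$, and $s_i$ stays comfortably above $Q/\delta$ throughout its interior.

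The crux is showing that $B$ blocks the informative superedges. I will declare $\{1,\dots,a+L\}$ easy and everything else hard, with $L=\Theta(n)$ chosen below; since $\eps,\eps''$ are small this keeps $a+L<n$, so the exit is hard, and what remains is consistency. A graph exploration walk projects to a nearest-neighbour walk on the line of supervertex indices whose one-step ratio on $B$ is of order $r_i$, so the walk feels a drift of magnitude of order $\mu$ pointing toward the midpoint from either side: the ``mountain'' of large supervertices is an entropic trap. Quantitatively, I would establish (i) a walk at depth $k$ inside $B$ can reach a small supervertex on the left only by moving against this drift across $\ge k$ supervertices, which within $Q$ steps happens with probability $\le Q\exp(-\Omega(\mu k))$ by a gambler's-ruin/Chernoff estimate, and reaching a small supervertex on the right is even less likely since the walk must first cross the midpoint against the drift; and (ii) a walk confined to $B$ self-intersects with probability $\exp(-\Omega(n))$, because where $B$ is high we have $s_i\gg Q/\delta$ so vertex collisions are exponentially rare, while near the boundary of $B$ the drift prevents the walk from accumulating more than $\poly(n)$ visits to any single supervertex. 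Choosing $L=\Theta(n)$ with $Q\exp(-\Omega(\mu L/2))<\delta$, it follows that no informative superedge can end at a supervertex of depth $\ge L$ in $B$: reaching such a supervertex and then being successful would require the walk to descend from depth $\ge L$ to depth $0$, and the part of that descent below depth $L/2$ is a run against the drift, of probability $\le Q\exp(-\Omega(\mu L/2))+\exp(-\Omega(n))<\delta$. Combined with the fact that inside any run of small supervertices an exploration walk halts at its first step — so $E'$ there only joins nearest neighbours — this confines the $E'$-closure of $v_{\text{init}}$ to $\{1,\dots,a+L\}$, and the partition above is consistent with the exit hard.

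Feeding this partition into \thm{LB} finishes the proof: every $Q=\exp(\eps n)$-query classical algorithm has output distribution within $o(1)$ of one whose oracle is independent of the exit, and since $\Enc$ is a random injection into a set of size $\gg|V|^2$, such an algorithm reaches or outputs the exit's identifier with probability $O(Q\delta+1/|V|)=o(1)$. The main obstacle is the quantitative random-walk analysis (i)--(ii): turning the heuristic ``a large block of supervertices is an entropic barrier against a drift $\mu$'' into uniform bounds on the success probability of graph exploration walks, handling the return-to-a-small-supervertex event and the self-intersection event simultaneously and over the joint randomness of the walk and of the large-supervertex part of $\cG$ that the walk reveals as it proceeds. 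Everything else is Hoeffding-type concentration and bookkeeping of the constants $\eps,\eps'',a,L$.
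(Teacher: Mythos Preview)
Your proposal is correct and follows the same overall strategy as the paper: pick $Q,\delta$ exponential in $n$, use Hoeffding to show the large supervertices form an interval of width $\Theta(n)$, take a left prefix as easy and the rest as hard, verify consistency, and invoke \thm{LB}.

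The one substantive difference is how you bound the probability that a walk through the large region returns to a small supervertex. You propose a drift/gambler's-ruin estimate (your item (i)); this works, but the non-backtracking constraint makes the one-step transition on supervertex indices depend on the direction of arrival, so the analysis needs some care. The paper sidesteps this with a one-line stationary-distribution argument: the uniform distribution $\pi$ on vertices is stationary for the non-backtracking walk on a regular graph, and since $\mathbbm{1}_{S_v}/s_v\le\pi/\pi(v)$ elementwise, applying the transition matrix $t$ times gives $\mathbb{P}[\text{in }S_w\text{ after }t\text{ steps}\mid\text{start in }S_v]\le\pi(w)/\pi(v)=s_w/s_v$; a union bound over $Q$ steps and a second Hoeffding bound on $\log(s_w/s_v)$ finish the estimate with no drift analysis at all. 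Your explicit treatment of self-intersection (item (ii)) is correct but is not spelled out in the paper's corollary; it follows immediately from $s_i>Q/\delta$ in the large region, giving collision probability at most $Q\delta=o(1)$ over the whole walk.
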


\begin{proof}
  Let $\gamma>0$ be a small constant which we will choose later, and take
  $Q = \exp(\gamma n)$ and $\delta = \exp(-2\gamma n)$, so a
  supervertex is large if it has size $\geq \exp(3\gamma n)$.  The
  Chernoff-Hoeffding theorem states that for $v\leq n$,
  \be
  \mathbb{P}\left(s_v < \exp(3 \gamma n)\right) \leq
  \exp(-n\frac{(\frac v n \mu - 3\gamma)^2}{2\Delta^2}).
  \ee
  This is $\leq \delta$ for $v > c_1 \gamma n$, where $c_1>1$ is a constant
  depending on $\mu$ and $\Delta$.  Thus with high probability the
  supervertices $c_1\gamma n \leq v \leq n$ are all large.

  For $v> n$, a similar analysis holds except that one starts at the
  right-most node $2n$ and works backwards.

  Next we need to analyze random non-backtracking walks of length $Q$ starting at $v$
  in the range $c_2\gamma n< v \leq n$ for some $c_2>c_1$ which will correspond to the easy/hard transition.  We claim that these are
  unlikely to reach small supervertices.  First note that the walk on
  supervertices does not depend on which specific vertex we start
  with.  So we can treat the starting vertex as randomly distributed
  over a supervertex $v$ with $c\gamma n< v \leq n$.  Let $\pi$ be the uniform distribution over vertices, or equivalently the distribution which gives supervertex $v$ probability $s_v / |\cV|$.  (Note that the same uniform distribution holds even taking into account the fact that the walk is non-backtracking, using the fact that $G$ is regular and assuming $D\geq 3$; see Section 1.2 of \cite{back06}.)

  Then a basic property of random walks implies that the probability of starting at vertex $v$ and reaching $w$ on the  $t^{\text{th}}$ step is $\leq \frac{\pi(w)}{\pi(v)}$.  Concretely, we  start with the uniform distribution over $v$, given by $\frac{\mathbbm{1}_{S_v}}{s_v}$, and apply $t$ steps of the walk, which we denote with transition matrix $M$.  Then $M\pi=\pi$ (because $\pi$ is stationary) and $\frac{\mathbbm{1}_{S_v}}{s_v} \leq \frac{\pi}{\pi(v)}$ elementwise.  The probability of ending in $S_w$ after $t$ steps can then by bounded by
  \begin{equation}
    \mathbbm{1}_{S_w}^T M^t \frac{\mathbbm{1}_{S_v}}{s_v}
    \leq \frac{1}{\pi(v)} \mathbbm{1}_{S_w}^T M^t {\pi} = \frac{\mathbbm{1}_{S_w}^T {\pi}}{\pi(v)} = \frac{\pi(w)}{\pi(v)} = \frac{s_w}{s_v}.
  \end{equation}
  This is independent of $t$, but for a $Q$-step walk we need to take the union bound over each step.  This means that the probability of reaching $w<c_1\gamma n$ after $Q$ steps is $\leq Q s_w / s_v$.  Next, we can bound $\frac{s_w}{s_v}$ with another Chernoff-Hoeffding bound, since
  \be
  \log \frac{s_w}{s_v} = \kappa + \sum_{w < i \leq v} -\log r_i
  \ee
  is a sum of i.i.d.~bounded random variables with mean $-\mu$, and $\kappa$ is an $O(1)$ constant coming from conversion between the number of edges between supervertices and supervertex sizes.   A similar argument bounds the probability of reaching the small supervertices near $2n$. Finally we can choose $c_2$ to be a large enough constant so make these probabilities $<\delta$, thus establishing that supervertices $>c_2\gamma n$ are hard. (More precisely labeling them hard  and labeling the supervertices $\leq c_2\gamma n$ easy will meet our definition of consistency.)

Our choices of $c_1,c_2$ did not depend on $\gamma$ so we can choose $\gamma$ last to be $1/2c_2$, ensuring that a set of hard vertices exist in the middle of the supergraph.  Using \thm{LB}, these prevent any classical algorithm from traversing from 0 to $2n$ in less than exponential time.
\end{proof}

This allows us to prove the statement from the main text:
\begin{corollary}\label{cor:general-1D-LB}
  With high probability, for a graph taken from this random graph
  ensemble, it takes $(\max_i s_i)^{\Omega(1)}$ queries for a classical
  algorithm to traverse from 1 to $2n$.
\end{corollary}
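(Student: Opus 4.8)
The plan is to obtain this as an immediate consequence of \cor{1D-LB} together with a deterministic upper bound on the supervertex sizes. From \cor{1D-LB} we already know that, with high probability over the draw of $\cG$ from this ensemble, any classical algorithm needs $\exp(\gamma n)$ queries to traverse from $1$ to $2n$, where $\gamma>0$ is the constant produced in that proof (it depends only on $\mu$ and $\Delta$). So it suffices to show that $\exp(\gamma n)$ is at least $(\max_i s_i)^{\Omega(1)}$, i.e.\ that $\log(\max_i s_i) = O(n)$.

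This last bound holds deterministically. Using $e_i = e_0 \prod_{j\le i} r_j$ and the degree constraint $D s_i = e_{i-1}+e_i$ from the earlier propositions, together with the almost-sure bound $|\log r_j|\le \Delta$ assumed in \cor{1D-LB}, every supervertex satisfies $\log s_i \le 2\Delta n + \kappa$, where $\kappa=O(1)$ accounts for the conversion between edge counts and vertex counts. Hence $\log(\max_i s_i)\le C n$ for a constant $C$ depending only on $D$ and $\Delta$ (and $n$ large enough to absorb $\kappa$). Combining the two facts, with high probability the number of queries is at least $\exp(\gamma n)\ge \exp\big((\gamma/C)\log(\max_i s_i)\big)=(\max_i s_i)^{\gamma/C}$, which is $(\max_i s_i)^{\Omega(1)}$ as claimed.

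There is essentially no obstacle here: the only substantive ingredient beyond \cor{1D-LB} is the regularity identity $\sum_{v\sim u} e_{uv}=D s_u$ plus boundedness of the $\log r_i$, and the only point worth stating carefully is that the exponent hidden in the $\Omega(1)$ is $\gamma/C$, which is smaller than the rate $\gamma$ of the $\exp(\Omega(n))$ bound by the (constant) ratio between $n$ and $\log\max_i s_i$. For completeness I would also note that under the bias assumption one has the matching $\max_i s_i = \exp(\Theta(n))$ with high probability --- the lower bound follows from a Chernoff--Hoeffding estimate on $\log s_n = \sum_{i\le n}\log r_i + O(1)$, a sum of i.i.d.\ bounded variables with positive mean $\mu$ --- so the bound $(\max_i s_i)^{\Omega(1)}$ is genuinely exponential in $n$ and the statement is not vacuous; but this remark is not needed for the corollary itself.
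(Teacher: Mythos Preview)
Your proposal is correct and follows essentially the same route as the paper: upper-bound $\log(\max_i s_i)$ by $O(n)$ and combine with the $\exp(\Omega(n))$ lower bound from \cor{1D-LB}. The only difference is that the paper invokes a Chernoff bound to get $\max_i s_i \le \exp(cn)$ with high probability, whereas you observe (correctly) that this already holds deterministically from the almost-sure bound $|\log r_j|\le \Delta$; your version is slightly cleaner for that reason.
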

\begin{proof}
Using a Chernoff bound, it is simple to see that the maximum size of a supervertex is $\exp(c n)$ with high probability.  Given that the lower bound on the classical query complexity is $\exp(\Omega(n))$, the thesis follows.
\end{proof}

\begin{corollary}\label{cor:GFF-LB}
With high probability, it requires $\exp(\Omega(Nd))$ classical queries to traverse a graph taken from the BGFF ensemble from \sec{highdim}.
\end{corollary}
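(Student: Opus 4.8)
\emph{The setup.} The plan is to mirror the proof of \cor{1D-LB} and apply the general classical lower bound \thm{LB}. First I would fix a small constant $\gamma>0$ (to be chosen at the very end) and set $Q=\exp(\gamma N d)$ and $\delta=\exp(-2\gamma N d)$, so that a supervertex counts as \emph{large} precisely when its size exceeds $Q/\delta=\exp(3\gamma N d)$. The goal is then to exhibit a consistent division of $\cG$ into easy and hard supervertices in which the exit supervertex $(N,\ldots,N)$ is hard; \thm{LB} then shows that no $Q$-query classical algorithm outputs the exit identifier with probability $\gg\delta$, and since $Q\delta=\exp(-\gamma Nd)$ and $1/|V|$ are negligible this yields the advertised $\exp(\Omega(Nd))$ bound.

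\emph{Large supervertices fill the middle.} Next I would show that, with high probability, every supervertex whose Manhattan distance to the nearer of the two corners is at least $c_1\gamma Nd$ is large, for a constant $c_1>1$. Recall from \sec{highdim} that $\log s_{\vec m}$ splits into a bias (drift) contribution plus a zero-mean Gaussian-free-field fluctuation in the height fields $\varphi,\chi^{(1)},\ldots,\chi^{(d)}$. Because the BGFF bias is positive near $(0,\ldots,0)$ and negative near $(N,\ldots,N)$, accumulating it along a geodesic gives $\mathbb{E}[\log s_{\vec m}]\gtrsim\mu\cdot\operatorname{dist}(\vec m)$ for a constant $\mu>0$, where $\operatorname{dist}(\vec m)$ is the distance to the nearer corner (this is the origin of $\max_i s_i=\Omega(\exp(Nd))$ quoted in the introduction). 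The fluctuation $\log s_{\vec m}-\mathbb{E}[\log s_{\vec m}]$ is sub-Gaussian with variance controlled by the quadratic forms $\bm{\varepsilon}^T\bm{K}^{-1}\bm{\varepsilon}$ already bounded in \lem{2D-Laplacian} and \lem{highD} ($O(\log N)$ for $d=2$, $O(1)$ for $d\ge 3$, and in any case $O(Nd)$); a Chernoff bound plus a union bound over the $O(dN^d)$ supervertices then gives the claim.

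\emph{Choosing the easy region and checking consistency.} I would take $\cG_{\text{easy}}$ to be all supervertices within Manhattan distance $c_2\gamma Nd$ of $(0,\ldots,0)$, with $c_2>c_1$ a constant to be fixed. Since the entrance–exit distance on the $d$-dimensional Lieb lattice is $\Theta(Nd)$, taking $\gamma$ small places $(N,\ldots,N)$ and the small supervertices surrounding it outside $\cG_{\text{easy}}$, so the exit is hard. For consistency I must show $P_{\text{reach}}(w\mid v)\le\delta$ whenever $v\in\cG_{\text{easy}}$ and $w\notin\cG_{\text{easy}}$. As in \cor{1D-LB}, the uniform distribution $\pi$ is stationary for the non-backtracking walk (using $D$-regularity and $D\ge 3$), so the probability that a graph-exploration walk started in $S_v$ lands in $S_w$ at a fixed step is at most $\pi(S_w)/\pi(S_v)=s_w/s_v$, hence at most $Q\,s_w/s_v$ over all $Q$ steps. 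Every small supervertex lies within distance $c_1\gamma Nd$ of a corner, hence has size $\le\exp(3\gamma Nd)$, while the previous step makes the supervertices on the boundary of $\cG_{\text{easy}}$ have size $\ge\exp(\tfrac12\mu c_2\gamma Nd)$; choosing $c_2$ large forces $Q\,s_w/s_v\le\delta$ for every small $w$. A graph-exploration walk leaving $\cG_{\text{easy}}$ immediately enters the shell of large supervertices produced above, so it can only succeed by reaching a small supervertex (the chance of closing a cycle inside an exponentially large supervertex is negligible), an event we have just bounded by $\delta$; thus no informative superedge exits $\cG_{\text{easy}}$, the division is consistent, and since $c_1,c_2$ are independent of $\gamma$ one finally sets $\gamma=1/(2c_2)$.

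\emph{The main obstacle.} With the consistent division in hand and the exit hard, \thm{LB} closes the argument, so the real work is the ``large middle'' step: one must check that the BGFF bias genuinely dominates the height-field fluctuations \emph{uniformly} over all $O(dN^d)$ supervertices. For $d=2$ this rests on the $O(\log N)$ bound on $\bm{K}^{-1}$ from \lem{2D-Laplacian}, and for $d\ge 3$ on the bounds in \lem{highD}; one also has to propagate the Lieb-decoration terms and the (somewhat arbitrary) boundary-condition choice for the $e^{(i)}$'s through the size formula, which contribute only $\poly(N)$ factors but must be tracked. A secondary point, absent in 1D, is that $\cG_{\text{easy}}$ is now a ball rather than an interval, so one should confirm that its complement — which contains the exit — is genuinely shielded by a full shell of large supervertices; this is immediate because ``large'' holds beyond distance $c_1\gamma Nd$ from a corner and $c_2>c_1$.
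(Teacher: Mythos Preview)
Your proposal is correct and follows essentially the same route as the paper's own proof: fix $Q=\exp(\gamma Nd)$ and $\delta$ accordingly, argue that supervertices away from the two corners have size $\exp(\Theta(Nd))$ by combining the BGFF drift with the sub-Gaussian fluctuation bounds, and then use stationary-distribution ratios for the non-backtracking walk to verify consistency so that \thm{LB} applies. The paper's proof is only a short outline that defers to the 1D case, whereas you have filled in more of the detail (in particular, you correctly identify that the real content is making the ``bias dominates fluctuations uniformly over all $O(dN^d)$ supervertices'' step precise via \lem{2D-Laplacian} and \lem{highD}).
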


\begin{proof}
  The proof is similar to the 1D case (\cor{1D-LB}) so we only outline it here. As we have introduced in the discussion after \thm{highD-main}, the size of the supernodes grows from $1$ in the corners to $\exp(\Theta(N d))$ in the bulk of the graph.  We can then take $Q = \exp(N d \gamma)$ and $\delta=1/Q$ for small constant $\gamma>0$ and then find that the only small supernodes are in the corners near the entrance and exit, i.e. they have coordinates $(x_1,\ldots,x_d)$ with $\sum_i x_i$ either $<\frac 14 Nd$ or $>\frac 34 Nd$.

  Next, once classical random walks reach the region with $\sum_i x_i \in [\frac 13 Nd, \frac 23Nd]$ they have a negligible chance of returning to a small vertex.  This can be seen just by bounding the ratios of the stationary probabilities, which are $\exp(\Omega(Nd))$.  Thus we have met the conditions for the lower bound of  \thm{LB}.
\end{proof}

\section{Concrete realizations of hierarchical graphs}\label{sec:metropolis}

So far we have discussed the behavior of quantum walks and classical algorithms on hierarchical graphs.  But how can such graphs be constructed? In this section, we introduce general frameworks for doing so.  In the first approach, we propose an explicit construction of such graph utilizing factors of the graph degree $D$, which can be applied to both 1D and higher-dimensional graphs.  The second and more interesting approach generalizes beyond the balanced constraint by using ideas from graph sparsification.  In particular, given Hamiltonian $H$ defined on a supervertex subspace, we use a new kind of graph sparsification to generate a hierarchical graph which with high probability has an approximate supervertex subspace with effective Hamiltonian close to $H$ in operator norm. 

\subsection{Explicit construction of 1D hierarchical graph}\label{sec:construction-1D}

Next, we would like to show that the set of these graphs are nontrivial, i.e., it is possible to construct a large class of random graphs satisfying these constraints.
\begin{example}
  For any positive integers $n,D>1$, we can define a $D$-regular random balanced hierarchical graph on a supergraph $G = (V,E)$ which is a line graph with vertices $\{0,1,\ldots,2n\}$.

  Let $d_1,\ldots,d_k$ be the proper factors of $D$.  For $i=1,\ldots,{n-1}$, randomly select $d_{L,i} \in \{d_1, d_2, \cdots, d_{k}\}$ based on any probability distribution over this finite alphabet, and $d_{R,i} = D - d_{L,i}$, thereby ensuring that $d_{R,i}/d_{L,i} \in \mathbb{N}$.   Here $d_{L,i}$ and $d_{R,i}$ are the left/right degrees of vertex $i$.  We mirror the graph on vertices $n,\ldots,2n$ such that $d_{R,n+i}=d_{L,n-i}$.

  Finally, set $s_0=s_{2n}=1$ and set $e_1=e_{2n-1}=D$.

\end{example}

We claim that this yields a valid graph family with integer $\{s_i\}$ and $\{e_i\}$.
For $1\leq i\leq n$, one can verify that the size of a supervertex and superedge are given by
\be e_{i} = e_1 \prod_{1 \leq j < i} \frac{d_{R,i}}{d_{L,i}}
\qand
s_i =  \frac{e_{i+1}}{d_{R,{i+1}}}\ee
Because $d_{L,i} | D$ and $d_{R,i} = D - d_{L,i}$, it follows that $d_{R,i}/d_{L,i} \in \mathbb{Z}$ and therefore $\{e_i\}$ are integers.  We may write $s_{i}$ as
\begin{equation}
s_i = \frac{e_1}{d_{L, i+1}}\prod_{1 \leq j < i} \frac{d_{R,i}}{d_{L,i}},
\end{equation}
and using the fact that $e_1 = D$ and $d_{L,i} | D$, $\{s_i\}$ are integers as well.  We call such a probability distribution a \emph{good distribution} if it ensures that $\{s_i\}$ and $\{e_i\}$ are integers; the example above shows that good distributions exist.
\begin{lemma}
Given a good distribution over edge-edge ratios, there exists an explicit construction of a $D$-regular random balanced hierarchical graph ensemble on a supergraph $G = (V,E)$ which is a line graph with vertices $\{0,1,\ldots,2n\}$.
\end{lemma}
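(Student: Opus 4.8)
The plan is to upgrade the recipe of the preceding Example---which so far fixes only the \emph{macroscopic} data $\{s_i\}$, $\{e_i\}$ together with the per-vertex left and right degrees $\{d_{L,i}\}$, $\{d_{R,i}\}$---into a genuine random graph ensemble by prescribing, for each superedge, a distribution over bipartite graphs between its two incident supervertices. The first step is to record the integrality facts: each $d_{R,i}=D-d_{L,i}$ with $d_{L,i}\in\{d_1,\dots,d_k\}$ is a positive integer, and the closed forms for $e_i$ and $s_i$ given in the Example take integer values precisely because the distribution is \emph{good}. The only content of the lemma beyond the Example is therefore the placement of the actual edges.

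For each superedge $(i,i+1)$ I would place the edges by a configuration-model step: attach to every vertex of $S_i$ a bundle of $d_{R,i}$ half-edges and to every vertex of $S_{i+1}$ a bundle of $d_{L,i+1}$ half-edges, and then draw a uniformly random perfect matching between the two collections of half-edges, doing this independently across superedges. This is well posed because the balanced and $D$-regularity conditions force the ``balanced consistency'' $s_i\,d_{R,i}=e_{i,i+1}=s_{i+1}\,d_{L,i+1}$ along every superedge, so the two half-edge multisets have the common cardinality $e_{i,i+1}$. The entrance and exit are handled by the same rule with $s_0=s_{2n}=1$ and $e_{0,1}=e_{2n-1,2n}=D$, so the single vertex of $S_0$ receives $D$ edges into $S_1$ and symmetrically at the exit. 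The set $\mathcal{E}_{i,i+1}$ so produced is in general a multiset---parallel edges are unavoidable near the endpoints, where supervertices are smaller than $D$---but this is harmless, since the quantum and classical analyses see the graph only through the numbers $e_{uv}$ and $s_u$; if genuinely simple graphs are wanted one may instead condition the matching on having no repeated edge whenever the supervertex sizes are large enough relative to $D$ for this to be possible, or absorb the $O(1)$ small supervertices near the endpoints into a fixed buffer as in the original welded-tree construction.

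It then remains to verify the three defining properties, each of which is immediate once the edges are in place. Balancedness (\defn{balancedhier}) holds by construction, since inside the superedge $(i,i+1)$ every vertex of $S_i$ has exactly $d_{R,i}$ neighbors in $S_{i+1}$ and every vertex of $S_{i+1}$ has exactly $d_{L,i+1}$ neighbors in $S_i$. The graph is $D$-regular because every interior vertex of $S_i$ has total degree $d_{L,i}+d_{R,i}=D$, while the entrance and exit vertices have degree $e_{0,1}=e_{2n-1,2n}=D$. Finally the ensemble is explicit and efficiently samplable: one draws the $d_{L,i}$ from the good distribution, computes the $s_i$ and $e_{i,i+1}$ from the formulas of the Example, and samples one uniformly random half-edge matching per superedge. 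Together with the integrality facts this exhibits the desired $D$-regular random balanced hierarchical graph ensemble on the line graph $\{0,\dots,2n\}$.

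The only real---and mild---obstacle is the bookkeeping that makes the configuration model well posed: one must check that the local degrees extracted from the macroscopic data are mutually consistent along every superedge, i.e.\ that $s_i\,d_{R,i}=e_{i,i+1}=s_{i+1}\,d_{L,i+1}$, so that no divisibility obstruction arises and the half-edge counts on the two sides agree. This consistency is precisely what the hypothesis that the distribution is good provides; with it in hand, the random-matching construction and the verification of the balanced and $D$-regularity conditions are routine.
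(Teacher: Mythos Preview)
Your proposal is correct and follows essentially the same approach as the paper: sample the ratios, compute the integer macroscopic data $\{s_i\},\{e_i\}$, and then randomly connect edges within each superedge subject to the prescribed left/right degree constraints. Your write-up is actually more careful than the paper's---you spell out the configuration-model matching, the consistency identity $s_i\,d_{R,i}=e_{i,i+1}=s_{i+1}\,d_{L,i+1}$, and the multi-edge issue near the endpoints, all of which the paper leaves implicit in the single phrase ``randomly connect the $e_i$ edges \ldots\ while satisfying the left and right degree constraints.''
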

\begin{proof}
  First, one samples the edge-edge ratios $r_i \sim D_i$.
  Equivalently, one may sample the edge-vertex ratios
  $\kappa_i \sim F_i$ and extract the edge-edge ratios from them using
  $\kappa_i = D/(1 + r_i)$.  Then, one constructs the edge sets
  $\mathcal{E}_i$ from the relation $e_{i+1} = r_i e_i$.
  Finally, one constructs the sets of nodes $S_i$ using the relation
  $D s_i = e_i + e_{i+1}$.  Randomly connect the $e_{i}$ edges between nodes in $S_{i-1}$ and $S_i$ while satisfying the left and right degree constraints.  This generates a random hierarchical graph from the desired ensemble.
\end{proof}

This ensemble of graphs contains an exponentially large number of members and is thus quite generic.  An example of a such a member was  shown in the introduction, in \fig{1D-graph-example}.

\subsection{Explicit construction of higher-dimensional hierarchical graph}\label{sec:construction-highD}

In \defn{liebensemble}, hierarchical graphs on a Lieb lattice are constructed from a probability distribution over the height fields.  As in the 1D case, we would like to show the existence of a good probability distribution, which results in an explicit/concrete construction of such a graph.

As in the previous subsection, define $d_{L,(x,y)}^{(i)}$ and $d_{R,(x,y)}^{(i)}$, where the superscript $i$ denotes one of $d$ directions on the lattice.  We will make the assumption that $d_L^{(i)} + d_R^{(i)} = D$ for all $i$. For the sites with degree 2, we can drop the superscript label.  Below, we provide an explicit probability distribution (albeit certainly not the most general) which ensures that the graph that is constructed has integer values for the edge and supervertex sizes.  This graph is constructed by creating a mountain shaped landscape and superimposing this landscape with fluctuations that have a Gaussian free field scaling limit.

We first require the following assumption on the degree sizes:
\begin{assumption}\label{asm:factor}
Let $F = \{d_1,\ldots,d_f\}$ be the proper factors of $D$. Pick an $f \in F$ and subset $S_f \subseteq F$ such that for any $d_i \in S_f$, $\exists f_{+}, f_{-} \in F$ satisfying
  \begin{equation}
  \log \left(\frac{D}{f} - 1\right) \pm \log \left(\frac{D}{d_i} - 1\right) = \log \left(\frac{D}{f_{\pm}} - 1\right).
  \end{equation}
For large enough $D$, there exists a choice of $f$ so that the set $S_f$ can have large cardinality.
\end{assumption}

For any positive integers $N,D,d>1$ with $D$ obeying the above assumption, we can define a $D$-regular random balanced hierarchical graph on a supergraph $G = (V,E)$ which is a $d$-dimensional Lieb lattice with linear size $N$.  We first describe the construction of a mountain-type hierarchical graph with these parameters.

\begin{figure}[t]
    \centering
    \includegraphics[scale=0.55]{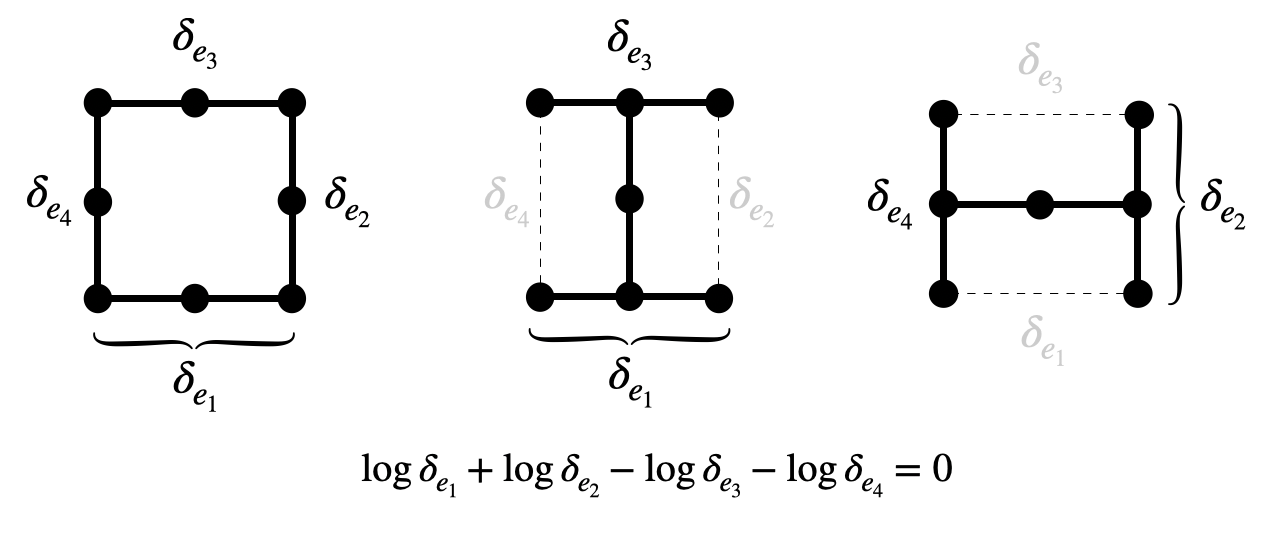}
    \caption{The 2-cells of each of the 3 square sublattices of the 2D Lieb lattice.  The assignments of the factors are shown as well as the gauge constraint.  Each of these assignments are drawn from a distribution satisfying the gauge constraint separately on each of the 3 sublattices.  The light gray labels and edges indicate that there are no assignments to these edges.}
    \label{fig:2D-graph-example}
\end{figure}

\begin{example}[Mountain hierarchical graph]

For each site $v = (x_1,x_2,\ldots,x_d) \in \Delta_1$ ($\Delta_1$ corresponds to the edge sublattice) with orientation pointing along direction $i$ of the Lieb lattice, set $d_{R,v}^{(i)}/d_{L,v}^{(i)} = D/f - 1$ if $x_i \leq N/2$ and $d_{R,v}^{(i)}/d_{L,v}^{(i)} = (D/f - 1)^{-1}$ if $x_i > N/2$.  This creates a mountain-type landscape with a uniform bias.
\end{example}

For the mountain hierarchical graph, we may observe the following fact:
\begin{proposition}
    The mountain hierarchical graph satisfies the gauge constraint and thus admits a zero mode supported on $\Delta_0$.
\end{proposition}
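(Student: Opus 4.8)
The plan is to reduce the gauge condition of \defn{gauge} to a finite check on a cycle basis and then invoke \lem{gauge-uniqueness}. First I would note that the $d$-dimensional Lieb lattice is a subdivision of the cubic lattice $L$ (each edge of $L$ carries one degree-two vertex of $\Delta_1$), so it has the same cycle space as $L$; a cycle basis of it consists of the octagons obtained by bisecting the four edges of each elementary square plaquette of $L$. By the observation at the end of \defn{gauge} it therefore suffices to show that the gauge product equals $1$ on each such octagon.

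Fix a plaquette of $L$ in directions $i,j$ with corners $a$, $b=a+e_i$, $c=a+e_i+e_j$, $d=a+e_j$, and let $m_{ab},m_{bc},m_{cd},m_{da}\in\Delta_1$ be the bisecting vertices, so the octagon is $a\to m_{ab}\to b\to m_{bc}\to c\to m_{cd}\to d\to m_{da}\to a$. Since $L$ is bipartite this loop has length $2n$ with $n$ even, so the factor $(-1)^n$ coming from the signs in \defn{gauge} is $1$ (and in any case all hoppings $t_{uv}=e_{uv}/\sqrt{s_us_v}$ are positive). Writing each factor as $t_{u,m}/t_{m,u'}=(e_{u,m}/e_{m,u'})\sqrt{s_{u'}/s_u}$ and multiplying the four around the closed octagon, the square-root prefactors telescope to $1$ (this holds for any balanced hierarchical graph), so the gauge product reduces to
\[
\frac{e_{a,m_{ab}}}{e_{m_{ab},b}}\cdot\frac{e_{b,m_{bc}}}{e_{m_{bc},c}}\cdot\frac{e_{c,m_{cd}}}{e_{m_{cd},d}}\cdot\frac{e_{d,m_{da}}}{e_{m_{da},a}}.
\]
For a direction-$i$ bisecting vertex $m$, its edge counts to the integer neighbours on its $-e_i$ and $+e_i$ sides are $d_{L,m}^{(i)}s_m$ and $d_{R,m}^{(i)}s_m$; the octagon crosses $m_{ab}$ in the $+e_i$ sense (contributing $d_{L,m_{ab}}^{(i)}/d_{R,m_{ab}}^{(i)}$) and $m_{cd}$ in the $-e_i$ sense (contributing $d_{R,m_{cd}}^{(i)}/d_{L,m_{cd}}^{(i)}$), and similarly it crosses $m_{bc}$ in the $+e_j$ sense and $m_{da}$ in the $-e_j$ sense.

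Now the key point: $m_{ab}$ and $m_{cd}$ have the \emph{same} $i$-coordinate ($x_i+\tfrac12$ if $a$ has $i$-coordinate $x_i$), differing only in their $j$-coordinate, and the mountain construction assigns $d_{R,\cdot}^{(i)}/d_{L,\cdot}^{(i)}$ as a function of the $i$-coordinate alone (equal to $D/f-1$ below the ridge $x_i=N/2$ and $(D/f-1)^{-1}$ above it). Hence both vertices lie on the same side of the ridge and receive identical values, so $d_{L,m_{ab}}^{(i)}/d_{R,m_{ab}}^{(i)}$ and $d_{R,m_{cd}}^{(i)}/d_{L,m_{cd}}^{(i)}$ are mutual reciprocals and cancel in the product above; likewise for the direction-$j$ pair $m_{bc},m_{da}$. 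Thus the gauge product around every elementary octagon is $1$, the gauge condition holds for the mountain hierarchical graph, and \lem{gauge-uniqueness} yields a unique zero mode with support exclusively on $\Delta_0$.

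The only place requiring any care is the orientation bookkeeping in the second paragraph — correctly identifying, when the octagon runs through a bisecting vertex \emph{against} the coordinate direction, which of $e_{u,m},e_{m,u'}$ equals $d_{L}s_m$ and which equals $d_R s_m$ — together with the (easy) observation that the two parallel same-direction edges of a plaquette always sit at one common coordinate level and hence never straddle the ridge $x_i=N/2$. An equivalent and shorter route is to exhibit the height field directly: setting $\Phi_{(x_1,\dots,x_d)}=\log(D/f-1)\cdot\sum_{k=1}^{d}\min(x_k,N-x_k)$, one checks $\Phi_{v+e_i}-\Phi_v=\log r^{(i)}_v$ for the mountain ratios $r^{(i)}_v=d_{R}^{(i)}/d_{L}^{(i)}$, so the $\log r$'s form a lattice gradient, which is exactly the gauge condition in the form \eqn{heighttorat}; \lem{gauge-uniqueness} again concludes.
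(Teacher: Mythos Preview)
The paper states this proposition without proof, treating it as immediate from the construction. Your argument is correct and is exactly the natural verification the paper leaves implicit: because the mountain assignment of $d_{R,v}^{(i)}/d_{L,v}^{(i)}$ depends only on the $i$-coordinate of the bisecting vertex $v\in\Delta_1$, the two parallel direction-$i$ edges of any elementary plaquette carry identical ratios, so the gauge product over each octagon telescopes to $1$; checking on this cycle basis suffices by the remark in \defn{gauge}, and \lem{gauge-uniqueness} (or its $d$-dimensional analogue used in \sec{highdim}) then yields the $\Delta_0$ zero mode. Your alternative via an explicit height field is the same content repackaged and is also fine, modulo a harmless boundary convention at $x_i=N/2$ that does not affect the plaquette cancellation.
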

Next, we want to add fluctuations to this mountain hierarchical graph so that the gauge constraint is still satisfied and the limit distribution describing fluctuations is a Gaussian free field.

\begin{example}
    Split the $d$-dimensional Lieb lattice into $d+1$ hypercubic sublattices.

    Assign a ratio $\delta_e \in S_f$ to each edge $e$ of each hypercubic sublattice.  These are assigned subject to the constraint that taken counterclockwise around each 2-cell of each hypercubic sublattice, $\delta_{e_1}\delta_{e_2}\delta_{e_3}^{-1}\delta_{e_4}^{-1} = 1$ ($e_1$ and $e_3$ are the bottom and top edges, $e_2$ and $e_4$ are the right and left edges).  Alternatively, $\log \delta_{e_1} + \log \delta_{e_2} - \log \delta_{e_3} - \log \delta_{e_4} = 0$ around each 2-cell.

    Having randomly selected a configuration $\delta_e$, modify the values of $d_{R,v}^{(i)}/d_{L,v}^{(i)}$ of the mountain landscape by multiplying them by the corresponding $\delta_e$.

    Finally, set $s_{(0,0,\cdots,0)}=s_{(N,N,\cdots,N)}=1$ and $e^{(i)}_{(0,0,\cdots,0)}=e^{(i)}_{(N,N,\cdots,N)}=2D$, and use the edge-edge ratios constructed above to determine the superedge sizes and supervertex sizes.
\end{example}

\begin{figure}
    \centering
    \includegraphics[scale=0.5]{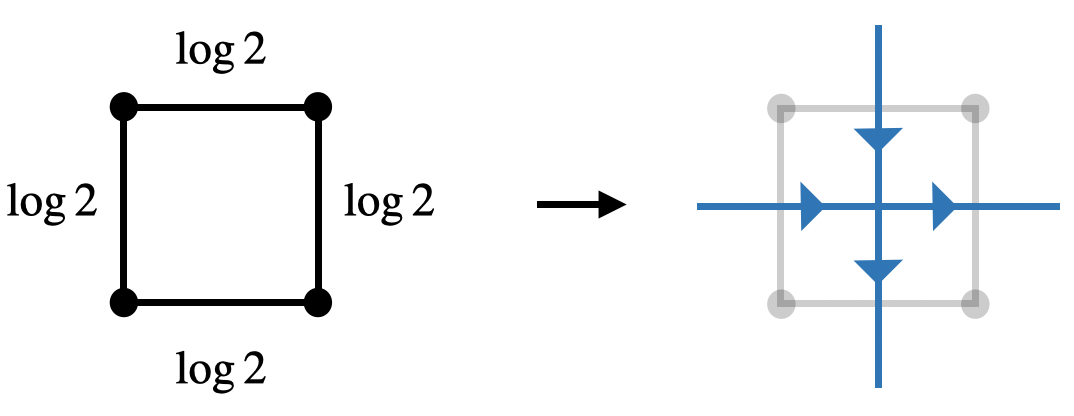}
    \caption{The ``reference'' configuration mapping a configuration of log edge-edge ratios to a configuration of arrows in square ice.  Negating $\log 2 \to - \log 2$ on an edge reverses the direction of the arrow crossing the edge.  The gauge condition is mapped onto configurations of arrows satisfying a zero flux condition.}
    \label{fig:icerule}
\end{figure}

As an explicit example, we consider a 2D Lieb lattice, where the 2-cells and the assignment of the ratios $\delta_e$ for each of the 3 square sublattices is depicted in \fig{2D-graph-example}.  For simplicity, we can choose $D = 30 = 2\cdot3\cdot 5$.  If $f = 10$, then the corresponding edge-edge ratio is $2$ in the rising part of the mountain landscape and $1/2$ in the falling part.  Next, we consider the set $S_{10} = \{2, 1/2\}$.  We want to assign either $\log 2$ or $-\log 2$ to the log edge-edge ratios subject to the gauge constraint.  If all edges are assigned log ratio $\log \delta_e = \log 2$ on all edges of a 2-cell, we identify this assignment with a reference configuration of ``arrows'' on a dual lattice, which is depicted in \fig{icerule}.  Negating a log ratio reverses the direction of the arrow.  Any configuration of log edge-edge ratios satisfying the gauge condition around a 2-cell is equivalent to the zero net flux of arrows into the 2-cell.  Thus, we therefore randomly select a configuration of arrows satisfying a zero-flux condition on all 2-cells, convert this to an assignment of edge-edge ratios, and modify the mountain hierarchical graph with this configuration.  The set of all zero flux configurations define a statistical mechanics model known as \emph{square ice}.  Viewed as a probability distribution which is uniform over all zero-flux configurations, there is significant analytical and numerical evidence that square ice has a well-behaved scaling limit to a Gaussian free field \cite{sheffield2005random}.  The square ice model also admits an exact solution due to Lieb \cite{lieb1967residual}.

Another example corresponds to $D = 99450 = 2\cdot 5\cdot 9\cdot 17 \cdot 65$ and $f = 11050 = 2\cdot 5 \cdot 17 \cdot 65$.  Then, $S_{f} = \{8,2, 1/2,1/8\}$.  The gauge condition can be satisfied from sampling uniformly from configurations satisfying a ``dimer'' constraint on the dual lattice (i.e. dimers are assigned to edges and a single dimer touches each site).  We illustrate a mapping from the dimer model to a configuration of edge-edge ratios in \fig{dimer}.   The dimer model (also known as the domino tiling model) is a famous model of statistical mechanics which admits an exact solution for the partition function and correlation functions (see Kasteleyn, Temperley, and Fisher \cite{kasteleyn1961statistics,temperley1961dimer}).   Kenyon has also rigorously proved that the dimer model has a scaling limit to a Gaussian free field \cite{kenyon}. Thus, this assignment of edge-edge ratios is also suitable to use for fluctuations about the mountain hierarchical graph.

\begin{figure}\label{fig:dimer}
    \includegraphics[scale=0.45]{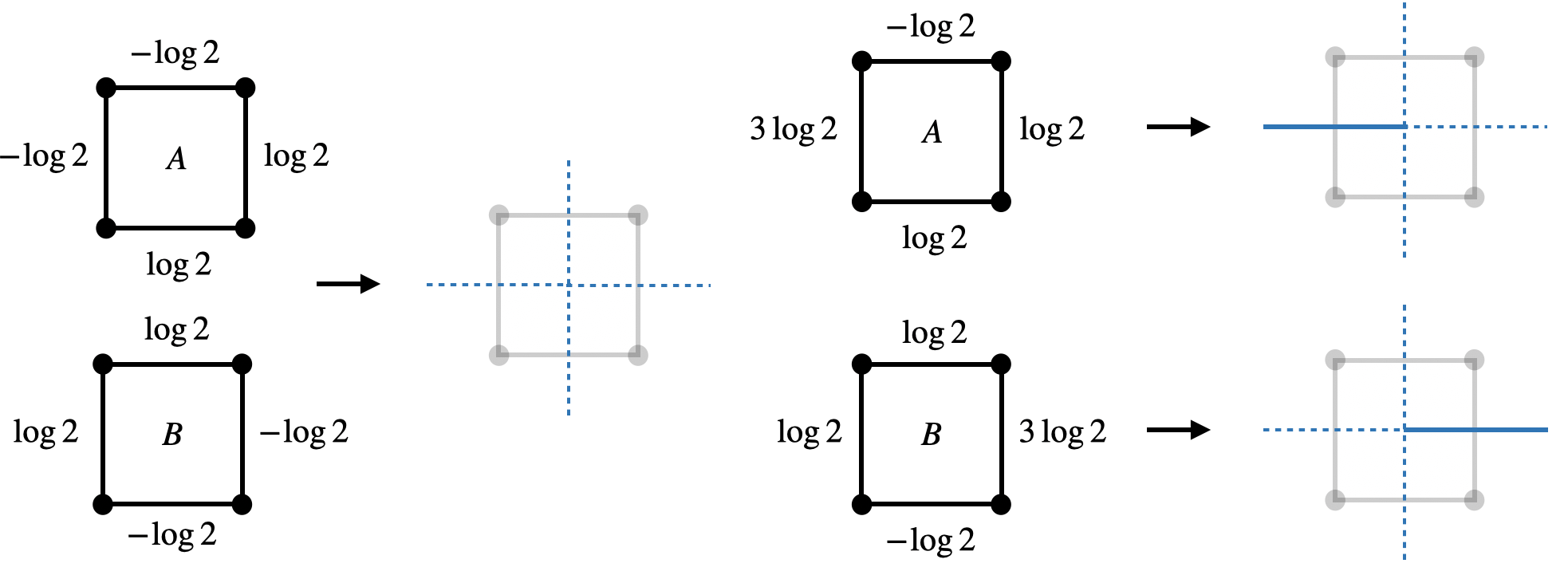}
    \caption{On the left, a ``reference'' configuration denoting a site with no dimer and the corresponding edge-edge ratios on the A and B sublattice 2-cells.  Of course, this reference configuration \emph{does not ever appear} because of the single dimer per site constraint, but is helpful for constructing the height fields in the presence of dimers.  On the right, we discuss how the edge-edge ratios change in the presence of a dimer.  Adding a dimer to an edge whose edge-edge ratio is $-\log 2$ changes it to $3\log 2$; adding a dimer to an edge whose edge-edge ratio is $\log 2$ changes it to $-3\log 2$.  This change preserves the gauge condition.}
\end{figure}

Note that, by an analogous discussion to the 1D case, all of the superedges and supervertices will have an integer number of edges and vertices, respectively.  We expect large classes of fluctuations realized over the mountain hierarchical graph to have a Gaussian free field scaling limit \cite{sheffield2005random}.

\subsection{Hierarchical graphs from sparsification}\label{sec:construction-sparse}
In this section, we describe a different and more general framework for constructing a hierarchical
random graph ensemble.  Unlike the previous sections in which the supergraph is a regular lattice, this approach will allow an arbitrary choice of supergraph.  The idea is to construct a densely connected hierarchical graph with the desired structure, and then to ``sparsify'' it, namely to construct a lower-degree graph that has approximately the same spectrum.  The resulting graphs will no longer have an exact supervertex subspace, and this allows us to generalize beyond the more rigid balanced constraint required in the previous sections.

We will start with a set of vertices $V$ which correspond to the
supervertices in the hierarchical graph we will construct.  Let $\bbC^V \cong \bbC^{|V|}$ denote the corresponding Hilbert space.
We also have a matrix of hopping amplitudes $t_{uv}\geq 0$ for $u,v \in V$, which
corresponds to the effective Hamiltonian we will obtain in the supervertex
subspace.  (The matrix $t$ should be symmetric but we do allow
non-zero diagonal entries; this latter case will be discussed further
in \sec{Anderson}.)  We claim that from this data it is possible to
construct a weighted hierarchical graph with the desired effective
Hamiltonian.

\begin{lemma}
  Given an effective Hamiltonian $t$ acting on $\bbC^V$, we can construct a dense hierarchical graph $\mathcal{G}=(\cV,\cE)$, with corresponding weighted adjacency matrix $A_{\cG}$ such that:
  \bit
\item $\cG$ is $\lambda$-regular for some constant $\lambda$.  In other words, each row and column of $A_{\cG}$ sums to $\lambda$;
\item  There exists an exact supervertex subspace such that $A_{\cG}$ restricted to this subspace is isospectral with $t$.  In terms of the projector $\cP$ onto this subspace this means that $\cP A_\cG \cP$ is isospectral with $t$ and $[\cP,A_\cG]=0$.
  \eit
\end{lemma}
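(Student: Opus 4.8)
The plan is to reverse-engineer the hierarchical graph directly from $t$, choosing the supervertex sizes so that regularity comes out for free. I would let each supervertex $S_v$ be a set of $s_v$ vertices (with the $s_v$ fixed below), join $S_u$ to $S_v$ by a \emph{complete bipartite} graph in which each of the $s_u s_v$ edges carries weight $t_{uv}/\sqrt{s_u s_v}$, and put a self-loop of weight $t_{vv}$ on every vertex of $S_v$ (omit this step if $t$ has zero diagonal). This graph is balanced in the sense of \defn{balancedhier}: every vertex of $S_v$ sees total weight $t_{vw}\sqrt{s_w/s_v}$ going to $S_w$. Consequently the supervertex subspace $\cP=\spn\{\ket{S_v}\}$ is invariant, $A_\cG\ket{S_v}=\sum_w t_{vw}\ket{S_w}$ (the $w=v$ term coming from the self-loops), so $[\cP,A_\cG]=0$ and in the orthonormal basis $\{\ket{S_v}\}$ the operator $\cP A_\cG\cP$ is represented by exactly the matrix $t$; in particular it is isospectral with $t$. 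Density is not an issue: multiplying all $s_v$ by a common integer $M$ multiplies the edge count by $M^2$ and leaves the effective Hamiltonian unchanged, so the graph can be made as dense as we please.

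The one genuine constraint is $\lambda$-regularity. The weighted degree of a vertex $\alpha\in S_v$ is $\sum_{w\neq v}t_{vw}\sqrt{s_w/s_v}+t_{vv}$, and
\begin{equation}
\sum_{w\neq v} t_{vw}\sqrt{s_w/s_v} + t_{vv} = \lambda \quad\Longleftrightarrow\quad \sum_{w} t_{vw}\sqrt{s_w} = \lambda\sqrt{s_v},
\end{equation}
so this is a constant $\lambda$ independent of $v$ exactly when the vector with entries $\sqrt{s_v}$ is an eigenvector of $t$ with eigenvalue $\lambda$. Because $t$ is the (weighted) adjacency matrix of a connected supergraph it is nonnegative and irreducible, so Perron--Frobenius supplies a strictly positive eigenvector $\phi$ with eigenvalue $\lambda=\rho(t)$, the spectral radius. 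I would therefore set $s_v=(M\phi_v)^2$ for a large parameter $M$, which makes the graph dense and exactly $\rho(t)$-regular. Note $\lambda=\rho(t)\ge t_{vv}$ for every $v$ by the Rayleigh quotient applied to the $v$th standard basis vector, so the inter-supervertex share $\lambda-t_{vv}$ of the degree is nonnegative and no step asks for a negative edge weight.

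The main obstacle is integrality: the Perron eigenvector generically has irrational entries, so $s_v=(M\phi_v)^2$ need not be an integer cardinality. Two remedies are available. First, in the weighted sense of ``hierarchical graph'' used throughout this section one may let a ``vertex'' of $S_v$ carry a positive real multiplicity, in which case the construction above is exact as stated. Second, and this is what the sparsification step in the next subsection actually needs, one can round: replace $\phi$ by a nearby rational vector $\hat\phi$ (equivalently, run Perron--Frobenius on a tiny perturbation $\hat t$ of $t$), clear denominators to get integer sizes $\hat s_v\propto\hat\phi_v^2$, and build the honest graph for $\hat t$; it is exactly $\lambda$-regular and its effective Hamiltonian lies within arbitrarily small operator-norm distance of $t$. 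Since sparsification preserves the spectrum only up to $\eps$ anyway, I would state the exact version for weighted-multiplicity graphs and the $\eps$-approximate version for honest graphs. The only remaining case worth a remark is a disconnected supergraph, which is handled componentwise (rescaling weights per component to align the Perron eigenvalues) or simply excluded, connected supergraphs being the case of interest.
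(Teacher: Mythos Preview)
Your argument is correct and matches the paper's proof almost exactly: both pick the supervertex sizes from the Perron--Frobenius eigenvector of $t$, connect supervertices by weighted complete bipartite graphs with edge weight $t_{uv}/\sqrt{s_us_v}$, and read off $\lambda$-regularity and the invariant supervertex subspace directly. The one substantive difference is your handling of the diagonal: you put a self-loop of weight $t_{vv}$ on each vertex, whereas the paper instead places edges of weight $e_{uu}/(s_u(s_u-1))$ between all \emph{distinct} pairs inside $S_u$, which keeps $A_\cG$ free of self-loops---a property they later exploit in the Birkhoff--von Neumann sparsification (where $A_{x,x}=0$ is used to rule out $1$-cycles). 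Your integrality discussion is a useful addition the paper omits.
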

\begin{proof}
We first choose supervertex sizes in $\mathcal{G}$ to be $s_v = N \pi(v)$ where $\pi$ is a probability distribution and $N$ is the total number of vertices.   We determine $\pi$, and thereby $s$, according to the eigenvalue equation
\be t \sqrt{\pi} = \lambda \sqrt{\pi} \label{eq:t-pi-relation}\ee
where $\lambda>0$ is the top eigenvalue which is guaranteed to be unique by the Perron-Frobenius theorem.
From $s$ we can also choose
\be
e_{uv} \triangleq t_{uv} \sqrt{s_us_v} = t_{uv} N \sqrt{\pi(u)\pi(v)}.
\ee

Note that Eqn.~\eq{t-pi-relation} guarantees that
\be
\sum_{v\in V} e_{uv} = \lambda s_u\label{eq:reg-mu}.
\ee
Thus we can create $\mathcal{G}$ by adding an edge between each $(u,x)\in S_u$ and each $(v,y)\in S_y$ with weight $\frac{e_{uv}}{s_u s_v}$.  The weighted degree of a vertex $(u,x)$ is
\be
\sum_{v\in V}\sum_{y\in S_v}\frac{e_{uv}}{s_u s_v}
= \sum_{v\in V}\frac{e_{uv}}{s_u}
 = \lambda,
\ee
by Eqn.~\eq{reg-mu}.
If $t_{uu}\neq 0$ then we add edges between all pairs $(u,x),(u,y)\in S_u$, each with weight $\frac{e_{uu}}{s_u(s_u-1)}$.
Let $A_\mathcal{G}$ (we will drop the subscript $\mathcal{G}$ in what follows) denote the weighted adjacency matrix of the hierarchical graph described above.  Each row and column of $A$ sums to $\lambda$, the entries are nonnegative with zeroes on the diagonal, and the supervertex subspace is an invariant subspace.
When we restrict $A$ to the supervertex subspace we obtain the matrix $t$.  While $t$ may have had diagonal entries, $A$ does not, since the graph does not have any self-loops.
\end{proof}

We now discuss the performance of the quantum walk on the sparsified graph relative to that of the quantum walk on the dense graph:
\begin{proposition}
Suppose the quantum walk implemented with Hamiltonian $H = -A$ requires time $T$ and satisfies
\begin{equation}
\mathbb{P}(\text{EXIT}) = \frac{1}{T} \int_0^T dt\,|\mel{S_{\text{exit}}}{e^{-i H t}}{S_{\text{init}}}|^2 \geq p.
\end{equation}
Then the quantum walk implemented with Hamiltonian $\widetilde{H} = -\tilde{A}$ corresponding to the sparsified graph satisfies 
\begin{equation}
\mathbb{P}(\text{EXIT}) = \frac{1}{T} \int_0^T dt\,|\mel{S_{\text{exit}}}{e^{-i \widetilde{H} t}}{S_{\text{init}}}|^2 \geq \frac{p}{2}.
\end{equation}
if $\norm{A - \tilde A} \leq \frac{p}{4T}$.
\end{proposition}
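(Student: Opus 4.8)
The plan is to reduce the statement to a single operator-norm perturbation estimate for the two unitaries and then push it through to transition probabilities; the argument is essentially soft, requiring only Duhamel's formula and Cauchy--Schwarz.

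\textbf{Step 1: a Duhamel estimate.} Writing $H=-A$, $\widetilde H=-\tilde A$, and $f(s)=e^{-iHs}e^{-i\widetilde H(t-s)}$ for $s\in[0,t]$, one computes $f'(s)=-i\,e^{-iHs}(H-\widetilde H)e^{-i\widetilde H(t-s)}$, so that
\begin{equation}
e^{-iHt}-e^{-i\widetilde Ht}=f(t)-f(0)=-i\int_0^t e^{-iHs}(H-\widetilde H)e^{-i\widetilde H(t-s)}\,\dd s .
\end{equation}
Since $e^{-iHs}$ and $e^{-i\widetilde H(t-s)}$ are unitary, taking norms gives $\norm{e^{-iHt}-e^{-i\widetilde Ht}}\le t\,\norm{H-\widetilde H}=t\,\norm{A-\tilde A}$. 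Using the hypothesis $\norm{A-\tilde A}\le p/(4T)$, for every $t\in[0,T]$ we get $\norm{e^{-iHt}-e^{-i\widetilde Ht}}\le p/4$.

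\textbf{Step 2: transfer to amplitudes.} The supervertex states $\ket{S_{\text{init}}}$ and $\ket{S_{\text{exit}}}$ are unit vectors, so with $a(t)\triangleq\big|\mel{S_{\text{exit}}}{e^{-iHt}}{S_{\text{init}}}\big|$ and $\tilde a(t)\triangleq\big|\mel{S_{\text{exit}}}{e^{-i\widetilde Ht}}{S_{\text{init}}}\big|$, the reverse triangle inequality together with Step 1 yields $|\tilde a(t)-a(t)|\le\norm{e^{-iHt}-e^{-i\widetilde Ht}}\le p/4$ for $t\in[0,T]$. I then claim the pointwise bound $\tilde a(t)^2\ge a(t)^2-p/2$. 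Indeed, if $a(t)\ge p/4$ then $\tilde a(t)\ge a(t)-p/4\ge 0$, hence $\tilde a(t)^2\ge a(t)^2-\tfrac{p}{2}a(t)+\tfrac{p^2}{16}\ge a(t)^2-\tfrac{p}{2}$, using $a(t)\le 1$; if instead $a(t)<p/4$ then $a(t)^2-\tfrac p2<\tfrac{p^2}{16}-\tfrac p2\le 0\le\tilde a(t)^2$, using $p\le 1$.

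\textbf{Step 3: integrate.} Averaging the pointwise inequality over $t$ uniform on $[0,T]$,
\begin{equation}
\mathbb{P}(\text{EXIT})_{\widetilde H}=\frac1T\int_0^T \tilde a(t)^2\,\dd t\ \ge\ \frac1T\int_0^T a(t)^2\,\dd t-\frac p2\ \ge\ p-\frac p2=\frac p2 ,
\end{equation}
which is the claimed bound. The only place requiring any care is Step 2: because $a(t)-p/4$ may be negative one cannot simply square the amplitude inequality, and the case split plus the crude bounds $a(t)\le 1$, $p\le 1$ are what fix this. Worth emphasizing is that the Duhamel estimate in Step 1 is purely operator-norm based, so it does \emph{not} require $H$ and $\widetilde H$ to share an invariant subspace --- exactly what makes it applicable to the sparsified graph, whose supervertex subspace is only approximately invariant.
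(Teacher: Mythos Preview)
Your proof is correct and follows essentially the same route as the paper's: both use the Duhamel estimate $\norm{e^{-iHt}-e^{-i\widetilde Ht}}\le t\,\norm{A-\tilde A}$ and then transfer this to the averaged transition probability. The only cosmetic difference is in the squaring step: the paper expands $|a+b|^2\ge |a|^2-2|a||b|\ge |a|^2-2|b|$ directly (using $|a|\le 1$), which sidesteps your case split, and then integrates the resulting $t$-dependent error term; your route via $|\tilde a(t)-a(t)|\le p/4$ and the two cases is slightly longer but equally valid.
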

\begin{proof}
We write
\begin{align}
\frac{1}{T} \int_0^T dt\,|\mel{S_{\text{exit}}}{e^{-i \widetilde{H} t}}{S_{\text{init}}}|^2 &= \frac{1}{T} \int_0^T dt\,\abs{\mel{S_{\text{exit}}}{e^{-i H t}}{S_{\text{init}}} + \mel{S_{\text{exit}}}{e^{-i \widetilde{H} t} - e^{-i H t}}{S_{\text{init}}}}^2 \nonumber \\
&\geq \frac{1}{T} \int_0^T dt\,\abs{\mel{S_{\text{exit}}}{e^{-i H t}}{S_{\text{init}}}}^2 - \frac{2}{T} \int_0^T dt\,\abs{\mel{S_{\text{exit}}}{e^{-i \widetilde{H} t} - e^{-i H t}}{S_{\text{init}}}} \nonumber \\
&\geq \frac{1}{T} \int_0^T dt\,\abs{\mel{S_{\text{exit}}}{e^{-i H t}}{S_{\text{init}}}}^2 - 2 \norm{ H - \tilde H} T.
\end{align}
Thus, with $\norm{ H - \tilde H} = \norm{A - \tilde A} \leq \frac{p}{4T}$, the proof is finished.
\end{proof}

\subsubsection{Two approaches to sparsification}
We will now present two approaches for sparsifying to obtain $\widetilde{A}$.  Sparsification is necessary because implementing $e^{-iHt}$ is not efficient on a quantum computer due to the large degree of the unsparsified graph.  The above proposition shows that so long as $\norm{A - \tilde A} \leq \frac{p}{4T}$, then the exit probability of the sparsified graph is similar to that of the unsparsified graph.  We will see that this can be achieved by sparsifying to a graph with degree that is polynomial in $T / \sqrt{p}$.

{\em 1. Poisson sparsification.}  One method for sparsification results
in graphs that are {\it not} regular, but where each vertex's degree is drawn
from the same distribution.  We will choose a parameter $D>0$ as the {\em average} degree.
Between large supervertices, include edge
$(u,x)$--$(v,y)$ with probability $ \frac{e_{u,v}}{s_us_v}\cdot \frac{D}{\lambda} = \frac{t_{uv}}{\sqrt{s_us_v}}\cdot \frac{D}{\lambda}$.  Each vertex
then has a degree distribution that is close to Poisson with parameter
$ D$. We will see below that the accuracy of this approximation increases with $D$.  With high probability all of the nodes will have degree $\leq 3D$ (we provide a more rigorous justification of this later in the subsection).  Therefore, Hamiltonian simultation will still be efficient for Poisson sparsification.

In order to show that the quantum algorithm can run efficiently, we will need an upper bound on the degree of any vertex.  (In principle, we could accept a small fraction of vertices with very high degree, but the Poisson distribution concentrates well enough that this won't be necessary.)  We discuss this along with the rest of the mathematical analysis of this technique below.

{\em 2. Birkhoff-von Neumann sparsification.}  Another approach to sparsifying will construct a regular graph with a desired degree $D$.
By the Birkhoff-von Neumann theorem, we can write $A = \lambda \sum_P \mu(P) P$ where $P$ ranges over permutations and $\mu$ is a probability distribution over permutations.  Since $A$ is symmetric we can assume that $\mu(P) = \mu(P^{-1})$, or equivalently, write $A$ as
\be
A = \lambda \sum_P \mu(P) \frac{P + P^{-1}}{2}.
\label{eq:Birkhoff-A}\ee
Now we sample $P_1,\ldots,P_D\sim \mu$ and define the $2D$-regular graph with adjacency matrix
\be
\widetilde{A} =\frac{\lambda}{2D} \sum_{i=1}^D P_i + P_i^{-1}.
\ee
Up to the overall scaling $\frac{\lambda}{2D}$ this is an unweighted graph.
This gives an ensemble of random graphs satisfying our requirement.

However, this graph can have multiple edges between the same vertices; we refer to such graphs as ``multi-graphs''.  This can happen  due to two possible scenarios:
\begin{itemize}
\item if $P_i(v)=P_j(v)$ for some $i\neq j$ (or $P_i(v)=P_j^{-1}(v)$, etc.)
\item if any of the $P_i$ contains a 2-cycle, since $P_i$ and $P_i^{-1}$ each contribute to the same edge
\end{itemize}
The presence of these multi-edges does not affect the convergence of $\tilde A$ to $A$ in operator norm (as we will discuss below) but it might affect the classical lower bound.  However, we want to argue that the multi-edges are sufficiently rare in the large supervertices and only prevalent inside small supervertices.  As a result, by paying only a small change in operator norm we can get rid of multi-edges in large supervertices, thereby resulting in a random graph ensemble that classical algorithms cannot traverse.  As a result, we will perform a process we call {\it restructuring} to obtain a graph with no multi-edges in large supervertices and will pay a small cost in operator norm.

{\em Restructuring in Birkhoff-von Neumann sparsification. }First, we consider the case where some of the $P_i$ have one or more
2-cycles.  These edges have doubled amplitudes as compared to other
edges not involved in 2-cycles.  The first step in addressing this is
to replace $\mu$ with a symmetrized distribution $\mu_{\text{sym}}$.   Let $\pi$ be a uniformly random permutation of $\cV$ subject to the condition that $\pi(S_v)=S_v$ for  each $v\in V$; in other words, it permutes the vertices within each supervertex.  Then we obtain a sample from $\mu_{\text{sym}}$ by
drawing $\pi$ in this way,  $P$ from $\mu$ and outputing $\pi P \pi^{-1}$.
For simplicity, we can take $\mu
\triangleq \mu_{\text{sym}}$, since $\mu_{\text{sym}}$ is also a valid
solution to Eqn.~\eq{Birkhoff-A}.

Each permutation $P$ can be written in terms of its cycle
decomposition as $(P_2,P_{>2})$ where $P_2$ denotes the 2-cycles and
$P_{>2}$ denotes the remaining cycles.  Since $A_{x,x}=0$ we must have
$\mu(P)=0$ for any $P$ containing a 1-cycle, so we do not need to
consider any permutations with 1-cycles.  Let $\cV(P_2)$ denote the vertices acted on by $P_2$.

Next, we will construct a bijection $f$ on the set of permutations such that $\cV(P_2)=\cV(f(P)_{2})$ and $P_{>2}=f(P)_{>2}$ but $P_2$ and $f(P)_2$ share very few edges.  To construct this matching, we will produce a permutation $\pi$ acting on $\cV(P_2)$ and depending only on $\cV(P_2)$.  Then we will define $f(P)= \pi P \pi^{-1}$. This notation leaves implicit the dependence of $\pi$ on $\cV(P_2)$; however, since $\pi$ does not change $\cV(P_2)$, the map is indeed a bijection.  Defining $f$ in this way will guarantee  that $\mu_{\text{sym}}(f(P)) = \mu_{\text{sym}}(P)$.  This follows from the fact that the symmetrized distribution $\mu_{\text{sym}}$ is invariant under permuting nodes within a supervertex.  As a result, we may rewrite the adjacency matrix as
\begin{equation}
A = \lambda \sum_{P} \mu(P) \frac{P + P^{-1} + f(P) + f(P)^{-1}}{2}
\end{equation}
Again the RHS is a mixture of weighted  adjacency matrices of graphs that are 2-regular for all vertices except at most one pair per superedge.

It remains to construct $\pi$ from $P$.  For each superedge $e = (u,v)$, we define the vertices involved in 2-cycles between $u$ and $v$ to be $\cV(e,P_2)$.  We note that $\cV(e,P_2)$ and $\cV(e',P_2)$ for $e \neq e'$ are disjoint due to the 2-cycle property.  As a result of the disjointness, we can separately analyze $\cV(e,P_2)$ for each superedge $e$.   We will take $\pi$ to act as the identity on $S_u \cap \cV(e,P_2)$ and to permute $S_v \cap \cV(e,P_2)$ arbitrarily, as long as it has no fixed points.  If $|S_v \cap \cV(e,P_2)|>1$ then this will guarantee that the restrictions of $P$ and $\pi P \pi^{-1}$ to $S_u \times S_v$ have no edges in common.  If $|S_v \cap \cV(e,P_2)|=1$ then $P$ has a single 2-cycle between $S_u$ and $S_v$, and this will not be changed by the action of $\pi$. Define $c(P)$ to be the set of these unmatched 2-cycles; it will contain at most one 2-cycle per superedge. Thus this construction guarantees that for each superedge $(u,v)$ and each permutation $P$, there will be at most one common 2-cycle between $P$ and $f(P)$.   If there were no shared 2-cycles then
\be \frac{1}{2}(P + P^{-1} + f(P) + f(P)^{-1})  \label{eq:almost-2-regular}\ee
would equal the normalized adjacency matrix of a 2-regular undirected graph.  We will use this to obtain an almost regular graph that is close to our original graph in operator norm.

The second source of multi-edges is from sampling.  If two different samples from $\mu$ share an edge, then their sum will have a multi-edge.  At this point, it is too late to change the distribution we are sampling from, but we can simply argue that these multi-edges are rare.

We can bound the number of multi-edges from each source by using the symmetry of $\mu$.  Suppose we sample $P_1,\ldots,P_D$ from $\mu$.  Let $\widetilde P_i \triangleq (P_i + P_i^{-1} + f(P_i) + f(P_i)^{-1})/2$.  We will argue that $\sum_i \widetilde P_i$ is $4D$-regular almost everywhere.
For simplicity, we restrict our attention to a single superedge $e=(u,v)$.  The expected number of multi-edges in $S_u\times S_v$ has a contribution of $\leq D$ from the sets $c(P_1),\ldots,c(P_D)$.
Let the expected number of multi-edges in $S_u\times S_v$ from a single pair $\tilde P_i$ and $\tilde P_j$ be $C_{uv}(i,j)$.  When restricted to $S_u\times S_v$, $\widetilde P_i$ and $\widetilde P_j$ have degree $\leq 2$ everywhere, so each one has $\leq 2 \min(s_u,s_v)$ edges total.   Since there are $s_us_v$ possible edges, and $\mu$ is symmetric under permutations within each supervertex, the expected number of collisions is $\leq \frac{4 \min(s_u,s_v)^2}{s_us_v} \leq  4$.  Across all pairs $\tilde P_i,\tilde P_j$ this is $\leq \binom{D}{2}\cdot 4 = 2 D(D-1)$.  Together with the multi-edges from the $c(P_i)$, we have $\leq 2 D^2$ multi-edges per superedge in expectation.  By Markov's inequality and the union bound, the probability that {\em any} superedge has $\geq \frac{2 D^2 |E|}{\eps}$ multi-edges is $\leq \eps$.  Since this probability is over our random choice of graph, we can choose $\eps=1/2$ and just condition our choice of random graph on not having too many multi-edges within any superedge.  This conditioning will magnify the probability of any other event by at most a factor of 2, so it will not significantly change our analysis of rare events elsewhere, such as in the discussion of classical lower bounds.

Therefore, the hierarchical graph $\mathcal{G}$ has $O(D^2|E|)$ multi-edges per pair of supervertices with high probability.  We cannot rule out the existence of a classical algorithm that can identify these multi-edges, which will yield additional advice that our current lower bound is too weak to handle\footnote{Of course, since the density of multi-edges is $D^2 |E|/|\mathcal{E}_{uv}|$ between supervertices $u$ and $v$ and this is generically exponentially small, no classical algorithm is expected to benefit from knowing multi-edges and we conjecture that an exponential classical lower bound still holds.  However, we do not prove this and instead leave this to future work.}.  We will now perform a \emph{rewiring} of the hierarchical graph $\mathcal{G}$ which results in a new hierarchical graph $\mathcal{G}'$ with no multi-edges in the large supervertices.  Recall that if a classical algorithm makes $Q$ queries, the small supervertices are $u \in V$ such that $|\mathcal{V}_u| \leq Q/\delta$ where $\delta$ is the desired success probability of the classical algorithm.  Therefore, we only need to concern ourselves with removing multi-edges from supervertices $u$ satisfying $|\mathcal{V}_u| > Q/\delta$, without affecting the analysis of the classical lower bound.  In these large supervertices, we additionally want to argue that with high probability, all multi-edges are double-edges and not $k$-edges for $k > 2$.  To show this, we calculate that the probability of choosing a particular edge in $S_u\times S_v$ $k$ times is upper bounded by
\begin{equation}
\binom{D}{k} \left(\frac{2 \min(s_u, s_v)}{s_u s_v}\right)^k \leq \left(\frac{2 D \min(s_u, s_v)}{s_u s_v}\right)^k.
\end{equation}
 By a union bound, the probability that this holds for any edge in the superedge is $\leq s_u s_v \left(\frac{2 D \min(s_u, s_v)}{s_u s_v}\right)^k$.  If $D \ll \max(s_u, s_v)$ (i.e. $D \ll O(Q/\delta)$), then the probability that $k \geq 3$ within any supervertex goes like $O(\delta D^3/Q)$ which is small.  Union bounding over all supervertices, this probability is  $O(\delta D^3 |E|/Q)$, and with high probability the graph exclusively contains single or double-edges in large supervertices.  
 
The rewiring procedure will thus work as follows. For each double-edge connecting between supervertices $u$ and $v$, we can always find a single-edge (as there are only $O(D^2 |E|)$ such edges with high probability and we assume that the supergraph has significantly fewer nodes than the original graph) connecting between supervertices $v$ and $w$ \footnote{Additionally, we also require that the 4 nodes involved in the double and single edge do not also form a rectangle.}.  Call the vertices connected by the weight-2 edge $\alpha$ and $\beta$ and the vertices connected by the weight-1 edge $\gamma$ and $\delta$.  We will (a) remove the $(\gamma, \delta)$ edge, (b) change the $(\alpha, \beta)$ double-edge to be a single-edge, and (c) add single-edges $(\alpha, \gamma)$ and $(\beta, \delta)$.  One can verify that the degrees of the vertices do not change, but the double-edge disappears.  We repeat this for each of the double-edges making sure we choose single-edges that have not already been used for rewiring.  After rewiring is complete, we obtain a new graph $\mathcal{G}'$, whose spectral properties we need to compare to the old graph.  A single rewiring event changes the graph Hamiltonian by
\begin{equation}
A \to A' = A + \frac{\lambda}{2 D} \cdot \, \begin{blockarray}{ccccc}
\alpha & \beta & \gamma & \delta \\
\begin{block}{(cccc)c}
0 & -1 & 1 & 0 & \alpha\\
-1 & 0 & 0 & 1 & \beta\\
1 & 0 & 0 & -1 & \gamma\\
0 & 1 & -1 & 0 & \delta\\
\end{block}
\end{blockarray}
\end{equation}
and thus $\norm{A' - A} \leq \frac{\lambda}{D}$.  All other rewirings have similar corrections with disjoint supports, so after all rewirings, $\norm{A' - A} \leq \frac{\lambda}{D}$ still holds.  (Here we use the assumption that there are no triple or higher collisions.  Of course the argument could be extended to accommodate $o(D)$ collisions if necessary.)  So long as $\lambda/D$ is much smaller than the gap away from the zero-energy eigenstate in $t$, then (as we will argue), the quantum algorithm will still remain efficient.

\subsubsection{Quantum and classical walk performance on sparse graphs}
Recall from earlier in this subsection that if $A$ is the adjacency matrix of the dense graph and $\tilde A$ is the sparse approximation and we run the algorithm for time $T$ with success probability at least $p$ then the algorithm will still succeed on the sparsified graph if $\norm{A - \tilde A } \leq \frac{p}{4T}$.  For instance, in the 1D hierarchical graph models, we had $T = \exp(O(\sqrt{N}))$ and $p = \exp(-O(\sqrt{N}))$.  Thus, when sparsifying, we need to choose the degree $D$ large enough so that the $\norm{A - \tilde A } \leq \exp(-O(\sqrt{N}))$.  For higher dimensional hierarchical graphs (where for simplicity the dimension is $O(1)$), the same statement holds with $T$ and $1/p$ both $\text{poly}(N)$.  To determine the value of $D$ to pick, we will use the matrix Bernstein inequality, which is Theorem 1.4 of \cite{Tropp-LD}.

\begin{theorem}[matrix Bernstein inequality~\cite{Tropp-LD}]
  Let $X_1,\ldots,X_n$ be independent $d$-dimensional Hermitian matrices satisfying
  \ba \bbE[X_k] = 0, \hspace{0.5cm} \lambda_{\max}(X_k) \leq R, \hspace{0.5cm} \norm{\sum_{k=1}^n \bbE[X_k^2]} =   \sigma^2
  \ea
  Then for all $\delta\geq 0$,
  \be
 \mathbb{P}\left[\lambda_{\max}\qty(\sum_k X_k) \geq \delta\right] \leq d \cdot \exp(-\frac{\delta^2/2}{\sigma^2 + R\delta/3}).
  \ee
\end{theorem}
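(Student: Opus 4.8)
The plan is to prove this by the standard matrix Laplace transform method of Ahlswede--Winter and Tropp; since the statement is quoted verbatim from \cite{Tropp-LD} one could simply cite it, but here is the self-contained argument I would reconstruct. Fix $\theta>0$. Since the event $\lambda_{\max}(\sum_k X_k)\geq\delta$ forces $\operatorname{tr}\exp(\theta\sum_k X_k)\geq e^{\theta\delta}$, Markov's inequality gives
\begin{equation}
\mathbb{P}\Big[\lambda_{\max}\big(\textstyle\sum_k X_k\big)\geq\delta\Big]\leq e^{-\theta\delta}\,\bbE\Big[\operatorname{tr}\exp\big(\textstyle\sum_k\theta X_k\big)\Big].
\end{equation}

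The next step is to control the expected trace exponential. Using Lieb's concavity theorem (the map $M\mapsto\operatorname{tr}\exp(C+\log M)$ is concave on positive-definite $M$), Jensen's inequality, and the independence of the $X_k$, one peels the summands off one at a time to get $\bbE[\operatorname{tr}\exp(\sum_k\theta X_k)]\leq\operatorname{tr}\exp(\sum_k\log\bbE[e^{\theta X_k}])$. To bound each term, note that $\bbE[X_k]=0$ and $\lambda_{\max}(X_k)\leq R$ together with the scalar inequality $e^x-x-1\leq\frac{x^2}{R^2}(e^R-R-1)$ for $x\leq R$ (lifted to matrices via the spectral calculus) yield $\bbE[e^{\theta X_k}]\preceq I+g(\theta)\bbE[X_k^2]$ with $g(\theta)=(e^{\theta R}-\theta R-1)/R^2\leq\frac{\theta^2/2}{1-\theta R/3}$; since $\log(I+M)\preceq M$ this gives $\log\bbE[e^{\theta X_k}]\preceq g(\theta)\bbE[X_k^2]$.

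Combining, and using monotonicity of $M\mapsto\operatorname{tr}\exp(M)$ under the semidefinite order together with $\norm{\sum_k\bbE[X_k^2]}=\sigma^2$, we get $\bbE[\operatorname{tr}\exp(\sum_k\theta X_k)]\leq d\cdot e^{g(\theta)\sigma^2}$, hence
\begin{equation}
\mathbb{P}\Big[\lambda_{\max}\big(\textstyle\sum_k X_k\big)\geq\delta\Big]\leq d\exp\Big(-\theta\delta+\frac{\theta^2\sigma^2/2}{1-\theta R/3}\Big).
\end{equation}
Choosing $\theta=\delta/(\sigma^2+R\delta/3)$, which keeps $\theta R/3<1$, produces the claimed bound.

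The only ingredient beyond elementary eigenvalue calculus and a one-variable optimization is the subadditivity step, which rests on Lieb's concavity theorem; everything else is routine. In practice I would just invoke \cite{Tropp-LD} and move on to the application, namely bounding $\norm{A-\tilde A}$ for the Poisson and Birkhoff--von Neumann sparsifications by writing $A-\tilde A=\sum_k X_k$ as a sum of independent mean-zero Hermitian matrices (one per sampled edge or sampled permutation), estimating the per-term operator-norm bound $R$ and the variance proxy $\sigma^2=\norm{\sum_k\bbE[X_k^2]}$ in terms of the degree $D$, and then reading off the degree needed to make $\norm{A-\tilde A}\leq\frac{p}{4T}$ with high probability.
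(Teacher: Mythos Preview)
Your sketch is correct and is exactly the standard Tropp argument; the paper, however, does not prove this theorem at all---it simply quotes it from \cite{Tropp-LD} and applies it. So your instinct to ``just invoke \cite{Tropp-LD} and move on to the application'' matches the paper precisely, and the application you outline (writing $A-\tilde A$ as a sum of independent mean-zero Hermitian matrices indexed by sampled edges or permutations, then bounding $R$ and $\sigma^2$ in terms of $D$) is exactly what the paper does in the lemma immediately following.
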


We can use the matrix Bernstein inequality to obtain the following lemma:
\begin{lemma}
For Birkhoff-von Neumann sparsification of the supervertex Hamiltonian $A$, the sparsified Hamiltonian $\widetilde{A}$ satisfies $\norm{A - \widetilde{A}} \leq \frac{p}{4T}$ with high probability if the degree of $\widetilde{A}$ satisfies $D \geq k T^2 \lambda^2 p^{-2} \log |\mathcal{V}|$ for $k$ a large enough constant and $\lambda$ the largest eigenvalue of $A$.

For Poisson sparsification of the supervertex Hamiltonian $A$, the sparsified Hamiltonian $\widetilde{A}$ satisfies $\norm{A - \widetilde{A}} \leq \frac{p}{4T}$ with high probability if the degree of $\widetilde{A}$ satisfies $D \geq k' T^2 p^{-2} \log |\mathcal{V}|$ for $k'$ a large enough constant.
\end{lemma}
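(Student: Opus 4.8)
The plan is to write each sparsified matrix as $\widetilde A - A = \sum_i X_i$ for a family of independent, mean-zero Hermitian matrices $X_i$, and to invoke the matrix Bernstein inequality above with the target deviation $\delta = p/(4T)$; requiring the resulting failure probability $\lesssim |\mathcal V|\exp\!\big(-\tfrac{\delta^2/2}{\sigma^2+R\delta/3}\big)$ to be $|\mathcal V|^{-\Omega(1)}$ will pin down $D$. In both constructions the per-term bound is $R = \Theta(\lambda/D)$ and -- the point to be careful about -- the total matrix variance is $\sigma^2 = \Theta(\lambda^2/D)$, with \emph{no} dependence on $|\mathcal V|$. Given $R$ and $\sigma^2$, and using $\delta = p/(4T)\ll 1\le \lambda$ so that $R\delta/3$ is dominated by $\sigma^2$ in the denominator, the Bernstein bound together with a union bound over the two tails $\pm(\widetilde A - A)$ gives failure probability $\le 2|\mathcal V|\exp(-c\,\delta^2 D/\lambda^2)$, and solving $c\,\delta^2 D/\lambda^2 \ge C\log|\mathcal V|$ with $\delta = p/(4T)$ yields $D \gtrsim T^2\lambda^2 p^{-2}\log|\mathcal V|$.

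For Birkhoff--von Neumann sparsification I would start from Eqn.~\eq{Birkhoff-A} and set $X_i \triangleq \tfrac{\lambda}{2D}(P_i+P_i^{-1}) - \tfrac1D A$, with $P_1,\dots,P_D$ drawn i.i.d.\ from $\mu$ (or, after the symmetrization/restructuring described above, from $\mu_{\mathrm{sym}}$ together with its $f$-image, which has the same mean $A$); then $\mathbb E[X_i]=0$ and $\sum_i X_i = \widetilde A - A$. Since $\tfrac\lambda2(P_i+P_i^{-1})$ is symmetric with operator norm $\le\lambda$ and $\norm A = \lambda$, we have $\norm{X_i}\le 2\lambda/D =: R$. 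For the variance I would use the semidefinite bound $Y_i^2 \preceq \lambda^2\mathbbm{1}$ for $Y_i \triangleq \tfrac\lambda2(P_i+P_i^{-1})$, so that $\mathbb E[X_i^2] = \tfrac1{D^2}\big(\mathbb E[Y_i^2] - A^2\big)$ is positive semidefinite and $\preceq \tfrac{\lambda^2}{D^2}\mathbbm{1}$, whence $\sigma^2 = \norm{\sum_i \mathbb E[X_i^2]} \le \lambda^2/D$. Feeding these into the generic step gives $D \ge kT^2\lambda^2 p^{-2}\log|\mathcal V|$. As a lower-order matter I would note that the rewiring that removes multi-edges in large supervertices perturbs $A$ by at most $\lambda/D$ in operator norm (established earlier in this section), so it can be folded into the Bernstein constant.

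For Poisson sparsification I would index the summands by unordered vertex pairs $\{a,b\}$, with $a\in S_u$, $b\in S_v$ (plus within-supervertex pairs if $t$ has diagonal entries), inclusion probability $q_{ab}=\tfrac{t_{uv}}{\sqrt{s_us_v}}\cdot\tfrac D\lambda$, and edge contribution $\tfrac\lambda D$ when present, so that $X_{ab}\triangleq\tfrac\lambda D\big(\mathbbm{1}[\text{edge }ab]-q_{ab}\big)\big(\ket{a}\bra{b}+\ket{b}\bra{a}\big)$ are independent, mean-zero, and sum to $\widetilde A - A$ (using $\tfrac\lambda D q_{ab}=A_{ab}$). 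Here $\norm{X_{ab}}\le \lambda/D = R$, and since $(\ket{a}\bra{b}+\ket{b}\bra{a})^2 = \ket{a}\bra{a}+\ket{b}\bra{b}$ the variance matrix $\sum_{ab}\mathbb E[X_{ab}^2] = \tfrac{\lambda^2}{D^2}\sum_{ab} q_{ab}(1-q_{ab})\big(\ket{a}\bra{a}+\ket{b}\bra{b}\big)$ is diagonal, with $(a,a)$ entry at most $\tfrac{\lambda^2}{D^2}\sum_b q_{ab} = \tfrac{\lambda^2}{D^2}\sum_v s_v q_{uv} = \tfrac{\lambda^2}{D}$, the last equality being the regularity relation $\sum_v e_{uv}=\lambda s_u$ from Eqn.~\eq{reg-mu}; hence $\sigma^2\le\lambda^2/D$ and the generic step gives $D\ge k'T^2 p^{-2}\log|\mathcal V|$, absorbing the (constant) regularity parameter $\lambda$ into $k'$. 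Finally, for efficient simulability I would add a scalar Chernoff bound on each vertex degree (a sum of $\le |\mathcal V|$ independent Bernoullis of mean $D$) and a union bound over vertices to get that every degree is $\le 3D$ with high probability. The main obstacle I anticipate is precisely the Poisson variance estimate: one must use that $\sum_{ab}\mathbb E[X_{ab}^2]$ is diagonal and that its entries are controlled by $D$-regularity, since the naive bound $\sum_{ab}\norm{\mathbb E[X_{ab}^2]}$ runs over all $\binom{|\mathcal V|}{2}$ pairs and loses a factor of $|\mathcal V|$, destroying the threshold.
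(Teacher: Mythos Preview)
Your proposal is correct and follows essentially the same approach as the paper: decompose $\widetilde A - A$ as a sum of independent mean-zero Hermitian matrices, bound $R$ and $\sigma^2$ (crucially exploiting the diagonal structure of the variance matrix and the $D$-regularity relation in the Poisson case, exactly as you flag), and invoke matrix Bernstein with $\delta = p/(4T)$. The only cosmetic difference is that the paper works with the unscaled $\tilde A - \tfrac{D}{\lambda}A$ in the Poisson case and rescales at the end, whereas you scale edge weights by $\lambda/D$ upfront; the resulting bounds are identical.
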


As an example, when $T = \exp(O(\sqrt{N}))$ and the supergraph is a 1D line, we require that $D = N \exp(O(\sqrt{N}))$.  For higher dimensional supergraphs with $d = O(1)$, $T = N^{\gamma(d)}$ for some $\gamma(d)$ and we find that $D = O(N^{\gamma(d) + 1})$.

\begin{proof}
For the Birkhoff-von Neumann sparsification we have $d = |\mathcal{V}|$, $n=D$, $X_i = \frac{\lambda \widetilde{P_i} - A}{D}$ with $\tilde P_i = (P_i + P_i^{-1} + f(P_i) + f(P_i)^{-1})/2$.  Then $\tilde A = A + \sum_{i=1}^D X_i$ and $\bbE[X_i]=0$.  We can bound
\begin{equation}
  \lambda_{\max}(X_i) \leq \frac{\lambda}{2D}\norm{P_i + P_i^{-1} + f(P_i) + f(P_i)^{-1}} + \frac{\norm{A}}{D} \leq \frac{2 \lambda}{D}
  \qand \sigma^2 \leq \frac{4 \lambda^2}{D}
\end{equation}

If the quantum algorithm runs for time $T$ and we require $ \norm{\sum_i X_i} = \norm{A - \tilde A } \leq \frac{p}{4T}$, then we will choose $\delta = \frac{p}{4 T}$.  Substituting into the matrix Bernstein bound and assuming $D \gg \lambda$, we find that
\be
\mathbb{P}\left[\lambda_{\max}\qty(\sum_k X_k) \geq  \frac{p}{4 T}\right]
\leq |\mathcal{V}| \cdot \exp(-c \frac{p^2 D}{\lambda^2 T^2}).
  \ee
for some constant $c$.  If $D$ is large relative to $T^2\lambda^2a^{-2} \log(|\cV|)$ then most of the sparsified matrices drawn from this distribution will be close in norm to the true adjacency matrix.

For the Poisson sparsification, recall that the probability of placing an edge between vertices $(u,x)$ and $(v,y)$ is
\be
p_{u,v} \triangleq \frac{t_{uv}}{\sqrt{s_us_v}}\cdot \frac{D}{\lambda}
\ee
and that $\sum_v \sum_{y \in S_v} p_{u,v} = \sum_v p_{u,v}s_v=D$ for all $u$.  Let $\tilde A$ be the adjacency matrix of the graph that we construct via Poisson sparsification.  Since each vertex has expected degree $D$ in $\tilde A$ but weighted degree $\lambda$ in $A$, we will need to compare $\frac \lambda D \tilde A$ with $A$.
To analyze the spectrum of $\tilde A$, define the independent random variables
\be
X_{(u,x),(v,y)} \triangleq ( \ket{u,x}\bra{v,y} + \ket{v,y}\bra{u,x})  (\text{Bern}(p_{u,v}) - p_{u,v}),
\ee
where $\text{Bern}(\cdot)$ denotes a Bernoulli random variable.  We can write
\be
\tilde A - \frac{D}{\lambda} A = \sum_{(u,x),(v,y)} X_{(u,x),(v,y)}
\ee
This  ensures that $\bbE[X_{(u,x),(v,y)}]=0$ and $\lambda_{\max}(X) \leq 1$.  For $\sigma^2$ we calculate
\ba
\sum_{(u,x),(v,y)} \bbE[X_{(u,x),(v,y)}^2]  & =
\sum_{(u,x),(v,y)} (\dyad{u,x} + \dyad{v,y}) p_{u,v}(1-p_{u,v}) \nonumber \\
& \leq
\sum_{u,x} \sum_v 2 s_v p_{u,v} \dyad{u,x}
\nonumber \\ & = 2 D I
\ea
so $\sigma^2 \leq 2D$.
Matrix Bernstein now implies that for $\eps<1$,
\be
\mathbb{P}\left[\norm{\frac{\lambda}{D} \tilde A - A}\geq \eps\lambda\right]
\leq
|\cV| \cdot \exp( -\frac{D^2\eps^2/2}{2D + D\eps/3})
\leq
|\cV| \cdot \exp( -cD \eps^2).
\ee
Setting $\eps = p/4T$ we find a similar  bound to the Birkhoff-von-Neumann-based sparsification case.
\end{proof}

Note that to handle random sums of permutations (or more generally unitary matrices), it often suffices to use a simpler concentration method, such as the matrix Chernoff or Hoeffding bounds (see Thms 1.1-1.3 of \cite{Tropp-LD}).  These bounds would work equally well for the permutation-based sparsification.  However, these methods often perform badly on sums of low-rank matrices, which we encounter in  Poisson sparsification.  For example, the corresponding value of $\sigma^2$ arising in the matrix Hoeffding bound would scale with $\max_v s_v$.  In this case, the matrix Bernstein inequality is necessary, and for ease of exposition, we use it in both cases.

In the Birkhoff-von Neumann sparsification, we additionally had to perform a restructuring procedure to obtain matrix $A'$ which is $D$-regular and spectrally close to $\widetilde{A}$.  We therefore have
\begin{corollary}
After the restructuring procedure of the Birkhoff-von Neumann sparsification, the $D$-regular matrix $A'$ that is obtained satisfies $\norm{A' - A} \leq p/4 T$ with high probability.
\end{corollary}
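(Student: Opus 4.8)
The plan is to route the comparison through the intermediate (almost-$4D$-regular) multi-graph produced by the sampling step and bound two error contributions separately with the triangle inequality, $\norm{A' - A} \leq \norm{A' - \widetilde A} + \norm{\widetilde A - A}$. Here $\widetilde A$ is the adjacency matrix obtained from the symmetrized Birkhoff--von Neumann decomposition together with the $f$-substitution (so $\widetilde A = \frac{\lambda}{D}\sum_i \widetilde P_i$ with $\widetilde P_i = (P_i + P_i^{-1} + f(P_i) + f(P_i)^{-1})/2$, still a multi-graph), and $A'$ is the graph obtained after the multi-edge rewiring step. It then suffices to make each of the two terms at most $\frac{p}{8T}$.

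For the term $\norm{A' - \widetilde A}$ I would simply quote the operator-norm estimate already established in the restructuring discussion: a single rewiring event perturbs the adjacency matrix by a matrix supported on four vertices with operator norm $\leq \lambda/D$, and distinct rewiring events act on (essentially) disjoint vertex sets, so their contributions do not accumulate; hence $\norm{A' - \widetilde A} \leq \lambda/D$. This step relies on the high-probability structural facts proved just above --- that, after conditioning on not having too many multi-edges per superedge (probability $\geq 1/2$) and on the absence of triple-or-higher collisions inside large supervertices (probability $1 - O(\delta D^3 |E|/Q)$), every multi-edge inside a large supervertex is a double edge and there are only $O(D^2|E|)$ of them, so that enough fresh single edges are available to carry out all rewirings with pairwise essentially-disjoint supports.

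For the term $\norm{\widetilde A - A}$ I would apply the preceding Lemma (matrix Bernstein applied to $X_i = (\lambda \widetilde P_i - A)/D$) but with target accuracy $\frac{p}{8T}$ rather than $\frac{p}{4T}$; absorbing the extra constant into the constant $k$, this holds with probability $\geq 1 - |\cV|\exp(-c\, p^2 D / (\lambda^2 T^2))$ provided $D \geq k\, T^2 \lambda^2 p^{-2}\log|\cV|$, which is the hypothesis under which the Birkhoff--von Neumann sparsification was set up. With this same choice of $D$ one automatically also gets $\lambda/D \leq \frac{p}{8T}$, since $D \geq k\, T^2\lambda^2 p^{-2}\log|\cV| \geq 8\lambda T/p$ once $k$ is large and $T,\lambda,1/p,\log|\cV| \geq 1$. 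A union bound over the matrix-Bernstein failure event and the two structural failure events above keeps everything ``with high probability,'' and combining the two bounds through the triangle inequality yields $\norm{A' - A} \leq \frac{p}{8T} + \frac{p}{8T} = \frac{p}{4T}$.

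There is no genuinely deep step here; the corollary is an assembly of facts proved earlier in the subsection. The one point that deserves care --- and where I expect most of the work to lie --- is justifying the disjoint-support (or bounded-overlap) claim for the rewiring perturbations rigorously, so that the $\lambda/D$ bound on $\norm{A' - \widetilde A}$ is not degraded by a factor counting how many rewirings touch a given vertex; the structural bounds on the number and the type of multi-edges in large supervertices are exactly what make this controllable, and one should also check that the replacement single edges chosen for successive rewirings can always be taken fresh, which is immediate because the number of double edges is polynomially small compared with the number of single edges available in each large superedge.
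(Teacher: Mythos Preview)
Your proposal is correct and follows essentially the same route as the paper: triangle inequality through the intermediate sampled matrix $\widetilde A$, with $\norm{A'-\widetilde A}\leq \lambda/D$ from the rewiring bound and $\norm{\widetilde A-A}$ controlled by the matrix Bernstein lemma. The only cosmetic difference is that the paper writes the conclusion as $\norm{A'-A}\leq p/4T+\lambda/D\lesssim p/4T$ (absorbing $\lambda/D$ as lower order since $D\gg T^2\lambda^2 p^{-2}\log|\cV|$) rather than splitting the budget into two $p/8T$ pieces as you do.
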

\begin{proof} 
Follows from $\norm{A' - A} \leq \norm{A' - \widetilde{A}} + \norm{\widetilde{A} - A} \leq p/4T + \lambda/D  \lesssim p/4 T$ with high probability since $D \gg T^2 p^{-2} \log |\mathcal{V}|$.
\end{proof}
For the case of Poisson sparsification, we know that the degree distribution is random.  For the quantum algorithm to work, we need all of the nodes to have relatively low degree.  To ensure this, we need the following concentration property:
\begin{lemma}
Given a Poisson sparsified graph $\widetilde{A}$ with mean degree $D \gg \log |\mathcal{V}|$, with high probability no vertex has degree higher than $3D$.
\end{lemma}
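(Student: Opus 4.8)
The plan is to observe that the degree of any fixed vertex in the Poisson-sparsified graph is a sum of \emph{independent} Bernoulli random variables with mean exactly $D$, and then combine a multiplicative Chernoff bound with a union bound over the $|\mathcal{V}|$ vertices. No heavy machinery (in particular, not the matrix Bernstein inequality) is needed here; a scalar tail bound suffices.

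First I would fix a vertex $(u,x)\in S_u$ and write its degree in $\widetilde A$ as
\begin{equation}
\deg(u,x) \;=\; \sum_{v} \sum_{y\in S_v} B_{(u,x),(v,y)},
\end{equation}
where $B_{(u,x),(v,y)}\sim\mathrm{Bern}(p_{u,v})$ are mutually independent (and, in the case $t_{uu}\neq 0$, the within-supervertex edges $(u,x)$--$(u,y)$ contribute additional independent Bernoulli terms; these are handled identically). Recall from the construction of the Poisson sparsification that $\sum_v s_v\,p_{u,v}=D$, so $\mathbb{E}[\deg(u,x)]=D$ for every vertex.

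Next I would apply the standard multiplicative Chernoff bound: for a sum $X$ of independent $\{0,1\}$-valued random variables with mean $\mu$ and any $\delta>0$,
\begin{equation}
\mathbb{P}\big[X\geq(1+\delta)\mu\big]\;\leq\;\left(\frac{e^{\delta}}{(1+\delta)^{1+\delta}}\right)^{\!\mu}.
\end{equation}
Taking $\delta=2$ and $\mu=D$ gives
\begin{equation}
\mathbb{P}\big[\deg(u,x)\geq 3D\big]\;\leq\;\left(\frac{e^{2}}{27}\right)^{\!D}\;=\;e^{-cD},\qquad c=\ln 27-2>1.
\end{equation}
Then a union bound over all at most $|\mathcal{V}|$ vertices yields $\mathbb{P}\big[\max_v \deg(v)\geq 3D\big]\leq |\mathcal{V}|\,e^{-cD}$, which tends to $0$ whenever $D$ grows faster than $c^{-1}\log|\mathcal{V}|$, i.e.\ under the hypothesis $D\gg\log|\mathcal{V}|$ (and is $\leq 1/|\mathcal{V}|$ once $D\geq \tfrac{2}{c}\log|\mathcal{V}|$).

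There is essentially no serious obstacle: the only point requiring care is confirming that the degree is genuinely a sum of independent Bernoulli variables with mean exactly $D$ — as opposed to, say, a variable with a fixed total edge budget — and that the within-supervertex edges (when $t_{uu}\neq0$) do not disturb this. Both follow directly from the definition of the Poisson sparsification procedure and from the identity $\sum_v s_v p_{u,v}=D$ established in the dense-graph construction; the rest is a textbook concentration estimate.
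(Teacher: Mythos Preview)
Your proposal is correct and takes essentially the same approach as the paper: both recognize that the degree of a fixed vertex is a sum of independent Bernoulli variables with mean exactly $D$, then apply a Chernoff-type tail bound and union bound over the $|\mathcal{V}|$ vertices. The paper goes through the moment-generating function explicitly (showing domination by the Poisson MGF and then setting $t=1$), while you cite the standard multiplicative Chernoff bound directly; these are the same underlying calculation, and your optimized constant $c=\ln 27 - 2\approx 1.30$ is marginally sharper than the paper's $4-e\approx 1.28$.
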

\begin{proof}
We start with the well known fact that the moment-generating function of the Poisson distribution is
\be \bbE[e^{tX}] = e^{D\qty(e^t-1)}\ee
where $X$ is a Poisson-distributed random variable with mean $D$.  However, in the Poisson sparsified graph, the degrees are not precisely Poisson distributed.  A vertex in $S_u$ has degree that is a sum of $s_v$ copies of $\text{Bern}(p_{u,v})$ for each $v\in V$.  If $X$ denotes a random variable corresponding to the resulting degree, then the moment generating function of $X$ is
\ba
\bbE[e^{tX}] & =  \prod_v (p_{u,v} e^t + (1-p_{u,v}))^{s_v} \nonumber\\
& =  \prod_v (1 + p_{u,v} (e^t-1))^{s_v} \nonumber\\
& \leq \exp( \sum_v s_v p_{u,v} (e^t-1)) \nonumber\\
 & =  \exp( D (e^t-1))
\ea
This distribution is thus stochastically dominated by the Poisson distribution, which implies that we can upper bound the probability of a large deviation using the corresponding bound from the Poisson distribution. Taking $t=1$ we  have
\be \mathbb{P}[X \geq \alpha D]  \leq
e^{D(e-1 - \alpha)}\ee
If $D \gg \log|\cV|$ (as we have assumed above) then we can choose $\alpha=3$ and find that the probability that $X\geq 3D$ is $\ll 1/|\cV|$, meaning that with high probability no vertex has degree higher than $3D$.  Our assumption that $D \gg \log |\cV$ is only for convenience, and we could still find an efficient quantum algorithm when $D$ is smaller by choosing a larger value of  $\alpha$.
\end{proof}
This fact allows us to provide a runtime analysis of the quantum walk algorithm:
\begin{theorem}
Consider a dense hierarchical graph with supervertex Hamiltonian $t$ and graph adjacency matrix $A$.  Suppose the exit probability satisfies
\begin{equation}
\mathbb{P}(\text{EXIT}) = \frac{1}{T} \int_0^T dt\,|\mel{S_{\text{exit}}}{e^{i A t}}{S_{\text{init}}}|^2 \geq p.
\end{equation}
Then,
\begin{itemize}
    \item a Birkhoff-von Neumann sparsified graph with adjacency matrix $\widetilde{A}$, which when restructured to $A'$, is $D$-regular with high probability, with $D > c T^2 \lambda^2 p^{-2} \log |\mathcal{V}|$.  Furthermore, the restructured $A'$ satisfies $\frac{1}{T} \int_0^T dt\,|\mel{S_{\text{exit}}}{e^{i A' t}}{S_{\text{init}}}|^2 \geq p/2$ with high probability, and the quantum walk can be simulated with $\mathrm{poly}(D, T/p)$ queries to the oracle.
    \item a Poisson sparsified graph with adjacency matrix $\widetilde{A}$ has maximum degree $\leq 3D$ with high probability, with $D > c T^2 p^{-2} \log |\mathcal{V}|$.  Furthermore, $\widetilde{A}$ satisfies $\frac{1}{T} \int_0^T dt\,|\mel{S_{\text{exit}}}{e^{i \widetilde{A} t}}{S_{\text{init}}}|^2 \geq p/2$ with high probability, and the quantum walk can be simulated with $\mathrm{poly}(D, T/p)$ queries to the oracle.
\end{itemize}
\end{theorem}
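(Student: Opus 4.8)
The plan is to assemble the ingredients already developed in this subsection and to add one final component: a query-complexity estimate for simulating the quantum walk on a bounded-degree graph. First I would invoke the sparsification lemma proved above via the matrix Bernstein inequality. For the Birkhoff--von Neumann construction, sampling $D \geq c\,T^2\lambda^2 p^{-2}\log|\mathcal{V}|$ permutations produces $\widetilde A$ with $\|A-\widetilde A\|\leq p/(4T)$ with high probability, and the restructuring corollary then replaces $\widetilde A$ by a genuinely $D$-regular $A'$ with $\|A'-A\|\leq\|A'-\widetilde A\|+\|\widetilde A-A\|\lesssim p/(4T)$ (the extra $\lambda/D$ incurred by restructuring being dominated by $p/(4T)$ for this choice of $D$). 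The proposition relating exit probabilities of two Hamiltonians close in operator norm then immediately gives $\frac1T\int_0^T|\mel{S_{\text{exit}}}{e^{iA't}}{S_{\text{init}}}|^2\,dt\geq p/2$. For the Poisson construction I would use the same Bernstein bound to get $\|\tfrac{\lambda}{D}\widetilde A-A\|\leq p/(4T)$ (after rescaling constants by $\lambda=O(1)$, or $O(d)$ in the $d$-dimensional models) once $D\geq c'\,T^2p^{-2}\log|\mathcal{V}|$, the degree-concentration lemma to conclude that every vertex has degree at most $3D$ with high probability, and the same comparison proposition to obtain the $p/2$ bound; no restructuring is required in this case.

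The remaining work is to bound the number of oracle queries needed to run the algorithm on the sparse graph. In both cases the relevant walk Hamiltonian ($-A'$, respectively $-\tfrac{\lambda}{D}\widetilde A$) is $O(D)$-sparse with every nonzero entry of magnitude $O(\lambda/D)$, and a single query to a sparse-access oracle for it --- the value of a matrix entry together with the list of at most $O(D)$ neighbours of a given vertex --- can be answered using $O(1)$ queries to the graph oracle of \defn{oracle}. The algorithm draws $t$ uniformly from $[0,T]$, applies $e^{-i\widetilde H t}$, measures in the vertex basis, and is repeated $O(1/p)$ times to amplify the success probability to a constant. Each run I would implement with an optimal sparse-Hamiltonian-simulation subroutine, which simulates a $d$-sparse Hamiltonian of maximum entry $\beta$ for time $\tau$ to precision $\epsilon$ using $\widetilde{O}(\tau d\beta+\log(1/\epsilon))$ queries; plugging in $\tau\leq T$, $d=O(D)$, $\beta=O(\lambda/D)$, and $\epsilon=\Theta(p)$ (so that the simulation error perturbs $\mathbb{P}(\text{EXIT})$ by at most $p/4$, exactly as in the operator-norm comparison) yields $\widetilde{O}(\lambda T+\log(1/p))$ queries per run and hence $\mathrm{poly}(D,T/p)$ queries in total.

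The only real subtlety is the error bookkeeping: the sparsification error ($\leq p/4T$), the restructuring error ($\lesssim\lambda/D$), and the Hamiltonian-simulation error ($\Theta(p)$) must jointly fit inside a budget that still leaves $\mathbb{P}(\text{EXIT})\geq p/2$ after $O(1/p)$-fold amplification; each of these is driven to the required size by taking $D$ polynomially large in $T/p$ (and in $\lambda$ and $\log|\mathcal{V}|$), so the constants can be chosen to close. I expect this constant-chasing, together with checking that the random degree sequence produced by Poisson sparsification is uniformly enough bounded for the sparse-access oracle to remain efficiently implementable, to be the only mildly delicate steps; everything else is a direct concatenation of the lemmas already proved.
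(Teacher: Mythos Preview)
Your proposal is correct and follows essentially the same route as the paper: the theorem is stated there as a direct assembly of the preceding proposition (exit-probability comparison under operator-norm perturbation), the matrix-Bernstein lemma, the restructuring corollary, and the Poisson degree-concentration lemma, with no separate proof block. You go slightly beyond the paper by spelling out the sparse-Hamiltonian-simulation query bound, which the paper leaves implicit in the phrase ``can be simulated with $\mathrm{poly}(D,T/p)$ queries''; this is a welcome addition rather than a deviation.
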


Next, we consider whether classical algorithms fail to efficiently traverse these graphs.  The analysis is nearly identical to that of Section~\ref{sec:LB}, so we will state the main result for brevity:
\begin{conjecture}[informal]
If the supergraph is a $d$-dimensional lattice with $d = O(1)$, then the number of queries that any classical algorithm needs to make to traverse the sparsified graph is $\geq (\max_i |s_i|)^{\Omega(1)}$.
\end{conjecture}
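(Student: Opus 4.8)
The plan is to adapt the classical lower-bound framework of \sec{LB}---specifically \thm{LB} together with the argument used to prove \cor{1D-LB} and \cor{GFF-LB}---to the sparsified ensembles of \sec{construction-sparse}. The approximate (rather than exact) supervertex subspace of the sparsified graph is a purely quantum concern and is irrelevant here: what \thm{LB} actually requires is (i) that the supergraph $(V,E)$, the supervertex sizes $s_v$, and the number of edges between each pair of supervertices be deterministic functions of the input data $t$, and (ii) that the distribution over graphs be invariant under permutations of vertices within any single supervertex. Property (i) holds by construction, since $s_v = N\pi(v)$ for the Perron--Frobenius vector $\pi$ and $e_{uv}=t_{uv}\sqrt{s_us_v}$ is fixed; property (ii) holds automatically for Poisson sparsification and, for Birkhoff--von Neumann sparsification, after replacing $\mu$ by the symmetrized distribution $\mu_{\mathrm{sym}}$, and the restructuring step of \sec{construction-sparse} only relocates $O(D^2|E|)$ edges inside large supervertices and respects both properties. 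Hence \thm{LB} applies to the (restructured) sparsified graph $\cG'$.

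Given this, one reruns the four steps of \cor{GFF-LB} (equivalently \cor{1D-LB}). Assume $t$ is chosen to produce the mountain-shaped size profile of the concrete constructions of \sec{metropolis}, so that $\log s_v$ equals, up to positive multiplicative constants, the graph distance in $(V,E)$ from $v$ to $\{v_{\mathrm{init}},v_{\mathrm{exit}}\}$. Set $Q=(\max_i s_i)^{\gamma}$ and $\delta=Q^{-2}$ for a small constant $\gamma>0$, so that a supervertex is large when $s_v>Q/\delta=(\max_i s_i)^{3\gamma}$; for $\gamma$ small enough the large supervertices form a cut in the supergraph separating $v_{\mathrm{init}}$ from $v_{\mathrm{exit}}$, with only two shrinking corners remaining small. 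Next, bound graph-exploration walks (\defn{explore}): in the Birkhoff--von Neumann case $\cG'$ is $D$-regular after restructuring, so its non-backtracking walk has stationary distribution $s_v/|\cV|$ on supervertices, and the comparison bound from the proof of \cor{1D-LB} gives that a $Q$-step walk started in the transition region reaches a fixed small supervertex $w$ with probability $\le Q\,s_w/s_v$; placing the easy/hard boundary a constant factor deeper into the bulk makes this $\le\delta$, which yields a consistent easy/hard division in the sense of \defn{easy-hard} with the hard supervertices forming the barrier. Finally, \thm{LB} shows that a $Q$-query classical algorithm's output distribution lies within total variation $O(Q\delta)=O((\max_i s_i)^{-\gamma})$---plus a negligible $\Enc$-collision term---of its behaviour against a fake oracle that never consults hard vertices, so it cannot report the exit identifier with probability $\gg\delta$; hence traversal requires $Q\ge(\max_i s_i)^{\Omega(1)}$ queries.

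Two points require more care than in \sec{LB}, and they are the reason this is stated as a conjecture rather than a theorem. For Poisson sparsification the graph is not regular, so one must first argue that the degree distribution does not leak position: in the bulk each edge probability $p_{u,v}=t_{uv}D/(\lambda\sqrt{s_us_v})$ is exponentially small, so every bulk vertex's degree is $o(1)$-close in total variation to $\mathrm{Poisson}(D)$, and---as in the remark following \defn{balancedhier}---sampling i.i.d.\ degrees from a common distribution gives no advantage; the walk-comparison bound must then be restated with stationary weight $\pi(v)\propto\sum_{x\in S_v}\deg(x)$, which agrees with $s_v/|\cV|$ up to $1\pm o(1)$ factors. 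The more serious gap is the $O(D^2|E|)$ multi-edges that the restructuring step deliberately leaves inside the \emph{small} supervertices of the Birkhoff--von Neumann construction: a classical algorithm could detect such a multi-edge and use it as advice, and the fake-oracle coupling in the proof of \thm{LB} currently has no mechanism to charge for this. Since the number of such multi-edges is only polynomial while a small supervertex already contains $\Theta(Q/\delta)$ vertices, the density of multi-edges is exponentially small once $Q$ is superpolynomial, so heuristically this advice is worthless; turning this heuristic into a proof---for instance by extending \thm{LB} so that its fake oracle also faithfully reports multi-edges inside easy supervertices---is the main obstacle and is what we leave to future work.
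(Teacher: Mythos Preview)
Your high-level plan---reduce to the framework of \thm{LB} and rerun the \cor{1D-LB}/\cor{GFF-LB} argument---is the same as the paper's sketch, and several of your observations (permutation invariance of the sparsified ensemble, near-Poisson degrees for the Poisson construction) are on point.  However, two of your claims do not hold, and the second of them causes you to misidentify the real obstruction.

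First, your ``property (i)'' is false for sparsified graphs.  In both the Poisson and Birkhoff--von Neumann constructions the number of edges between $S_u$ and $S_v$ is a \emph{random} variable (Binomial for Poisson; a function of the sampled permutations for Birkhoff), not the deterministic $e_{uv}=t_{uv}\sqrt{s_us_v}$ you quote.  \thm{LB} as stated assumes these counts are fixed, so it does not literally apply; this is part of why the statement is a conjecture.  Moreover, the restructuring step is described deterministically (``we can always find a single-edge\ldots''), which in general breaks the exact permutation symmetry you need for property (ii); at best the resulting distribution is only approximately invariant, and this approximation must be controlled.

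Second, the obstacle you name---multi-edges left in \emph{small} supervertices---is not an obstacle at all.  In the proof of \thm{LB} the fake oracle $\cF$ returns the true vertex labels whenever the queried vertex is easy, so any multi-edge between easy vertices is reproduced exactly and carries no extra information relative to $\cF$.  The paper's stated concern is different and lives on the \emph{hard} side: once the algorithm has explored a partial subgraph, one must show that the conditional distribution of the next neighbour in a large supervertex is still close to uniform over that supervertex.  For the balanced graphs of earlier sections this follows from exact biregularity; for sparsified graphs it must be argued directly from the (approximate) permutation symmetry of the ensemble after conditioning, and that is the ``careful analysis'' the paper leaves open.
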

\begin{proof}[Proof idea:]
  We will sketch the essential elements of a potential proof.  First, while the Birkhoff-von Neumann based sparsification procedure yielded a $D$-regular graph, the last rewiring step was technically not needed, since before then multi-edges are rare between supervertices. Thus if the multi-edges are left intact, a classical algorithm is unlikely to find them.  (The classical lower bound in \sec{LB} does not handle the case of rare multi-edges, but we believe it could be adapted to do so.)
  If we average over the random graph obtained by sparsification, the resulting graph has an exact supervertex subspace; therefore, we might expect the classical algorithm to take exponential time to traverse this graph on average.  A similar argument holds for the Poisson-based sparsification.  The key property in both cases is that if we condition on the part of the graph seen by the classical algorithm, newly seen vertices in large supervertices should be nearly uniformly drawn from that supervertex.  This will ensure that the classical algorithm never sees a cycle involving the hard supervertices and therefore cannot learn useful information about that part of the graph, along the lines of the proof in \sec{LB}.  Making these  claims rigorous may require a careful analysis which we leave to future work.  

\end{proof}

We now briefly provide an explicit example where the sparsification procedure results in hierarchical graphs which are efficiently traversable by a quantum walk.  The example we will choose is simply a sparsification of the original welded trees graph, where the supergraph is a line graph with $2n$ nodes.  

\begin{example}
Define the matrix $t$ to be a hopping Hamiltonian where $\mel{S_i}{t}{S_{i+1}} = \mel{S_{i+1}}{t}{S_{i}} = 1$ for $i \neq n$, $\mel{S_i}{t}{S_{i+1}} = \mel{S_{i+1}}{t}{S_{i}} = \sqrt{2}$ for $i = n$, and all other matrix elements are zero.  We solve for the spectrum of this Hamiltonian, and find that $2n-2$ of the $2n$ eigenstates are scattering states and are delocalized.  However, the bottom and top eigenvectors are bound states \cite{Childs}.  In particular, the amplitudes of the top eigenvector $\ket{\psi}$ satisfy $\braket{j}{\psi} \propto \sinh (k j)$ for $1 \leq j \leq n$ and $\braket{j}{\psi} \propto \sinh (k (2n+1-j))$ for $n+1\leq j \leq 2n$.  The solution for $k$ satisfies
\begin{equation}
\frac{\sinh((n+1)k)}{\sinh nk} = \sqrt{2}.
\end{equation}
Therefore, since the top eigenvector is $\sqrt{\pi}$, the size of each supervertex $N \pi$ grows and shrinks like $1,2,4,\cdots,2^{n-1}, 2^n, 2^{n}, 2^{n-1},\cdots, 4,2,1$.  We connect edges in an all-to-all manner between supervertices, and place weights $1/\sqrt{s_{j} s_{j+1}}$ on all edges where $j \neq n$ and $\sqrt{2/(s_{j} s_{j+1})}$ for $j = n$.  This dense graph can then be sparsified either using the Poisson or Birkhoff methods.
\end{example}


\section{Hierarchical graphs with diagonal disorder}\label{sec:Anderson}
For the class of random graph models we have considered in previous sections, we assume that nodes in supervertex $S_i$ are connected to nodes in $S_{i-1}$ and $S_{i+1}$.  It turns out that we may obtain an exact analysis in the case for which nodes within $S_i$ are connected to each other. For one-dimensional hierarchical graphs, following the model in \sec{random-graph} where there are $e_i$ edges connecting nodes in supervertex $S_i$ and $S_{i+1}$, we now consider a new random graph model where there are $\mathcal{F}_i$ nodes connecting nodes in supervertex $S_i$ to other nodes in supervertex $S_i$.  The total degree of each node is still $D$, and the number of edges from a node in one supervertex to a node in a consecutive supervertex is the same for each node in the former supervertex.

Due to the construction of the random graph model, the system still evolves in the subspace of supernode states.  The effective Hamiltonian in the supernode space has the elements
\begin{equation}
    \mel{S_i}{H}{S_j} = \begin{cases}
\frac{e_i}{\sqrt{s_i s_{i+1}}}, & \text{for } j=i+1\\
\frac{e_{i-1}}{\sqrt{s_{i-1} s_{i}}}, & \text{for } j=i-1\\
\frac{2 \mathcal{F}_{i}}{ s_{i} }, & \text{for } j=i\\
0, & \text{otherwise }\\
\end{cases}
\end{equation}
This effective Hamiltonian resembles an Anderson model, where both the hopping and onsite potentials can be potentially random.  We can analyze the dynamics of the above Hamiltonian by making some assumptions on the structure of the random graph.  First, we will denote $f_i$ to be the number of edges connecting a node in supervertex $i$ to other nodes in supervertex $i$.  We immediately obtain the following generalization of our results:

\begin{theorem}\label{thm:onsite-superpolynomial}
If $f_i$ is constant for all $i$, then this class of random graphs will exhibit a superpolynomial quantum-classical separation.
\end{theorem}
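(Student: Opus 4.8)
The plan is to reduce the entire statement to the one‑dimensional analysis of \sec{random-graph} together with the classical lower bound of \sec{LB}, using the observation that a \emph{constant} $f_i$ produces only a trivial spectral shift. First I would compute the diagonal of the effective Hamiltonian: writing $\mathcal{F}_i$ for the number of intra‑supervertex edges in $S_i$, each of the $s_i$ nodes of $S_i$ has $f_i$ neighbours inside $S_i$, so $2\mathcal{F}_i=f_i s_i$ and hence $\mel{S_i}{H}{S_i}=2\mathcal{F}_i/s_i=f_i$. When $f_i\equiv f$ this gives $H=H_0+fI$, where $H_0$ is precisely a tridiagonal zero‑diagonal random hopping Hamiltonian of the type in \eqn{hamiltonian}, the only change being that the degree constraint becomes $e_{i-1}+e_i=(D-f)s_i$; i.e.\ the analysis of \sec{random-graph} applies verbatim with $D$ replaced by $D-f$ (choosing $f$ small enough that $D-f\ge 3$). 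In particular the edge–edge ratios are defined and distributed exactly as in \defn{ratios}, the biased‑random‑walk picture for the $s_i$ is unchanged, and by \lem{zero-vector} $H_0$ has an exact zero mode $\ket{0}$ for odd chain length; moreover the $f$‑eigenvalue equation for $H$ reads $t_{i-1}\psi_{i-1}+t_i\psi_{i+1}=0$ (the shift cancels), identical to the zero‑mode equation of $H_0$, so $\ket 0$ is an eigenstate of $H$ with eigenvalue $f$ enjoying the decay bounds of \lem{prob-entrance-exit} and \lem{prob-Doob}.

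For the quantum upper bound I would note $e^{-iHt}=e^{-ift}\,e^{-iH_0 t}$, so the global phase $e^{-ift}$ drops out of every measurement probability and the walk driven by $H$ is indistinguishable from the walk driven by $H_0$ (equivalently, one simply runs the walk with the shifted Hamiltonian $H-fI$, implementable from the same oracle). Hence \thm{main-1D} applies directly: with constant probability over the graph the gap of $H_0$ about $0$ — equivalently the gap of $H$ about $f$ — is $\ge\exp(-\Theta(\sqrt n))$ by \lem{spectral-gap-lower} (or \lem{easyboundgap}), and combined with $|\psi_{2n}||\psi_0|\ge \tfrac{C}{n}\exp(-\Theta(\sqrt n))$ and \lem{evolution} (reading $\ket 0$ there as the $f$‑eigenstate and $\Delta$ as the gap about $f$) the walk reaches the exit in $\exp(O(\sqrt n))$ queries with high probability.

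For the classical lower bound, the exponential bias in the edge–edge ratios still forces the supervertices in the middle of the chain to have size $\exp(\Theta(n))$, so large supervertices exist exactly as before, and I would run the argument of \sec{LB} with one cosmetic change: the graph‑exploration walk of \defn{explore} is allowed to traverse intra‑supervertex edges as well as inter‑supervertex ones. The property the proof of \thm{LB} actually uses is that the graph distribution is invariant under permuting vertices within any supervertex; the random $f$‑regular intra‑supervertex structure is also invariant under such permutations, so this is preserved, and each step of the walk out of a vertex of a large supervertex (size $\ge Q/\delta$) still lands on a nearly uniform vertex of the relevant supervertex, previously seen with probability $\le\delta$. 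Thus the short cycles that $f$‑regular graphs contain are never actually discovered inside large supervertices, and re‑running the induction of \thm{LB} with this extended walk gives the same conclusion; \cor{1D-LB} then yields $\exp(\Omega(n))=(\max_i s_i)^{\Omega(1)}$ classical queries, which against the $\exp(O(\sqrt n))$ quantum bound is the claimed superpolynomial separation.

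The only place the argument does more than observe a spectral shift is this last step: one must recheck that conditioning on the part of the graph the algorithm has seen still leaves newly discovered vertices of a large supervertex approximately uniform in the presence of intra‑edges, and that the easy/hard partition of \defn{easy-hard} remains consistent once exploration walks may loop inside a supervertex. I expect this to go through with the same estimates as in \sec{LB}, since the needed symmetry survives and the constant value of $f$ (as opposed to a varying, potentially informative $f_i$) is exactly what keeps both the on‑site term non‑disordered on the quantum side and the supervertices mutually indistinguishable on the classical side — but it is the one ingredient requiring genuine, if routine, verification rather than a one‑line reduction.
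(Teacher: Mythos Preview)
Your proposal is correct and follows essentially the same approach as the paper: observe that constant $f_i\equiv f$ makes the diagonal entries $2\mathcal{F}_i/s_i$ constant, so $H=H_0+fI$ with $H_0$ a zero-diagonal random hopping Hamiltonian, whence the zero mode of $H_0$ sits at energy $f$ and the entire analysis of \sec{random-graph} (and the classical lower bound of \sec{LB}) carries over unchanged. Your write-up is in fact more careful than the paper's one-paragraph proof, since you explicitly track the modified degree constraint $e_{i-1}+e_i=(D-f)s_i$ and flag that the classical lower bound must accommodate intra-supervertex edges via the same permutation-invariance argument---points the paper leaves implicit.
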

\begin{proof}
If $f_i = f$ for all $i$, then $2\mathcal{F}_i/s_i = 2 f$ is a constant.  The Hamiltonian has purely off-diagonal disorder and is shifted by $2f$ times identity.  The zero mode is located at energy $E = 2f$ and the gap to the zero mode has an unchanged behavior.  Therefore, the analysis in \sec{random-graph} holds and one finds a superpolynomial quantum-classical separation.
\end{proof}

Therefore, for particular cases of onsite potentials, we can find vast generalizations of the results we derived.  However, for generic onsite potential, the eigenstates may localize, preventing a speedup.  An example is illustrated below.

\begin{theorem} \label{thm:Andersonspeedup}
Restrict $e_i = c \sqrt{s_i s_{i+1}}$ for all $i$, i.e., the number of edges going from supervertex $i$ to $i+1$ is a constant times the geometric mean of the number of nodes in supervertex $i$ and the number of nodes in supervertex $i+1$.  Allow $2 \mathcal{F}_i/s_i \sim U[0,1]$ to be uniformly random.  Define such a class of random graphs to be $A_1$, with $A_d$ for $d > 1$ an analogously defined random graph on a $d$-dimensional hypercube.  This ensemble of random graphs does not exhibit a superpolynomial classical-quantum speedup.
\end{theorem}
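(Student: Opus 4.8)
The plan is to show that in this ensemble the effective supervertex Hamiltonian is exactly a $d$-dimensional Anderson model with random on-site potential and \emph{constant} hopping, and then to invoke Anderson (more precisely, dynamical) localization to conclude that the entrance-to-exit amplitude is exponentially small at all times, so the quantum walk gains nothing. First I would identify the Hamiltonian. The restriction $e_i = c\sqrt{s_i s_{i+1}}$ forces every off-diagonal matrix element of the supervertex Hamiltonian to be the \emph{constant} $t_i = e_i/\sqrt{s_i s_{i+1}} = c$, independent of the (still possibly varying) supervertex sizes, while the diagonal entries are $V_i \triangleq 2\mathcal{F}_i/s_i$, which are i.i.d.\ and (approximately) uniform on $[0,1]$. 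Hence $H$ restricted to the supervertex subspace is $c\,A_{\mathrm{line}} + \mathrm{diag}(V_i)$ with $V_i$ i.i.d.\ and nondegenerate, i.e.\ the standard 1D Anderson model, and the analogous $d$-dimensional lattice Anderson model for $A_d$. Since $s_{\mathrm{init}} = s_{\mathrm{exit}} = 1$ we have $\ket{v_{\mathrm{init}}} = \ket{S_{\mathrm{init}}}$ and $\ket{v_{\mathrm{exit}}} = \ket{S_{\mathrm{exit}}}$, and because the walk never leaves the supervertex subspace, the probability of measuring the exit after evolving for time $t$ is exactly $|\mel{S_{\mathrm{exit}}}{e^{-iHt}}{S_{\mathrm{init}}}|^2$, where $S_{\mathrm{init}}$ and $S_{\mathrm{exit}}$ lie at lattice distance $2n$ (respectively $\Omega(N)$ along each axis).

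Next I would invoke the localization estimate. For the one-dimensional Anderson model with i.i.d.\ potential of nondegenerate distribution (which covers $U[0,1]$ and also the discretization of it arising from $2\mathcal{F}_i/s_i \in \{0, 2/s_i, 4/s_i, \ldots\}$), the eigenfunction correlator decays exponentially: there is $\gamma>0$, a uniform lower bound on the Lyapunov exponent over the (compact) spectrum, with $\bbE\big[\sup_{t}|\mel{j}{e^{-iHt}}{k}|\big] \le C e^{-\gamma|j-k|}$ for all sites $j,k$. This is dynamical localization in the form proved by the fractional-moment method and is completely rigorous in $d=1$. Applying it to the entrance and exit and using Markov's inequality gives: with probability $\ge 1 - e^{-\gamma n}$ over the choice of graph, $\sup_{t}|\mel{S_{\mathrm{exit}}}{e^{-iHt}}{S_{\mathrm{init}}}| \le e^{-\gamma n}$. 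Consequently, for \emph{any} choice of maximal evolution time $\tau$, the exit probability of the quantum walk satisfies $\mathbb{P}(\mathrm{EXIT}) = \tfrac1\tau\int_0^\tau |\mel{S_{\mathrm{exit}}}{e^{-iHt}}{S_{\mathrm{init}}}|^2\,dt \le e^{-2\gamma n}$. Boosting this to constant success probability requires $e^{\Omega(n)}$ independent runs, so the quantum walk traversal time is $e^{\Omega(n)}$ — which is of the same order as the worst-case $\mathrm{poly}(|\cV|) = e^{O(n)}$ cost of a classical breadth-first exploration (recall $|\cV| = \sum_v s_v = e^{\Theta(n)}$ once the sizes grow exponentially). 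Hence $A_1$ admits no superpolynomial quantum speedup via the quantum walk.

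For $A_2$ the argument is word-for-word the same once the corresponding two-dimensional localization statement is in hand: one obtains $|\mel{S_{\mathrm{exit}}}{e^{-iHt}}{S_{\mathrm{init}}}| \le e^{-\Omega(N)}$ with high probability over the disorder, and concludes as before. The main obstacle is precisely this $d=2$ input: unlike the one-dimensional case, there is no unconditional proof of Anderson localization at all energies and disorder strengths on the square lattice. It is rigorous in the strong-disorder regime (which one enters by replacing $U[0,1]$ by a single-site distribution of larger support, or equivalently by rescaling $c$) and is otherwise the content of the standard localization conjecture, which is why this part of the statement is flagged with $(\ast)$ in the summary table. A secondary, purely bookkeeping point is to check that the integrality constraints ($e_i$ and $\mathcal{F}_i$ integers) can be met while keeping the $V_i = 2\mathcal{F}_i/s_i$ genuinely random and nondegenerate; this is arranged exactly as in \sec{construction-1D} by choosing $c$ and the supervertex sizes appropriately, and the localization conclusion is insensitive to the resulting discretization of the single-site distribution.
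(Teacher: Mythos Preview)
Your proposal is correct and reaches the same conclusion as the paper, but the localization input you invoke is different from the one the paper uses. The paper's proof goes through Furstenberg's theorem on products of random $SL(2,\mathbb{R})$ transfer matrices to obtain a positive Lyapunov exponent $\lambda_F>0$, then asserts the resulting exponential decay of eigenfunctions $|\braket{S_{2n}}{E}||\braket{E}{S_0}|\le Ce^{-\lambda_F n}$, and finally bounds $\mathbb{P}(\text{EXIT})$ by expanding $e^{-iHt}$ in the eigenbasis and summing over all pairs $E,E'$, obtaining $\mathbb{P}(\text{EXIT})\le C'n^2 e^{-2\lambda_F n}$. You instead invoke dynamical localization in the form of an eigenfunction-correlator bound (fractional-moment method), which directly controls $\sup_t|\mel{S_{\text{exit}}}{e^{-iHt}}{S_{\text{init}}}|$ without passing through the spectral decomposition.

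Your route is somewhat cleaner as a black box: the fractional-moment bound gives the propagator decay in one shot, whereas the paper's passage from ``$\lambda_F>0$'' to ``every eigenfunction obeys $|\braket{S_{2n}}{E}||\braket{E}{S_0}|\le Ce^{-\lambda_F n}$'' is stated without the intermediate justification (this is standard but not entirely automatic from Furstenberg alone). On the other hand, the paper's approach has the pedagogical advantage of making the transfer-matrix mechanism explicit and relying on a more classical theorem. Your remarks on $A_2$ and on the integrality/discretization of the single-site distribution are sensible and go slightly beyond what the paper spells out in the proof itself.
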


To satisfy the constant degree constraint, we restrict the random graph ensemble to those satisfying $D s_i = 2 \mathcal{F}_i + c \sqrt{s_i} \left(\sqrt{s_{i-1}} + \sqrt{s_{i+1}}\right)$.  Then, the effective Hamiltonian for this random graph model has diagonal (onsite) disorder and constant off-diagonal elements.

The proof of the theorem above requires a well-known result known as the Furstenberg's theorem:
\begin{theorem}[Furstenberg~\cite{furstenberg1963noncommuting}, from~\cite{gorodetski2021parametric}]
Let $\{X_k,k \geq 1\}$ be independent and identically distributed random variables, taking values in $SL(d, \mathbb{R})$.  Let $G_X$ be the smallest closed noncompact subgroup of $SL(d, \mathbb{R})$ containing the support of the distribution of $X_k$, and assume that
$\mathbb{E}[\log \norm{X_k}] < \infty$.  Further assume that there exists no $G_X$-invariant finite union of proper subspaces of $\mathbb{R}^d$. Then there exists a positive constant $\lambda_F$ such that with
probability one
\begin{equation}
    \lim_{n \to \infty } \frac{1}{n} \log \norm{X_n \cdots X_2 X_1} = \lambda_F > 0.
\end{equation}
\end{theorem}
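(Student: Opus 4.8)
The plan is to separate the statement into (i) the almost-sure existence of a deterministic limit $\lambda_F\ge 0$, and (ii) the strict positivity $\lambda_F>0$; only (ii) is hard. For (i), let $\mu$ be the common law of the $X_k$ and put $Y_n=X_n\cdots X_2X_1$. Submultiplicativity of the operator norm gives $\log\|Y_{m+n}\|\le\log\|X_{m+n}\cdots X_{n+1}\|+\log\|Y_n\|$, and $X_{m+n}\cdots X_{n+1}$ has the law of $Y_m$ translated by $n$ along the i.i.d.\ sequence, so $f_n:=\log\|Y_n\|$ is subadditive with respect to the shift. Since $\det X_k=1$ forces $\|X_k\|\ge 1$, we have $f_n\ge 0$ (no $-\infty$ subtlety), and $\mathbb{E}[f_1]=\mathbb{E}[\log\|X_1\|]<\infty$ by hypothesis. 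Kingman's subadditive ergodic theorem then gives $\tfrac1n f_n\to\lambda_F:=\inf_n\tfrac1n\mathbb{E}[f_n]\ge 0$ almost surely and in $L^1$, and ergodicity of the i.i.d.\ shift makes $\lambda_F$ nonrandom. The remaining work is all in (ii).

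\textbf{Step 2: Reducing positivity to a zero-drift stationary measure.} Assume for contradiction $\lambda_F=0$. Act on projective space $\mathbb{P}^{d-1}=\mathbb{P}(\mathbb{R}^d)$. By weak-$*$ compactness of the probability measures on the compact space $\mathbb{P}^{d-1}$ together with affine weak-$*$ continuity of $\rho\mapsto\mu*\rho:=\int g_*\rho\,d\mu(g)$, a $\mu$-stationary measure $\nu$ exists ($\nu=\mu*\nu$). Introduce the additive cocycle $\sigma(g,\bar v):=\log(\|gv\|/\|v\|)$. If $\bar v\sim\nu$ is drawn independently of the $X_k$, stationarity gives inductively that $\overline{Y_k v}\sim\nu$ for every $k$ (with $\overline{Y_0 v}=\bar v$) and that $X_{k+1}$ is independent of $\overline{Y_k v}$; telescoping $\log\|Y_n v\|=\sum_{k=1}^n\sigma(X_k,\overline{Y_{k-1}v})$ for a unit representative $v$ and taking expectations yields $\mathbb{E}_{\bar v\sim\nu}[\log\|Y_n v\|]=n\iint\sigma\,d\mu\,d\nu$. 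Now sandwich $-\log\|Y_n^{-1}\|\le\log\|Y_n v\|\le\log\|Y_n\|$ for unit $v$: the upper bound has expectation $o(n)$ since $\lambda_F=0$; and since the Lyapunov exponents obey $\lambda_1=\lambda_F=0$ and $\sum_i\lambda_i=\mathbb{E}\log|\det X_1|=0$, all $\lambda_i=0$, so the top exponent $-\lambda_d$ of the reversed walk (law $X_1^{-1}$) is $0$ and $\mathbb{E}[\log\|Y_n^{-1}\|]=o(n)$ as well. Hence $\iint\sigma\,d\mu\,d\nu=0$: the stationary measure has zero drift.

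\textbf{Step 3 (the main obstacle): the rigidity/positivity lemma.} This is the heart of Furstenberg's argument and where I expect the genuine difficulty. The claim is that a zero-drift $\mu$-stationary measure on $\mathbb{P}^{d-1}$ is incompatible with $G_X$ being both strongly irreducible (no $G_X$-invariant finite union of proper subspaces) and noncompact. I would argue in two parts. (a) \emph{Upgrade stationarity to invariance.} Consider the measure-valued martingale $\nu_n:=(X_1X_2\cdots X_n)_*\nu$, which converges weak-$*$ a.s.\ to $\nu_\infty$ with $\mathbb{E}[\nu_\infty]=\nu$; equivalently, the scalar process $T_n:=\int\log\|X_1\cdots X_n y\|\,d\nu(\bar y)$ (unit representatives) satisfies $\mathbb{E}[T_{n+1}\mid\mathcal{F}_n]=T_n+\iint\sigma\,d\mu\,d\nu=T_n$, i.e.\ it is a mean-zero martingale, while $|T_n|\le o(n)$ almost surely (from $\lambda_F=0$ for both the walk and its inverse). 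The delicate point, to be carried out following Furstenberg~\cite{furstenberg1963noncommuting} (cf.\ the treatment of Bougerol and Lacroix), is that this rigidity forces $\nu_\infty=\nu$ a.s.\ -- ruling out the subtle cancellation that could otherwise let $T_n$ oscillate while staying sublinear -- and hence $g_*\nu=\nu$ for $\mu$-a.e.\ $g$; since $\{g:g_*\nu=\nu\}$ is a closed subgroup containing $\mathrm{supp}\,\mu$, it contains $G_X$, so $\nu$ is $G_X$-invariant. (b) \emph{Compactness from an invariant measure.} A strongly irreducible subgroup $G\le\mathrm{SL}(d,\mathbb{R})$ preserving a probability measure on $\mathbb{P}^{d-1}$ is relatively compact: otherwise pick $g_n\in G$ with $\|g_n\|\to\infty$, use the polar decomposition $g_n=k_na_nk_n'$ with $k_n,k_n'\in\mathrm{SO}(d)$ and $a_n$ diagonal, pass to a subsequence so that $g_n/\|g_n\|\to\pi$ with $\mathrm{rank}\,\pi=r<d$; then $g_n$-invariance of $\nu$ forces some proper projective subspace to carry positive $\nu$-mass, and a minimal-dimension, maximal-mass argument extracts a $G$-invariant finite union of proper subspaces, contradicting strong irreducibility. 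Part (b) is routine linear algebra; part (a) is the obstacle I would devote the most care to.

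\textbf{Step 4: Conclusion.} Combining (a) and (b), the assumption $\lambda_F=0$ forces $G_X$ to be relatively compact, contradicting the hypothesis that $G_X$ is noncompact. Therefore $\lambda_F>0$, which together with Step 1 proves the theorem. Each hypothesis is used exactly once -- noncompactness in Step 4, strong irreducibility in Step 3(b) -- and each is necessary, since a compact group gives $\lambda_F=0$ and a reducible example such as a rotation in direct sum with a contraction likewise fails positivity.
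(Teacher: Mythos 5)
This statement is not proved in the paper at all: it is Furstenberg's theorem, imported verbatim from \cite{furstenberg1963noncommuting} (via \cite{gorodetski2021parametric}) and used as a black box in the localization argument of \sec{Anderson}. So there is no in-paper proof to compare against; what you have written is an outline of the classical proof, and it does follow the standard strategy (Kingman's subadditive ergodic theorem for existence of $\lambda_F$, reduction to a zero-drift $\mu$-stationary measure on $\mathbb{P}^{d-1}$, Furstenberg's rigidity lemma upgrading stationarity to invariance, and the lemma that a strongly irreducible group preserving a probability measure on projective space is relatively compact). Steps 1, 2, 3(b) and 4 are essentially right, modulo small omissions (e.g.\ you implicitly need $\mathbb{E}\log\norm{X_1^{-1}}<\infty$ to control the reversed walk, which follows from $\norm{X^{-1}}\le\norm{X}^{d-1}$ for $X\in SL(d,\R)$).

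The genuine gap is exactly where you locate it, in Step 3(a), and the specific argument you gesture at does not close it. From $\lambda_F=0$ you correctly extract that $T_n=\int\log\norm{X_1\cdots X_n y}\,d\nu(\bar y)$ is a mean-zero martingale with $T_n=o(n)$ almost surely, but that alone cannot force $\nu_\infty=\nu$: a simple random walk is a mean-zero martingale that is $o(n)$ a.s.\ and is certainly not constant. The actual content of Furstenberg's lemma is a different mechanism: one shows that if the measure-valued martingale $\nu_n=(X_1\cdots X_n)_*\nu$ is \emph{not} a.s.\ constant, then along a subsequence $Y_n/\norm{Y_n}$ converges to a singular (rank-deficient) matrix, which by strong irreducibility and properness of $\nu$ (a $\mu$-stationary measure for a strongly irreducible $\mu$ gives zero mass to every proper projective subspace --- a fact you never establish but need) forces $\norm{Y_n}$ to grow exponentially with positive probability, contradicting $\lambda_F=0$. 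Without carrying out that contraction/properness analysis, the chain ``zero drift $\Rightarrow$ invariance'' is asserted rather than proved, and since the entire positivity claim rests on it, the proposal as written is an accurate roadmap of the literature proof rather than a proof.
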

The use of Furstenberg's theorem can be seen through the following representation of the eigenstates.  The eigenvalue equation satisfies
\begin{equation}
    \psi_{n-1} + u_n \psi_n + \psi_{n+1} = E \psi_n
\end{equation}
and can be rearranged in the form
\begin{equation}
    \begin{pmatrix}
           \psi_{n-1} \\
           \psi_{n}
         \end{pmatrix} = \begin{pmatrix}
           E - u_n & -1 \\
           1 & 0
         \end{pmatrix} \begin{pmatrix}
           \psi_{n} \\
           \psi_{n+1}
         \end{pmatrix} = X_n \begin{pmatrix}
           \psi_{n} \\
           \psi_{n+1}
         \end{pmatrix}.
\end{equation}
In general, the amplitudes at site $k$ can be determined by analyzing the Lyapunov exponents of a product of $X_i$ from $i = 0$ to $i = k$.  As $\det X_i = 1$, the Furstenberg theorem can be applied and thus all eigenfunctions decay exponentially.

The proof of \thm{Andersonspeedup} can then be seen as follows:
\begin{proof}[Proof of \thm{Andersonspeedup}]
  Under the time evolution protocol, we will need to compute an upper bound rather than a lower bound.  This can be done by 
  \begin{align}
    \mathbb{P}(\text{EXIT}) &= \frac{1}{\tau} \int_0^\tau \dd t\,\left|\mel{S_{2n}}{e^{-iHt}}{S_0}\right|^2 \nonumber \\
    &= \frac{1}{\tau} \sum_{E,E'} \int_{0}^{\tau} \dd t\, e^{-i(E-E') t} \braket{S_{2n}}{E} \braket{E}{S_0} \braket{S_{2n}}{E'} \braket{E'}{S_0}, \nonumber \\
    &\leq \sum_{E,E'} \left|\braket{S_{2n}}{E}\right| \left|\braket{E}{S_0}\right| \left|\braket{S_{2n}}{E'}\right| \left|\braket{E'}{S_0}\right|.
\end{align}
A given energy eigenstate decays exponentially, and assuming that the Lyapunov exponent is $\lambda_F > 0$, we know that $\left|\braket{S_{2n}}{E}\right| \left|\braket{E}{S_0}\right| \leq C e^{-\lambda_F n}$ for some constant $C$.  Therefore,
 \begin{align}
    \mathbb{P}(\text{EXIT}) &\leq \sum_{E,E'} e^{-2 \lambda_F n} \leq C' n^2 e^{-2 \lambda_F n}
\end{align}
and thus at least an exponential number of repetitions are necessary to boost this probability to a constant.
\end{proof}

The Furstenberg theorem is powerful because it tells us that as long as the onsite potentials are independent and randomly distributed, there cannot be a quantum-classical superpolynomial separation.  However, the theorem is only suitable in one dimension.  A popular conjecture, which was first understood in the seminal paper of Abrahams, Anderson, Licciardello, and Ramakrishnan~\cite{AALR} using a renormalization group procedure, is the following:

\begin{fact}[Anderson transition]\label{cor:Anderson-transition}
For a Hamiltonian with nearest neighbor hopping and random onsite potential in a $d$-dimensional lattice, a nonzero fraction of eigenstates are delocalized when $d \geq 3$.
\end{fact}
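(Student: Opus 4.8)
The plan is to reproduce the one-parameter scaling (renormalization-group) argument of Abrahams, Anderson, Licciardello, and Ramakrishnan~\cite{AALR}, since this heuristic is the source of the conjecture; a fully rigorous proof of delocalization for the discrete Schr\"odinger operator with i.i.d.\ potential on $\mathbb{Z}^d$ with $d\geq 3$ remains one of the central open problems of mathematical physics, which is why we record the statement as a \emph{Fact}. The basic object is the dimensionless conductance $g(L)$ of a $d$-dimensional box of linear size $L$, and the key hypothesis is that under block rescaling $L\mapsto bL$ its evolution is governed by $g$ alone: $d\ln g / d\ln L = \beta(g)$ for a single smooth, monotonically increasing function $\beta$.

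First I would pin down the two asymptotic branches of $\beta$. For weak disorder ($g\gg 1$) ordinary Ohmic conduction gives $g\sim\sigma L^{d-2}$, so $\beta(g)\to d-2$ as $g\to\infty$. For strong disorder the conductance is exponentially small in the size, $g\sim e^{-L/\xi}$, so $\beta(g)\to\ln g\to-\infty$ as $g\to 0^{+}$. Interpolating monotonically between the two branches yields the dichotomy. When $d\leq 2$ one has $\beta(g)<0$ for every finite $g$ (for $d=2$ one also needs the leading weak-localization correction, which is negative), so $g$ flows to zero and every state localizes. When $d\geq 3$ the positive large-$g$ limit $d-2>0$ forces $\beta$ to cross zero at an unstable fixed point $g=g_{c}$.

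Next I would read off the physical consequence of that unstable fixed point. Energies whose conductance at a microscopic scale exceeds $g_{c}$ flow to $g\to\infty$ and correspond to genuinely extended states with $g(L)\sim L^{d-2}$; those below $g_{c}$ flow to zero and localize. Because $g$ increases continuously toward the band centre, this separation is realized by a mobility edge $E_{c}>0$, and for weak enough disorder the window $(-E_{c},E_{c})$ carries positive spectral weight; hence a nonzero fraction of eigenstates is delocalized, as claimed. As a consistency check I would compare against the rigorous partial results available --- Klein's proof of absolutely continuous spectrum for the Anderson model on the Bethe lattice (the $d=\infty$ caricature), the analysis of the supersymmetric hyperbolic sigma model exhibiting a transition in three dimensions, and quantum-diffusion estimates for random band matrices --- all of which are consistent with this picture.

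The hard part is everything the scaling ansatz conceals: there is no known technique that rigorously establishes absolutely continuous spectrum (true extended states) for $-\Delta + V$ with i.i.d.\ potential $V$ on $\mathbb{Z}^d$, $d\geq 3$. Multiscale analysis and the fractional-moment method prove \emph{localization}, but only in the band tails or at strong disorder, and say nothing about the band centre; supersymmetric and diagrammatic expansions give only perturbative control; and one-parameter scaling itself, although extraordinarily well supported numerically, has no derivation from the microscopic Hamiltonian. For this reason we adopt the \emph{Fact} as a hypothesis, and every result in this paper that invokes it carries the corresponding $(\ast)$.
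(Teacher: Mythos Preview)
Your proposal is correct and in fact goes well beyond what the paper does. The paper offers no proof of this statement at all: it is recorded as a \emph{Fact} precisely because, as the surrounding text says, ``there is still no completely rigorous mathematical proof,'' and the only justification given is a citation to the Abrahams--Anderson--Licciardello--Ramakrishnan renormalization-group argument. Your write-up actually sketches that scaling argument in detail and correctly flags the gap between the heuristic and a rigorous result, which is more content than the paper itself provides.
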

While this result is (overwhelmingly) expected to be true, there is still no completely rigorous mathematical proof.  However, if there were, this immediately allows us to claim:
\begin{corollary}
Given that the Anderson transition occurs at $d\geq 3$, the class random graphs $A_d$ for $d \geq 3$ will exhibit a superpolynomial quantum-classical separation.
\end{corollary}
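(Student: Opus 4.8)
The plan is to prove the separation as two essentially independent halves: the classical lower bound, which follows from the machinery already in \sec{LB}, and the quantum upper bound, which is where \cor{Anderson-transition} does its work. For the classical side I would first note that the $A_d$ ensemble is only interesting when set up --- as in all of our earlier models --- so that the supervertex sizes $s_v$ grow exponentially with the $\ell_1$-distance of $v$ from the entrance corner of $[N]^d$ up to the midpoint and then shrink toward the exit corner; the constant-degree relation $D s_v = 2\mathcal{F}_v + c\sum_{u\sim v}\sqrt{s_v s_u}$ with $2\mathcal{F}_v/s_v\in[0,1]$ only constrains the strength of this bias and is compatible with such a profile. Because the numbers of self-loops and of cross-edges at a vertex are the same for all vertices of a given supervertex, the degree sequence leaks no position information, so the argument of \cor{GFF-LB} goes through essentially verbatim: the bulk supervertices have size $\exp(\Theta(Nd))$, classical walks get trapped there, graph-exploration walks started in the small corner supervertices learn nothing about the hard bulk, and \thm{LB} delivers an $\exp(\Omega(Nd))$ classical query lower bound.

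For the quantum upper bound, recall that on the supervertex subspace the effective Hamiltonian has $\mel{S_u}{H}{S_v}=c$ for $u\sim v$ and $\mel{S_u}{H}{S_u}=2\mathcal{F}_u/s_u$, which up to the overall scale $c$ is exactly the tight-binding Anderson model on the grid $[N]^d$ with i.i.d.\ onsite disorder drawn from $c^{-1}U[0,1]$. Taking $c$ large enough that the disorder is weak relative to the bandwidth, \cor{Anderson-transition} supplies, for $d\ge 3$, a band of positive width in the interior of the spectrum whose eigenstates are delocalized, so that $|\braket{v}{E}|^2=\Theta(N^{-d})$ for every site $v$ --- in particular for the single-vertex entrance and exit, with high probability over the disorder by the permutation invariance of the onsite law. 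There are $\Theta(N^{d})$ such eigenstates, and a local state such as $\ket{S_{\text{init}}}$ keeps an $\Omega(1)$ fraction of its spectral weight in this band. Writing $c_E\triangleq\braket{S_{\text{exit}}}{E}\braket{E}{S_{\text{init}}}$, the plan is to show that the infinite-time-averaged transition probability is only polynomially small,
\begin{align*}
P_\infty \;=\; \sum_E |c_E|^2 \;\ge\; \sum_{E\text{ deloc}} |c_E|^2 \;=\; \Theta(N^{-d})\sum_{E\text{ deloc}}|\braket{E}{S_{\text{init}}}|^2 \;=\; \Omega(N^{-d}),
\end{align*}
and then that this value is essentially attained already at a polynomial time $\tau$.

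To control the finite-time deviation I would run $e^{-iHt}$ from $\ket{S_{\text{init}}}$ for $t$ uniform in $[0,\tau]$ and expand $\left|\mel{S_{\text{exit}}}{e^{-iHt}}{S_{\text{init}}}\right|^2$ in the energy basis as in the proof of \lem{evolution}: the diagonal terms reproduce $P_\infty$ while the off-diagonal terms have modulus at most $\tfrac{2}{\tau}\sum_{E\ne E'}|c_E|\,|c_{E'}|/|E-E'|$, so
\begin{align*}
\frac{1}{\tau}\int_0^\tau \left|\mel{S_{\text{exit}}}{e^{-iHt}}{S_{\text{init}}}\right|^2 dt \;\ge\; P_\infty - \frac{2}{\tau}\sum_{E\ne E'}\frac{|c_E|\,|c_{E'}|}{|E-E'|}.
\end{align*}
I would then bound the double sum by $\poly(N)$ using: $|c_E|=O(N^{-d})$ on the delocalized band (every localized eigenstate is confined away from at least one of the two corners and so has $|c_E|$ exponentially small), a Wegner bound on the density of states (which holds unconditionally for bounded absolutely continuous disorder), and the level statistics expected throughout the metallic phase --- Wigner--Dyson repulsion, or at the very least a Minami-type bound excluding near-degeneracies --- which keeps the minimum level spacing $\poly(N)^{-1}$ with high probability. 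With the double sum polynomial, $\tau=\poly(N^{d})$ (polynomial in $N$ for fixed $d$) already makes the correction at most $\tfrac{1}{2}P_\infty$, so a single run reaches the exit with probability $\Omega(N^{-d})$ and $O(N^{d})$ independent repetitions boost this to a constant, for a total runtime $\poly(N^{d})$ against the classical $\exp(\Omega(Nd))$.

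The hard part is entirely spectral. \cor{Anderson-transition} is itself an open conjecture, and even granting it the argument still leans on the standard-but-unproven companion picture of the metallic phase: that its eigenstates are uniformly spread down to the single-vertex scale (so $|\braket{v}{E}|^2=\Theta(N^{-d})$, not merely $o(1)$), and that its level statistics exclude $\exp(-\Omega(\poly(N)))$-small gaps. Everything else is routine --- the classical lower bound is a rerun of \cor{GFF-LB}, and the equilibration estimate is standard. A quick sanity check is the clean limit $\mathcal{F}_v\equiv 0$: then $H=-A_{[N]^d}$ factorizes as $\bigoplus_i(-A_{\text{line}})$, each coordinate performs an independent ballistic walk on a line, and the corner-to-corner amplitude is already $\Theta(N^{-1})$ at time $\Theta(N)$ --- a $\poly(N)$ traversal that weak disorder in $d\ge3$ only converts to diffusive transport.
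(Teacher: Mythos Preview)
Your proposal is sound under the conjectural inputs you flag, and is considerably more explicit than the paper's own argument, which is essentially a two-sentence gesture: the corollary ``follows from the analyses throughout the paper,'' with the only added ingredient being Wegner's estimate that the density of states does not diverge at the mobility edge. The paper's implicit route is to rerun \lem{evolution} with a single delocalized eigenstate standing in for the zero mode (a global energy shift is immaterial to the dynamics): delocalization supplies the corner overlaps $|\braket{S_{\text{init}}}{E}|, |\braket{S_{\text{exit}}}{E}| = \Theta(N^{-d/2})$, and a bounded density of states is invoked to keep the relevant spectral gap $\Delta$ from collapsing. Your route instead sums $|c_E|^2$ over the entire delocalized band to get the infinite-time average $P_\infty = \Omega(N^{-d})$, and then controls the finite-time correction via level repulsion. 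What your approach buys is transparency about the actual hypotheses: you correctly note that one needs not just \emph{some} delocalized states but pointwise spreading $|\braket{v}{E}|^2 = \Theta(N^{-d})$ down to single sites, together with Wigner--Dyson or Minami-type control on near-degeneracies --- none of which follows from Wegner alone, since Wegner bounds the expected number of eigenvalues in an interval but not the minimum level spacing. The paper's shortcut buys brevity but leans on the same folklore; its appeal to Wegner for the gap is at best heuristic. In short, both arguments are conditional on the standard metallic-phase package, and yours simply makes the dependencies visible while avoiding the need to isolate one ``good'' eigenstate.
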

This result follows from the analyses throughout the paper.  The only subtlety is to check whether the density of states (spectral measure) is not divergent in the mobility edge.  However, this is true from well-known results due to Wegner studying the density of states in disordered systems, see \cite{Wegner}.

It is somewhat pessimistic to conclude that all one-dimensional systems with onsite disorder cannot have a mobility edge (which describes the transition from localized to delocalized eigenstates as a function of energy).  In fact, if we assume the disorder is not random but rather quasiperiodic, then one can construct models corresponding to unnatural random graph ensembles, but which have a mobility edge and thus a region of delocalized states.  An example of this is the generalized Aubry-Andre model, constructed in \cite{GPD}, where the onsite potentials are chosen to take the following form:
\begin{equation}
    u_n = 2 \lambda \frac{\cos(2\pi n b + \phi)}{1-\alpha \cos(2\pi n b + \phi)}
\end{equation}
where $b = \frac{1}{2}(\sqrt{5}+1)$.  The equations for the boundaries of the mobility edges are $\alpha E = 2 \text{ sgn}(\lambda) (1 - |\lambda|)$.  Although we expect a superpolynomial quantum speedup in traversing such graphs, the irrationality of the onsite potential makes an explicit construction of such a random graph ensemble challenging.

There are also classes of random graph ensembles which are equivalent to an effective Anderson model with onsite disorder.  An example of such graphs were studied in \cite{HDC}, which also bears similarity to the quantum walk algorithm for the NAND trees problem in \cite{v004a008} as well as the construction proving that quantum walks are capable of universal quantum computation \cite{Childs}.  For the models discussed in \cite{HDC}, the authors proved the existence of quantum scar states, which were exact eigenstates with distinct localization properties from the rest of the states, causing the Hamiltonian to fail to thermalize.  Although the scar states in the aforementioned models were localized, it might be possible to construct models with delocalized scar states, which would be another route to proving a quantum-classical separation.

\section{Conclusions}\label{sec:concl}
In this paper, we studied superpolynomial speedups for quantum walks in random landscapes. In particular, we studied a wide class of one-dimensional chain random graph models with superpolynomial quantum-classical separation in hitting time. For higher dimensions, we studied $d$-dimensional Lieb lattices where we also prove a random graph ensemble with superpolynomial quantum speedup. We also proved classical query lower bounds for these graph problems and provided more general graph models with quantum-classical separations using graph sparsification. 

Our work leaves several natural open questions for future investigation:
\begin{itemize}
\item For one-dimensional hierarchical graphs, can we achieve superpolynomial quantum speedup for graphs with an onsite potential?  This seems possible if one allows both hopping and onsite potential to be random.

\item Can we achieve superpolynomial quantum speedups for other higher-dimensional hierarchical graphs?  This question contains three perspectives. In this paper, we assumed that the underlying randomness was generated via a GFF, which is the limiting distribution for a wide range of combinatorial/statistical mechanics problems, such as configurations of classical dimers on a lattice \cite{kenyon}.  However, if we consider distributions different from Gaussian free field, this may imply different scaling behavior of correlations in low dimensions and potentially a different quantum-classical separation.  Ideas from combinatorics, statistical mechanics, and field theory will give us insight into correlation decay for such distributions.  Second, can we generalize beyond the gauge constraint? In general, even when the gauge constraint is violated, we still have a manifold of delocalized zero modes, but the analysis required to prove that this implies a subexponential time quantum walk algorithm becomes considerably more tricky.  Lastly, we mainly considered a $d$-dimensional cubic Lieb lattice.  If one considers other lattices, could there be a drastic difference in runtimes?  For instance, on a honeycomb lattice, the spectrum in the absence of disorder contains Dirac cones. This has previously been used for quantum algorithms for spatial search~\cite{ChildsGoldstoneDirac, ChildsGe}.  Analyzing this in the presence of disorder remains an open challenge.

\item Do our results have implications to superpolynomial quantum-classical separation of other problems? For example, the welded tree has been applied to constructing the superpolynomial quantum speedup of graph property testing with the adjacency list model~\cite{BCG+20} as well as adiabatic quantum computation with no sign problem~\cite{GSV21}, and it is natural to seek for further algorithmic applications of our random graph ensembles.  Furthermore, it would be interesting if there existed an efficient adiabatic algorithm for the particular random graph ensembles we consider.

\item
  What are other general conditions on achieving exponential speedups if we break the supervertex subspace or regularity conditions imposed on our random graph models?  We argued that sparsification provide a general framework for finding graphs with efficient quantum runtimes that do not satisfy regularity.  However, we wonder whether these constructions can be further generalized.  Powerful results from extremal graph theory such as Szemeredi's regularity lemma imply that large graphs can be clustered into supernodes that obey an approximate regularity.  However, for such graphs, approximate regularity may only be obtained in the regime where one still requires an exponentially large number of supernodes; this unfortunately would elude a superpolynomial speedup.   Separately, finding other natural graph ensembles with a small (exact or approximate) supervertex subspace would also be another interesting avenue.

\item One important aspect crucial to the quantum speedup is the existence of a small Krylov subspace where the effective dynamics evolves in.  This is reminiscent of many-body scars \cite{bernien2017probing, turner2018weak}, where a small subspace of states of a quantum many body system appear athermal.  It would be interesting to find other algorithmic applications where scarred subspaces play a role in finding speedups.

\item There exists a mapping from a hierarchical graph to the configuration space of a many-body Hamiltonian.  Given some many-body Hamiltonian $H = \sum_{k} \mathcal{O}_k$ on $n$ qubits where $\mathcal{O}$ denotes a k-local operator, we may consider the configuration space where nodes are labelled by strings $\vec{s} \in \{0,1\}^n$, and two nodes $\vec{s}, \vec{s}'$ are connected if $\mel{\vec{s}}{H}{\vec{s}'} \neq 0$.  We wonder when these configuration graphs resemble hierarchical graphs and what constraints on the Hamiltonian ensure an efficient quantum walk runtime on these graphs.  This may provide a more natural setting where one can achieve a black-box speedup.  For the examples we provide, $H$ is very non-local.  However, it might be possible to use perturbative gadgets to realize $H$ (or some modification of $H$) in terms of sums of $k$-local terms.

\end{itemize}


\section*{Acknowledgements}
We thank Andrew Childs and Eddie Farhi for valuable discussions, as
well as Hari Krovi for pointing out~\cite{krovi2007quantum}.  SB was
supported by the National Science Foundation Graduate Research
Fellowship under Grant No.~1745302 and NTT (Grant AGMT DTD 9/24/20). TL was supported by the National
Science Foundation Grant PHY-1818914 and a Samsung Advanced Institute
of Technology Global Research Partnership.  AWH was funded by NSF
grants CCF-1729369, PHY-1818914, the NSF QLCI program through grant
number OMA-2016245 and the DOE Co-design Center for Quantum Advantage
(C2QA) under contract number DE-SC0012704.  There is no data associated with this manuscript.


\bibliographystyle{plain}
\bibliography{welded}

\end{document}